\documentclass{article}
\usepackage{adjustbox}
\usepackage{microtype}
\usepackage{graphicx}
\usepackage{subfigure}
\usepackage{enumitem}
\usepackage{booktabs} % for professional tables
\usepackage{bbm}
\usepackage{hyperref}
\usepackage{algorithm}
\usepackage[noend]{algorithmic}

\usepackage{subcaption}

\usepackage[preprint]{neurips_2026}
\usepackage{amsmath}
\usepackage{amssymb}
\usepackage{mathtools}
\usepackage{amsthm}
\usepackage[capitalize,noabbrev]{cleveref}

\usepackage{tikz}
\usepackage{tikz-cd}
\usetikzlibrary{automata}
\usetikzlibrary{arrows.meta,positioning,decorations.pathreplacing}

\theoremstyle{plain}
\newtheorem{theorem}{Theorem}[section]
\newtheorem{proposition}[theorem]{Proposition}
\newtheorem{example}[theorem]{Example}

\newtheorem{lemma}[theorem]{Lemma}
\newtheorem{corollary}[theorem]{Corollary}
\theoremstyle{definition}
\newtheorem{definition}[theorem]{Definition}
\newtheorem{assumption}[theorem]{Assumption}
\newtheorem{remark}[theorem]{Remark}

\def\est{\mathrm{est}}
\def\E{\mathbb{E}}
\def\N{\mathbb{N}}
\DeclareMathOperator*{\argmax}{arg\,max} 

\usepackage{pgffor}
\foreach \x in {A,...,Z}{
	\expandafter\xdef\csname m\x\endcsname{\noexpand\mathbf{\x}}
}
\foreach \x in {A,...,Z}{
	\expandafter\xdef\csname om\x\endcsname{\noexpand\overline{\noexpand\mathbf{\x}}}
}
\foreach \x in {A,...,Z}{
	\expandafter\xdef\csname c\x\endcsname{\noexpand\mathcal{\x}}
}

\makeatletter
\DeclareRobustCommand\widecheck[1]{{\mathpalette\@widecheck{#1}}}
\def\@widecheck#1#2{%
    \setbox\z@\hbox{\m@th$#1#2$}%
    \setbox\tw@\hbox{\m@th$#1%
       \widehat{%
          \vrule\@width\z@\@height\ht\z@
          \vrule\@height\z@\@width\wd\z@}$}%
    \dp\tw@-\ht\z@
    \@tempdima\ht\z@ \advance\@tempdima2\ht\tw@ \divide\@tempdima\thr@@
    \setbox\tw@\hbox{%
       \raise\@tempdima\hbox{\scalebox{1}[-1]{\lower\@tempdima\box
\tw@}}}%
    {\ooalign{\box\tw@ \cr \box\z@}}}
\makeatother

\usepackage[textsize=tiny]{todonotes}

\newcommand{\nocontentsline}[3]{}
\let\origtoc=\addcontentsline
\let\addcontentsline=\nocontentsline

\title{Learning Approximate Nash Equilibria in Cooperative Multi-Agent RL via Mean-Field Subsampling}

\author{
Emile Anand\\ 
Georgia Institute of Technology \\
Atlanta, GA \\
\texttt{emile@gatech.edu} 
\And Ishani Karmarkar \\ 
Stanford University\\
Palo Alto, CA \\ 
\texttt{ishanik@stanford.edu}
}

\begin{document}

\maketitle

\begin{abstract} 
Many large-scale platforms and networked control systems have a centralized decision maker interacting with a massive population of agents under strict observability constraints. Motivated by such applications, we study a cooperative Markov game with a global agent and $n$ homogeneous local agents in a communication-constrained regime, where the global agent only observes a subset of $k$ local agent states per time step. We propose an alternating learning framework (\texttt{ALTERNATING-MARL}), where the global agent performs subsampled mean-field $Q$-learning against a fixed local policy, and local agents update by optimizing in an induced MDP. We prove that these approximate best-response dynamics converge to an $\tilde{O}(1/\sqrt{k})$-approximate Nash Equilibrium, while separating the sample complexities between the joint state and action spaces. Finally, we validate our results in numerical simulations for multi-robot control.
\end{abstract}

\section{Introduction}

Large-scale platforms and networked control systems often feature a \emph{global} decision maker interacting with a massive population of distributed \emph{local} agents under strict communication and observability constraints. Examples include online marketplaces where a platform sets pricing or matching rules for many users \cite{yang2025agentexchangeshapingfuture, 10.1145/3308558.3313433, 10.1145/3269206.3272021, pmlr-v125-jin20a, chaudhari2025peer}, networked control systems where a central coordinator intermittently communicates with distributed subsystems \cite{qu2021scalable,DBLP:journals/corr/abs-2006-06555,NEURIPS2023_a7a7180f,pmlr-v247-lin24a,9351818, shalevshwartz2016safemultiagentreinforcementlearning, deweese2024locally}, and robotic swarms that coordinate via low-bandwidth signals \citep{hespanha2007survey,oliehoek2016concise,7989376, lv2024localinformationaggregationbased, aina2025deepreinforcementlearningmultiagent, chiun2025marvelmultiagentreinforcementlearning}.

Drawing on the success of reinforcement learning (RL), multi-agent reinforcement learning (MARL) has emerged as a powerful framework for modeling such complex, large-scale networked systems, where the objective is to learn optimal policies that maximize the collective reward for the system \cite{bowling2000analysis}. However, in such settings, a fully centralized multi-agent RL approach is typically infeasible as the space of joint policies grows exponentially with the population size. Moreover, the global agent typically cannot observe the full joint state of all local agents at each decision epoch due to communication constraints. Likewise, local agents may only observe their own local state and a limited global context due to communication or privacy constraints \cite{Yang_2023}. These informational and computational limitations violate the assumptions underlying  centralized MARL, where policies and value functions are defined over the joint state space of all agents.\looseness=-1

Motivated by this setting, we study a cooperative setting with one \emph{global} agent and $n$ \emph{homogeneous}
local agents.  
Note that naively, even in centralized MARL, the size of policy search space is exponential in $n$, which is intractable for moderate $n$ \citep{daskalakis2007computing,Blondel_Tsitsiklis_2000, anand2024efficientreinforcementlearningglobal}. Motivated by the practical importance of decentralized MARL, this work explores an even more challenging \emph{communication-constrained} setting. Concretely, we assume that the global agent can only observe (and hence, can only condition its policy on) its own state-action pair and the states of subsets of $k \ll n$ local agents, and that each local agent can only observe their own state and action, and the global agent's state. In this communication-constrained setting, the agents  do not have sufficient information to learn or deploy a \emph{globally} optimal behavior policy, as one aims to do in centralized MARL. This is because, the globally optimal behavior policy would, in general, be a mapping over the full joint state space of all agents. In contrast, in the communication-constrained scenario, agents are unable to observe the full joint state space, and hence such policies are neither learnable nor deployable. \looseness=-1

Consequently, as in prior work \citep{jin2021v,10.5555/3540261.3542236, xie2021learning, 10.5555/3618408.3618858, lanctot2017unified, balsells2025scaling,delafuente2024gametheorymultiagentreinforcement,yang2025gametheoreticmultiagentreinforcementlearning}, we must settle for a \emph{locally} optimal behavior policy, or a Nash equilibrium (Definition~\ref{def: a nash equilibrium}). That is, we seek to find a behavior policy for the global agent $\pi_g(s_g, s_1, ..., s_k)$ (parameterized by its own state and the states of $k \ll n$ local agents) and a behavior policy for the local agent $\pi_l(s_i, s_g)$ (parameterized by its own state and the global state). Such a pair of policies $(\pi_g, \pi_l)$ is said to form a Nash equilibrium under the reward model if neither the global agent nor the local agent has any incentive to deviate from its policy. In this paper, we focus on computing an \emph{approximate} Nash equilibrium for this system (Definition~\ref{def:approx-nash}).\looseness=-1

\textbf{Contributions.} We propose \texttt{ALTERNATING-MARL}, a framework that couples a \emph{subsampled mean-field} Q-learning procedure for the global agent \cite{anand2024efficientreinforcementlearningglobal,anand2025meanfield}
with a generic local-agent learning procedure that acts as an approximate best-response oracle in the induced local MDP.
The framework alternates between updating the global agent's policy (global update) and updating the local agents' policy (local update) according to the best-response dynamics, where a 
\emph{global update} fixes the local policy and learns a near-best-response global policy using only $k$ sampled local agents, and a 
\emph{local update} fixes the global policy and updates the shared local policy using any RL routine with a provable performance
guarantee. We show this converges to an $\smash{\tilde{O}(1/\sqrt{k})}$-approximate   Nash equilibrium with high probability. We also bound the sample complexity of this process, removing the exponential dependence on the size of the action space from prior works. Moreover, at $\smash{k=O(\log n)}$, \texttt{ALTERNATING-MARL} achieves a polylogarithmic sample complexity in $n$ and decouples the prohibitive dependence on the size of the action space. While our primary focus is on providing a theoretical analysis of the method, we provide numerical simulations and implementation details of \texttt{ALTERNATING-MARL} for a multi-robot control setting in \cref{sec: numerical simulations}.

\textbf{Markov games framework.} Under our cooperative multi-agent framework, we show that the problem of finding a Nash equilibrium of the $n$-agent system reduces to finding the Nash equilibrium of a Markov potential
game, which is a special class of two-player games where the players represent the global agent $g$ and a ``representative'' local agent $l$  \cite{doi:10.1073/pnas.39.10.1095, littman}. For this Markov potential game, we show that approximate best-response dynamics, in which we alternately optimize $g$ and $l$'s strategy, holding the other player's strategy fixed, monotonically improves a common potential \cite{chen2022convergencepriceanarchyguarantees}. We leverage this structure to show that this dynamic reaches an $\epsilon$-approximate Nash equilibrium, where the error $\epsilon$ scales with the iterative approximation errors of each agent's learning routine.  However, efficiently implementing these best-response dynamics remains non-trivial in our communication-constrained setting; instead, our approach leverages that, when the policy for $l$ is held fixed, $g$'s best response corresponds to optimizing a particular subsampled induced MDP (and vice versa). For the global agent, we extend prior work of \cite{anand2025meanfield} in mean-field sampling and value iteration to efficiently compute approximately optimal best-responses. For the local agents, we design novel chained-episodic MDP reductions and apply techniques from \cite{dann2015sample}. We discuss these details further in Section~\ref{sec:approach-algorithms}.\looseness=-1

\subsection{Related Work}

\textbf{Markov games and equilibrium learning.}
Markov games generalize MDPs to multi-agent settings and admit equilibrium notions such as Nash equilibria
and related refinements \cite{NIPS1999_464d828b}. Its foundations trace back to stochastic games \citep{doi:10.1073/pnas.39.10.1095} and Markov games in MARL
\citep{littman}. A long line of work studies sample-efficient learning in Markov games, including algorithmic and
statistical guarantees under various structural assumptions (e.g., tabular or function approximation) \cite{jin2021v,10.5555/3540261.3542236,10.5555/3600270.3600377, bichuch2024stackelberg,10.5555/3586589.3586718}. Aligning with
this tradition, we focus on a large-population cooperative structure with communication constraints.\looseness=-1
   
\textbf{Mean-field methods for large-population MARL.} 
Mean-field MARL and mean-field games address the curse of dimensionality by exploiting \emph{exchangeability} and \emph{aggregation}:
agents interact with population-level statistics rather than full joint configurations \citep{lasry2018mean, yang2019sample, gu2022meanfieldmultiagentreinforcementlearning, gu2021meanfieldcontrolsqlearningcooperative, anand2025meanfield}. In cooperative MARL, mean-field style
approximations replace dependence on all agents' states/actions with dependence on an empirical distribution or mean action \cite{10.1145/3308558.3313433,cui2022learning,carmona2013probabilistic,subramanian2022decentralized}.
Our work uses a complementary perspective: rather than requiring access to the full population statistic, we analyze
\emph{subsampled} mean-field statistics and quantify the approximation error when sampling $k$ agents.\looseness=-1

\textbf{Stackelberg and leader--follower learning.}
Many applications motivating our model can also be viewed through a leader-follower lens, where a global entity commits to a
policy and local agents respond \cite{bacchiocchi2024samplecomplexitystackelberggames, 10310098, 4597505, bruckner2011stackelberg, yu2022learningcorrelatedstackelbergequilibrium}. Recent works study sample-efficient learning of Stackelberg equilibria in general-sum games and Deep-RL \citep{10.5555/3540261.3542236, 10.5555/3618408.3618858, balcan2025learningstructuredstackelberggames, gan2025robust}.
While our focus is similar, we instead analyze a cooperative objective with  partial observability and use the potential-game structure
to obtain convergence guarantees for alternating approximate best-response updates. \looseness=-1

\textbf{Potential games and best-response dynamics.}
Potential games \citep{Monderer1996, guo2025markovalphapotentialgames, arefizadeh2024characterizationspotentialordinalpotential} provide a powerful framework where unilateral improvements align with ascent in a
shared potential, implying convergence of best-response dynamics under suitable conditions and motivating a large literature on dynamics and rates  \citep{doi:10.1137/17M1139461,leonardos2025globalconvergencemultiagentpolicy,doi:10.1137/1.9781611976465.84,overman2025oversightgamelearningcooperatively,chen2022almost,yang2019sample,niu2025findingmultiplefollowerstackelberg, giannou2021rate, 10.1007/s00182-023-00837-4}. We leverage a Markov potential structure induced by
the additive reward decomposition in our cooperative Markov game to establish finite-time convergence to approximate Nash
equilibria under probabilistic approximate best responses. \looseness=-1

\section{Preliminaries}
\label{sec: preliminaries}

For $z\!\in\!\mathbb R^d$, $\|z\|_p$ is the $\ell_p$ norm of $z$. For $s_1,\dots, s_n$ and $\Delta\subseteq [n]$, let $s_\Delta\!$ be the multiset $\!\{s_i:i\!\in\!\Delta\}$. For a discrete measurable space $(\mathcal{S}, \mathcal{F})$, the TV-distance between measures $\mu$ and $\mu'$ is $\mathrm{TV}(\mu, \mu') := \frac{1}{2}\sum_{s\in\mathcal{S}}|\mu(s)\!-\!\mu'(s)|$. We use $x\sim P(\cdot)$ to denote that $x$ is a random sample from distribution $P$, and we let $\mathcal{U}(\Omega)$ denote the uniform distribution over a set $\Omega$. 
For {$k,n\in\mathbb N$, we let ${[n]\choose k} := \{S \subset [n] : |S| = k\}$} where $[n]\!=\!\{1,\dots,n\}$. For a finite set $\mathcal{X}$, $\Delta^{\cX}$ denotes the space of discrete distributions over $\cX$. Finally, $\tilde{O}(\cdot)$ hides polylogarithmic factors in any problem parameters except $n$.
\looseness=-1

Following \cite{anand2025meanfield}, we study a discounted infinite-horizon cooperative multi-agent decision problem with one \emph{global} agent $g$
and $n$ \emph{local} agents indexed by $[n]$. At time $t$, the joint state is
$s(t) = (s_g(t), s_1(t),\ldots,s_n(t)) \in \mathcal S := \mathcal S_g \times \mathcal S_l^n$,
where $s_g(t)$ is the global state and $s_i(t)$ is the state of local agent $i$. The global agent selects action $a_g(t)\in\cA_g$ while each $i$-th local agent selects $a_i(t)\in\cA_l$, and we denote the joint action as $a(t)=(a_g(t),a_1(t),\ldots,a_n(t)) \in \mathcal A := \mathcal A_g \times \mathcal A_l^n$. Next, the system collects an immediate stage reward\looseness=-1
\begin{equation}\label{equation:rewards}
    r(s, a) := \underbrace{r_g(s_g, a_g)}_{{\text{global component}}} + 1/n \cdot \sum_{i\in [n]} \underbrace{r_l(s_i, s_g, a_i)}_{\text{local component}}
\end{equation}
and then transitions to $s(t+1)$, where $s_g(t+1)\!\sim\!P_g(\cdot|s_g(t), a_g(t))$ and $s_i(t+1)\!\sim\!P_l(\cdot|s_i(t), s_g(t), a_i(t))$ for all $i\in[n]$. In this formulation, note that the local agents are \emph{homogeneous} in the sense that the system is invariant to permutations of the local agents.
 
\textbf{Information constraints and policy classes.} 
We focus on a communication-constrained regime in which the global agent cannot condition on all local agent states.
At each decision step it may only communicate with a subset of $k$ local agents (e.g., due to bandwidth or sensing limits). Likewise, local agents are unable to communicate (directly) with one another and may only communicate with the global agent. Correspondingly, we constrain the global agent's policy $\pi_g: \cS_g \times \cS_l^k$ to depend on the global agent's state and the observed states of at most $k \leq n$ other agents (we are interested in the case of $k \ll n$). Since the local agents are homogeneous, we may assume (without loss) that all local agents share a common policy $\pi_l: \cS_l \times \cS_g\to\cA_l$ that depends on the local agent's own state and the global agent's state. In the remainder of the paper, we use the notation \smash{$\Pi_g\coloneqq \{\pi_g: \mathcal{S}_g \times \mathcal{S}_l^k \to \Delta^{\mathcal{A}_g}\}$} and \smash{$\Pi_\ell \coloneqq \{\pi_l: \mathcal{S}_l \times \mathcal{S}_g \to \Delta^{\mathcal{A}_l}\}$} to denote the set of feasible policies for the global and local agents, under the communication-constrained setting. 
For a fixed discount factor $\gamma\in(0,1)$, we denote the value of the joint policy $\pi = (\pi_g, \pi_l)$ by 
\begin{equation}
V^\pi(s)=\mathbb{E}_{a(t) \sim \pi(\cdot | s)}  \Big[\sum_{t \geq 0} \gamma^t r(s(t), a(t)) \Big\lvert s(0) = s\Big].
\end{equation}
\textbf{Challenges in learning the optimal policy.} The cardinality of the discrete state/action space for the optimal policy is $|\mathcal{S}_g||\mathcal{S}_l|^n|\mathcal{A}_g||\mathcal{A}_l|^n$, which is exponential in the number of agents. Since the local agents are homogeneous, and therefore permutation-invariant with respect to the rewards of the system (the order of the other agents does not matter to any single decision-making agent), techniques from mean-field MARL restrict the cardinality of the search space for the optimal policy to $|\mathcal{S}_g||\mathcal{A}_g||\mathcal{S}_l||\mathcal{A}_l|n^{|\mathcal{S}_l||\mathcal{A}_l|}$. That is, mean-field MARL reduces the exponential dependence on $n$ to a polynomial dependence on $n$ \cite{yang2020meanfieldmultiagentreinforcement,gu2021meanfieldcontrolsqlearningcooperative,anand2025meanfield}. However, in practical systems, when $n$ is large, the $\mathrm{poly}(n)$ sample-complexity may still be computationally infeasible, as noted in \cite{anand2025meanfield}. \looseness=-1

\textbf{Solution concept under communication constraints.}
In a centralized setting without communication constraints, one could directly optimize the joint policy to maximize the discounted return $V^{\pi}(s)$. This setting was previously studied in \cite{anand2025meanfield}. 
However, under the communication constraints encoded by the policy classes $(\Pi_g,\Pi_\ell)$, agents choose policies from different restricted classes and the natural target of our alternating procedure is a fixed point of unilateral improvements.
Accordingly, we study approximate \emph{Nash equilibria} of the restricted Markov game induced by $(\Pi_g,\Pi_\ell)$, where all the local agents must learn the same policy as they are homogeneous, which corresponds to a local optimum of the objective function: intuitively, a policy pair is an $\epsilon$-Nash equilibrium if neither policy can improve the value by more than $\epsilon$ via a unilateral deviation within its respective class. We show that best-response dynamics in \texttt{ALTERNATING-MARL} converges to an $\epsilon$-Nash Equilibrium.\looseness=-1

We introduce the following standard assumptions and definitions:
\begin{assumption}[Finite state/action spaces]\label{assumption: finite cardinality}
    \emph{We assume that the state and action spaces of all the agents in the MARL game are finite:  $|\mathcal{S}_l|, |\mathcal{S}_g|, |\mathcal{A}_g|, |\mathcal{A}_l| < \infty$.  }
\end{assumption}
\begin{assumption}[Positive Bounded rewards]\label{assumption: bounded rewards}
    \emph{Each component of the reward function is positive and bounded. Specifically, $\|r_g(\cdot,\cdot)\|_\infty \leq \tilde{r}_g$, and
    $\|r_l(\cdot,\cdot,\cdot)\|_\infty \leq \tilde{r}_l$. Together, $\|r(\cdot,\cdot)\|_\infty \leq \tilde{r}_g + \tilde{r}_l := \tilde{r}$.}
\end{assumption}

\begin{definition}[Nash Equilibrium] \label{def: a nash equilibrium} \emph{A Nash Equilibrium policy tuple $\pi^\star = (\pi_g^\star,\pi_1^\star,\dots,\pi_n^\star)$ is a solution to a state of the game where all the players perform their best-response strategies against the other players so that no player unilaterally deviates from its current policy \cite{doi:10.1073/pnas.36.1.48}. Specifically, the policy tuple $(\pi_g^\star,\pi_1^\star,\dots,\pi_n^\star)$ satisfies that for all $\pi_g, \pi_1, \dots, \pi_n \in \Delta^{(g)}\times\Delta^{(1)}\times\dots\times\Delta^{(n)}$, we have that $V^{\pi^\star}\geq V^{\pi_g^\star,\pi_1^\star,\dots,\pi_{i-1}^\star,\pi_i,\pi_{i+1}^\star,\dots,\pi_n^\star}$ for all $i\in[n]$ and $V^{\pi^\star}\geq V^{\pi_g,\pi_1^\star,\dots,\pi_n^\star}$.}
\end{definition}

\begin{definition}[$\epsilon$-approximate Nash Equilibrium policy]\label{def:approx-nash}\emph{ A policy tuple $(\hat{\pi}_g, \hat{\pi}_{[n]})$ is said to be an $\epsilon$-approximate Nash Equilibrium if it corresponds to a solution to a state of the system  where all the players perform their best-responses against the other players so that no player unilaterally deviates from its current policy. Formally, it satisfies, that for all $(\pi_g, \pi_{[n]})$, that  $V^{\hat{\pi}_g, \hat{\pi}_{[n]}} \geq V^{\pi_g, \hat{\pi}_{[n]}} - \epsilon$, and moreover, for all $i\in[n]$, $V^{\hat{\pi}_g, \hat{\pi}_{[n]}} \geq V^{\hat{\pi}_g, \hat{\pi}_1, ..., \hat{\pi}_{i-1}, \pi_i, \hat{\pi}_{i+1}, ..., \hat{\pi}_n} - \epsilon$.}
\end{definition} 

\textbf{Motivating examples.} We give examples of two cooperative systems which are naturally modeled by our setting. We provide experimental results and code in \cref{sec: numerical simulations}.
\begin{figure}[t]
    \centering
\includegraphics[width=0.49\linewidth]{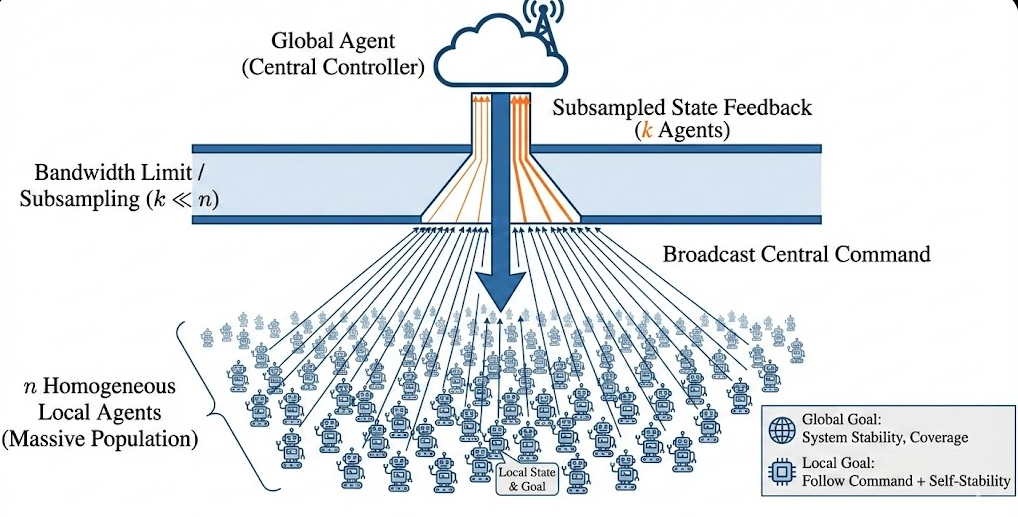} 
        \includegraphics[width=0.49\linewidth]{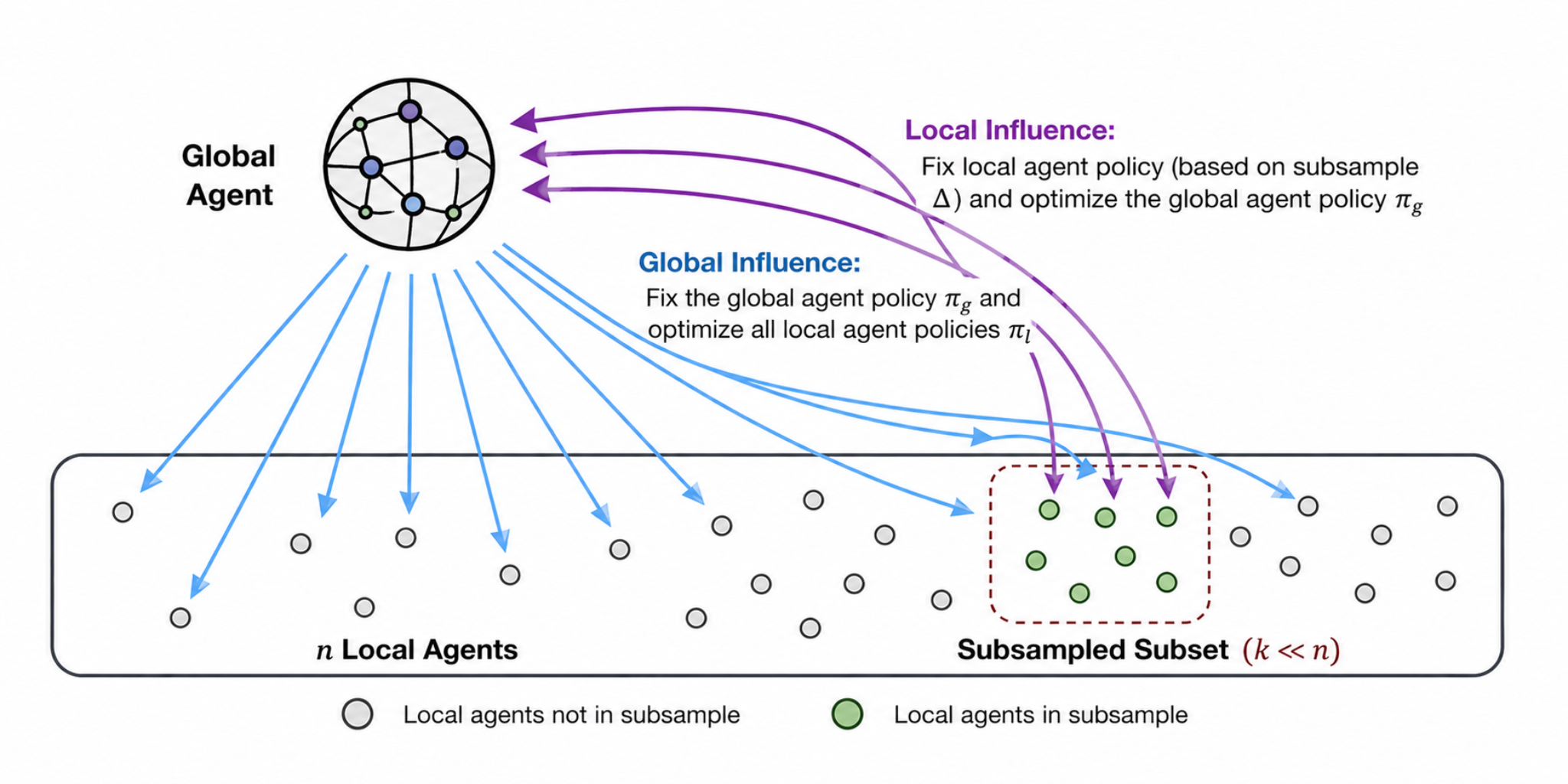} 
        \caption{(Left) Robot Coordination. This figure illustrates how our communication-constrained framework can be applied for decentralized coordination of large teams of robots. (Right) Subsampling with alternating best-response dynamics. AI was used to refine the aesthetics of the figures.}
    \label{fig:placeholder}
\end{figure}
 \begin{example}[Communication-constrained control] 
\emph{Consider a networked system (smart grid,  multi-robot team, etc.) with a  global agent who can only communicate with $k$ local agents  at each timestep due to bandwidth limitation. Each local agent (robot, generator, etc.) can sense its state and the global agent's state. The global agent's task is  to maintain system stability or optimize a global performance metric (voltage stability in a grid, coverage in a surveillance task, etc.), and the local agent’s objective is to follow the central command but also a local goal (or maintaining its own stability).}\looseness=-1
\end{example}
 \begin{example}[Federated optimization with partial participation] \emph{A central server (global agent) coordinates $n$ clients (local agents) to minimize a shared objective (e.g., validation loss). Due to bandwidth constraints, at each round the server can only query a subset of $k\ll n$ clients for their local states (e.g., gradient statistics) and then broadcasts a global update, where each client performs a local action (e.g., local SGD steps or a regularization choice) based on its own state and the broadcast.} \looseness=-1
\end{example}

\section{Algorithmic Approach: Alternating MARL}\label{sec:approach-algorithms}
Our approach builds on that of \cite{anand2025meanfield}. However, to extend \cite{anand2025meanfield} to the communication-constrained setting, we require several technical adaptations to generalize their methods to a Markov game framework. While some of our approach follows the convention of \cite{anand2025meanfield}, the analytic techniques and algorithms differ substantially: while Algorithm~\ref{algorithm: g learn} is perhaps similar to \cite{anand2025meanfield}, our work introduces technical novelties in the analysis of Algorithms~\ref{algorithm: l learn} and~\ref{algorithm: alternating marl}, and in its  theoretical guarantees.
\subsection{Overview of Approach}
Fix the local agent's policy $\pi_\ell \in \Pi_\ell$ and optimize the global agent's policy $\pi_g \in \Pi_g$. Then, letting $a_g \sim \pi_g(\cdot|s_g,s_\Delta)$, the transition dynamics are constrained to $s_g' \!\sim\! P_g(\cdot|s_g, a_g)$ and $s_i'\!\sim\!\bar{P}_l^{\pi_\ell}(\cdot|s_i,s_g)$, where $\bar{P}_l^{\pi_\ell}(\cdot|s_i, s_g)= P_l(\cdot| s_i, s_g, \pi_\ell( s_i, s_g))$. Optimizing $\pi_g$ in these constrained dynamics corresponds to approximately computing a \emph{best-response} to the fixed policy $\pi_\ell$. We then repeat this procedure, holding $\pi_g$ fixed, and optimizing $\pi_\ell$ in the induced MDP. This alternating procedure forms a sequence of \emph{best-response dynamics}, whose convergence implies a Nash equilibrium.\looseness=-1

\textbf{Offline Learning:} 
First, in \cref{algorithm: g learn} (\texttt{G-LEARN}), the global agent fixes the local agent's policy $\pi_\ell$. Next, following the techniques of \cite{anand2025meanfield}, the algorithm randomly samples a subset of local agents $\Delta\subseteq [n]$ such that $|\Delta|=k$: it runs value iteration (with $m$ samples) to approximate the $Q$-function \smash{$\hat{Q}_{k,m}^{\text{est}}$} and policy \smash{$\hat{\pi}_{k,m}^{\text{est}}$} for its best-response to $\pi_\ell$ in the surrogate subsystem of $k$ local agents. The surrogate reward of this system, under the fixed local policy $\pi_\ell$, is \smash{$\bar{r}^{\pi_\ell}_l(s_i, s_g)=r_l(s_i, s_g, \pi_\ell(s_i, s_g))$}, and the stage reward received by the global agent is \smash{$\bar{r}^{\pi_\ell}_\Delta:\cS\times\cA_g\to\mathbb{R}$}, where
\begin{align}
\bar{r}^{\pi_\ell}_\Delta(s,a_g) = r_g(s_g,a_g)+1/{|\Delta|} \cdot \sum_{i\in\Delta}\bar{r}^{\pi_\ell}_l( s_i, s_g).
\end{align}
 \cref{theorem: Q-lipschitz of Fsdelta and Fsn} proves an analog of Theorem E.3 of \cite{anand2025meanfield}, which shows that $\|\hat{Q}_{k,m}^{\text{est}} - Q^*\|_\infty$ is Lipschitz continuous with respect to the TV-distance between the states of the subsampled agents and the full set of agents, thereby decaying on the order of $\tilde{O}(1/\sqrt{k})$.

Next, in \cref{algorithm: l learn} (\texttt{L-LEARN}), a representative local agent fixes the global agent's policy $\pi_g$, and learns an approximate best-response in the policy class $\Pi_\ell$ consisting of policies of the form $\smash{\pi_\ell:\cS_g\times\cS_l\to\cA_l}$. However, since the global agent's action depends on a sample of $k$ local agents, the representative local agent's environment is not Markovian in $(s_g, s_i)$. We resolve this by constructing an episodic chained-MDP in \cref{algorithm: k chained induced MDP} and \cref{algorithm: S chained induced MDP} and then applying a standard PAC episodic RL method (UCFH, \cite{dann2015sample}) to extract the local agent's policy.
Finally, in \cref{algorithm: alternating marl}, \texttt{ALTERNATING-MARL} deploys algorithms \texttt{G-LEARN} and \texttt{L-LEARN} to alternately learn a better policy through the induced best-response dynamics process until it converges.

\textbf{Online Execution:} To convert the optimality of the global agent's action in the $k$ local-agent subsystem to an approximate optimality guarantee in the full system, \cref{algorithm: online execution} adapts the techniques of \cite{anand2025meanfield} to form a randomized policy $\pi_{k,m}^{\text{est}}$. In it, the global agent subsamples $\smash{\Delta\in{[n]\choose k}}$ at each time-step to derive an action $\smash{a_g\gets \hat{\pi}_{k,m}^{\text{est}}(s_g, s_\Delta)}$, while each local agent $i\in[n]$ evaluates the shared $\pi_\ell$ to get action $a_i \sim \pi_\ell(\cdot|s_i,s_g)$. \cref{theorem: main result} shows that the resultant policy is a $\tilde{O}(\frac{1}{\sqrt{k}})$-approximate NE.   
\subsection{Algorithm Description}
We first introduce the empirical distribution function.

\begin{definition}[Empirical distribution function $F_{s_\Delta}$] \emph{Let $\mu_k(\cS_l) = \{\nu\in\Delta(\cS_l): \nu(x)\in\{0,\frac1k, \frac2k,\dots, 1\}$. Then, for $\Delta\in {[n]\choose k}$, the empirical distribution function $F_{s_\Delta} \in \mu_k(\cS_l)$ is the proportion of sampled agents at each state, i.e., for $s\in \cS_l, F_{s_\Delta}(s) = \frac1k \sum_{i\in\Delta}\mathbbm{1}\{s_i=s\}$.}\looseness=-1
\end{definition}

\textbf{Offline Learning (I).} Let $m\in\mathbb N$ denote the sample size in \cref{algorithm: g learn} (\texttt{G-LEARN}) with parameter $k\leq n$. When $|\cS_l|^k \leq |\cS_l|k^{|\cS_l|}$, the algorithm uses traditional value-iteration, and otherwise it uses mean-field value iteration. We formally describe the procedures for each regime below.

\begin{algorithm}[t]
    \caption{\texttt{G-LEARN:} Global-agent $Q$-learning}
    \begin{algorithmic}
        \REQUIRE Global agent transition function $P_g$, local agent policy-guided transition function $\bar{P_l}^{\pi_\ell}$, subsampling parameters $k$ and $m$, and number of Bellman updates $T$.\\
        \STATE Let $\mu_{k}(\cS_l) \coloneqq \{v=\{0, \frac1k, \dots, 1\}^{|\cS_l|}: \sum_i v_i = 1\}$.
        \STATE \textbf{if} {$|\cS_l|^k \leq |\cS_l|k^{|\cS_l|}$} \quad \textcolor{blue}{\texttt{/$\star$ subsampled standard $Q$-learning}}
        \STATE \quad Set $\hat{Q}_{k,m}^0(s_g, s_\Delta,a_g)=0, \forall (s_g, s_\Delta, a_g)\in \mathcal{S}_g\times \mathcal{S}_l^k\times  \mathcal{A}_g$.\looseness=-1
        \STATE \quad \textbf{for} {$t=0,\dots,T-1$} \textbf{do}
        \STATE \quad\quad \textbf{for} {$s_g,s_\Delta,a_g\in\mathcal{S}_g\times\mathcal{S}_l^k\times\mathcal{A}_g$} \textbf{do}
        \STATE \quad\quad\quad $\hat{Q}_{k,m}^{t+1}(s_g,s_\Delta,a_g) = \tilde{T}_{k,m} \hat{Q}_{k,m}^t(s_g,s_\Delta,a_g)$
        \STATE \quad\textbf{else} \quad\quad\quad\quad\quad\quad \textcolor{blue}{\texttt{/$\star$ subsampled mean-field $Q$-learning}}
        \STATE \quad Set $\hat{Q}_{k,m}^0(s_g,F_{ s_\Delta},a_g)=0$
        \STATE \quad \textbf{for} {$t=0,\dots,T-1$} \textbf{do}
        \STATE \quad\quad \textbf{for} {$s_g,F_{s_\Delta},a_g\in\mathcal{S}_g\times\mu_k(\cS_l)\times\mathcal{A}_g$}
        \STATE \quad\quad\quad $\hat{Q}_{k,m}^{t+1}(s_g, F_{s_\Delta},a_g) = \hat{T}_{k,m} \hat{Q}_{k,m}^t(s_g, F_{s_\Delta},a_g)$
        \STATE Let
 $\hat{\pi}_{k,m}^{T}(\cdot,\cdot)=\arg\max_{a_g\in\cA_g} \hat{Q}_{k,m}^{T}(\cdot,\cdot,a_g)$.
 \STATE Let $V_{\text{new}} = \texttt{APPROX-VALUE}(\hat{\pi}_{k,m}^{T}, \hat{Q}_{k,m}^{T})$.
     \STATE \textbf{return} $\hat{\pi}_{k,m}^{T}$ and $V_{\text{new}}$.
\smallskip
\STATE \textbf{Function} \texttt{APPROX-VALUE}$(\pi_g, \hat{Q}_{k,m})$
\STATE \textbf{if} { $|\cS_l|^k \leq |\cS_l| k^{|\cS_l|}$} 
        \STATE \quad \textbf{for} $s \in \mathcal{S}_g\times \mathcal{S}_l^k$ \textbf{do}
         \STATE \quad\quad  $V_k^{\pi_g}(s) = \sum_{a_g \in \cA_g} \pi_g(a_g|s) \hat{Q}_{k,m}(s, a_g)$
    \STATE \textbf{else}
   \STATE \quad \textbf{for} $s\in \mathcal{S}_g\times\mu_k(\mathcal{S}_l)$ \textbf{do}
   \STATE \quad\quad $V_k^{\pi_g}(s) = \sum_{a_g \in \cA_g} \pi_g(a_g|s) \hat{Q}_{k,m}(s, a_g)$
\STATE \textbf{Return $V_k^{\pi_g}$}.
         \label{algorithm: g learn}
    \end{algorithmic}
\end{algorithm}
When $\smash{|\cS_l|^k\leq|\cS_l| k^{|\cS_l|}}$, we fix the policy $\pi_\ell$ and learn the best-response $Q$-function for a subsystem with $k$ local agents, denoted by $\hat{Q}_{k,m}^t:\cS_g\times\cS_l^k\times\cA_g\to\mathbb{R}$, which we initialize to $0$. At time $t$, we update $\hat{Q}_{k,m}^{t+1}(s_g, s_\Delta, a_g) = \tilde{T}_{k,m}\hat{Q}_{k,m}^t(s_g,s_\Delta,a_g)$, where $\tilde{T}_{k,m}$ is the empirical adapted Bellman operator, which uses $m$ random samples $s_g^j\sim P_g(\cdot|s_g,a_g)$ and $s_i^j\sim \bar{P_l}^{\pi_\ell}(\cdot|s_i,s_g)$ for each $j\in[m]$.
\begin{align*}\tilde{T}_{k,m}\hat{Q}_{k,m}^t(s_g,s_\Delta,a_g) &= \bar{r}^{\pi_\ell}_\Delta(s,a_g) + {\gamma}/{m} \cdot \sum_{j\in[m]}\max_{a_g'\in\cA_g}\hat{Q}_{k,m}^t(s_g^j, s_\Delta^j,a_g').
\end{align*}
\texttt{G-LEARN} applies value iteration with $\tilde{\cT}$ until $\hat{Q}_{k,m}$ converges to a fixed point satisfying $\smash{\tilde{T}_{k,m}\hat{Q}_{k,m}^{\text{est}} = \hat{Q}_{k,m}^{\text{est}}}$ ($\smash{\tilde{\cT}_{k,m}}$ is $\gamma$-contractive), converging to a deterministic policy $\hat{\pi}_{k,m}^{\text{est}}$,
\begin{equation}\hat{\pi}_{k,m}^{\text{est}}(s_g,s_\Delta) = \arg\max_{a_g\in\mathcal{A}_g} \hat{Q}_{k,m}^{\text{est}}(s_g, s_\Delta, a_g).\end{equation}
When $\smash{|\cS_l|^k>|\cS_l|k^{|\cS_l|}}$, the mean-field MARL transformation is helpful in reducing complexity, and we learn the mean-field best-response $Q$-function, $\hat{Q}_{k,m}^t:\cS_g\times\mu_{k}(\cS_l)\times\cA_g\to\mathbb{R}$. Initializing at $0$, we iteratively perform the update in the empirical distribution parameterization $\hat{Q}_{k,m}^{t+1}(s_g, F_{s_\Delta}, a_g) = \hat{T}_{k,m}\hat{Q}_{k,m}^t(s_g, F_{s_\Delta}, a_g)$,
where $\hat{T}_{k,m}$ is the empirical adapted mean-field Bellman operator: 
\begin{align*}
    \hat{\cT}_{k,m}\hat{Q}_{k,m}^t & (s_g, F_{s_\Delta},a_g) = \bar{r}^{\pi_\ell}_\Delta(s,a_g) + {\gamma}/{m} \cdot  \sum_{j\in[m]}\max_{a_g'\in\cA_g}\hat{Q}_{k,m}^t(s_g^j, F_{s_\Delta^j},a_g'),
\end{align*}
 which admits the same fixed point $\hat{Q}_{k,m}^{\text{est}}$ and policy $\hat{\pi}_{k,m}^{\text{est}}$.

\textbf{Offline Learning (II).} For \cref{algorithm: l learn} (\texttt{L-LEARN}), we fix the global agent's policy $\pi_g$. Then, a representative local agent faces an induced MDP with state space
$\mathcal{S}_l \times \mathcal{S}_g$, action space $\mathcal{A}_l$, transition kernel inherited from the game,
and reward of the original game.  Since the global agent's action depends on a sample of $k$ local agents, the representative local agent's environment is not Markovian in $(s_g, s_i)$. We resolve this by constructing an episodic chained-MDP in Algorithms \ref{algorithm: k chained induced MDP} and \ref{algorithm: S chained induced MDP}. However, the global agent's policy $\pi_g$ requires a $k$-tuple of local states whereas a single local agent only sees $s_{g}$ and $s_{i}$. Our reduction resolves this by delicately maintaining a $k$-tuple of local states and serializing the $k$ local decisions within each macro step where each agent uses the same policy which maps from $\mathcal{S}_l\times\mathcal{S}_g\to\mathcal{A}_l$. \looseness=-1

\begin{algorithm}[t]
\caption{\texttt{L-LEARN}: Local-agent $Q$-learning}
    \begin{algorithmic}
        \REQUIRE Transition kernels $P_l, P_g$, fixed global-agent policy $\pi_g$, parameters $k$ and $\gamma$,  and failure probability $\delta_\ell$. Set $H\coloneqq \frac{1}{1-\gamma}\log \frac{\|r\|_\infty\sqrt{k}}{1-\gamma}$ as the effective horizon.
        \STATE \textcolor{blue}{\texttt{/$\star$ Form a proxy episodic MDP ${\widetilde{M}_k(\pi_g)}$}}
        \STATE \textbf{if} { $|\cS_l|^k \leq |\cS_l| k^{|\cS_l|}$} \quad \textcolor{blue}{\texttt{/$\star$ standard parameterization}}
        \STATE \quad  \cref{algorithm: k chained induced MDP}  constructs a $k$-chained MDP $\widetilde M_k$  with horizon $\widetilde H = H\cdot k$ 
        \STATE \textbf{else}  \qquad\qquad\qquad\quad \textcolor{blue}{\texttt{/$\star$ mean-field parameterization}}
        \STATE \quad \cref{algorithm: S chained induced MDP} constructs a $|\mathcal S_l|$-chained mean-field MDP $\widetilde M_{k}$ with horizon $\widetilde H = H(|\mathcal S_l|+1)$.
        \smallskip
        \STATE Run an episodic PAC RL solver  on $\widetilde{{M}}_k(\pi_g)$ with parameters $\epsilon_\ell=\frac{1}{n\sqrt{k}}, \delta_\ell$ to get induced policy $\pi_\ell$.
        \STATE \textbf{Return} policy $\pi_\ell$ and value function estimate $\widehat{V}^{\pi_\ell, \pi_g}$.
    \end{algorithmic}
    \label{algorithm: l learn}
\end{algorithm}

\cref{algorithm: k chained induced MDP} replaces each step $t$ by a chain of $k$ micro steps $\tau = tk,  \dots, tk+(k-1)$, indexed by a pointer $j\in[k]$. The state of the unfolded system at micro step $\tau$ is $\smash{(j, s_g, \bar{s}_g, s_{1:k})}$, where $s_{1:k}$ are the local ``replica'' states, $s_g$ is the current global state, and $\bar{s}_g$ is a pending next global state. At the first micro node of each chain at stage $(1, s_g, \perp, s_{1:k})$, we let $\smash{a_g\sim \pi_g(\cdot|s_g, s_{1:k})}$, and sample a pending global next state $\smash{\bar{s}_g\sim P_g(\cdot|s_g, a_g)}$. This transition is computed before updating any local replica, making it consistent with the original time-step. Along the chain of micro nodes, only one local replica is ``active'' at any time. At stage $j$, the learner selects an action $a_\ell$ for replica $j$, and the environment updates the replica via ${s_j' \sim P_l(\cdot|s_j, s_g, a_\ell)}$, leaving the other local replica states unchanged, and increments $j$. At stage $j=k$, it sets $s_g\!\gets\! \bar{s}_g$ and resets $\bar{S}_g \gets \perp$, completing a step. \footnote{We can recover a corresponding near-optimal policy in the original MDP from one in the chained MDP (as in Fig.~\ref{fig:k-chained-mdp}).}
Similarly, in the mean-field regime, \cref{algorithm: S chained induced MDP} maintains a distribution on the number of agents at each state: if $h(u)$ agents occupy state $u\in\mathcal{S}_l$, we apply the local action $a_\ell$ for state $u$ to move the mass $h(u)$, to next states via the multinomial induced by $P_l(\cdot|u,s_g,a_\ell)$, in a new histogram $\bar{F}_\Delta$. 

Finally, we reduce $\widetilde{M}_k(\pi_g)$ to an episodic MDP with state-only rewards in \cref{subsection: reduction episodic}, and run an episodic PAC RL solver (UCFH \citep{dann2015sample} in \cref{algorithm from brunskill and dann}) to get an $\epsilon_\ell$ best-response induced policy $\pi_\ell$. 

\textbf{Offline Learning (III).} \texttt{ALTERNATING-MARL} in \cref{algorithm: alternating marl} orchestrates alternating best-responses of \texttt{G-LEARN} and \texttt{L-LEARN} for $N_{\text{steps}}$-many iterations: each algorithm proposes its best-response policy, and \cref{algorithm: alternating marl} accepts the policy if it estimates that it has improved the joint value function, up to a tolerance radius of $\smash{\tilde{O}(1/\sqrt{k})}$. To ensure that the potential drifts upward on every accepted update, we introduce function \texttt{UPDATE} in \cref{algorithm: alternating marl}, which rejects worse joint policies and terminates the algorithm earlier if a learned best-response policy is within the tolerance radius.\looseness=-1

\begin{algorithm}[t]
\begin{algorithmic}
\caption{\texttt{ALTERNATING-MARL}} 
\REQUIRE Transition functions $P_g$ and $P_\ell$, initial uniform policies $(\pi_g^0,\pi_\ell^0)$, steps $N_{\text{steps}}$,   failure probability $\delta$, algorithms \texttt{G-LEARN} and \texttt{L-LEARN}, and sampling parameters $k$ and $m$.
\STATE Set $\delta_{\text{learn}} = \frac{\delta}{4N_{\text{steps}}}, T=\frac{2}{1-\gamma}\log\frac{\tilde{r}\sqrt{k}}{(1-\gamma)^2}$, and  $\hat{V}_{\text{old}} = 0$.\\
\FOR{$t=1,\dots,N_{\text{steps}}$}
\STATE $ \tilde\pi_g, \hat{V}_{\text{new}} \leftarrow \texttt{G-LEARN}
(P_g,\bar{P}_l^{\pi_\ell^{t-1}},k,m,T)$ 
\STATE $\smash{\pi_g^t, \textsc{Nash}, \hat{V}_{\text{old}}} \gets 
\texttt{UPDATE}(\tilde{\pi}_g, \pi_g^{t-1}, \hat{V}_{\text{new}}, \hat{V}_{\text{old}})$ \quad \textcolor{blue}{\texttt{/$\star$ Propose a policy update to $\pi_g$}}
\STATE \textbf{if} \textsc{Nash} = \textsc{True} \textbf{then} \textbf{return} 
$(\pi_g^t, \pi_\ell^{t-1})$.\smallskip
  \STATE $\tilde\pi_\ell, \hat{V}_{\text{new}} \leftarrow \texttt{L-LEARN}(P_g,P_\ell,\pi_g^t,k, \delta_{\text{learn}})$.
  \STATE $\pi_\ell^t, \textsc{Nash}, \hat{V}_{\text{old}} \gets \texttt{UPDATE}(\tilde{\pi}_\ell, \pi_\ell^{t-1},  \hat{V}_{\text{new}}, \hat{V}_{\text{old}})$  \quad \textcolor{blue}{\texttt{/$\star$ Propose a policy update to $\pi_\ell$}}

  \STATE \textbf{if} \textsc{Nash} = \textsc{True} \textbf{ then }  \textbf{return} $(\pi^t_g, \pi^t_\ell)$.
  \ENDFOR
\STATE \textbf{return} $(\pi_g^{N_{\text{steps}}},\pi_\ell^{N_{\text{steps}}})$.
\smallskip
\STATE \textbf{Function} \texttt{UPDATE}$(\pi_{\text{new}}, \pi_{\text{old}}, \hat{V}_{\text{new}}, \hat{V}_{\text{old}}):$
\STATE \quad Let ${\eta = \frac{2\tilde{r}}{(1-\gamma)^2}(\sqrt{\frac{1}{2k} \ln(2|\mathcal{S}_l||\mathcal{A}_g|\sqrt{k})} + \frac{5}{\sqrt{k}})}$.

\STATE \quad \textbf{if} {$\hat{V}_{\text{new}} < \hat{V}_{\text{old}} - 2\eta$} on some state \textbf{ then }  \textbf{return} $(\pi_{\text{old}}, \textsc{False}, \hat{V}_\text{old})$\quad  \textcolor{blue}{\texttt{/$\star$ reject}}
\STATE \quad \textbf{if} {$\hat{V}_{\text{new}} > \hat{V}_{\text{old}} + 2\eta$} on some state \textbf{then}  \textbf{return} $(\pi_{\text{new}}, \textsc{False}, \hat{V}_{\text{new}})$  \textcolor{blue}{\texttt{/$\star$ accept}}
\STATE \quad \textbf{else  }    \textbf{return} $(\pi_{\text{old}}, \textsc{True}, \hat{V}_\text{old})$  \textcolor{blue}{\texttt{/$\star$ terminate}}
\label{algorithm: alternating marl}
\end{algorithmic}
\end{algorithm} 
    
\textbf{Online Execution:} \cref{algorithm: online execution}  (Online execution). To convert the optimality of the global agent's action in the $k$ local-agent subsystem to an approximate optimality guarantee on the full $n$-agent subsystem, we propose a randomized policy $\smash{\pi_{k,m}^{\text{est}}}$ where the global agent samples $\smash{\Delta\in{[n]\choose k}}$ at each time-step to derive an action $\smash{a_g\gets \hat{\pi}_{k,m}^{\text{est}}(s_g, s_\Delta)}$ and each local agent $\smash{i\in[n]}$ evaluates the shared $\smash{\pi_\ell}$ to get an action ${a_i \sim \pi_\ell(\cdot|s_i,s_g)}$. \cref{theorem: main result} shows that the resultant policy is a $\tilde{O}(1/\sqrt{k})$-approximate NE.\looseness=-1

\begin{algorithm}
    \caption{Online execution stage}
    \begin{algorithmic}
        \REQUIRE Policies $\pi_g$ and $\pi_\ell$, and states $s_g(0), s_{[n]}(0)\!\sim\!s_0$.
        \STATE Let $R \gets 0$
        \FOR{$t=0,\dots,T-1$}
        \STATE Sample $\Delta\subseteq {[n]\choose k}$ uniformly at random and let $a_g(t) \sim \hat{\pi}_{g,k}(\cdot|s_g(t), s_\Delta(t))$
        \STATE For all $i\in[n]$, let $a_i(t) \sim \pi_\ell(\cdot|s_i(t),s_g(t))$
        \STATE Collect a stage reward $r_t= r_g(s_g,a_g)+\frac1n \sum_{i=1}^n r_l(s_i,s_g,a_i)$
        \ENDFOR
        \label{algorithm: online execution}
    \end{algorithmic}
\end{algorithm}
\section{Theoretical Guarantees and Analysis}
 We begin by giving theoretical guarantees for \texttt{G-LEARN.} For this, we introduce the Bellman noise.

\begin{definition}[Bellman noise] \emph{Note that $\hat{\cT}_{k,m}$ is an unbiased estimator of the adapted Bellman operator $\hat{\cT}_k$, where $\hat{\mathcal T}_k \hat{Q}_k(s_g, s_\Delta, a_g)
\coloneqq
\bar{r}^{\pi_\ell}_\Delta(s,a)
+\gamma\mathbb E_{(s_g',s_\Delta')\sim \mathcal P_k} \max_{a_g'\in\cA_g} \hat{Q}_k(s_g',s_\Delta',a_g')]$,
where $\mathcal P_k\!\coloneqq\!P_g\!\otimes\!(\bar{P}^{\pi_\ell}_l)^{\otimes k}$. Here, $\hat Q_k$ is initialized to $0$ and at time $t$ we update $\hat{Q}_k^{t+1}\!=\!\hat{\cT}_k \hat{Q}_k^t$. Similarly, $\hat{\cT}_k$ is a $\gamma$-contraction and admits a fixed point $\hat{Q}_k^*$. By the law of large numbers, as $m\!\to\!\infty$, $\hat{\cT}_{k,m}\!\to\!\hat{\cT}_k$ and $\hat{Q}^*_{k,m}\!\to\!\hat{Q}^*_k$. For finite $m$, $\epsilon_{k,m}\coloneqq \|\hat{Q}_{k,m}^{\text{est}} - \hat{Q}_k^*\|_\infty$ is the Bellman noise.}\looseness=-1
\end{definition}
 \cref{theorem: g-learn preliminary} provides a bound on the approximate best-response of the learned policy.

\begin{theorem}\label{theorem: g-learn preliminary} For all $s\in\cS_g\times\cS_l^n$, if $T\geq\frac{2}{1-\gamma}\log\frac{\tilde{r}\sqrt{k}}{(1-\gamma)}$ in \emph{\texttt{G-LEARN}}, then
\begin{align*}V^{\pi^*}(s)&- V^{{{\pi}}^\est_{k,m}}(s) \leq \frac{2\epsilon_{k,m}}{1-\gamma} + \frac{2\tilde{r} k^{-1/2}}{(1-\gamma)^2} + \frac{2\tilde{r}}{(1-\gamma)^2}\sqrt{\frac{1}{2k} \ln 2k |\mathcal{S}_l||\cA_g|}.
\end{align*}
\end{theorem}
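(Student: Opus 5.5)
Here $\pi^*$ denotes the global agent's optimal response to the fixed local policy $\pi_\ell$ in the full $n$-agent system, with $\pi_\ell$ held fixed throughout. The global agent's problem is then an ordinary MDP on $\cS_g\times\cS_l^n$ with action set $\cA_g$, reward $\bar r^{\pi_\ell}_{[n]}$ and kernel $P_g\otimes(\bar P_l^{\pi_\ell})^{\otimes n}$. I would follow the three-term decomposition of \cite{anand2025meanfield}, which matches the three terms of the bound.

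\emph{Step 1: a performance-difference bound.} Write $Q^*$ for the optimal $Q$-function of this full MDP. For any randomized policy $\pi$ with $a_g\sim\pi(\cdot|s)$, the standard simulation argument gives
\[
V^{\pi^*}(s)-V^{\pi}(s)\le \frac{1}{1-\gamma}\sup_{s'}\mathbb{E}_{a_g\sim\pi(\cdot|s')}\bigl[\max_{a}Q^*(s',a)-Q^*(s',a_g)\bigr].
\]
It is obtained by unrolling $V^{\pi^*}-V^\pi=\mathbb{E}_\pi[Q^*(s,\pi^*(s))-Q^*(s,a_g)]+\gamma P^\pi(V^{\pi^*}-V^\pi)$. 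I apply it with $\pi=\pi^{\est}_{k,m}$, where $a_g=\hat\pi^{\est}_{k,m}(s_g,s_\Delta)$ for a uniformly random $\Delta$. For any $\Delta$, the per-state gap satisfies
\[
\max_a Q^*(s,a)-Q^*(s,\hat\pi(s_g,s_\Delta))\le 2\max_{a_g}\bigl|Q^*(s,a_g)-\hat Q^{\est}_{k,m}(s_g,s_\Delta,a_g)\bigr|,
\]
because $\hat\pi$ is greedy with respect to $\hat Q^{\est}_{k,m}$. This factor of 2 accounts for the factor 2 in every term of the bound.

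\emph{Step 2: split the $Q$-error.} I bound $|Q^*-\hat Q^{\est}_{k,m}|$ by $|\hat Q^{\est}_{k,m}-\hat Q^{T}_{k,m}|+|\hat Q^{T}_{k,m}-\hat Q^*_k|$ (together at most $\epsilon_{k,m}$ plus the truncation error) plus $|\hat Q^*_k(s_g,s_\Delta,\cdot)-Q^*(s,\cdot)|$. The truncation error is at most $\gamma^T\tilde r/(1-\gamma)$ by contraction from initialization at 0. The choice $T\ge\frac{2}{1-\gamma}\log\frac{\tilde r\sqrt k}{1-\gamma}$ makes it $O(k^{-1/2})$, and it is absorbed into the $\frac{2\tilde r k^{-1/2}}{(1-\gamma)^2}$ term.

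\emph{Step 3 (main obstacle): subsampling error.} This is the analog of \cref{theorem: Q-lipschitz of Fsdelta and Fsn}. I show $\hat Q^*_k(s_g,s_\Delta,a_g)$ and $Q^*(s,a_g)$ are Lipschitz in the empirical distribution, that is,
\[
\bigl|\hat Q^*_k(s_g,s_\Delta,a_g)-Q^*(s,a_g)\bigr|\le \frac{2\tilde r_l}{1-\gamma}\,\mathrm{TV}(F_{s_\Delta},F_{s_{[n]}}).
\]
The proof goes by induction over Bellman iterates. The reward depends on the local states only through the empirical mean of $\bar r_l^{\pi_\ell}(\cdot,s_g)$. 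Under a coupling, each local agent transitions independently given $s_g$, so the TV distance is preserved in expectation through one step and contracts with $\gamma$. Then I control $\mathrm{TV}(F_{s_\Delta},F_{s_{[n]}})$ for uniform $\Delta$. Sampling without replacement, Hoeffding/DKW gives deviation $\sqrt{\frac{1}{2k}\ln(2|\cS_l||\cA_g|/\delta)}$ with probability $1-\delta$, after a union bound over $|\cS_l|$ states and $|\cA_g|$ actions. I set $\delta=1/\sqrt k$ and bound the failure event by $\tilde r/(1-\gamma)\cdot k^{-1/2}$, which yields the $\sqrt{\frac{1}{2k}\ln 2k|\cS_l||\cA_g|}$ term and another $O(k^{-1/2})$ term. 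The delicate point is that this expectation is taken over the random $\Delta$ inside the policy, which Step 1 permits. The Lipschitz induction also has to handle the $\max$ over actions and the coupling of $k$ versus $n$ agents, and that is where most of the effort lies.

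\emph{Step 4: assemble.} I combine Steps 1–3 and multiply by $\frac{2}{1-\gamma}$, so $\epsilon_{k,m}$ gives $\frac{2\epsilon_{k,m}}{1-\gamma}$ and the $\frac{\tilde r}{1-\gamma}$-scaled errors give the two $(1-\gamma)^{-2}$ terms.
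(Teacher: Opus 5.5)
Your route is the paper's. Steps 1 and 4 are its performance-difference argument: the factor $2$ from greediness of $\hat\pi^{\est}_{k,m}$, the failure event charged at $\tilde r/(1-\gamma)$, and $\delta\approx k^{-1/2}$. Step 3 is its \cref{theorem: Q-lipschitz of Fsdelta and Fsn}.

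\textbf{The gap: Step 3 is false for $k<n$.} The inequality $|\hat Q^*_k(s_g,s_\Delta,a_g)-Q^*(s,a_g)|\le\frac{2\|r_l\|_\infty}{1-\gamma}\mathrm{TV}(F_{s_\Delta},F_{s_{[n]}})$ does not hold, and the paper's proof of it fails at the same point. Take
\begin{itemize}
\item $\mathcal{S}_l=\{0,1\}$, $\mathcal{S}_g=\{0,1,2\}$, $\mathcal{A}_g=\{1,2\}$, with $P_g(s_g'\mid s_g,a_g)=\mathbbm{1}\{s_g'=a_g\}$;
\item $r_g\equiv 0$, $\bar r_l(z,0)=0$ and $\bar r_l(z,g)=\mathbbm{1}\{z=g-1\}$ for $g\in\{1,2\}$ (add a constant if positivity is required);
\item $\bar P_l$ resamples a local state uniformly when $s_g=0$ and freezes it when $s_g\in\{1,2\}$.
\end{itemize}
From $s_g=0$ the continuation is worth $F'(a_g-1)+\frac{\gamma}{1-\gamma}\max_z F'(z)$, where $F'$ is the empirical law of the fresh draws; the global agent then parks on the majority state forever. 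Hence $Q(0,F,a_g)=\gamma\bigl(\tfrac12+\tfrac{\gamma}{1-\gamma}\mathbb{E}\max_z F'(z)\bigr)$ in both the $k$- and the $n$-agent model.

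Now let all $n$ agents start in state $0$, so $\mathrm{TV}(F_{s_\Delta},F_{s_{[n]}})=0$. For $k=1$,
\[
\hat Q^*_1-Q^*=\frac{\gamma^2}{1-\gamma}\bigl(1-\mathbb{E}\max_z F'_n(z)\bigr)\ge\frac{\gamma^2}{2(1-\gamma)}\bigl(1-n^{-1/2}\bigr)>0,
\]
and for general $k$ the gap is of order $\gamma^2/((1-\gamma)\sqrt k)$.

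Your justification breaks because TV is non-expansive only for the conditional means: $\mathrm{TV}(\mathbb{E}F_{s'_\Delta},\mathbb{E}F_{s'_{[n]}})\le\mathrm{TV}(F_{s_\Delta},F_{s_{[n]}})$. Passing through $\max_{a_g'}$ instead requires $\mathbb{E}\,\mathrm{TV}(F_{s'_\Delta},F_{s'_{[n]}})$ under a coupling. That quantity contains the sampling fluctuation of the $k$-agent empirical measure, and the max turns it into a bias. In the paper the same slip is in \cref{lemma: expectation_expectation_max_swap}, which pulls a maximizer that depends on the random next state out of the expectation. If you carry the fluctuation as an additive term through your induction, using $\mathbb{E}\,\mathrm{TV}(F_{s'_\Delta},\mathbb{E}F_{s'_\Delta})\le\frac12\sqrt{|\mathcal{S}_l|/k}$ (and likewise for $n$), you get
\[
\bigl|\hat Q^*_k(s_g,s_\Delta,a_g)-Q^*(s,a_g)\bigr|\le\frac{2\|r_l\|_\infty}{1-\gamma}\,\mathrm{TV}(F_{s_\Delta},F_{s_{[n]}})+\frac{2\gamma\|r_l\|_\infty}{(1-\gamma)^2}\sqrt{|\mathcal{S}_l|/k}.
\]

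\textbf{A second problem in the concentration half of Step 3}, also shared with the paper's proof of \cref{theorem: Q-lipschitz of Fsdelta and Fsn}. A union bound over the $|\mathcal{S}_l|$ single states controls $\max_x|F_{s_\Delta}(x)-F_{s_{[n]}}(x)|$. It does not control $\mathrm{TV}=\max_{A\subseteq\mathcal{S}_l}\bigl(F_{s_\Delta}(A)-F_{s_{[n]}}(A)\bigr)$, which can be up to $|\mathcal{S}_l|/2$ times larger. Hoeffding for sampling without replacement plus a union over the $2^{|\mathcal{S}_l|}$ sets gives $\mathrm{TV}\le\sqrt{(|\mathcal{S}_l|\ln 2+\ln(1/\delta))/(2k)}$, and $\sqrt{|\mathcal{S}_l|/k}$ is the true order for a spread-out population. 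The union over $\mathcal{A}_g$ is superfluous, since this event does not involve the action.

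\textbf{Consequence.} With both repairs your route gives a value gap of order $\frac{\tilde r}{(1-\gamma)^3}\sqrt{|\mathcal{S}_l|/k}+\frac{2\epsilon_{k,m}}{1-\gamma}$. That is still $\tilde O(1/\sqrt k)$ in $k$, but it is not the stated inequality with its $\ln|\mathcal{S}_l|$ and $(1-\gamma)^{-2}$ dependence.

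\textbf{Minor: drop the truncation detour in Step 2.} $\pi^{\est}_{k,m}$ is greedy for the fixed point $\hat Q^{\est}_{k,m}$, and $\|\hat Q^{\est}_{k,m}-\hat Q^*_k\|_\infty=\epsilon_{k,m}$ by definition, so the hypothesis on $T$ plays no role; the paper's appendix bound uses no $T$. As written, the term would not be absorbable anyway. The stated $T$ only yields $\gamma^T\tilde r/(1-\gamma)\le(1-\gamma)/(\tilde r k)$, which scales like $1/\tilde r$, not like $\tilde r$.
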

We prove \cref{theorem: g-learn preliminary} in \cref{sec: optimality of global agent policy}, and generalize the result to stochastic rewards in \cref{stochastic generalization}. To bound $\epsilon_{k,m}$, we specify the number of samples $m$ for controlling the Bellman noise in \cref{lemma: controlling bellman noise}.

\begin{lemma}
    [Controlling the Bellman noise]\label{lemma: controlling bellman noise} For $k\in[n]$, let \smash{$m_1 \geq  \tilde{O}(\frac{k^2 \gamma^2 \tilde{r}^2 |\cS_l|}{(1-\gamma)^4})$} be the number of samples in the mean-field parameterization and {$m_2 \geq \tilde{O}(\frac{\tilde{r}^2 k^2 \gamma^2}{(1-\gamma)^4})$} be the number of samples in the standard parameterization. Then, with probability at least $1-\frac{1}{100e^k}$ we have $\epsilon_{k,m}\leq \frac{2}{\sqrt{k}}$.\looseness=-1
\end{lemma}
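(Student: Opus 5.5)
The plan is to bound $\epsilon_{k,m}=\|\hat Q^{\est}_{k,m}-\hat Q^*_k\|_\infty$ by comparing the fixed point of the empirical operator $\hat{\cT}_{k,m}$ (resp.\ $\tilde{\cT}_{k,m}$) with that of $\hat{\cT}_k$. Both are $\gamma$-contractions in $\|\cdot\|_\infty$, so the standard fixed-point perturbation gives
\begin{align*}
\|\hat Q^{\est}_{k,m}-\hat Q^*_k\|_\infty \le \tfrac{1}{1-\gamma}\,\|\hat{\cT}_{k,m}\hat Q^*_k-\hat{\cT}_k\hat Q^*_k\|_\infty .
\end{align*}
This reduces the problem to a uniform concentration bound for the Bellman update evaluated at the single deterministic function $\hat Q^*_k$. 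Two points make this work. First, $\hat Q^*_k$ does not depend on the samples. Second, in the construction the fresh $m$ samples are drawn per entry $(s_g,F_{s_\Delta},a_g)$ or $(s_g,s_\Delta,a_g)$. So for each entry the quantity
\begin{align*}
\frac{\gamma}{m}\sum_{j\in[m]}\max_{a_g'}\hat Q^*_k(s_g^j,F_{s^j_\Delta},a_g')
\end{align*}
is an average of $m$ i.i.d.\ variables. Each variable lies in $[0,\gamma\tilde r/(1-\gamma)]$, using positive rewards (Assumption~\ref{assumption: bounded rewards}) and $\|\hat Q^*_k\|_\infty\le \tilde r/(1-\gamma)$. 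Its mean is exactly the expectation term of $\hat{\cT}_k\hat Q^*_k$.

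Next, I apply Hoeffding's inequality per entry. This gives deviation at most $\frac{\gamma\tilde r}{1-\gamma}\sqrt{\ln(2N/\delta)/(2m)}$ simultaneously over all $N$ entries with probability $1-\delta$, after a union bound. I then set the target error
\begin{align*}
\frac{1}{1-\gamma}\cdot\frac{\gamma\tilde r}{1-\gamma}\sqrt{\frac{\ln(2N/\delta)}{2m}}\le \frac{2}{\sqrt k}
\end{align*}
with $\delta=\frac{1}{100e^k}$. Solving gives $m\gtrsim \frac{k\gamma^2\tilde r^2}{(1-\gamma)^4}\bigl(\ln N + k+\ln 200\bigr)$.

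The remaining step is to count the entries $N$ in each regime.
\begin{itemize}
\item In the mean-field parameterization, $N=|\cS_g||\cA_g|\,|\mu_k(\cS_l)|\le|\cS_g||\cA_g|(k+1)^{|\cS_l|}$. Then $\ln N=\tilde O(|\cS_l|)$, and $\ln(1/\delta)=O(k)$ contributes a further factor $k$. Together these give $m_1=\tilde O(k^2\gamma^2\tilde r^2|\cS_l|/(1-\gamma)^4)$.
\item In the standard parameterization, we are in the branch $|\cS_l|^k\le|\cS_l|k^{|\cS_l|}$, so $N\le|\cS_g||\cA_g||\cS_l|k^{|\cS_l|}$. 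Here the dominant term is the $k$ from $\ln(1/\delta)$, with logarithmic factors absorbed into $\tilde O$. This gives $m_2=\tilde O(\tilde r^2k^2\gamma^2/(1-\gamma)^4)$.
\end{itemize}

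The main care point is justifying the single-function concentration rather than a uniform bound over iterates $\hat Q^t_{k,m}$, which would be correlated with the samples. I would handle this through the perturbation argument above: apply $\hat{\cT}_{k,m}$ to $\hat Q^*_k$, not to the random iterates. This relies on the samples being fixed once per entry and reused across iterations. A secondary concern is the iteration count $T$. If $\hat Q^{\est}_{k,m}$ means the $T$-th iterate rather than the exact fixed point, I would add the optimization error $\gamma^T\tilde r/(1-\gamma)$. This is at most $1/\sqrt k$ for the $T$ chosen in \cref{algorithm: alternating marl}, and I would rebalance constants so the Hoeffding term is also at most $1/\sqrt k$, keeping the total at $2/\sqrt k$.
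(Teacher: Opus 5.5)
Your proposal is correct and follows the paper's route. Both apply Hoeffding at the sample-independent fixed point $\hat Q^*_k$, take a union bound over the $N$ table entries, use the $\frac{1}{1-\gamma}$ fixed-point perturbation bound for the $\gamma$-contraction $\hat{\cT}_{k,m}$, and split the $2/\sqrt{k}$ budget into a $1/\sqrt{k}$ truncation error from the $T$ value-iteration steps plus a $1/\sqrt{k}$ sampling error at failure probability $\frac{1}{100e^k}$. Your bookkeeping is slightly cleaner than the paper's: you get $m\asymp \frac{k\gamma^2\tilde r^2}{(1-\gamma)^4}(k+\ln N)$, using the $\tilde r/(1-\gamma)$ bound on $\hat Q^*_k$ rather than the paper's $\|r_l\|_\infty$. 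The paper's explicit choice $m=\frac{2k^2\gamma^2\|r_l\|_\infty^2}{(1-\gamma)^4}\ln(200e^kB_k)$ carries a superfluous factor of $k$, so your accounting actually matches the $k^2$ order in the statement.

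One fix is needed in your standard-parameterization bullet. Do not route through the branch condition. Bounding $N\le|\cS_g||\cA_g||\cS_l|k^{|\cS_l|}$ leaves a $|\cS_l|\ln k$ term that is \emph{not} dominated by $k$ in that branch, since the condition $(k-1)\ln|\cS_l|\le|\cS_l|\ln k$ says the opposite. Use the exact count $N=|\cS_g||\cA_g||\cS_l|^k$ instead, as the paper does. Then $\ln N=k\ln|\cS_l|+\ln(|\cS_g||\cA_g|)=\tilde O(k)$, which gives $m_2=\tilde O\bigl(\tilde r^2k^2\gamma^2/(1-\gamma)^4\bigr)$.
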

    We prove \cref{lemma: controlling bellman noise} in Appendix \ref{lemma: epsilon_km_is_k}. Combining \cref{theorem: g-learn preliminary} and \cref{lemma: controlling bellman noise},  we have

\begin{theorem}
    [Global agent subsampling result] \label{theorem: g-learn}
    Let $m^*$ be chosen as in \cref{lemma: controlling bellman noise}, and set $\pi_{k}^{\text{est}} \coloneqq \pi^{\text{est}}_{k,m^*}$. Then, for all $s\in\mathcal{S}$, we have that with probability at least $1-\frac{1}{100e^k}$: \label{theorem: application of PDL}
\begin{align*}V^{\pi^*}(s) &- V^{{{\pi}}^\text{est}_{k}}(s)  \leq {O}\left(\frac{\tilde{r}}{\sqrt{k}(1-\gamma)^2}\sqrt{ \ln(2|\mathcal{S}_l||\mathcal{A}_g|\sqrt{k})} + \frac{2}{\sqrt{k}}\right) = \tilde{O}\left(\frac{1}{\sqrt{k}}\right).
\end{align*}
\end{theorem}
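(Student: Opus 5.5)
This follows by combining \cref{theorem: g-learn preliminary} with \cref{lemma: controlling bellman noise}, using the sample size $m^*$ from the lemma. Fix $k$, the local policy $\pi_\ell$, and $T$ as prescribed in \texttt{ALTERNATING-MARL}. The choice $T=\frac{2}{1-\gamma}\log\frac{\tilde r\sqrt k}{(1-\gamma)^2}$ dominates the threshold $\frac{2}{1-\gamma}\log\frac{\tilde r\sqrt k}{1-\gamma}$, since $(1-\gamma)^{-2}\ge (1-\gamma)^{-1}$. So the hypothesis of \cref{theorem: g-learn preliminary} holds deterministically, and for every $s\in\mathcal S$ we have
\[
V^{\pi^*}(s)-V^{\pi^{\est}_{k,m^*}}(s)\le \frac{2\epsilon_{k,m^*}}{1-\gamma}+\frac{2\tilde r}{\sqrt k(1-\gamma)^2}+\frac{2\tilde r}{(1-\gamma)^2}\sqrt{\frac{1}{2k}\ln\bigl(2k|\mathcal S_l||\mathcal A_g|\bigr)} .
\]
This bound is a pointwise inequality conditional on the realized Bellman noise $\epsilon_{k,m^*}$. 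The only randomness is in the $m^*$ samples drawn by \texttt{G-LEARN}.

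Next I would define the event $\mathcal E=\{\epsilon_{k,m^*}\le 2/\sqrt k\}$. By \cref{lemma: controlling bellman noise}, with $m^*=m_1$ in the mean-field regime and $m^*=m_2$ in the standard regime, $\Pr[\mathcal E]\ge 1-\frac{1}{100e^k}$. The bound above holds uniformly in $s$ and $\epsilon_{k,m^*}$ is a sup-norm quantity, so on $\mathcal E$ the inequality holds simultaneously for all $s$ and no union bound over states is needed. Substituting gives a first term at most $\frac{4}{\sqrt k(1-\gamma)}$. I would record the constant $2/\sqrt k$ from the lemma as in the statement and absorb the factor $(1-\gamma)^{-1}$ into the $O(\cdot)$, treating $\gamma$ as a constant. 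If one wants the explicit form, this term is also dominated by $\frac{\tilde r}{\sqrt k(1-\gamma)^2}$ up to constants whenever $\tilde r\gtrsim 1-\gamma$.

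Finally I would simplify the logarithm. We have $\ln(2k|\mathcal S_l||\mathcal A_g|)\le 2\ln(2|\mathcal S_l||\mathcal A_g|\sqrt k)$, because $2k|\mathcal S_l||\mathcal A_g|\le (2|\mathcal S_l||\mathcal A_g|\sqrt k)^2$. Hence the third term is $O\bigl(\frac{\tilde r}{\sqrt k(1-\gamma)^2}\sqrt{\ln(2|\mathcal S_l||\mathcal A_g|\sqrt k)}\bigr)$. The second term $\frac{2\tilde r}{\sqrt k(1-\gamma)^2}$ is absorbed into this one, since the square root is bounded below by a positive constant. Collecting terms gives the displayed bound, and since $\tilde O$ hides polylog factors in $k,|\mathcal S_l|,|\mathcal A_g|$, the right-hand side is $\tilde O(1/\sqrt k)$.

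The main obstacle is bookkeeping rather than mathematics. One must check that the $T$ used matches the hypothesis of \cref{theorem: g-learn preliminary}, that $m^*$ is chosen according to the regime \texttt{G-LEARN} actually runs in, and that the $(1-\gamma)^{-1}$ factor on the noise term is legitimately absorbed into the $O(\cdot)$.
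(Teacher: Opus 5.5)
Your proposal is correct and follows the paper's argument: the theorem is obtained exactly by substituting the bound $\epsilon_{k,m^*}\le 2/\sqrt{k}$ from \cref{lemma: controlling bellman noise}, which holds with probability at least $1-\frac{1}{100e^k}$, into the bound of \cref{theorem: g-learn preliminary}, which is deterministic once the $m^*$ samples are fixed. Your explicit noise term $\frac{4}{\sqrt{k}(1-\gamma)}$ is also what the paper records in \cref{pdl result in otilde form}, so your caveat about absorbing the extra factor $(1-\gamma)^{-1}$ into the $O(\cdot)$ points to the same looseness that appears in the paper's stated form.
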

Moreover, the sample complexity of \cref{algorithm: g learn} (\texttt{G-LEARN}) in the standard parameterization is $\tilde{O}(\frac{k^4\gamma^2 \tilde{r}^2 |\cS_g| |\cS_l|^k |\cA_g|}{(1-\gamma)^5})$, while the mean-field sample complexity is $\tilde{O}(\frac{k^{3+|\cS_l|}\gamma^2\|r_l\|^2_\infty |\cS_g| |\cS_l|^2 |\cA_g|}{(1-\gamma)^5})$.

\begin{remark} We extend the formulation of \texttt{G-LEARN} to off-policy $Q$-learning \cite{chen2022sample}, which replaces the generative oracle with a stochastic approximation scheme to learn a policy using historical data. \cref{Appendix/off-policy} gives theoretical guarantees with a similar decaying optimality gap as in \cref{theorem: performance_difference_lemma_applied}. \end{remark}

\subsection{Convergence Guarantees on \texttt{L-LEARN}}

\begin{theorem}[Local agent best-response optimality and sample-complexity guarantees] \label{theorem: l-learn}With probability at least $1-\delta$, the local-agent's learned policy is an $\epsilon$ best-response to the global agent's fixed policy $\pi_g$ with sample complexity
\[\tilde{O}\!\left(\!\min\!\left\{\!\frac{|\cS_l|^5|\cS_g|^2 k^{2|\cS_l|}|\cA_l|}{\epsilon^2(1-\gamma)^2} \!\ln\!\frac1\delta, \frac{k^5  |\cS_g|^2 |\cS_l|^k |\cA_l|}{\epsilon^2(1-\gamma)^3}\!\ln\!\frac1\delta\!\right\}\!\right).\]   
\end{theorem}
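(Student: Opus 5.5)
We prove \cref{theorem: l-learn} by reducing the local agent's best-response problem to episodic PAC learning in the chained MDP $\widetilde M_k(\pi_g)$ built by \texttt{L-LEARN}, and then invoking the UCFH guarantee of \cite{dann2015sample}. The proof has three steps.

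\textbf{Step 1: fidelity of the chained MDP.} With $\pi_g$ fixed, consider a shared local policy $\pi_\ell:\cS_l\times\cS_g\to\Delta^{\cA_l}$ executed on the $k$-chained MDP of \cref{algorithm: k chained induced MDP}. Its macro-step marginals coincide with those of the $k$-agent subsystem under $(\pi_g,\pi_\ell)$, for three reasons:
\begin{itemize}[leftmargin=*]
\item the pending global state $\bar s_g$ is drawn at the first micro node, from $s_g$ and the full replica tuple;
\item each replica's update uses only $(s_j,s_g,a_\ell)$;
\item local transitions are conditionally independent given $(s_g,a_\ell)$.
\end{itemize}
Serializing the replica updates therefore does not change the joint law of $(s_g(t+1),s_{1:k}(t+1))$. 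We assign the reward $\frac1k \bar r_l$ at each micro step and $r_g$ at the first micro step. Then the $\widetilde H$-step episodic return equals the $H$-step truncated discounted value of the subsystem, after absorbing $\gamma$ either as a termination probability or as a time-indexed reward scaling. Since $H=\frac{1}{1-\gamma}\log\frac{\|r\|_\infty\sqrt k}{1-\gamma}$, the truncation error is at most $\gamma^H\tilde r/(1-\gamma)\le 1/\sqrt k$.

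The mean-field chain of \cref{algorithm: S chained induced MDP} is handled the same way. Because the policy is shared and the agents are exchangeable, moving the mass $h(u)$ via a multinomial draw under $P_l(\cdot\mid u,s_g,a_\ell)$ reproduces the law of the updated empirical distribution $F_{s_\Delta}$.

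Finally, the subsampled-to-full transfer argument of \cref{theorem: g-learn preliminary} gives that the representative agent's value in the $k$-system is within $\tilde O(1/\sqrt k)$ of its value in the $n$-system. Consequently, an $\epsilon_\ell$-optimal policy in $\widetilde M_k$ is an $\epsilon$ best response in the original game.

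\textbf{Step 2: apply UCFH.} First reduce to state-only rewards (\cref{subsection: reduction episodic}) by folding the last action into the state. UCFH then returns an $\epsilon$-optimal policy with probability $1-\delta$ after $\tilde O\big(\frac{|\widetilde S|^2|\cA_l|\widetilde H^2}{\epsilon^2}\ln\frac1\delta\big)$ episodes. We use the bound $C\le|\widetilde S|$ on the number of reachable successors.

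\textbf{Step 3: count states and horizon in each regime.}
\begin{itemize}[leftmargin=*]
\item \emph{Standard regime.} Here $|\widetilde S|\approx k|\cS_g|^2|\cS_l|^k$ and $\widetilde H=Hk$. The successor count is small: only replica $j$ moves, plus $\bar s_g$ at the first micro node, so $C\le|\cS_l||\cS_g|$. Using this refined form $|\widetilde S|\,C$ in place of $|\widetilde S|^2$, and multiplying episodes by $\widetilde H$ to count samples, yields $\tilde O\big(\frac{k^5|\cS_g|^2|\cS_l|^k|\cA_l|}{\epsilon^2(1-\gamma)^3}\ln\frac1\delta\big)$.
\item \emph{Mean-field regime.} Here $|\widetilde S|\lesssim(|\cS_l|+1)|\cS_g|^2k^{|\cS_l|}$ and $\widetilde H=H(|\cS_l|+1)$. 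This gives the first term in the minimum.
\end{itemize}
Since \texttt{L-LEARN} selects whichever parameterization is smaller, the overall bound is the minimum of the two.

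\textbf{Main obstacle.} The difficult step is Step 1, specifically showing that optimizing over Markov policies of the chained MDP yields a policy in $\Pi_\ell$. A policy on $\widetilde M_k$ may depend on the pointer $j$ and on the other replicas, whereas $\Pi_\ell$ only allows dependence on $(s_j,s_g)$. My first approach is to restrict the solver's action map to depend only on $(s_j,s_g)$. This makes the problem a partially observed one, so instead I would argue that for a fixed $\pi_g$, the symmetric optimum over shared policies is attained by some $\pi_\ell$, because the reward is additive across replicas and replica dynamics are decoupled given the $s_g$ trajectory. The remaining coupling runs through $\pi_g$'s dependence on $s_{1:k}$, and controlling it requires a potential-game/symmetrization argument. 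I expect to need a TV-based Lipschitz bound like \cref{theorem: Q-lipschitz of Fsdelta and Fsn} to show that this coupling costs only $O(1/\sqrt k)$.
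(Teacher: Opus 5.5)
Your skeleton (chained MDP, then UCFH, then counting $|\widetilde S|$ and $\widetilde H$) is the paper's, and your serialization argument in Step~1 is more than the paper writes down. As written, though, the proposal does not yield the displayed statement, for two reasons.

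First, Step~3 does not produce the claimed expressions.
\begin{itemize}
\item Standard regime: with your values, $|\widetilde S|\,C\,|\cA_l|\widetilde H^3/\epsilon^2 \approx k^4|\cS_g|^3|\cS_l|^{k+1}|\cA_l|/(\epsilon^2(1-\gamma)^3)$. This exceeds the target by the factor $|\cS_g||\cS_l|/k$, which is unbounded in general.
\item Mean-field regime: your $|\widetilde S|^2$ form gives roughly $|\cS_l|^4|\cS_g|^4k^{2|\cS_l|}|\cA_l|/(\epsilon^2(1-\gamma)^2)$, an extra factor $|\cS_g|^2/|\cS_l|$. Your $|\widetilde S|$ also omits the tagged agent's $(s_\ell,\bar s_\ell)$ and the pending histogram $\bar F_\Delta$ of \cref{algorithm: S chained induced MDP}. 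And you count episodes here but samples in the other regime.
\end{itemize}
The paper gets its numbers by plugging into \cref{brunskill sample bound} as restated there, which is linear in $|\widetilde S\times\cA_l|$ with no successor factor. Then $|\cS_g|^2|\cS_l|^3k^{2|\cS_l|}\cdot|\cA_l|\widetilde H^2/\epsilon^2$ with $\widetilde H\lesssim|\cS_l|/(1-\gamma)$ is exactly the first term. Likewise $k|\cS_g|^2|\cS_l|^k\cdot|\cA_l|(kH)^2/\epsilon^2$ is dominated by the second term. If you keep Dann--Brunskill's factor $C$, you are proving a different bound.

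Second, Step~1 only gives an $(\epsilon+\tilde O(1/\sqrt k))$ guarantee. Your horizon $H=\frac{1}{1-\gamma}\log\frac{\|r\|_\infty\sqrt k}{1-\gamma}$ leaves a truncation error $1/\sqrt k$ that does not shrink with $\epsilon$, and the $k$-to-$n$ transfer adds another $\tilde O(1/\sqrt k)$. That transfer is also unsupported: \cref{theorem: g-learn preliminary} bounds the global agent's suboptimality for a fixed $\pi_\ell$. It says nothing about a representative local agent's value under a fixed, generally discontinuous argmax-type $\pi_g$. The paper instead sets $H=\frac{1}{1-\gamma}\log\frac{\|r\|_\infty}{\epsilon(1-\gamma)}$ (\cref{truncation lemma}) and makes no transfer claim; its guarantee is $\epsilon$-optimality in the chained episodic MDP.

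The step you flag as the main obstacle is left open, and the resolution you sketch fails as stated. \texttt{FixedHorizonEVI} returns a non-stationary policy on the full chained state. In \cref{algorithm: k chained induced MDP}, that state at micro-steps $j\ge 2$ already contains the next global state $\bar s_g$ and all other replicas; the mean-field chain has the same feature. No $\pi_\ell\in\Pi_\ell$ has this information. So the unconstrained chained-MDP optimum can exceed the best response within $\Pi_\ell$ by an amount that need not vanish as $k$ grows. This is an information gap, not an $O(1/\sqrt k)$ coupling term, so your claim that the optimum is attained by some shared $\pi_\ell$ is false in general. Restricting the solver to $(s_j,s_g)$ makes the problem partially observed, and UCFH's guarantee no longer applies.

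The paper's proof does not resolve this either. It stipulates that the solver's policy class depends only on $(s_j,s_g)$ and then counts states and horizon. To reproduce the paper's argument, drop the $n$-system transfer, use the $\epsilon$-dependent horizon, and use the linear-in-$|\widetilde S|$ bound. An actual proof of an $\epsilon$ best response within $\Pi_\ell$ would need a further ingredient, such as a learner whose guarantee is relative to the restricted class. Neither your proposal nor the paper supplies one.
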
 

We prove \cref{theorem: l-learn} in Appendix \ref{appendix theorem: local agent br optimality}, and use these results to analyze \texttt{ALTERNATING-MARL}.

 \subsection{Convergence Guarantees on {\texttt{ALTERNATING-MARL}}}

The dynamics of the meta-algorithm induces a Markov game, where the play proceeds by steps from position to position, according to transition probabilities controlled jointly by the agents.

\begin{definition}[Markov Potential games (MPG)] \emph{Denote the global agent by index $0$, the policy of agent $i\in\{0\}\cup[n]$ by $\pi_i$, and the joint policies of all the other agents by $\pi_{-i}$. Then, a stochastic game $\mathcal{M}$ is a Markov Potential Game if there is a potential function $\Phi:\mathcal{S}\times\mathcal{A}\to\mathbb{R}$ so that for any agent $i\in\{0\}\cup[n]$, policy pairs $(\pi_i, \pi_{-i}), (\pi_i', \pi_{-i})$, and state $s_0$, we have that
$\mathbb{E}_{\pi'_i,\pi_{-i}}[\sum_{t=0}^\infty \gamma^t r_i(s_t,a_t) | s_0]\!-\! \E_{\pi}[\sum_{t=0}^\infty \gamma^t r_i(s_t,a_t) | s_0] 
\!=\! \mathbb{E}_{\pi'_i,\pi_{-i}}[\sum_{t=0}^\infty \gamma^t \Phi(s_t,a_t)| s_0]\!-\! \E_{\pi}[\sum_{t=0}^\infty \gamma^t \Phi(s_t,a_t)|s_0]$,
i.e., if agent $i$ switches from policy $\pi_i$ to policy $\pi'_{i}$, $\Delta\Phi$ is the change in the agent's utility.}
\end{definition}

\cref{lemma: M is a Markov potential game} 
 shows that our setting is an MPG and uses a theorem of \cite{doi:10.1073/pnas.39.10.1095} that best-response dynamics converge to pure Nash Equilibria in MPGs. However, unlike the continuous time dynamics which converges rapidly \cite{doi:10.1137/17M1139461}, the discrete-time dynamics can converge exponentially slowly in $n$ \cite{DurandGaujal:2016:ACBRA}. However, under homogeneity, the $n$ local-agent policies are symmetric, and the $n$ Nash inequalities are isomorphic around a symmetric profile in $\Pi_\ell$, making the problem equivalent to a $2$-player game between the global agent and a representative local agent, yielding a dimension-reduced dynamic that achieves polynomial convergence towards a Nash Equilibrium in \cref{theorem: main result}.\looseness=-1

\begin{theorem}[Sample complexity of learning a $2\eta$-approximate Nash Equilibrium] \label{theorem: main result}
Fix $\delta\in(0,1)$, and let $\eta=\max\{\epsilon_g, \epsilon_l\}$ be the largest error incurred by Algorithms \emph{\texttt{G-LEARN}} and \emph{\texttt{L-LEARN}}, where \begin{align*}\eta\! &=\!\frac{2\tilde{r}}{(1\!-\!\gamma)^2}\left(\sqrt{\frac{1}{2k} \ln(2|\mathcal{S}_l||\mathcal{A}_g|\sqrt{k})}\!+\!\frac{5}{\sqrt{k}}\right)\!\leq\!\tilde{O}\left(\frac1{\sqrt{k}}\right).
\end{align*}
Setting $N_{\text{steps}} = \frac{\tilde{r}\sqrt{k}}{1-\gamma}$, \emph{\texttt{ALTERNATING-MARL}} learns a  $2\eta$-approximate NE with probability at least $1-\delta$,  and sample complexity (which we bound in \ref{thm: bounding sample complexity})
\[\tilde{O}\left(\min\left\{\frac{  \|r_l\|^3_\infty |\mathcal S_g|^2 |\mathcal A_l| |\mathcal A_g| |\mathcal S_l|^5  k^{2|\mathcal S_l|+4.5}}{(1-\gamma)^6} \log\frac1\delta, \frac{k^{6.5} \|r_l\|^3_\infty |\cS_g|^2 |\mathcal A_l| |\cA_g| |\cS_l|^k }{(1-\gamma)^6}\log \frac1\delta\right\}\right),\]
\end{theorem}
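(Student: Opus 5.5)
The plan is to run a potential-function argument on the two-player game between the global agent and a representative local agent, and then count samples. First I would invoke \cref{lemma: M is a Markov potential game}. Because the reward is additive, $r = r_g + \frac1n\sum_i r_l$, and all local agents share $\pi_\ell$, a unilateral change of $\pi_g$ or of the shared $\pi_\ell$ changes the common value $V^{\pi_g,\pi_\ell}$ by exactly the change in the deviating player's utility. So the potential is $\Phi(\pi_g,\pi_\ell)=V^{\pi_g,\pi_\ell}$ itself, and it is bounded in $[0,\tilde r/(1-\gamma)]$ by \cref{assumption: bounded rewards}.

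Next I would define a good event $\mathcal E$ on which every call to \texttt{G-LEARN} and \texttt{L-LEARN} succeeds.
\begin{itemize}
\item For \texttt{G-LEARN}, \cref{theorem: g-learn} together with \cref{lemma: controlling bellman noise} shows that the returned $\tilde\pi_g$ is an $\epsilon_g$-best response to $\pi_\ell^{t-1}$. It also shows that the value estimate $\hat V_{\text{new}}$ from \texttt{APPROX-VALUE} is within $\eta$ of the true value, since $\hat Q^{\text{est}}_{k,m}$ is $O(1/\sqrt k)$-close to $Q^*$ via \cref{theorem: Q-lipschitz of Fsdelta and Fsn}.
\item For \texttt{L-LEARN}, \cref{theorem: l-learn} with confidence $\delta_{\text{learn}}$ gives an $\epsilon_\ell$-best response and an $\epsilon_\ell$-accurate value estimate.
\end{itemize}
A union bound over the at most $2N_{\text{steps}}$ calls, using $\delta_{\text{learn}}=\delta/(4N_{\text{steps}})$ and absorbing the $e^{-k}/100$ failures, gives $\Pr(\mathcal E)\ge 1-\delta$. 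The value of $\eta$ in the statement is exactly the sum of the G-LEARN optimality gap and the Bellman-noise term $2/\sqrt k$, so $\eta\ge\max\{\epsilon_g,\epsilon_l\}$.

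On $\mathcal E$, I would analyze \texttt{UPDATE}.
\begin{itemize}
\item \textbf{Accepted updates.} If $\hat V_{\text{new}}>\hat V_{\text{old}}+2\eta$, then the true potential increases by at least $2\eta-2\cdot(\text{estimation error})>0$. Choosing the slack so that each acceptance raises $\Phi$ by at least $\Omega(\eta)\gtrsim 1/\sqrt k$, and using $\Phi\le \tilde r/(1-\gamma)$, the number of acceptances is at most $\tilde r\sqrt k/(1-\gamma)=N_{\text{steps}}$.
\item \textbf{Rejections and termination.} Rejection keeps the old policy, so $\Phi$ is unchanged. Termination occurs when the approximate best response fails to improve the estimated value by more than $2\eta$. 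In that case the true best-response value exceeds the current value by at most $2\eta$ plus the estimation errors, which is the $2\eta$-Nash condition for that player.
\end{itemize}
The main obstacle is showing that termination certifies the Nash inequality for both players simultaneously, not only the player just tested. I would handle this by noting that the other player's policy was either just accepted as an approximate best response, or tested in the previous half-step without change in $\Phi$. Either way its deviation gap is still at most $2\eta$, with the estimation slack absorbed into the factor $2$.

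I would then transfer the two-player Nash property back to the $n$-agent \cref{def:approx-nash}. A single local agent $i$ deviating changes only $\frac1n$ of the local reward term, plus its effect on the global agent's subsample. By symmetry this is dominated by the representative-agent deviation already controlled. If the loop instead ends at $N_{\text{steps}}$ without terminating, the counting bound on accepted steps forces a termination-type step to have occurred, so this case is also covered.

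Finally, the total sample complexity is $N_{\text{steps}}$ times the per-iteration cost. Per iteration, \texttt{G-LEARN} costs the amount stated after \cref{theorem: g-learn}, and \texttt{L-LEARN} costs \cref{theorem: l-learn} with $\epsilon=\eta\asymp 1/\sqrt k$, i.e.\ $\epsilon^{-2}\asymp k$. Multiplying by $N_{\text{steps}}=\tilde r\sqrt k/(1-\gamma)$ and taking the dominant terms gives the $k^{2|\cS_l|+4.5}$ (mean-field) and $k^{6.5}|\cS_l|^k$ (standard) expressions, with the $\|r_l\|_\infty^3$ and $(1-\gamma)^{-6}$ powers collected as in \ref{thm: bounding sample complexity}.
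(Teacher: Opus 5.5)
Your route is the paper's: the potential-game lemma, a good event from a union bound over the learner calls, the accept/reject/terminate analysis of \texttt{UPDATE} (\cref{lemma: correctness of interval comparison}), the bounded-potential count of acceptances (\cref{lemma: bound nsteps}), and $N_{\text{steps}}$ times the per-call costs (\cref{thm: bounding sample complexity}).

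\textbf{The loop-exhaustion step fails as written.} Bounding the number of acceptances by $N_{\text{steps}}$ does not force a termination, because \texttt{UPDATE} has a third outcome. A rejection neither terminates nor changes $\Phi$. So every non-accepted call among the $2N_{\text{steps}}$ could be a rejection, and the profile returned after the loop would then be uncertified. The paper's \cref{lemma: bound nsteps} makes the same leap when it asserts that a sweep without an acceptance terminates.

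What closes this is showing that on the good event a rejection never fires.
\begin{enumerate}
\item The proposal $\tilde\pi_i$ has gap at most $\epsilon$ in the sense of $\mathsf{Gap}_1,\mathsf{Gap}_2$. The incumbent $\pi_i$ is one of the competitors inside that maximum. Hence $V^{\tilde\pi_i,\pi_{-i}}(s)\ge V^{\pi}(s)-\epsilon$ at \emph{every} state.
\item Let $e$ be the value-estimation slack. Since $\hat V_{\text{old}}$ always estimates the incumbent pair, $\hat V_{\text{new}}\ge\hat V_{\text{old}}-\epsilon-2e$. At the first call $\hat V_{\text{old}}=0$, and the bound is trivial because all estimates are nonnegative.
\item This is at least $\hat V_{\text{old}}-2\eta$ once $e\le\eta/2$, which only tightens your good event by a constant factor.
\end{enumerate}
With rejections excluded, every call either accepts or terminates, and your count then does force termination within the budget.

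\textbf{Two smaller points.}

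First, the $\frac{1}{100e^k}$ failure probability of \cref{lemma: controlling bellman noise} cannot simply be absorbed for arbitrary $\delta$, since $N_{\text{steps}}/(100e^k)$ need not be below $\delta$. Instead, as in the proof of \cref{thm: bounding sample complexity}, choose $m$ via \cref{lemma: yuejie bound} with $\rho=\delta/(2N_{\text{steps}})$. This costs only a $\log(N_{\text{steps}}/\delta)$ factor, and it is where $\log\frac1\delta$ enters the \texttt{G-LEARN} term.

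Second, your transfer to \cref{def:approx-nash} is unsupported. You claim a single local agent's deviation is dominated, by symmetry, by the shared-policy deviation that \texttt{L-LEARN} optimizes. But one agent deviating is not a shared-policy deviation, and symmetry alone does not compare the two gains. The paper makes no such transfer; it works directly in the reduced two-player game (global agent versus the shared $\pi_\ell$). Either state the result there, or bound single-agent deviations by a separate argument.
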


\begin{remark}[Tradeoff in $k$ and the Rewards of Subsampling]\label{remark: tradeoff in k} In the mean-field regime,  \cref{theorem: main result} dismantles the exponential dependence on $|\mathcal{A}_l|$, while in the standard parameterization, it removes the dependence on $|\mathcal{A}_l|^n$ that had existed in prior works \cite{yang2020meanfieldmultiagentreinforcement,li2021permutationinvariantpolicyoptimization, gu2021meanfieldcontrolsqlearningcooperative,gu2022meanfieldmultiagentreinforcementlearning,anand2025meanfield}. As $k\to n$, the approximate NE approaches a NE, revealing a fundamental trade-off in the choice of $k$: increasing $k$ improves the performance of the policy, but increases the size of the $\hat{Q}_k$-function.  
Finally, setting $k=O(\log n)$  gives a sample complexity of $\tilde{O}(|\cS_g|^4 |\cA_g|^2 |\cA_l|^2 \cdot \min\{|\cS_l|^8 (\log n)^{|\cS_l|}, |\cS_l|^{\log n}\})$ with a decaying optimality gap of  $1/\sqrt{\log n}$. Notably, this breaks prior heavy dependencies on $|\mathcal{A}_l|$ and only has a polylogarithmic dependence on the number of agents $n$, which is a doubly-exponential improvement.  \looseness=-1
\end{remark}

\section{Conclusion}
We introduce a mean-field framework for communication-constrained MARL by formulating this problem as best-response dynamics under a Markov game. We propose \texttt{ALTERNATING-MARL}, which learns each agent’s best response to the mean effect of the local agents policies from $k$ samples of the local agents, enabling an exponential reduction in the sample complexity of approximating a solution to the MDP. We prove that the resultant policy is a $\smash{\tilde{O}(\frac{1}{\sqrt{k}})}$-approximate Nash equilibrium. We extend our result to the off-policy learning setting with stochastic rewards and validate our theory with numerical simulations in communication-constrained control settings.

\section*{Impact Statement}
This paper contributes to the theoretical foundations of MARL, with the goal of developing mean-field tools for controlling networked systems. The work can potentially lead to algorithms to adaptively control of cyber-physical systems. Furthermore, any applications of the proposed algorithm in its current form should be considered cautiously since the analysis here focuses on efficiency and stationarity, but does not consider issues of fairness in the vector-reward framework \cite{pmlr-v19-abernethy11b}.

\section*{Acknowledgements} We are deeply grateful to Guannan Qu, Zaiwei Chen, Sarah Liaw, Jan van den Brand, and Jacob Abernethy for sharing their helpful ideas and insightful discussions. Emile's research is supported by NSF Grant CCF 2338816, and Ishani's research is supported by NSF Grant CCF-1955039, and a PayPal research award.

\bibliography{main}
\bibliographystyle{unsrt}

\newpage

\appendix
\onecolumn

\textbf{Outline of the Appendices}.
\begin{itemize}
    \item \cref{sec: mathematical background and additional remarks} provides mathematical background and some additional remarks,
    \item \cref{sec: optimality of global agent policy} gives the proof of the approximate optimality of the learned global agent policy,
    \item \cref{sec: optimality of the local agent policy} presents the proof of the approximate optimality of the local agents' policies,
    \item \cref{sec: proof of convergence of BR dynamics} gives convergence rates for Nash Equilibria and Approximate Nash Equilibria,
    \item \cref{Appendix/off-policy} provides an extension of the result to off-policy learning,
    \item \cref{sec: generalize to stochastic rewards} provides an extension of the result for stochastic rewards, and
    \item \cref{sec: numerical simulations} presents numerical simulations to validate our theoretical framework.
    \item \cref{section: limitations and future work} discusses the limitations of this work and possible directions for future works.
\end{itemize}

% \tableofcontents

\section*{Impact Statement}
This paper contributes to the theoretical foundations of MARL, with the goal of developing mean-field tools for controlling networked systems. The work can potentially lead to algorithms to adaptively control cyber-physical systems. Furthermore, any applications of the proposed algorithm in its current form should be considered cautiously since the analysis here focuses on efficiency and stationarity, but does not consider issues of fairness, for instance via a vector-reward framework \cite{pmlr-v19-abernethy11b}. 

\section{Mathematical Background and Additional Remarks}
\label{sec: mathematical background and additional remarks}
  
\let\addcontentsline=\origtoc

\begin{table}[hbt!]
  \centering
  \caption{Important notations in this paper.}\label{table:notations}
  \begin{tabular}{c|l}
  \hline
    \textbf{Notation} & \textbf{Meaning} \\
 \hline
    $\|\cdot\|_1$ & $\ell_1$ (Manhattan) norm; \\
    $\|\cdot\|_\infty$ & $\ell_\infty$ norm; \\
    $\mathbb{Z}_{+}$ & The set of strictly positive integers; \\
    $\mathbb{R}^d$ & The set of $d$-dimensional reals;\\
    $[m]$ & The set $\{1,\dots,m\}$, where $m\in\mathbb{Z}_+$;  \\
    ${[m]\choose k}$ & The set of $k$-sized subsets of $\{1,\dots,m\}$;\\
    $a_g$ & $a_g\in\mathcal{A}_g$ is the action of the global agent;\\
    $s_g$ & $s_g\in\mathcal{S}_g$ is the state of the global agent;\\
    $a_1,\dots,a_n$ & $a_{1},\dots,a_n\in\mathcal{A}_l^n$ are the actions of the local agents $1,\dots,n$;\\
    $s_1,\dots,s_n$ & $s_{1},\dots,s_n\in\mathcal{S}_l^n$ are the states of the local agents $1,\dots,n$;\\
    $a$ & $a=(a_g,a_1,\dots,a_n)\in\mathcal{A}_g\times\mathcal{A}_l^n$ is the tuple of actions of all agents;\\
    $s$ & $s=(s_g,s_1,\dots,s_n)\in\mathcal{S}_g\times\mathcal{S}_l^n$ is the tuple of states of all agents;\\
    $z_i$ & $z_i = (s_i,a_i) \in \cZ_l$, for $i\in[n]$;\\
    $\mu_k(\cS_l)$ &  $\mu_k(\cS_l) = \{v \in \{0,1/k,2/k,\dots,1\}^{|\cS_l|}: \sum_x v_x = 1\}$; \\ 
    $\mu(\cS_l)$ &  $\mu(\cS_l) \coloneq \{v\in 0,1/n,2/n,\dots,1\}^{|\cS_l|}: \sum_x v_x = 1\}$; \\ 
    $s_\Delta$ & For $\Delta\subseteq [n]$, and a collection of variables $\{s_1,\dots,s_n\}$, $s_\Delta\coloneq\{s_i: i\in\Delta\}$;\\
    $\sigma(z_{\Delta}, z'_{\Delta})$ & Product sigma-algebra generated by sequences $z_{\Delta}$ and $z'_{\Delta}$; \\
    $\pi^*$ & $\pi^*$ is the optimal deterministic policy function such that $a = \pi^*(s)$; \\
    $\hat{\pi}_k^*$ & $\hat{\pi}_k^*$ is the optimal deterministic policy function on a constrained system of \\ & \quad\quad $|\Delta|=k$ local agents; \\
    $\tilde{\pi}^\est_k$ & $\tilde{\pi}^\est_k$ is the stochastic policy map learned with parameter $k$ such that $a\sim\tilde{\pi}_k^\est(s)$;\\ 
    $P_g(\cdot|s_g,a_g)$ & $P_g(\cdot|s_g,a_g)$ is the stochastic transition kernel for the state of the global agent; \\ 
    $P_l(\cdot|s_i,s_g, a_i)$ & $P_l(\cdot|s_i,s_g, a_i)$ is the stochastic transition kernel for the state of local agent $i\in[n];$\\  
    $r_g(s_g,a_g)$ & $r_g$ is the global agent's component of the reward;\\
    $r_l(s_i,s_g,a_i)$ & $r_l$ is the component of the reward for local agent $i\in[n]$;\\
    $r(s,a)$ & $r(s,a) = r_g(s_g,a_g)+\frac{1}{n}\sum_{i\in[n]}r_l(s_i,s_g,a_i)$ is the reward of the system; \\
    $r_\Delta(s,a)$ & $r_\Delta(s,a) = r_g(s_g,a_g)+\frac{1}{|\Delta|}\sum_{i\in\Delta}r_l(s_i,s_g,a_i)$ is the constrained system's reward\\ &\quad\quad with $|\Delta|=k$ local agents;\\
    $\mathcal{T}$ & $\mathcal{T}$ is the centralized Bellman operator;\\
    $\hat{\mathcal{T}}_k$ & $\hat{\mathcal{T}}_k$ is the Bellman operator on a constrained system of $|\Delta|=k$ local agents;\\    \hline
  \end{tabular}
  \medskip
\end{table}

\begin{definition}[Lipschitz continuity] Given metric spaces $(\mathcal{X}, d_\mathcal{X})$ and $(\mathcal{Y}, d_\mathcal{Y})$ and a constant $L>0$, a map $f:\mathcal{X}\to \mathcal{Y}$ is $L$-Lipschitz continuous if for all $x,y\in \mathcal{X}$, $
    d_\mathcal{Y}(f(x), f(y)) \leq L \cdot  d_\mathcal{X}(x,y)$.
\end{definition} 

\begin{theorem}[Banach-Cacciopoli fixed point theorem \citep{Banach1922}]
Consider the metric space $(\mathcal{X}, d_\mathcal{X})$, and $T: \mathcal{X}\to \mathcal{X}$ such that $T$ is a $\gamma$-Lipschitz continuous mapping for $\gamma \in (0,1)$. Then, by the Banach-Cacciopoli fixed-point theorem, there exists a unique fixed point $x^* \in \mathcal{X}$ for which $T(x^*) = x^*$. Additionally, $x^* = \lim_{s\to\infty} T^s( x_0)$ for any $x_0 \in \mathcal{X}$.
\end{theorem}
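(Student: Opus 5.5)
The final statement is the Banach--Caccioppoli fixed point theorem. I will prove it by the standard Picard-iteration argument. As stated, it omits the hypothesis that $(\mathcal{X}, d_\mathcal{X})$ is complete, and the proof needs it. In every application in the paper, $\mathcal{X}$ is a space of bounded $Q$-functions on a finite set with the $\ell_\infty$ metric, which is complete, so I assume completeness throughout.

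\textbf{Step 1: the iterates are Cauchy.} Fix $x_0\in\mathcal{X}$ and set $x_s = T^s(x_0)$. Applying the $\gamma$-Lipschitz property repeatedly gives $d_\mathcal{X}(x_{s+1},x_s)\le \gamma^s d_\mathcal{X}(x_1,x_0)$. For $p>s$, the triangle inequality and a geometric series give
\[
d_\mathcal{X}(x_p,x_s)\le \sum_{j=s}^{p-1}\gamma^j d_\mathcal{X}(x_1,x_0)\le \frac{\gamma^s}{1-\gamma}\, d_\mathcal{X}(x_1,x_0).
\]
The right-hand side tends to $0$ as $s\to\infty$, so $(x_s)$ is Cauchy.

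\textbf{Step 2: the limit is a fixed point.} By completeness, $x_s$ converges to some $x^*\in\mathcal{X}$. A Lipschitz map is continuous, so
\[
T(x^*) = \lim_{s\to\infty} T(x_s) = \lim_{s\to\infty} x_{s+1} = x^*.
\]

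\textbf{Step 3: uniqueness.} Suppose $T(y^*)=y^*$ as well. Then
\[
d_\mathcal{X}(x^*,y^*) = d_\mathcal{X}(T x^*, T y^*) \le \gamma\, d_\mathcal{X}(x^*,y^*).
\]
Since $\gamma<1$, this forces $d_\mathcal{X}(x^*,y^*)=0$, so $y^*=x^*$.

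\textbf{Step 4: independence of the starting point.} By uniqueness, the limit obtained from any initial point $x_0$ is the same $x^*$. Hence $x^*=\lim_{s\to\infty}T^s(x_0)$ for every $x_0\in\mathcal{X}$.

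The only delicate point is the completeness hypothesis. Without it, Step 2 fails: the Cauchy sequence need not converge in $\mathcal{X}$. So I state completeness explicitly, and I note that it holds for $(\mathbb{R}^{d},\|\cdot\|_\infty)$, which is how the result is used for $\hat{\mathcal{T}}_k$ and $\hat{\mathcal{T}}_{k,m}$. Letting $p\to\infty$ in the Step 1 display also yields the a priori rate
\[
d_\mathcal{X}(x_s,x^*)\le \frac{\gamma^s}{1-\gamma}\, d_\mathcal{X}(x_1,x_0),
\]
which is the bound that justifies the choice of $T$ Bellman iterations in \texttt{G-LEARN}.
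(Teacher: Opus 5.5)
The paper gives no proof of this statement. It is quoted as background with a citation to Banach's 1922 paper, and your Picard-iteration argument (geometric Cauchy estimate, continuity of $T$ to identify the limit as a fixed point, the contraction inequality for uniqueness) is the standard, correct proof. You are also right that the statement as written omits completeness (and tacitly nonemptiness) of $(\mathcal{X}, d_{\mathcal{X}})$; this is harmless for the paper, which only applies the result to $\hat{\mathcal{T}}_k$ and $\hat{\mathcal{T}}_{k,m}$ on finite-dimensional spaces under $\|\cdot\|_\infty$.
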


\begin{lemma}[Freedman's inequality, Lemma D.2 in \cite{liu2023maximize}]
Let $\{X_t\}_{t\leq T}$ be a real-valued martingale difference sequence adapted to filtration $\{\cF_t\}_{t\leq T}$. If $|X_t|\leq R$ almost surely, then for any $\eta\in (0, 1/R)$, it holds that
\begin{equation}
    \Pr\left[\sum_{t=1}^T X_t \leq \mathcal{O}\left(\eta\sum_{t=1}^T \E[X_t^2|\cF_{t-1}] + \frac{\log(1/\delta)}{\eta}\right)\right]\geq 1-\delta.\\
\end{equation}
\end{lemma}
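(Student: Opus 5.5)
The plan is the standard exponential-supermartingale (Chernoff) argument: control the conditional moment generating function of each increment, telescope the resulting bounds into a nonnegative supermartingale, and finish with Markov's inequality. In fact this gives the bound with explicit constants: with probability at least $1-\delta$,
\[
\sum_{t=1}^T X_t \le (e-2)\,\eta\sum_{t=1}^T \E[X_t^2\mid\cF_{t-1}] + \frac{\log(1/\delta)}{\eta}.
\]
Write $V_t \coloneqq \E[X_t^2\mid\cF_{t-1}]$ for the conditional variance.

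\textbf{Step 1 (one-step mgf bound).} I will first prove the scalar inequality
\[
e^{x}\le 1+x+(e-2)x^2 \quad \text{for all } x\le 1.
\]
One proof: the function $g(x)=(e^x-1-x)/x^2$, extended by $g(0)=1/2$, is nondecreasing on $\mathbb{R}$ and satisfies $g(1)=e-2$.

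Since $|X_t|\le R$ and $\eta<1/R$, we have $\eta X_t\le 1$ almost surely. Applying the inequality with $x=\eta X_t$ and taking conditional expectations, the martingale-difference property $\E[X_t\mid\cF_{t-1}]=0$ kills the linear term. This gives
\[
\E\bigl[e^{\eta X_t}\mid\cF_{t-1}\bigr]\le 1+(e-2)\eta^2 V_t\le \exp\bigl((e-2)\eta^2 V_t\bigr).
\]

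\textbf{Step 2 (supermartingale).} Define
\[
M_t \coloneqq \exp\Bigl(\eta\sum_{s\le t}X_s-(e-2)\eta^2\sum_{s\le t}V_s\Bigr), \qquad M_0=1.
\]
Each $V_s$ is $\cF_{s-1}$-measurable, so the variance correction can be pulled out of the conditional expectation. Step 1 then gives $\E[M_t\mid\cF_{t-1}]\le M_{t-1}$, so $(M_t)$ is a nonnegative supermartingale and $\E[M_T]\le 1$.

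\textbf{Step 3 (Markov and rearrangement).} By Markov's inequality, $\Pr[M_T\ge 1/\delta]\le\delta$. On the complementary event, taking logarithms gives
\[
\eta\sum_t X_t\le (e-2)\eta^2\sum_t V_t+\log(1/\delta).
\]
Dividing by $\eta>0$ yields the displayed bound, which is the claim with the $\mathcal{O}(\cdot)$ absorbing the constant $e-2<1$.

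\textbf{Main obstacle.} The only delicate point is Step 1: the quadratic mgf bound must hold uniformly over the whole range of $\eta X_t$. This is exactly where the boundedness $|X_t|\le R$ and the restriction $\eta<1/R$ are used. The remaining steps are routine.
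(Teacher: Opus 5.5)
Your proof is correct and complete. The paper gives no proof of this statement: it simply quotes it as background from \cite{liu2023maximize}, so there is no in-paper argument to compare against.

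Your argument is the standard exponential-supermartingale (Chernoff) proof of this $(e-2)$-constant form of Freedman's inequality. It also replaces the informal $\mathcal{O}(\cdot)$ inside the probability with the explicit bound
$\sum_t X_t \le (e-2)\,\eta\sum_t \E[X_t^2\mid\cF_{t-1}] + \log(1/\delta)/\eta$.

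Two small points are worth stating explicitly:
\begin{itemize}
\item The monotonicity of $g$ follows in one line from the identity $g(x)=\int_0^1(1-s)e^{sx}\,ds$, which holds for every real $x$.
\item $M_t$ is bounded, since $|X_s|\le R$ and $V_s\le R^2$, so all the conditional expectations you take are well defined.
\end{itemize}
Your argument only uses $\eta X_t\le 1$. It therefore proves the result under the weaker one-sided hypothesis $X_t\le R$, with $\eta\in(0,1/R]$.
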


\begin{lemma}[Lemma 11 in \cite{abbasi2011improved}] Let $\{x_s\}_{s\in[T]}$ be a sequence of vectors with $x_s \in \cV$ for some Hilbert space $\cV$. Let $\Lambda_0$ be a positive definite matrix and define $\Lambda_t = \Lambda_0 + \sum_{s=1}^t x_s x_s^\top$. Then, it holds that
\begin{equation}\sum_{s=1}^T \min\{1,\|x_s\|_{\Lambda_{s-1}^{-1}}\} \leq 2\log\left (\frac{\det(\Lambda_T)}{\det(\Lambda_0)}\right).\\
\end{equation}
\end{lemma}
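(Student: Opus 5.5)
The plan is the standard elliptical-potential argument, organized around the determinant recursion. Write $u_s \coloneqq \|x_s\|_{\Lambda_{s-1}^{-1}}$.

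\textbf{Step 1 (one-step determinant identity).} Since $\Lambda_s = \Lambda_{s-1} + x_s x_s^\top = \Lambda_{s-1}^{1/2}\big(I + \Lambda_{s-1}^{-1/2}x_s x_s^\top \Lambda_{s-1}^{-1/2}\big)\Lambda_{s-1}^{1/2}$, the matrix determinant lemma gives
\begin{equation*}
\det(\Lambda_s) = \det(\Lambda_{s-1})\,(1+u_s^2).
\end{equation*}
In an infinite-dimensional $\cV$ one reads $\det$ as a Fredholm determinant relative to $\Lambda_0$, which is harmless because each update has rank one.

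\textbf{Step 2 (telescoping).} Taking logarithms and summing over $s\in[T]$ gives
\begin{equation*}
\log\frac{\det(\Lambda_T)}{\det(\Lambda_0)} = \sum_{s=1}^T \log(1+u_s^2).
\end{equation*}
This reduces the lemma to a scalar, term-by-term comparison between $\min\{1,u_s\}$ and $\log(1+u_s^2)$.

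\textbf{Step 3 (scalar inequality).} Here I would try to show $\min\{1,u\} \le 2\log(1+u^2)$ for every $u\ge 0$, splitting at $u=1$.
\begin{itemize}
\item For $u\ge 1$ this holds, because $2\log(1+u^2)\ge 2\log 2 > 1$.
\item For $u<1$ the natural tool is $\log(1+y)\ge y/2$ on $[0,1]$. This only yields $2\log(1+u^2)\ge u^2$, which is weaker than the required bound by $u$.
\end{itemize}

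I expect this small-$u$ regime to be the main obstacle, and I believe it genuinely fails rather than being merely hard. As $u\to 0$ the left side is of order $u$ while the right side is of order $u^2$. Concretely, take a single step with $x_1 = \epsilon e_1$ and $\Lambda_0 = I$. The left side of the lemma is then $\epsilon$ and the right side is $2\log(1+\epsilon^2)\approx 2\epsilon^2$, so the inequality is violated for small $\epsilon$. The statement as worded therefore cannot hold.

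Steps 1--3 establish the squared form $\sum_s\min\{1,u_s^2\}\le 2\log\big(\det\Lambda_T/\det\Lambda_0\big)$, which is the version in \cite{abbasi2011improved}. The unsquared sum can only be controlled through Cauchy--Schwarz, giving
\begin{equation*}
\sum_s \min\{1,u_s\} \le \sqrt{2T\log\big(\det\Lambda_T/\det\Lambda_0\big)}.
\end{equation*}
Wherever the paper invokes this lemma, I would substitute whichever of these two bounds the downstream use actually requires.
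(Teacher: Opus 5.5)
Your analysis is correct. The paper gives no proof of its own, only the citation, and Lemma 11 of \cite{abbasi2011improved} actually bounds $\sum_{s}\min\{1,\|x_s\|^2_{\Lambda_{s-1}^{-1}}\}$. The square was lost in transcription, so under the standard convention $\|x\|_A=\sqrt{x^\top A x}$ your example refutes the statement as written: take $T=1$, $\Lambda_0=I$, $x_1=\epsilon e_1$ with $0<\epsilon<1/2$, giving $\epsilon$ on the left versus $2\log(1+\epsilon^2)\le 2\epsilon^2<\epsilon$ on the right. Your determinant-lemma telescoping plus $\min\{1,y\}\le 2\log(1+y)$ at $y=u_s^2$ is the standard proof of the squared version in the source. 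Since the lemma is never invoked anywhere else in the paper, nothing downstream needs your Cauchy--Schwarz fallback.
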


\begin{lemma}[Multiplayer advantage decomposition from \cite{zhong2024heterogeneous}] In any cooperative Markov game, given a joint policy $\pi$, for any state $s$ and any agent subset $i_{1:m}$, we have that 
\begin{equation}A_\pi^{i_{1:m}}(s,a^{i_{1:m}}) = \sum_{j=1}^m A_\pi^{i_j} (s, a^{i_{1:j-1}}, a^{i_j}).\\
\end{equation}
    
\end{lemma}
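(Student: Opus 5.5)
The plan is a telescoping argument over the ordered agent subset $i_{1:m}$, using the definitions of marginal $Q$-functions and multi-agent advantages from \cite{zhong2024heterogeneous}. First I fix those definitions. For a joint policy $\pi$ and an ordered subset $i_{1:j}$, the marginal state-action value is
\[
Q_\pi^{i_{1:j}}(s,a^{i_{1:j}})\coloneqq \E_{a^{-i_{1:j}}\sim\pi^{-i_{1:j}}(\cdot|s)}\bigl[Q_\pi(s,a^{i_{1:j}},a^{-i_{1:j}})\bigr],
\]
with the convention $Q_\pi^{\emptyset}(s)=V_\pi(s)$. For disjoint subsets $j_{1:h}$ and $i_{1:m}$, the conditional advantage is
\[
A_\pi^{i_{1:m}}(s,a^{j_{1:h}},a^{i_{1:m}})\coloneqq Q_\pi^{j_{1:h},i_{1:m}}(s,a^{j_{1:h}},a^{i_{1:m}})-Q_\pi^{j_{1:h}}(s,a^{j_{1:h}}).
\]
When $j_{1:h}=\emptyset$, the left-hand side of the lemma is $A_\pi^{i_{1:m}}(s,a^{i_{1:m}})=Q_\pi^{i_{1:m}}(s,a^{i_{1:m}})-V_\pi(s)$.

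Before telescoping, I check that the marginal $Q$-functions are well defined and consistent with each other. Because the agents act independently given $s$, the joint policy factorizes as a product over agents. Fubini then lets me integrate out any subset of the remaining actions in any order. In particular, $\E_{a^{i_{j}}\sim\pi^{i_j}}[Q_\pi^{i_{1:j}}(s,a^{i_{1:j}})]=Q_\pi^{i_{1:j-1}}(s,a^{i_{1:j-1}})$. This consistency is not needed for the identity itself, but it confirms that the objects in the statement are the right ones. Finiteness of the spaces (\cref{assumption: finite cardinality}) and bounded rewards (\cref{assumption: bounded rewards}) make every expectation finite.

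With the definitions in place, the proof is a one-line telescope. Setting $j_{1:h}=i_{1:j-1}$ gives each summand as $A_\pi^{i_j}(s,a^{i_{1:j-1}},a^{i_j})=Q_\pi^{i_{1:j}}(s,a^{i_{1:j}})-Q_\pi^{i_{1:j-1}}(s,a^{i_{1:j-1}})$. Summing over $j=1,\dots,m$, all intermediate terms cancel:
\[
\sum_{j=1}^m A_\pi^{i_j}(s,a^{i_{1:j-1}},a^{i_j})=Q_\pi^{i_{1:m}}(s,a^{i_{1:m}})-Q_\pi^{\emptyset}(s)=A_\pi^{i_{1:m}}(s,a^{i_{1:m}}).
\]
Equivalently, I can argue by induction on $m$. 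The base case $m=1$ is the definition, and the inductive step adds one telescoping term.

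The only real obstacle is bookkeeping: making sure the conditional advantage used in the summands is exactly the difference of nested marginal $Q$-functions, with the earlier agents' actions fixed and the later agents' actions averaged under $\pi$. If the convention from \cite{zhong2024heterogeneous} instead defines advantages through $Q_\pi^{-}$ with the complementary set, I would first rewrite it in the nested form above, using the product structure of $\pi$ from the consistency check. The telescoping then goes through unchanged. Note that no cooperative-structure property beyond a shared $Q_\pi$ is used, so the identity holds for our reward \eqref{equation:rewards} with the global agent counted as one of the players.
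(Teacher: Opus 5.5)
Your proposal is correct: with the marginal $Q$-functions and conditional advantages defined as in \cite{zhong2024heterogeneous}, each summand equals $Q_\pi^{i_{1:j}}-Q_\pi^{i_{1:j-1}}$, and the sum telescopes to $Q_\pi^{i_{1:m}}-V_\pi=A_\pi^{i_{1:m}}$, which is the standard proof in the cited source. The paper gives no proof of its own and simply imports the lemma from \cite{zhong2024heterogeneous}, so there is nothing further to compare.
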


\begin{figure}[hbt!]
    \centering
    \includegraphics[width=\linewidth]{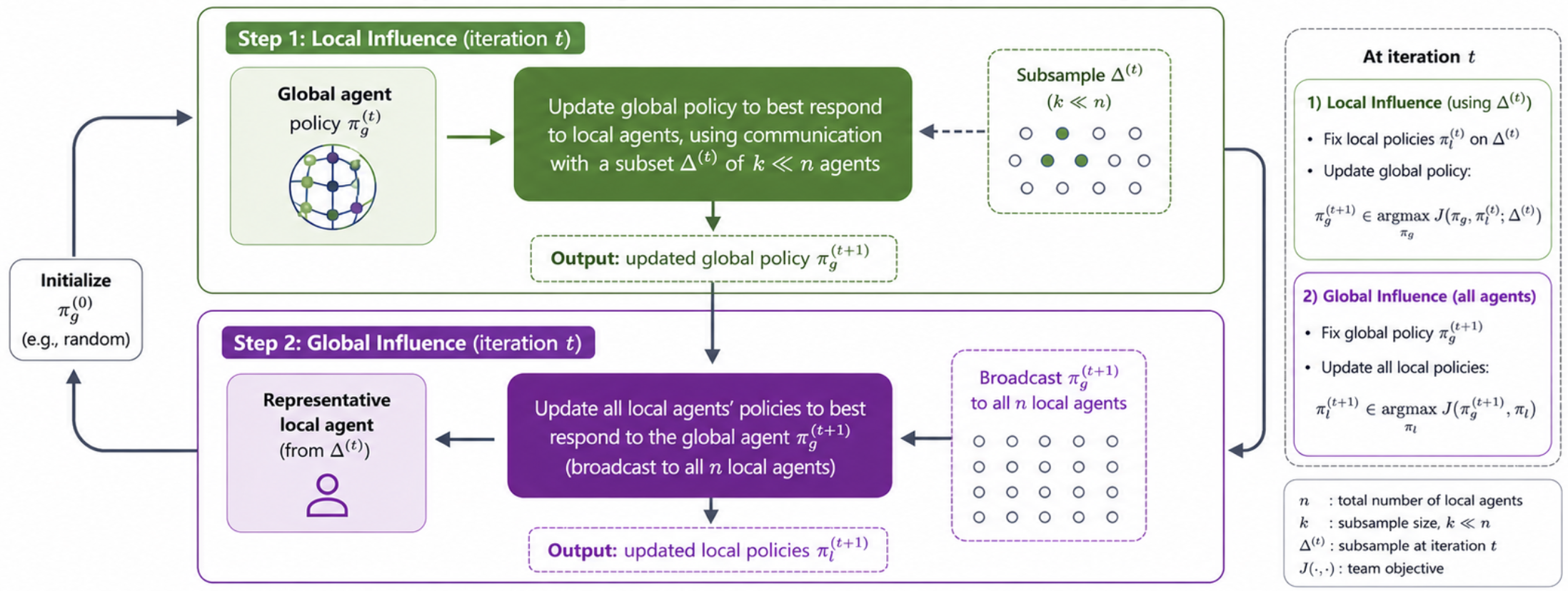}
    \caption{The alternating best-response dynamics, which repeat until convergence. In it, the global agent does a best-response update, treating the policy representative $\pi_\ell$ fixed, and then the local agent representative does a best-response update, treating the global agent's policy $\pi_g$ as fixed. We note that generative AI was used to refine the aesthetics of this figure.}
    \label{fig:alternating-marl-pipeline}
\end{figure}

\section{Optimality of the Global Agent Policy}
\label{sec: optimality of global agent policy}

In this section, we show that the policy learned by the global agent is a $\tilde{O}(1/\sqrt{k})$ best-response to the local agent's fixed policy $\pi_\ell$.  The structure of the proofs in this section closely mirror those of \cite{anand2025meanfield}. For convenience, we begin by restating the various Bellman operators under consideration below.

\begin{definition}[Bellman Operator $\mathcal{T}$]\label{defn:bellman}
\begin{equation}
\mathcal{T}Q^t(s,a_g) \coloneqq r^{\pi_\ell}(s,a) + \gamma\E_{\substack{s_g'\sim P_g(\cdot|s_g,a_g),\\ s_i'\sim \bar{P}^{\pi_\ell}_l(\cdot|s_i,s_g),\forall i\in[n]}} \max_{a_g'\in\mathcal{A}_g} Q^t(s',a_g').
\end{equation}
\end{definition}

\begin{definition}[Adapted Bellman Operator $\hat{\mathcal{T}}_k$]\label{defn:adapted bellman} \emph{The adapted Bellman operator updates a smaller $Q$ function (which we denote by $\hat{Q}_k$), for a surrogate system with the global agent and $k\in[n]$ local agents denoted by $\Delta$, using mean-field value iteration and $j\in\Delta$ such that}
\begin{equation}
\hat{\mathcal{T}}_k\hat{Q}_k^t(s_g,  F_{s_{\Delta}}, a_g) \coloneqq r^{\pi_\ell}_\Delta(s,a_g) + \gamma \E_{\substack{s_g'\sim P_g(\cdot|s_g,a_g), \\ s_i'\sim \bar{P}^{\pi_\ell}_l(\cdot|s_i,s_g),\forall i\in\Delta}} \max_{a_g'\in\mathcal{A}_g} \hat{Q}_k^t(s_g',  F_{s'_{\Delta}},a'_g).
\end{equation}
\end{definition}
\begin{definition}[Empirical Adapted Bellman Operator $\hat{\mathcal{T}}_{k,m}$] \label{defn:empirical adapted bellman} \emph{The empirical adapted Bellman operator $\hat{\mathcal{T}}_{k,m}$ empirically estimates the adapted Bellman operator update using mean-field value iteration by drawing $m$ random samples of $s_g\sim P_g(\cdot|s_g,a_g)$ and $s_i\sim \bar{P}^{\pi_\ell}_l(\cdot|s_i,s_g)$ for $i\in\Delta$, where for $\ell\in[m]$, the $\ell$'th random sample is given by $s_g^\ell$ and $s_\Delta^\ell$, and $j\in\Delta$:}
\begin{equation}
\hat{\mathcal{T}}_{k,m}\hat{Q}_{k,m}^t(s_g, F_{s_{\Delta}}, a_g)\coloneqq r^{\pi_\ell}_\Delta(s,a_g) + \frac{\gamma}{m} \sum_{\ell \in [m]}\max_{a_g'\in\mathcal{A}_g} \hat{Q}_{k,m}^t(s_g^\ell,F_{s_{\Delta}^\ell},a_g').
\end{equation}
\end{definition} 
 
\begin{lemma}
\label{lemma: Q-bound}
{For any $\Delta\subseteq[n]$ such that $|\Delta|=k$, suppose $0\leq \bar{r}^{\pi_\ell}_\Delta(s,a_g)\leq \tilde{r}$. Then, $\hat{Q}_k^t \leq \frac{\tilde{r}}{1-\gamma}$.}
\end{lemma}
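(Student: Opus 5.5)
The plan is to argue by induction on the iteration index $t$ of the value-iteration scheme $\hat{Q}_k^{t+1} = \hat{\mathcal{T}}_k \hat{Q}_k^t$ with $\hat{Q}_k^0 = 0$. We will show the two-sided bound $0 \le \hat{Q}_k^t \le \sum_{j=0}^{t-1}\gamma^j \tilde{r}$ for every $t \ge 0$, pointwise over $(s_g, F_{s_\Delta}, a_g)$. This immediately gives $\hat{Q}_k^t \le \tilde{r}/(1-\gamma)$.

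The base case $t=0$ is trivial, since $\hat{Q}_k^0 \equiv 0$. For the inductive step, assume the bound holds at step $t$. Apply Definition~\ref{defn:adapted bellman}: the stage term $\bar r^{\pi_\ell}_\Delta(s,a_g)$ lies in $[0,\tilde r]$ by hypothesis, which follows from Assumption~\ref{assumption: bounded rewards} because $\bar r^{\pi_\ell}_\Delta$ is $r_g$ plus an average of $r_l$ values.

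The continuation term is $\gamma\,\mathbb{E}\max_{a_g'}\hat{Q}_k^t(\cdot)$. It is an expectation over a probability distribution, namely $P_g\otimes(\bar P_l^{\pi_\ell})^{\otimes k}$, of a maximum of quantities lying in $[0, \sum_{j<t}\gamma^j\tilde r]$. It is therefore itself in $[0,\gamma\sum_{j<t}\gamma^j\tilde r]$, since both the max and the expectation are monotone and preserve constant bounds. Summing the two terms gives $0\le \hat{Q}_k^{t+1}\le \tilde r + \sum_{j=1}^{t}\gamma^j\tilde r$, which closes the induction.

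Finally, bound the geometric partial sum by the full series, $\sum_{j=0}^{t-1}\gamma^j\tilde r \le \tilde r/(1-\gamma)$. Since the bound holds uniformly in $t$, it also passes to the limit $\hat{Q}_k^*$ via the Banach fixed-point theorem. The same argument applies verbatim to the empirical operator $\hat{\mathcal{T}}_{k,m}$, whose continuation term is an empirical average, i.e.\ a uniform distribution over $m$ samples, and to the standard (non-mean-field) parameterization.

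The only mild subtlety, and essentially the sole obstacle, is making sure the reward is bounded by $\tilde r$ for every $\Delta$ and every fixed $\pi_\ell$. If $\pi_\ell$ is stochastic, $\bar r^{\pi_\ell}_l$ is an expectation of $r_l$ values and is still in $[0,\tilde r_l]$, so $\bar r^{\pi_\ell}_\Delta \le \tilde r_g+\tilde r_l=\tilde r$.
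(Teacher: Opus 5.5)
Your proof is correct and follows the same route as the paper: induction on $t$ through the adapted Bellman update, bounding the stage reward by $\tilde r$ and the continuation term by $\gamma$ times the inductive bound (the paper uses $\tilde r/(1-\gamma)$ directly as the invariant, so that $\tilde r + \gamma\tilde r/(1-\gamma) = \tilde r/(1-\gamma)$). Your sharper invariant $0\le \hat{Q}_k^t\le\sum_{j<t}\gamma^j\tilde r$ is a harmless refinement, and its explicit lower bound is useful: the paper later invokes this lemma as a sup-norm bound $\|\hat{Q}_k^*\|_\infty\le \tilde r/(1-\gamma)$, while its own proof only argues the upper side.
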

\begin{proof}
    We prove this by induction on $t\in\mathbb{N}$. The base case is satisfied as $\hat{Q}_k^0 = 0$. Assume that $\|\hat{Q}_k^{t-1}\|_\infty \leq \frac{\tilde{r}}{1-\gamma}$. We bound $\hat{Q}_k^{t+1}$ from the Bellman update at each time step as follows, for all $s_g\in\mathcal{S}_g,F_{s_\Delta}\in \mu_k({\mathcal{S}_l}), a_g\in\mathcal{A}_g$:
    \begin{align*}
        \hat{Q}_k^{t+1}(s_g,F_{s_\Delta},a_g) &= \bar{r}^{\pi_\ell}_\Delta(s,a_g) + \gamma\mathbb{E}_{\substack{s_g'\sim P_g(\cdot|s_g, a_g), \\ s_i' \sim \bar{P}^{\pi_\ell}_l(\cdot|s_i,s_g), \forall i\in\Delta}}\max_{a_g'\in\mathcal{A}_g}\hat{Q}_k^t(s_g',F_{s_\Delta'},a_g') \\
        &\leq \tilde{r} + \gamma \max_{\substack{a_g'\in\mathcal{A}_g, s_g' \in \mathcal{S}_g, F_{s_\Delta'} \in \mu_k{(\mathcal{S}_l)}}}\hat{Q}_k^t(s_g',F_{s_\Delta'},a_g') \leq \frac{\tilde{r}}{1-\gamma}.
    \end{align*}
Here, the first inequality follows by noting that the maximum value of a random variable is at least as large as its expectation. The second inequality follows from the inductive hypothesis.\qedhere
\end{proof}

\begin{remark}\cref{lemma: Q-bound} is independent of the choice of $k$. Therefore, for $k=n$, this implies an identical bound on $Q^t$. A similar argument as \cref{lemma: Q-bound} implies an identical bound on $\hat{Q}_{k,m}^t$.
\end{remark}

Recall that the original Bellman operator $\mathcal{T}$ satisfies a $\gamma$-contractive property under the infinity norm. We similarly show that $\hat{\mathcal{T}}_k$ and $\hat{\mathcal{T}}_{k,m}$ satisfy a $\gamma$-contractive property under infinity norm in \cref{lemma: gamma-contraction of adapted Bellman operator} and \cref{lemma: gamma-contraction of empirical adapted Bellman operator}.

\begin{lemma}\label{lemma: gamma-contraction of adapted Bellman operator}
{$\hat{\mathcal{T}}_k$ satisfies the $\gamma$-contractive property under infinity norm:}
    \[\|\hat{\mathcal{T}}_k\hat{Q}_k' - \hat{\mathcal{T}}_k\hat{Q}_k\|_\infty \leq \gamma \|\hat{Q}_k' - \hat{Q}_k\|_\infty\]
\end{lemma}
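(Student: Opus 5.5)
The argument is the standard contraction proof for a Bellman operator. The reward and expectation structure of $\hat{\mathcal{T}}_k$ is unchanged from the classical case, so only the max-term needs work.

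\textbf{Step 1: the reward cancels.} Fix any $(s_g, F_{s_\Delta}, a_g)\in\mathcal{S}_g\times\mu_k(\mathcal{S}_l)\times\mathcal{A}_g$. By \cref{defn:adapted bellman}, both $\hat{\mathcal{T}}_k\hat{Q}_k'$ and $\hat{\mathcal{T}}_k\hat{Q}_k$ contain the same reward term $\bar{r}^{\pi_\ell}_\Delta(s,a_g)$. The next-state distribution $P_g(\cdot|s_g,a_g)\otimes(\bar{P}^{\pi_\ell}_l)^{\otimes k}$ also does not depend on the $Q$-function. Taking the difference, the reward cancels and we get
\[
\hat{\mathcal{T}}_k\hat{Q}_k'(s_g,F_{s_\Delta},a_g)-\hat{\mathcal{T}}_k\hat{Q}_k(s_g,F_{s_\Delta},a_g)=\gamma\,\mathbb{E}\Big[\max_{a_g'}\hat{Q}_k'(s_g',F_{s'_\Delta},a_g')-\max_{a_g'}\hat{Q}_k(s_g',F_{s'_\Delta},a_g')\Big].
\]
Here the expectation is over the shared next-state law.

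\textbf{Step 2: bound the max-term pointwise.} For every realization of $(s_g',F_{s'_\Delta})$, use the elementary inequality $|\max_a f(a)-\max_a g(a)|\le\max_a|f(a)-g(a)|$. To see it, let $a^\star$ maximize $f$. Then $\max f-\max g\le f(a^\star)-g(a^\star)$, and the reverse direction follows symmetrically. This bounds the integrand by $\|\hat{Q}_k'-\hat{Q}_k\|_\infty$.

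\textbf{Step 3: conclude.} Apply $|\mathbb{E}X|\le\mathbb{E}|X|$ to move the absolute value inside the expectation. Since the pointwise bound from Step 2 is a constant, it passes through the expectation unchanged. Taking the supremum over $(s_g,F_{s_\Delta},a_g)$ then gives the claim.

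\textbf{Well-definedness.} There is no real obstacle. The one point to check is that $\hat{Q}_k$ is well defined as a function of $F_{s'_\Delta}$, because the reward $\bar{r}^{\pi_\ell}_\Delta$ depends only on the empirical distribution of $s_\Delta$ together with $s_g$. Everything is finite by \cref{assumption: finite cardinality}, so the supremum is a maximum. The identical argument also covers the standard parameterization over $\mathcal{S}_l^k$.
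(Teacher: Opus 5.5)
Your proposal is correct and follows essentially the same route as the paper's proof. The paper also cancels the common reward term and applies the triangle/Jensen inequality. It then bounds the max-difference by maximizing over next states and actions, which gives $\gamma\|\hat{Q}_k'-\hat{Q}_k\|_\infty$.
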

\begin{proof} Suppose we apply $\hat{\mathcal{T}}_k$ to $\hat{Q}_k(s_g,F_{s_\Delta}, a_g)$ and $\hat{Q}'_k(s_g, F_{s_\Delta}, a_g)$ for $|\Delta|=k$. Then:
\begin{align*}
&\|\hat{\mathcal{T}}_k\hat{Q}_k'- \hat{\mathcal{T}}_k\hat{Q}_k\|_\infty \\
&= \gamma \max_{\substack{s_g\in \mathcal{S}_g,\\ a_g\in\mathcal{A}_g,\\ F_{s_\Delta} \in \mu_k{(\mathcal{S}_l)}}}\!\left| \mathbb{E}_{\substack{s_g'\sim P_g(\cdot|s_g,a_g),\\ s_i'\sim \bar{P}^{\pi_\ell}_l(\cdot|s_i, s_g),\\ \forall s_i'\in s_\Delta',\\ }}\max_{a_g'\in\mathcal{A}_g}\hat{Q}_k'(s_g', F_{s_\Delta'}, a_g')- \mathbb{E}_{\substack{s_g'\sim P_g(\cdot|s_g,a_g),\\ s_i'\sim \bar{P}^{\pi_\ell}_l(\cdot|s_i, s_g),\\\forall s_i'\in s_\Delta'
}}\max_{a_g'\in\mathcal{A}_g}\hat{Q}_k(s_g', F_{s_\Delta'}, a_g')\right|\\
    &\leq \gamma  \max_{\substack{s_g' \in \mathcal{S}_g, F_{s_\Delta'} \in \mu_k{(\mathcal{S}_l)}, a_g'\in\mathcal{A}_g
        }}\left| \hat{Q}_k'(s_g', F_{s_\Delta'}, a_g') -  \hat{Q}_k(s_g', F_{s_\Delta'}, a_g')\right| \\
        &= \gamma \|\hat{Q}_k' - \hat{Q}_k\|_\infty.
    \end{align*}
The equality implicitly cancels the common $r_\Delta(s, a_g)$ terms from each application of the adapted-Bellman operator. The inequality follows by triangle inequality and Jensen's inequality (by the convexity of $|\cdot|$), and maximizing over the random variables. The last line recovers the definition of infinity norm. \qedhere
\end{proof}
 
\begin{lemma}
{$\hat{\mathcal{T}}_{k,m}$ satisfies the $\gamma$-contractive property under infinity norm.\label{lemma: gamma-contraction of empirical adapted Bellman operator}}
\end{lemma}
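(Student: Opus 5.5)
The plan mirrors the proof of \cref{lemma: gamma-contraction of adapted Bellman operator}, with the expectation over $\mathcal P_k$ replaced by the empirical average over the $m$ fixed samples. Fix the random draws $\{(s_g^\ell, s_\Delta^\ell)\}_{\ell\in[m]}$ used by $\hat{\mathcal{T}}_{k,m}$ for each $(s_g, F_{s_\Delta}, a_g)$, so that $\hat{\mathcal{T}}_{k,m}$ is a deterministic operator on functions $\mathcal{S}_g\times\mu_k(\mathcal{S}_l)\times\mathcal{A}_g\to\mathbb{R}$ (respectively on $\mathcal{S}_g\times\mathcal{S}_l^k\times\mathcal{A}_g$ in the standard parameterization, where $\tilde{\mathcal T}_{k,m}$ is handled identically). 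Apply $\hat{\mathcal{T}}_{k,m}$ to two functions $\hat{Q}_{k,m}$ and $\hat{Q}'_{k,m}$ at the same argument. The reward term $\bar r^{\pi_\ell}_\Delta(s,a_g)$ is identical in both and cancels, leaving
\[
\gamma\,\Big|\frac1m\sum_{\ell\in[m]}\Big(\max_{a_g'}\hat{Q}'_{k,m}(s_g^\ell,F_{s_\Delta^\ell},a_g')-\max_{a_g'}\hat{Q}_{k,m}(s_g^\ell,F_{s_\Delta^\ell},a_g')\Big)\Big|.
\]

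Next, use the triangle inequality to move the absolute value inside the average. For each sample, apply the elementary fact $|\max_a f(a)-\max_a g(a)|\le\max_a|f(a)-g(a)|$. Each summand is then bounded by $\|\hat{Q}'_{k,m}-\hat{Q}_{k,m}\|_\infty$, so the average is too. Taking the maximum over $(s_g,F_{s_\Delta},a_g)$ gives $\|\hat{\mathcal{T}}_{k,m}\hat{Q}'_{k,m}-\hat{\mathcal{T}}_{k,m}\hat{Q}_{k,m}\|_\infty\le\gamma\|\hat{Q}'_{k,m}-\hat{Q}_{k,m}\|_\infty$.

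No real obstacle is expected. The only subtlety is that the samples must be held fixed, the same for both inputs, at each state-action triple, so that the operator is well defined and the bound holds pointwise for every realization, i.e., almost surely. With that convention, the Banach fixed-point theorem then yields the fixed point $\hat{Q}^{\text{est}}_{k,m}$ used in \texttt{G-LEARN}.
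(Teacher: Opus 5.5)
Your proof is correct and follows the same route as the paper: cancel the common reward term, move the absolute value inside the empirical average with the triangle inequality, apply $|\max_a f(a)-\max_a g(a)|\le\max_a|f(a)-g(a)|$ to each sample, and bound by the sup norm. Your explicit remark that the same $m$ samples must be used for both inputs at each state-action triple (so the bound holds for every realization) is a useful clarification that the paper leaves implicit.
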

\begin{proof}
Similarly to \cref{lemma: gamma-contraction of adapted Bellman operator}, suppose we apply $\hat{\mathcal{T}}_{k,m}$ to $\hat{Q}_{k,m}(s_g,F_{s_\Delta},a_g)$ and $\hat{Q}_{k,m}'(s_g,F_{s_\Delta},a_g)$. Then:
\begin{align*}    \|\hat{\mathcal{T}}_{k,m}\hat{Q}_{k,m} - \hat{\mathcal{T}}_{k,m}\hat{Q}'_{k,m}\|_\infty &= \frac{\gamma}{m}\left\|\sum_{j\in [m]} (\max_{a_g'\in\mathcal{A}_g} \hat{Q}_{k,m}(s_g^j,F_{s_\Delta^j},a_g') -  \max_{a_g'\in\mathcal{A}_g} \hat{Q}'_{k,m}(s_g^j,F_{s_\Delta^j},a_g'))\right\|_\infty \\
    &\leq \gamma \max_{\substack{a_g'\in\mathcal{A}_g, s_g' \in\mathcal{S}_g, s_\Delta\in\mathcal{S}_l^k}}|\hat{Q}_{k,m}(s_g', F_{s_\Delta'}, a_g') - \hat{Q}'_{k,m}(s_g', F_{s_\Delta'}, a_g')| \\
    &\leq \gamma \|\hat{Q}_{k,m} - \hat{Q}'_{k,m}\|_\infty
\end{align*} 
 The first inequality uses the triangle inequality and the general property $|\max_{a\in A}f(a) - \max_{a\in A}g(a)| \leq \max_{a\in A}|f(a) - g(a)|$. In the last line, we recover the definition of infinity norm.\qedhere
\end{proof}

\begin{remark}\label{remark: gamma-contractive of ad_T}The $\gamma$-contractivity of $\hat{\mathcal{T}}_k$ and $\hat{\mathcal{T}}_{k,m}$ attracts the trajectory between two $\hat{Q}_k$ and $\hat{Q}_{k,m}$ functions on the same state-action tuple by $\gamma$ at each step. Repeatedly applying the Bellman operators produces a unique fixed-point from the Banach fixed-point theorem which we introduce in \cref{defn:qkstar,defn:qkmest}.
\end{remark}

\begin{definition}[$\hat{Q}_k^*$-function]\label{defn:qkstar}\emph{Suppose $\hat{Q}_k^0:=0$ and let $\hat{Q}_k^{t+1}(s_g,F_{s_\Delta},a_g) = \hat{\mathcal{T}}_k \hat{Q}_k^t(s_g,F_{s_\Delta},a_g)$
    for $t\in\N$. Denote the fixed-point of $\hat{\mathcal{T}}_k$ by $\hat{Q}^*_k$ such that $\hat{\mathcal{T}}_k \hat{Q}^*_k(s_g,F_{s_\Delta},a_g) = \hat{Q}^*_k(s_g,F_{s_\Delta},a_g)$.}
\end{definition}

\begin{definition}[$\hat{Q}_{k,m}^\est$-function]\label{defn:qkmest}\emph{Suppose $\hat{Q}_{k,m}^0:=0$ and let $\hat{Q}_{k,m}^{t+1}(s_g,F_{s_\Delta},a_g) = \hat{\mathcal{T}}_{k,m} \hat{Q}_{k,m}^t(s_g,F_{s_\Delta},a_g)$
    for $t\in\N$. Denote the fixed-point of $\hat{\mathcal{T}}_{k,m}$ by $\hat{Q}^\est_{k,m}$ such that $\hat{\mathcal{T}}_{k,m} \hat{Q}^\est_{k,m}(s_g,F_{s_\Delta},a_g) = \hat{Q}^\est_{k,m}(s_g,F_{s_\Delta},a_g)$.}
\end{definition}

\begin{corollary}\label{corollary:backprop}
    \emph{By backpropagating results of the $\gamma$-contractive property for $T$  time steps, we have }\begin{equation}\|\hat{{Q}}_k^* - \hat{{Q}}_k^T\|_\infty \leq \gamma^T \cdot \|\hat{Q}_k^* - \hat{Q}_k^0\|_\infty\end{equation}
    and 
    \begin{equation}
    \|\hat{{Q}}_{k,m}^\est - \hat{{Q}}_{k,m}^T\|_\infty \leq \gamma^T \cdot \|\hat{Q}_{k,m}^\est - \hat{Q}^0_{k,m}\|_\infty.
    \end{equation}
    
Further, noting that $\hat{Q}_k^0 = \hat{Q}_{k,m}^0 := 0$, $\|\hat{Q}_k^*\|_\infty \leq \frac{\tilde{r}}{1-\gamma}$, and $\|\hat{Q}_{k,m}^\est\|_\infty \leq \frac{\tilde{r}}{1-\gamma}$ from \cref{lemma: Q-bound}, we have:
\begin{equation}\label{eqn: qstar,qt decay}\|\hat{Q}_k^* - \hat{Q}_k^T\|_\infty \leq \gamma^T \frac{\tilde{r}}{1-\gamma},\end{equation}
and
\begin{equation}\|\hat{Q}_{k,m}^\est - \hat{Q}_{k,m}^T\|_\infty \leq \gamma^T \frac{\tilde{r}}{1-\gamma}.
    \end{equation}
\end{corollary}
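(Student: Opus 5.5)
The statement to prove is \cref{corollary:backprop}. The plan is a standard contraction-plus-induction argument built only on \cref{lemma: gamma-contraction of adapted Bellman operator}, \cref{lemma: gamma-contraction of empirical adapted Bellman operator}, and \cref{lemma: Q-bound}.

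\textbf{Step 1 (one-step contraction toward the fixed point).} Use the fixed-point identity $\hat{Q}_k^* = \hat{\mathcal{T}}_k \hat{Q}_k^*$ from \cref{defn:qkstar} together with the recursion $\hat{Q}_k^{t+1} = \hat{\mathcal{T}}_k \hat{Q}_k^t$. Applying \cref{lemma: gamma-contraction of adapted Bellman operator} to the pair $(\hat{Q}_k^*, \hat{Q}_k^t)$ gives
\[\|\hat{Q}_k^* - \hat{Q}_k^{t+1}\|_\infty = \|\hat{\mathcal{T}}_k\hat{Q}_k^* - \hat{\mathcal{T}}_k\hat{Q}_k^t\|_\infty \leq \gamma\|\hat{Q}_k^* - \hat{Q}_k^t\|_\infty.\]

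\textbf{Step 2 (unroll).} Induct on $t$ from $0$ to $T$. The base case $t=0$ is trivial, and each inductive step multiplies the bound by $\gamma$, which yields the first display. The empirical case is identical: use $\hat{Q}_{k,m}^\est = \hat{\mathcal{T}}_{k,m}\hat{Q}_{k,m}^\est$ from \cref{defn:qkmest} and apply \cref{lemma: gamma-contraction of empirical adapted Bellman operator}. For this we fix a realization of the $m$ samples once, so that $\hat{\mathcal{T}}_{k,m}$ is a single deterministic operator applied at every iteration.

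\textbf{Step 3 (explicit bounds).} Substitute $\hat{Q}_k^0 = \hat{Q}_{k,m}^0 = 0$, which reduces the right-hand sides to $\gamma^T\|\hat{Q}_k^*\|_\infty$ and $\gamma^T\|\hat{Q}_{k,m}^\est\|_\infty$. \cref{lemma: Q-bound} bounds every iterate by $\tilde{r}/(1-\gamma)$, and the remark following it says the same holds for the empirical iterates. Since the iterates converge to the fixed point by Banach's theorem, the bound passes to the limit, giving $\|\hat{Q}_k^*\|_\infty, \|\hat{Q}_{k,m}^\est\|_\infty \leq \tilde{r}/(1-\gamma)$.

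\textbf{Main subtlety.} The only delicate point is that \cref{lemma: Q-bound} is an upper bound. To control $\|\cdot\|_\infty$ we also need nonnegativity of the iterates. This follows by the same induction, because the rewards are nonnegative by \cref{assumption: bounded rewards} and the initialization is $0$. The resulting two-sided bound on the iterates then passes to the fixed point by taking limits, which completes the argument.
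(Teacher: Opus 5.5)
Your proposal is correct and follows the paper's own route. The paper applies the $\gamma$-contraction of $\hat{\mathcal{T}}_k$ (resp.\ $\hat{\mathcal{T}}_{k,m}$) against the fixed-point identity $T$ times, then uses $\hat{Q}_k^0=\hat{Q}_{k,m}^0=0$ and the bound from \cref{lemma: Q-bound}; your additional care (fixing the sample realization so that $\hat{\mathcal{T}}_{k,m}$ has a fixed point, and using nonnegativity plus a limit to get $\|\hat{Q}_k^*\|_\infty,\|\hat{Q}_{k,m}^{\est}\|_\infty\le \tilde{r}/(1-\gamma)$) only makes explicit what the paper leaves implicit.
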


\begin{remark}
\cref{corollary:backprop} characterizes the error decay between $\hat{Q}_k^T$ and $\hat{Q}_k^*$ and shows that it decays exponentially in the number of Bellman iterations by a $\gamma^T$ multiplicative factor.
\end{remark}

Furthermore, we characterize the maximal greedy policies obtained from $Q^*, \hat{Q}_k^*$, and $\hat{Q}_{k,m}^\est$.

\begin{definition}[Optimal policy $\pi^*$] \emph{The greedy policy derived from $Q^*$ is given by }\begin{equation}\pi^*(s) \coloneqq \arg\max_{a_g \in\mathcal{A}_g} Q^*(s,a_g).\end{equation}
\end{definition}

\begin{definition}[Optimal subsampled policy $\hat{\pi}_k^*$]
    \emph{The greedy policy from $\hat{Q}_k^*$ is}
\begin{equation}\hat{\pi}_k^*(s_g, F_{s_{\Delta }}) \coloneq \mathop{\argmax}_{a_g \in\mathcal{A}_g} \hat{Q}_k^*(s_g, F_{s_{\Delta}}, a_g).\end{equation}
\end{definition}
\begin{definition}[Optimal empirically subsampled policy $\hat{\pi}_{k,m}^\est$]
    \emph{The greedy policy from $\hat{Q}_{k,m}^\est$ is given by }\begin{equation}\hat{\pi}_{k,m}^\est(s_g,F_{s_\Delta})\coloneqq \mathop{\argmax}_{a_g\in \mathcal{A}_g}\hat{Q}_{k,m}^\est(s_g, F_{s_{\Delta}}, a_g).\end{equation}
\end{definition}

\begin{theorem} {For any state $s\in\mathcal{S}_g\times\mathcal{S}_l^n$},\label{theorem: performance_difference_lemma_applied}
\begin{align*}V^{\pi^*}(s) - V^{{{\pi}}^\est_{k,m}}(s) &\leq \frac{2\tilde{r}}{(1-\gamma)^2}\left(\sqrt{\frac{n-k+1}{2nk} \ln(2|\mathcal{S}_l||\mathcal{A}_g|\sqrt{k})}+\frac{1}{\sqrt{k}}\right)+\frac{2\epsilon_{k,m}}{1-\gamma}. \\ \end{align*}
\end{theorem}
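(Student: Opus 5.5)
Here $\pi^*$ is the optimal global-agent best response to the fixed $\pi_\ell$ in the full $n$-agent system, and $\pi^{\est}_{k,m}$ is the randomized policy of \cref{algorithm: online execution}. At each step it draws $\Delta\sim\mathcal U\binom{[n]}{k}$ and plays $\hat\pi^\est_{k,m}(s_g,F_{s_\Delta})$. The plan follows the performance difference lemma (PDL) route of \cite{anand2025meanfield}. Write
\[
V^{\pi^*}(s)-V^{\pi^\est_{k,m}}(s)=\frac{1}{1-\gamma}\,\E_{s'\sim d^{\pi^\est_{k,m}}_s}\Big[Q^*(s',\pi^*(s'))-\E_{\Delta}\,Q^*\big(s',\hat\pi^\est_{k,m}(s'_g,F_{s'_\Delta})\big)\Big],
\]
using the PDL with the roles arranged so that only $Q^*$ appears and the visitation distribution is that of the deployed policy. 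It then suffices to bound the per-state advantage gap $Q^*(s,\pi^*(s))-\E_\Delta Q^*(s,\hat\pi^\est_{k,m}(s_g,F_{s_\Delta}))$ uniformly in $s$. The factor $\frac{1}{1-\gamma}$ from the PDL, multiplied by a per-state gap of order $\frac{\tilde r}{1-\gamma}\cdot(\text{TV term})+2\epsilon_{k,m}$, produces exactly the displayed form.

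To bound the per-state gap, fix $s$ and $\Delta$ and insert $\hat Q^*_k$ and $\hat Q^\est_{k,m}$ by a standard chain. Write $\hat a=\hat\pi^\est_{k,m}(s_g,F_{s_\Delta})$ and $\hat x=(s_g,F_{s_\Delta})$. We use $Q^*(s,\pi^*(s))\le \hat Q^*_k(\hat x,\pi^*(s))+D$, then $\hat Q^*_k(\hat x,\pi^*(s))\le\hat Q^\est_{k,m}(\hat x,\pi^*(s))+\epsilon_{k,m}$, then $\hat Q^\est_{k,m}(\hat x,\pi^*(s))\le\hat Q^\est_{k,m}(\hat x,\hat a)$ by greedy optimality, and finally we return via $\epsilon_{k,m}$ and $D$ again. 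Here $D=D(s,\Delta)=\max_{a_g}|Q^*(s,a_g)-\hat Q^*_k(s_g,F_{s_\Delta},a_g)|$. This yields the per-state gap $\le 2\epsilon_{k,m}+2D(s,\Delta)$, so it remains to control $\E_\Delta D$.

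The key ingredient is the Lipschitz step, which is \cref{theorem: Q-lipschitz of Fsdelta and Fsn} (the analog of Theorem E.3 of \cite{anand2025meanfield}). It states that $|Q^*(s,a_g)-\hat Q^*_k(s_g,F_{s_\Delta},a_g)|\le\frac{2\tilde r}{1-\gamma}\mathrm{TV}(F_{s_\Delta},F_{s_{[n]}})$; I would prove or invoke it by comparing the $T$-step Bellman iterates and using that rewards and transitions factor across local agents, since under fixed $\pi_\ell$ the local dynamics do not depend on $a_g$. Next, a sampling-without-replacement concentration bound (Hoeffding–Serfling) on each coordinate of $F_{s_\Delta}$, combined with a union bound over $|\cS_l|$ states and $|\cA_g|$ actions, gives $\mathrm{TV}\le\sqrt{\frac{n-k+1}{2nk}\ln(2|\cS_l||\cA_g|\sqrt k)}$ with failure probability at most $1/\sqrt k$. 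On the failure event, $D\le\frac{\tilde r}{1-\gamma}$ by \cref{lemma: Q-bound}, contributing $\frac{\tilde r}{(1-\gamma)\sqrt k}$. Taking the expectation over $\Delta$ and multiplying by $\frac{1}{1-\gamma}$ then gives the stated bound.

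The main obstacle is the Lipschitz theorem: the TV distance must propagate through the dynamics without blowing up. The point to check is that after a transition, the expected TV between next-step subsampled and full empirical measures stays controlled. One natural route avoids this by coupling the $k$ sampled agents as a sub-population of the $n$ agents, so that $\hat Q^*_k$ is the expectation of $Q^*$ along the coupled process.
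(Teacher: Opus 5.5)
Your outline is the paper's own proof: the performance difference lemma with the deployed policy's visitation measure, the sandwich through $\hat Q^*_k$ and $\hat Q^{\est}_{k,m}$ using greedy optimality, the TV-Lipschitz bound, without-replacement concentration, and a good/bad split with $\delta\approx 1/(|\cA_g|\sqrt k)$. Your explicit average over $\Delta$ at each fixed $s'$ is, if anything, a cleaner organization of \cref{lemma: expected_q*bound_with_different_actions}.

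The genuine gap is the step you yourself flag. The inequality $|Q^*(s,a_g)-\hat Q^*_k(s_g,F_{s_\Delta},a_g)|\le\frac{2\tilde r}{1-\gamma}\mathrm{TV}(F_{s_\Delta},F_{s_{[n]}})$ is false for $k<n$, so it cannot just be invoked. A counterexample:
take two global and two local states, $P_g(\cdot|s_g,a_g)=\delta_{a_g}$, $r_g\equiv 0$, $\bar r_l(z,s_g)=\mathbbm{1}\{z=s_g\}$, and local agents that keep their state with probability $0.9$ and switch otherwise. Start all $n$ agents in state $0$, so $\mathrm{TV}(F_{s_\Delta},F_{s_{[n]}})=0$ for every $\Delta$. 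With $k=1$ the global agent steers $s_g$ onto its single observed agent and collects $0.9$ per step. The full population instead mixes toward $(\tfrac12,\tfrac12)$, and no global policy collects more than about $\tfrac12+\tfrac12(0.8)^t$ at step $t$. Hence $\hat Q^*_1-Q^*$ is of order $\frac{1}{1-\gamma}$.

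In the paper's inductive proof of \cref{lemma: Q lipschitz continuous wrt Fsdelta}, this is lost in \cref{lemma: expectation_expectation_max_swap}. There the maximizer $a_g^*$ depends on the random next state, so $\E\max_{a_g'}$ cannot be exchanged for $\max_{a_g'}\E$. Also, \cref{lemma: generalized tvd linear bound} only controls TV between \emph{expected} next-step empirical measures. That handles the linear reward term but not the convex $\E\max_{a_g'}$ term.

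Your coupling remark points the right way, but as phrased it is also wrong: $\hat Q^*_k$ is not an average of $Q^*$, and in the example it strictly exceeds it. What a coupling does give is $\E\,\mathrm{TV}(F_{s'_\Delta},F_{s'_{[n]}})\le\mathrm{TV}(F_{s_\Delta},F_{s_{[n]}})+\sqrt{|\cS_l|/k}$. The Bellman recursion then yields $|Q^*-\hat Q^*_k|\le\frac{\|r_l\|_\infty}{1-\gamma}\mathrm{TV}(F_{s_\Delta},F_{s_{[n]}})+\frac{\gamma\|r_l\|_\infty}{(1-\gamma)^2}\sqrt{|\cS_l|/k}$. This keeps the $\tilde O(1/\sqrt k)$ rate but adds a term of order $\frac{\gamma\|r_l\|_\infty}{(1-\gamma)^3}\sqrt{|\cS_l|/k}$ to the final bound. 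So the displayed constants do not come out of this route.

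Two further steps do not close as written.

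First, coordinatewise Hoeffding--Serfling plus a union bound over $\cS_l$ controls $\|F_{s_\Delta}-F_{s_{[n]}}\|_\infty$, not TV. For probability vectors, $\|p-q\|_\infty\le\mathrm{TV}(p,q)\le\frac{|\cS_l|}{2}\|p-q\|_\infty$. Bounding $\mathrm{TV}=\max_{A\subseteq\cS_l}|F_{s_\Delta}(A)-F_{s_{[n]}}(A)|$ directly needs a union over $2^{|\cS_l|}$ events, which replaces $\ln(2|\cS_l|)$ by roughly $|\cS_l|\ln 2$. The paper's \cref{theorem: Q-lipschitz of Fsdelta and Fsn} makes the same identification. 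The extra union over $\cA_g$ is unnecessary, since the good event does not depend on the action.

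Second, even granting your ingredients, your arithmetic gives twice the displayed leading term. The sandwich costs $2D$ and you take $D\le\frac{2\tilde r}{1-\gamma}\mathrm{TV}$, which yields $\frac{4\tilde r}{(1-\gamma)^2}$ where the statement has $\frac{2\tilde r}{(1-\gamma)^2}$. The paper recovers that factor only through the $\frac{n-k+1}{8nk}$ in \cref{theorem: Q-lipschitz of Fsdelta and Fsn}. A legitimate way is the oscillation bound $|\sum_z f(z)(p(z)-q(z))|\le\|f\|_\infty\,\mathrm{TV}(p,q)$ for $f\ge 0$. This halves the Lipschitz constant to $\frac{\|r_l\|_\infty}{1-\gamma}$.
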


\subsection{Proof of Lipschitz-Continuity Bound}

\begin{definition}[Joint Stochastic Kernels] 
\label{definition: joint_transition_probability}\emph{The joint stochastic kernel on $(s_g,s_\Delta)$ for $\Delta\subseteq [n]$ where $|\Delta|=k$ is defined as $\mathcal{J}_k:\mathcal{S}_g\times\mathcal{S}_l^{k}\times\mathcal{S}_g\times\mathcal{A}_g\times\mathcal{S}_l^{k}\to[0,1]$, where} \begin{equation}\mathcal{J}_k(s_g',s_\Delta'|s_g, a_g, s_\Delta) := \Pr[(s_g',s_\Delta')|s_g, a_g, s_\Delta].\end{equation}
\end{definition}

\begin{theorem}[$\hat{Q}_k^T$ is $(\sum_{t=0}^{T-1} 2\gamma^t)\|r_l(\cdot,\cdot)\|_\infty$-Lipschitz continuous with respect to $F_{s_\Delta}$ in total variation distance]\label{lemma: Q lipschitz continuous wrt Fsdelta}
Suppose $\Delta,\Delta'\subseteq[n]$ such that $|\Delta|=k$ and $|\Delta'|=k'$. Then, we have that
\begin{equation}\left|\hat{Q}^T_k(s_g,F_{s_\Delta},a_g) - \hat{Q}^T_{k'}(s_g, F_{s_{\Delta'}}, a_g)\right| \leq \left(\sum_{t=0}^{T-1} 2\gamma^t\right)\|r_l(\cdot,\cdot)\|_\infty \cdot \mathrm{TV}\left(F_{s_\Delta}, F_{s_{\Delta'}}\right). \end{equation}
\end{theorem}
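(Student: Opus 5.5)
Our plan is to argue by induction on $T$, following the template of Theorem E.3 in \cite{anand2025meanfield}. The base case $T=0$ is trivial, since $\hat{Q}_k^0=\hat{Q}_{k'}^0=0$ and the claimed bound is then $0$.

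For the inductive step, assume the bound holds at $T$ for every pair of subsets of any sizes. Expand both sides with the adapted Bellman operator $\hat{\mathcal{T}}_k$ (respectively $\hat{\mathcal{T}}_{k'}$). Then
\[
\hat{Q}_k^{T+1}-\hat{Q}_{k'}^{T+1}=\bigl(\bar r^{\pi_\ell}_\Delta-\bar r^{\pi_\ell}_{\Delta'}\bigr)+\gamma\bigl(\mathbb{E}\max_{a_g'}\hat{Q}_k^T(\cdot)-\mathbb{E}\max_{a_g'}\hat{Q}_{k'}^T(\cdot)\bigr).
\]

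The reward term is handled directly. The $r_g$ parts cancel, leaving
\[
\sum_{x\in\mathcal{S}_l}\bar r_l^{\pi_\ell}(x,s_g)\bigl(F_{s_\Delta}(x)-F_{s_{\Delta'}}(x)\bigr).
\]
This is bounded by $2\|r_l\|_\infty\mathrm{TV}(F_{s_\Delta},F_{s_{\Delta'}})$: either directly via $\ell_1=2\,\mathrm{TV}$, or more sharply by centering $\bar r_l$ using positivity, which gives $\|r_l\|_\infty\mathrm{TV}$. Either constant fits inside the factor $2\gamma^0$.

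For the continuation term we couple the two next-state draws. The global next state $s_g'\sim P_g(\cdot|s_g,a_g)$ is identical on both sides, so we use the same sample. For the locals, the key observation is that $\bar P_l^{\pi_\ell}(\cdot|s_i,s_g)$ depends only on $s_i$. Hence the law of $F_{s'_\Delta}$ depends only on $F_{s_\Delta}$, via a sum over states $x$ of $kF_{s_\Delta}(x)$ i.i.d. draws from $\bar P_l^{\pi_\ell}(\cdot|x,s_g)$. We then construct a coupling of $F_{s'_\Delta}$ and $F_{s'_{\Delta'}}$ in which the common mass $\min(F_{s_\Delta}(x),F_{s_{\Delta'}}(x))$ is pushed forward through shared randomness. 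A transport argument then gives
\[
\mathbb{E}\,\mathrm{TV}(F_{s'_\Delta},F_{s'_{\Delta'}})\le\mathrm{TV}(F_{s_\Delta},F_{s_{\Delta'}}).
\]
To make this rigorous when $k\neq k'$, view the empirical measure over next states as a mixture $\sum_x F(x)\,\hat P_x$, where $\hat P_x$ is an empirical kernel estimate. Both systems share the per-state kernel $\bar P_l^{\pi_\ell}(\cdot|x,s_g)$, but their empirical estimates use different sample counts. Then:
\begin{itemize}
\item by $|\max f-\max g|\le\max|f-g|$ and the inductive hypothesis applied pointwise at the coupled next states, the continuation difference is at most $\gamma\bigl(\sum_{t=0}^{T-1}2\gamma^t\bigr)\|r_l\|_\infty\,\mathbb{E}\,\mathrm{TV}(F_{s'_\Delta},F_{s'_{\Delta'}})$;
\item using the contraction estimate above, this is at most $\bigl(\sum_{t=1}^{T}2\gamma^t\bigr)\|r_l\|_\infty\mathrm{TV}(F_{s_\Delta},F_{s_{\Delta'}})$.
\end{itemize}
Adding the reward term $2\gamma^0\|r_l\|_\infty\mathrm{TV}$ yields exactly $\bigl(\sum_{t=0}^{T}2\gamma^t\bigr)\|r_l\|_\infty\mathrm{TV}$, which closes the induction.

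The main obstacle is the coupling step when $k\ne k'$. Empirical measures of different sample sizes do not contract deterministically in TV, because sampling fluctuations create extra discrepancy of order $1/\sqrt{k}$. If an exact TV contraction fails, the fallback is to avoid comparing random empirical measures altogether. Instead, we would show by induction the stronger statement that $\hat{Q}_k^T(s_g,F,a_g)$, viewed as a function of the distribution $F$, is Lipschitz with respect to the appropriate Wasserstein/TV transport. In this formulation the expectation over next states is a linear pushforward of $F$ through the shared kernel, which is a TV-nonexpansive Markov operator. We would first attempt the direct coupling, and treat the equal-$k$ case (with $\Delta,\Delta'$ different subsets) as the clean core case.
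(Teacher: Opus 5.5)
Your equal-size argument is correct and complete. When $|\Delta|=|\Delta'|=k$ and $\mathrm{TV}(F_{s_\Delta},F_{s_{\Delta'}})=d/k$, you can pair $k-d$ agents that have identical states. Driving each pair, and $s_g'$, with common randomness gives $\mathrm{TV}(F_{s'_\Delta},F_{s'_{\Delta'}})\le d/k$ almost surely. Then $|\max f-\max g|\le\max|f-g|$ at the coupled next state, together with the induction hypothesis, closes the step. Keeping the $\max$ inside the expectation in this way is more careful than the paper. The paper instead pulls the $\max$ outside via \cref{lemma: expectation_expectation_max_swap} and then uses only the mean $\E F_{s'_\Delta}=\bar P F_{s_\Delta}$ (\cref{lemma: generalized tvd linear bound}). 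That swap lemma is not valid: the maximizer $a_g^*$ in its proof depends on the random next state. So $\E[\hat Q_k^T(\cdot,a_g^*)-\hat Q_{k'}^T(\cdot,a_g^*)]$ cannot be bounded by $\max_{a_g'}|\E[\hat Q_k^T(\cdot,a_g')-\hat Q_{k'}^T(\cdot,a_g')]|$.

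The genuine gap is the case $k\neq k'$, and it cannot be closed. Your fallback ("linear pushforward through a TV-nonexpansive Markov operator") is essentially the paper's route, and it fails for the reason you half-identified. The quantity $\E\max_{a'}\hat Q^{T}_k(s_g',F_{s'_\Delta},a')$ depends on the whole law of the empirical measure of $k$ draws, not only on its mean. Since the max is convex, the resulting Jensen gap depends on $k$.

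In fact the statement is false for $k\neq k'$. Take:
\begin{itemize}
\item $\cS_l=\{0,1\}$, $\cS_g=\{R,W,0,1\}$, $\cA_g=\{0,1\}$;
\item $r_g\equiv 0$ and $\bar r^{\pi_\ell}_l(x,s_g)=\mathbbm{1}\{x=s_g\}$;
\item $P_g(W\mid R,a)=P_g(a\mid W,a)=1$;
\item local agents resample uniformly from $\{0,1\}$ when $s_g=R$ and stay put otherwise.
\end{itemize}
Adding a constant to the rewards to make them positive shifts both sides equally, so it changes nothing. Then $\hat Q_k^2(W,F,a)=\gamma F(a)$, and
\[
\hat Q_k^3(R,F,a)=\gamma^2\,\E\max\{F'(0),F'(1)\},
\]
where $F'$ is the empirical law of $k$ fair coin flips. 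This equals $\gamma^2$ for $k=1$ and $\tfrac34\gamma^2$ for $k=2$. With $\Delta=\{1\}$, $\Delta'=\{1,2\}$ and $s_1=s_2=0$, the right-hand side of the claim is $0$ while the left-hand side is $\gamma^2/4$.

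In the same example, comparing $k$ with $k'=n$, which is how the result is used in \cref{theorem: Q-lipschitz of Fsdelta and Fsn}, gives a gap of order $\gamma^2/\sqrt{k}$ regardless of the TV distance. So any correct cross-size version needs an additive sampling-fluctuation term of that order, as you suspected. You should either restrict the claim to $k=k'$, where your coupling proof stands, or prove a weaker bound with an explicit additive term for $k\neq k'$.
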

\begin{proof} 
We prove this inductively. Note that  $\hat{Q}_k^0 = \hat{Q}_{k'}^0 =0$ by construction, proving the base case since $\mathrm{TV}(\cdot,\cdot)\geq 0$. For the remainder of this proof, we adopt the shorthand $\E_{s_g',s_\Delta'}$ to refer to $\E_{s_g'\sim P_g(\cdot|s_g, a_g), s_i'\sim \bar{P}^{\pi_\ell}_l(\cdot|s_i,s_g),\forall i\in\Delta}$. At $T=1$:
\begin{align*}
|\hat{Q}_k^1(s_g,F_{s_\Delta},a_g) &-\hat{Q}_{k'}^1(s_g, F_{s_{\Delta'}}, a_g)| \\ &= \left|\hat{\mathcal{T}}_k\hat{Q}_k^0(s_g,F_{s_\Delta},a_g)-\hat{\mathcal{T}}_{k'}\hat{Q}_{k'}^0(s_g, F_{s_{\Delta'}}, a_g)\right| \\
&= |\bar{r}^{\pi_\ell}(s_g, F_{s_\Delta}, a_g)+\gamma \E_{s_g', s'_\Delta}\max_{a_g'\in\mathcal{A}_g} \hat{Q}_k^0(s_g',F_{s'_\Delta},a'_g) \\
&\quad\quad - \bar{r}^{\pi_\ell}(s_g,F_{s_{\Delta'}},a_g) -\gamma \E_{s_g', s'_{\Delta'}} \max_{a_g'\in\mathcal{A}_g}\hat{Q}_{k'}^0(s'_g, F_{s'_{\Delta'}}, a'_g)| \\
&= |\bar{r}^{\pi_\ell}(s_g, F_{s_\Delta}, a_g)-\bar{r}^{\pi_\ell}(s_g, F_{s_{\Delta'}},a_g)| \\
&= \left|\frac{1}{k}\sum_{i\in\Delta}\bar{r}^{\pi_\ell}_l(s_g,s_i) -\frac{1}{k'}\sum_{i\in\Delta'}\bar{r}^{\pi_\ell}_l(s_g,s_i)\right| \\
&= |\E_{s_l \sim F_{s_\Delta}}\bar{r}^{\pi_\ell}_l(s_g, s_l) - \E_{s_l' \sim F_{s_{\Delta'}}}\bar{r}^{\pi_\ell}_l(s_g, s_l')|.
\end{align*}

In the first and second equalities, apply the adapted Bellman operators $\hat{\mathcal{T}}_k$ and $\hat{\mathcal{T}}_{k'}$ to $\hat{Q}_k^0$ and $\hat{Q}_{k'}^0$, respectively, and expand.  Noting the property that for any function $f:\mathcal{X}\to\mathcal{Y}$ for $|\mathcal{X}|<\infty$ we can write $f(x) = \sum_{y\in\mathcal{X}}f(y)\mathbbm{1}\{y=x\}$, we have:
\begin{align*}
|\hat{Q}^1_k(s_g,F_{s_\Delta},a_g) &- \hat{Q}_{k'}^1(s_g, F_{s_{\Delta'}}, a_g)| \\ &= \left|\E_{s_l \sim F_{s_\Delta}}\left[\sum_{z\in \mathcal{S}_l}\bar{r}^{\pi_\ell}_l(s_g, z)\mathbbm{1}\{s_l = z\}\right] - \E_{s_l' \sim F_{s_{\Delta'}}}\left[\sum_{z\in\mathcal{S}_l}\bar{r}^{\pi_\ell}_l(s_g, z)\mathbbm{1}\{s_l' = z\}\right]\right| \\
&= \left|\sum_{z\in\mathcal{S}_l}\bar{r}^{\pi_\ell}_l(s_g,z)\cdot (\E_{s_l\sim F_{s_\Delta}}\mathbbm{1}\{s_l=z\} - \E_{s_l'\sim F_{s_{\Delta'}}}\mathbbm{1}\{s_l'=z\})\right| \\
&= \left|\sum_{z\in\mathcal{S}_l}\bar{r}^{\pi_\ell}_l(s_g,z)\cdot (F_{s_\Delta}(z) - F_{s_{\Delta'}}(z))\right|  \\
&\leq \left|\max_{z\in\mathcal{S}_l}\bar{r}^{\pi_\ell}_l(s_g,z)|\cdot \sum_{z\in\mathcal{S}_l}|F_{s_\Delta}(z) - F_{s_{\Delta'}}(z)\right| \leq 2\|r_l(\cdot,\cdot)\|_\infty \cdot \mathrm{TV}(F_{s_\Delta}, F_{s_{\Delta'}}).
\end{align*}
The third equality uses that for any random variable $X\sim\mathcal{X}$, $\E_X\mathbbm{1}[X=x]=\Pr[X=x]$. The first inequality uses triangle inequality and Cauchy-Schwarz, and the second inequality uses the definition of TV-distance. Thus, when $T=1$, $\hat{Q}$ is $(2\|r_l(\cdot,\cdot)\|_\infty)$-Lipschitz continuous with respect to TV-distance, proving the base case. 

Now, assume that for $T\leq t'\in\N$:
\begin{align*}\left|\hat{Q}^{T}_k(s_g,F_{s_\Delta}, a_g) - \hat{Q}_{k'}^{T}(s_g,F_{s_{\Delta'}}, a_g)\right| \leq \left(\sum_{t=0}^{T-1}2\gamma^t\right)\|r_l(\cdot,\cdot)\|_\infty\cdot\mathrm{TV}\left(F_{s_\Delta}, F_{s_{\Delta'}}\right).\end{align*}
Then, inductively we have:
\begin{align*}
    |\hat{Q}_k^{T+1}(s_g,F_{s_\Delta}, a_g) &- \hat{Q}_{k'}^{T+1}(s_g,F_{s_{\Delta'}}, a_g)|\\
&\leq\left|\frac{1}{|\Delta|}\sum_{i\in\Delta}\bar{r}^{\pi_\ell}_l(s_g,s_i)-\frac{1}{|\Delta'|}\sum_{i\in\Delta'}\bar{r}^{\pi_\ell}_l(s_g,s_i)\right| \\
&\quad\quad\quad +\gamma\left|\E_{s_g', s'_{\Delta}}\max_{a_g'\in\mathcal{A}_g}\hat{Q}_k^{T}(s_g',F_{s_\Delta'}, a_g')-\E_{s_g', s'_{\Delta'}}\max_{a_g'\in\mathcal{A}_g}\hat{Q}_{k'}^{T}(s_g',F_{s_{\Delta'}'}, a_g')\right| \\
&\leq 2\|r_l(\cdot,\cdot)\|_\infty\cdot\mathrm{TV}\left(F_{s_\Delta}, F_{s_{\Delta'}}\right)  \\
&\quad\quad\quad + \gamma\left|\E_{s_g', s'_{\Delta}}\max_{a_g'\in\mathcal{A}_g}\hat{Q}_k^{T}(s_g',F_{s_\Delta'}, a_g')-\E_{s_g', s'_{\Delta'}}\max_{a_g'\in\mathcal{A}_g}\hat{Q}_{k'}^{T}(s_g',F_{s_{\Delta'}'}, a_g')\right|.\\
\end{align*}
\noindent The first equality uses the time evolution property of $\hat{Q}_k^{T+1}$ and $\hat{Q}_{k'}^{T+1}$ by applying the adapted-Bellman operators $\hat{\mathcal{T}}_k$ and $\hat{\mathcal{T}}_{k'}$ to $\hat{Q}_k^T$ and $\hat{Q}_{k'}^T$, respectively. The first term of the second inequality uses the Lipschitz bound from the base case. The second term then rewrites the expectation over the states $s_g', s_\Delta', s_{\Delta'}'$ into an expectation over the joint transition probabilities $\mathcal{J}_k$ and $\mathcal{J}_{k'}$ from \cref{definition: joint_transition_probability}.

Therefore, using the shorthand $\E_{(s_g',s_\Delta')\sim\mathcal{J}_k}$ to denote $\E_{(s_g',s_\Delta')\sim\mathcal{J}_k(\cdot,\cdot|s_g,a_g,s_\Delta)}$, we have: 
\begin{align*}
    |\hat{Q}_k^{T+1}&(s_g,F_{s_\Delta}, a_g) - \hat{Q}_{k'}^{T+1}(s_g,F_{s_{\Delta'}}, a_g)| \\
    &\quad\leq 2\|r_l(\cdot,\cdot)\|_\infty\cdot\mathrm{TV}(F_{s_\Delta}, F_{s_{\Delta'}})  \\ &\quad\quad\quad+ \gamma|\E_{(s_g', s'_{\Delta})\sim\mathcal{J}_k}\max_{a_g'\in\mathcal{A}_g} \hat{Q}_k^{T}(s_g',F_{s_\Delta'}, a_g') - \E_{(s_g', s'_{\Delta'})\sim\mathcal{J}_{k'}}\max_{a_g'\in\mathcal{A}_g} \hat{Q}_{k'}^{T}(s_g',F_{s_{\Delta'}'}, a_g')|\\
    &\quad\leq 2\|r_l(\cdot,\cdot)\|_\infty\cdot\mathrm{TV}(F_{s_\Delta}, F_{s_{\Delta'}})  \\ &\quad\quad\quad + \gamma\max_{a_g'\in\mathcal{A}_g}|\E_{(s_g', s'_{\Delta})\sim\mathcal{J}_k} \hat{Q}_k^{T}(s_g', F_{s'_\Delta}, a_g') - \E_{(s_g', s'_{\Delta'})\sim\mathcal{J}_{k'}} \hat{Q}_{k'}^{T}(s_g', F_{s'_{\Delta'}}, a_g')| \\
&\quad\quad\leq 2\|r_l(\cdot,\cdot)\|_\infty\cdot\mathrm{TV}(F_{s_\Delta}, F_{s_{\Delta'}}) + \gamma\left(\sum_{\tau=0}^{T-1}2\gamma^\tau\right)\|r_l(\cdot,\cdot)\|_\infty\cdot\mathrm{TV}(F_{s_\Delta},  F_{s_{\Delta'}})\\
&\quad\quad=\left(\sum_{\tau=0}^{T}2\gamma^\tau\right) \|r_l(\cdot,\cdot)\|_\infty\cdot\mathrm{TV}(F_{s_\Delta}, F_{s_{\Delta'}}).
\end{align*}

\noindent In the first inequality, we rewrite the expectations over the states as the expectation over the joint transition probabilities. The second inequality then follows from \cref{lemma: expectation_expectation_max_swap}.   Finally, the third inequality follows from \cref{lemma: expectation Q lipschitz continuous wrt Fsdelta}. Then, by the inductive hypothesis, the claim is proven.\qedhere\\
\end{proof}

\begin{lemma}{\label{lemma: expectation Q lipschitz continuous wrt Fsdelta}For all $T\in\N$, for any $a_g, a_g'\in\mathcal{A}_g,s_g\in\mathcal{S}_g, s_\Delta\in\mathcal{S}_l^{k}$, and for all joint stochastic kernels $\mathcal{J}_k$ (defined in \cref{definition: joint_transition_probability}), $\E_{(s_g',s_\Delta')\sim\mathcal{J}_k(\cdot,\cdot|s_g,a_g,s_\Delta)} \hat{Q}_k^T(s_g', F_{s'_\Delta}, a_g')$ is $(\sum_{t=0}^{T-1} 2\gamma^t)\|r_l(\cdot,\cdot)\|_\infty$-Lipschitz continuous with respect to $F_{s_\Delta}$ in TV-distance:}
\begin{align*}|\E_{(s_g',s_\Delta')\sim\mathcal{J}_k(\cdot,\cdot|s_g,a_g,s_\Delta)}\hat{Q}_k^T(s'_g,F_{s'_\Delta},a_g') &- \E_{(s_g',s_{\Delta'}')\sim\mathcal{J}_{k'}(\cdot,\cdot|s_g,a_g,s_{\Delta'})}\hat{Q}_{k'}^T(s'_g, F_{s'_{\Delta'}}, a_g')| \\
&\quad\quad\quad\quad \leq \left(\sum_{\tau=0}^{T-1}2\gamma^\tau\right)\|r_l(\cdot,\cdot)\|_\infty\cdot\mathrm{TV}\left(F_{s_\Delta}, F_{s_{\Delta'}}\right).
\end{align*}
\end{lemma}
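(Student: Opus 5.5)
We prove the lemma by induction on $T$, running it jointly with the induction in \cref{lemma: Q lipschitz continuous wrt Fsdelta}. We use the hypothesis at level $T$ of that theorem, namely
\[
\bigl|\hat{Q}_k^T(s_g',F_{s'_\Delta},a_g')-\hat{Q}_{k'}^T(s_g',F_{s'_{\Delta'}},a_g')\bigr|\le L_T\,\mathrm{TV}(F_{s'_\Delta},F_{s'_{\Delta'}}),\qquad L_T:=\Bigl(\sum_{\tau=0}^{T-1}2\gamma^\tau\Bigr)\|r_l\|_\infty,
\]
to derive the expectation bound at level $T$. This in turn closes the induction step of that theorem. The base case $T=0$ is trivial because $\hat{Q}^0\equiv0$.

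\textbf{Step 1 (coupling of the next states).} Under $\mathcal{J}_k$ and $\mathcal{J}_{k'}$ the global next state has the same law $P_g(\cdot|s_g,a_g)$ in both systems, so we use a common $s_g'$. Each local agent moves independently via $\bar P_l^{\pi_\ell}(\cdot|s_i,s_g)$, a kernel that depends only on the agent's own state. Agents in the same state are therefore exchangeable, and the law of $F_{s'_\Delta}$ depends on $s_\Delta$ only through $F_{s_\Delta}$. We build a coupling as follows. Match agents of $\Delta$ and $\Delta'$ that share a current state; after normalizing by $1/k$ and $1/k'$, the matched mass is $1-\mathrm{TV}(F_{s_\Delta},F_{s_{\Delta'}})$. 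Matched agents receive identical transition randomness. We then aim to show
\[
\E\,\mathrm{TV}(F_{s'_\Delta},F_{s'_{\Delta'}})\le \mathrm{TV}(F_{s_\Delta},F_{s_{\Delta'}}),
\]
so that the one-step Markov kernel does not expand TV distance.

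\textbf{Step 2 (combine).} Write the difference of expectations as one expectation under the coupling, and apply the inductive Lipschitz bound pointwise in $(s_g',F_{s'_\Delta},F_{s'_{\Delta'}})$. This gives
\[
\bigl|\E\hat Q_k^T-\E\hat Q_{k'}^T\bigr|\le L_T\,\E\,\mathrm{TV}(F_{s'_\Delta},F_{s'_{\Delta'}})\le L_T\,\mathrm{TV}(F_{s_\Delta},F_{s_{\Delta'}}).
\]

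\textbf{Main obstacle.} The difficulty is Step 1 when the empirical measures are random and $k\ne k'$. Pushing forward deterministically through the kernel gives the data-processing inequality, but only for the expected measures: $\mathrm{TV}(F_{s_\Delta}\bar P,F_{s_{\Delta'}}\bar P)\le\mathrm{TV}(F_{s_\Delta},F_{s_{\Delta'}})$. The empirical measures also carry sampling fluctuations, and these do not cancel exactly when the populations have different sizes.

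The first approach we would try avoids the random-TV issue. We would exploit that $\hat{Q}^T$ can be viewed as a function of the mean-field state and lift the comparison to expectations of the reward functional. Concretely, we would unroll $\hat Q^T$ and use linearity of the reward in $F$, so that each term is of the form $\E\langle \bar r_l,F_{s^{(t)}}\rangle=\langle \bar r_l,F_{s_\Delta}\bar P^t\rangle$. Data processing then applies to each term and yields exactly the claimed $\sum 2\gamma^\tau$ constant.

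The $\max_{a_g'}$ inside the unrolled expression breaks this linearity. For that part we would fall back on the matched coupling of Step 1. In the case $k=k'$ we would bound the unmatched fraction, which never increases; in the case $k\neq k'$ we would handle it by splitting agents into the common-state block plus a residual of mass $\mathrm{TV}$.
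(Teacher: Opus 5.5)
Your Step~1 inequality $\E\,\mathrm{TV}(F_{s'_\Delta},F_{s'_{\Delta'}})\le\mathrm{TV}(F_{s_\Delta},F_{s_{\Delta'}})$ fails when $k\neq k'$, and neither of your fallbacks repairs it. Linearity only controls the expected empirical measures, which is not enough once the convex map $F\mapsto\max_{a}\hat Q_k^T(s_g,F,a)$ enters. The ``common-state block'' has $k(1-\mathrm{TV})$ agents on one side and $k'(1-\mathrm{TV})$ on the other, so its two empirical measures fluctuate with different laws and cannot be coupled to coincide. For instance, with $k=1$, $k'=2$ and all sampled agents at one state, the initial TV is $0$. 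Yet whenever local transitions are random, after one step $F_{s'_{\Delta'}}$ is non-Dirac with positive probability while $F_{s'_\Delta}$ is always Dirac.

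This is not a weakness of your technique: the statement itself is false for $k\neq k'$. Take $\mathcal{S}_l=\{x,y\}$, $\mathcal{S}_g=\{g_1,g_2\}$, $\mathcal{A}_g=\{a_1,a_2\}$, $P_g(\cdot|s_g,a_j)=\delta_{g_j}$, and constant $r_g$. Set $\bar r_l(x,g_1)=\bar r_l(y,g_2)=1$ and $0$ otherwise. Let $\bar P^{\pi_\ell}_l$ keep an agent in place with probability $p\in(\tfrac12,1)$ and switch it otherwise. Then $\hat Q^2_k$ is affine in $F$ with coefficients independent of $k$, and
\[
\max_{a}\hat Q^2_k(s_g,F,a)=c+\langle \bar r_l(\cdot,s_g),F\rangle+\gamma\Bigl(\tfrac12+(2p-1)\bigl|F(x)-\tfrac12\bigr|\Bigr).
\]
Hence, for $k=1$, $k'=2$ and all sampled agents at $x$, the left-hand side of the lemma at $T=3$ equals $\gamma^2(2p-1)q(1-q)>0$ with $q=p^2+(1-p)^2$, while the right-hand side is $0$. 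Any correct version for $k\neq k'$ needs an additive term for the $O(1/\sqrt{\min(k,k')})$ fluctuation of the empirical measures around their means.

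For $k=k'$, reading $\mathcal{J}_k$ as you do as the model kernel $P_g\otimes(\bar P^{\pi_\ell}_l)^{\otimes k}$, your proof is complete. Under the matched coupling only unmatched agents can differ, so $\mathrm{TV}(F_{s'_\Delta},F_{s'_{\Delta'}})\le\mathrm{TV}(F_{s_\Delta},F_{s_{\Delta'}})$ almost surely. Combined with the pointwise bound $|\max f-\max g|\le\max|f-g|$, the joint induction with \cref{lemma: Q lipschitz continuous wrt Fsdelta} closes with exactly the constant $\sum_{\tau<T}2\gamma^\tau\|r_l\|_\infty$.

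The paper takes a different route. It bounds the reward term by linearity and data processing (\cref{lemma: generalized tvd linear bound}), as in your ``first approach''. It then removes the $\max$ via \cref{lemma: expectation_expectation_max_swap} and applies the induction hypothesis to a composed two-step kernel. Do not borrow that exchange to handle $k\neq k'$: it treats the random maximizer $a_g^*$ as a fixed action. In the example above its right-hand side vanishes, because the expected empirical measures agree and $\hat Q^2_k$ is affine in $F$. Its left-hand side, however, is $\gamma(2p-1)q(1-q)>0$.
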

\begin{proof}
We prove this inductively.

At $T=0$, the statement is true since $\hat{Q}_k^0(\cdot,\cdot,\cdot) = \hat{Q}_{k'}^0(\cdot,\cdot,\cdot) = 0$ and $\mathrm{TV}(\cdot,\cdot)\geq 0$. 

For $T=1$, applying the adapted Bellman operator yields: 
\begin{align*}
&|\E_{(s_g',s_\Delta')\sim\mathcal{J}_k(\cdot,\cdot|s_g,a_g,s_\Delta)}\hat{Q}_k^1(s'_g,F_{s'_\Delta},  a_g')- \E_{(s_g',s_{\Delta'}')\sim\mathcal{J}_{k'}(\cdot,\cdot|s_g,a_g,s_{\Delta'})}\hat{Q}_{k'}^1(s'_g, F_{s'_{\Delta'}}, a_g')| \\
&\quad\quad =\left| \E_{(s_g',s_{\Delta\cup\Delta'}')\sim\mathcal{J}_{|\Delta\cup\Delta'|}(\cdot,\cdot|s_g, a_g,s_{\Delta\cup\Delta'})}\left[\frac{1}{|\Delta|}\sum_{i\in\Delta}\bar{r}^{\pi_\ell}_l(s'_g,s'_i) - \frac{1}{|\Delta'|}\sum_{i\in\Delta'}\bar{r}^{\pi_\ell}_l(s_g',s_i')\right]\right| \\
&\qquad = \left|\E_{(s_g',s'_{\Delta\cup\Delta'})\sim\mathcal{J}_{|\Delta\cup\Delta'|}(\cdot,\cdot|s_g,a_g,s_{\Delta\cup\Delta'})}\left[\sum_{z\in\mathcal{S}_l}\bar{r}^{\pi_\ell}_l(s_g',z)\cdot (F_{s'_\Delta}(z) -  F_{s'_{\Delta'}}(z))\right]\right|.
\end{align*}
Similarly to \cref{lemma: Q lipschitz continuous wrt Fsdelta}, we implicitly write the result as an expectation over the reward functions and use the general property that for any function $f:\mathcal{X}\to\mathcal{Y}$ for $|\mathcal{X}|<\infty$, we can write $f(x) = \sum_{y\in\mathcal{X}}f(y)\mathbbm{1}\{y=x\}$. Then, taking the expectation over the indicator variable yields the second equality. As a shorthand, let $\mathfrak{D}$ denote the distribution of $s_{g}'\sim \sum_{s'_{\Delta\cup\Delta'}\in\mathcal{S}_l^{|\Delta\cup\Delta'|}}\mathcal{J}_{|\Delta\cup\Delta|}(\cdot,s'_{\Delta\cup\Delta'}|s_g,a_g,s_{\Delta\cup\Delta'})$.  

Then, by the law of total expectation,
\begin{align*}
&|\E_{(s_g',s_\Delta')\sim\mathcal{J}_k  (\cdot,\cdot|s_g,a_g,s_\Delta)}\hat{Q}_k^1(s'_g,F_{s'_\Delta},a_g') - \E_{(s_g',s_{\Delta'}')\sim\mathcal{J}_{k'}(\cdot,\cdot|s_g,a_g,s_{\Delta'})}\hat{Q}_{k'}^1(s'_g, F_{s'_{\Delta'}}, a_g')| \\
&\qquad = \left|\E_{s_g'\sim \mathfrak{D}} \sum_{z\in\mathcal{S}_l}r_l(s_g',z)\E_{s'_{\Delta\cup\Delta'}\sim\mathcal{J}_{|\Delta\cup\Delta'|}(\cdot|s_g',s_g,a_g,s_{\Delta\cup\Delta'})}(F_{s'_\Delta}(z) - F_{s'_{\Delta'}}(z))\right|\\
&\qquad \leq  \|r_l(\cdot,\cdot)\|_\infty\cdot\E_{s_g'\sim\mathfrak{D}}\sum_{z\in\mathcal{S}_l}|\E_{s'_{\Delta\cup\Delta'}\sim\mathcal{J}_{|\Delta\cup\Delta'|}(\cdot|s_g',s_g, a_g,s_{\Delta\cup\Delta'})}(F_{s'_\Delta}(z)-F_{s'_{\Delta'}}(z))|\\
&\qquad \leq 2\|r_l(\cdot,\cdot)\|_\infty\cdot \E_{s_g'\sim\mathfrak{D}} \mathrm{TV}(\E_{s'_{\Delta\cup\Delta'}|s_g'}F_{s'_\Delta}, \E_{s'_{\Delta\cup\Delta'}|s_g'}F_{s'_{\Delta'}}) \\
&\qquad \leq 2\|r_l(\cdot,\cdot)\|_\infty  \cdot  \mathrm{TV}(F_{s_\Delta}, F_{s_{\Delta'}}), 
\end{align*}
where we use Jensen's inequality and triangle inequality to pull $\E_{s_g'}\sum_{z\in\mathcal{S}_l}$ from the absolute value, and then use Cauchy-Schwarz to factor $\|r_l(\cdot,\cdot)\|_\infty$. The second inequality follows from \cref{lemma: generalized tvd linear bound}.  We now assume that for $T\leq t'\in\N$, for all joint stochastic kernels $\mathcal{J}_k$ and $\mathcal{J}_{k'}$, and for all $a_g'\in \mathcal{A}_g$:
\begin{align*}
|\E_{(s_g',s_\Delta')\sim\mathcal{J}_k(\cdot,\cdot|s_g,a_g,s_\Delta)}\hat{Q}_k^T(s'_g,F_{s'_\Delta},a_g') &- \E_{(s_g',s_{\Delta'}')\sim\mathcal{J}_{k'}(\cdot,\cdot|s_g,a_g,s_{\Delta'})}\hat{Q}_{k'}^T(s'_g, F_{s'_{\Delta'}}, a_g')| \\
&\leq \left(\sum_{t=0}^{T-1} 2\gamma^t\right)\|r_l(\cdot,\cdot)\|_\infty \cdot \mathrm{TV}(F_{s_\Delta}, F_{s_{\Delta'}}).\\
\end{align*}
For the remainder of the proof, we adopt the shorthand $\E_{(s_g',s_\Delta')\sim{\mathcal{J}}}$ to denote $\E_{(s_g',s_\Delta')\sim\mathcal{J}_{|\Delta|}(\cdot,\cdot|s_g,a_g,s_\Delta)}$, and $\E_{(s_g'',s_\Delta'')\sim{\mathcal{J}}}$ to denote $\E_{(s_g'',s_\Delta'')\sim\mathcal{J}_{|\Delta|}(\cdot,\cdot|s_g',a_g',s_\Delta')}$. 

Then, inductively, we have:
\begin{align*}
&|\E_{(s_g',s_\Delta')\sim{\mathcal{J}}} \hat{Q}_k^{T+1}(s'_g,F_{s'_\Delta},a_g') -\E_{(s_g',s_{\Delta'}')\sim{\mathcal{J}}}\hat{Q}_{k'}^{T+1}(s'_g, F_{s'_{\Delta'}}, a_g')|  \\
&\quad\quad=|\E_{(s_g',s_{\Delta\cup\Delta'}')\sim\mathcal{J}}[\bar{r}^{\pi_\ell}(s_g',s'_\Delta,a_g')-\bar{r}^{\pi_\ell}(s_g',s'_{\Delta'},a_g')  \\ &\quad\quad \quad+\gamma\E_{(s_g'',s_{\Delta\cup\Delta'}'')\sim{\mathcal{J}}}[\max_{a_g''\in\mathcal{A}_g}  \hat{Q}_k^T(s_g'',F_{s_\Delta''},a_g'')-\max_{a_g''\in\mathcal{A}_g} \hat{Q}_{k'}^T(s_g'',F_{s_{\Delta'}''}, a_g'')]]| \\
&\quad\quad\leq 2\|r_l(\cdot,\cdot)\|_\infty \cdot \mathrm{TV}(F_{s_\Delta}, F_{s_{\Delta'}})\\ &\quad\quad\quad + \gamma|\E_{(s_g',s_{\Delta\cup\Delta'}')\sim\mathcal{J}}[\E_{(s_g'',s_{\Delta\cup\Delta'}'')\sim{\mathcal{J}}}[\max_{a_g''\in\mathcal{A}_g} \hat{Q}_k^T(s_g'',F_{s_\Delta''},a_g'') -\max_{a_g''\in\mathcal{A}_g} \hat{Q}_{k'}^T(s_g'',F_{s_{\Delta'}''}, a_g'')]]|.\\
\end{align*}

\noindent Here, we expand out $\hat{Q}_k^{T+1}$ and $\hat{Q}_{k'}^{T+1}$ using the adapted Bellman operator. In the ensuing inequality, we apply the triangle inequality and bound the first term using the base case. 

Then, note that for some stochastic function $\mathcal{J}_{|\Delta\cup\Delta'|}'$, \begin{align*}
&\E_{(s_g',s_{\Delta\cup\Delta'}')\sim\mathcal{J}(\cdot,\cdot|s_g, a_g, s_{\Delta\cup\Delta'})} \E_{(s_g'',s_{\Delta\cup\Delta'}'')\sim\mathcal{J}(\cdot,\cdot|s_g', a_g', s'_{\Delta\cup\Delta'})}\max_{a_g''\in\mathcal{A}_g} \hat{Q}_k^T(s_g'',F_{s_\Delta''},a_g'') \\
&\quad\quad= \E_{(s_g'', s_{\Delta\cup\Delta'}'')\sim\mathcal{J}_{|\Delta\cup\Delta'|}'(\cdot,\cdot|s_g, a_g, s_{\Delta\cup\Delta'})}\max_{a_g''\in\mathcal{A}_g}\hat{Q}_k^T(s_g'', F_{s_\Delta''}, a_g''),
\end{align*}
where $\mathcal{J}'$ is an implicit function of $a_g'$ which is fixed at the start. 
As a shorthand, denote $\E_{(s_g'',s_{\Delta\cup\Delta'}'')\sim\mathcal{J}_{|\Delta\cup\Delta'|}'(\cdot,\cdot|s_g, a_g, s_{\Delta\cup\Delta'})}$ by $\E_{(s_g'',s_{\Delta\cup\Delta'}'')\sim\mathcal{J}'}$. 

Therefore,\looseness=-1
\begin{align*}
&|\E_{(s_g',s_\Delta')\sim{\mathcal{J}}} \hat{Q}_k^{T+1}(s'_g,F_{s'_\Delta},a_g') -\E_{(s_g',s_{\Delta'}')\sim{\mathcal{J}}}\hat{Q}_{k'}^{T+1}(s'_g, F_{s'_{\Delta'}}, a_g')| \\
&\quad\quad\leq 2\|r_l(\cdot,\cdot)\|_\infty \cdot \mathrm{TV}(F_{s_\Delta}, F_{s_{\Delta'}}) \\
&\quad \quad \quad\quad+ \gamma|\E_{(s_g'',s_{\Delta\cup\Delta'}'')\sim{\mathcal{J}'}}\max_{a_g''\in\mathcal{A}_g} \hat{Q}_k^T(s_g'',F_{s_\Delta''},a_g'')
 -\E_{(s_g'',s_{\Delta\cup\Delta'}'')\sim{\mathcal{J}'}}\max_{a_g''\in\mathcal{A}_g} \hat{Q}_{k'}^T(s_g'',F_{s_{\Delta'}''}, a_g'')| \\
&\quad\quad \leq 2\|r_l(\cdot,\cdot)\|_\infty\cdot \mathrm{TV}(F_{s_\Delta}, F_{s_{\Delta'}}) \\
&\quad\quad\quad\quad + \gamma\max_{a_g''\in\mathcal{A}_g}|\E_{(s_g'',s_{\Delta\cup\Delta'}'')\sim{\mathcal{J}'}} \hat{Q}^T_k(s_g'',F_{s_\Delta''},a_g'') 
 - \E_{(s_g'',s_{\Delta\cup\Delta'}'')\sim{\mathcal{J}'}} \hat{Q}^T_{k'}(s_g'',F_{s_{\Delta'}''},a_g'')| \\ 
&\quad\quad\leq 2\|r_l(\cdot,\cdot)\|_\infty\cdot\mathrm{TV}(F_{s_\Delta}, F_{s_{\Delta'}})  + \gamma\left(\sum_{t=0}^{T-1}2\gamma^t\right)\|r_l(\cdot,\cdot)\|_\infty\cdot \mathrm{TV}(F_{s_\Delta}, F_{s_{\Delta'}}) \\
&\quad\quad=  \left(\sum_{t=0}^{T}2\gamma^t\right)\|r_l(\cdot,\cdot)\|_\infty \cdot \mathrm{TV}(F_{s_\Delta},F_{s_{\Delta'}}).\\
\end{align*}

The second inequality uses \cref{lemma: expectation_expectation_max_swap} by setting the joint stochastic kernel to be $\mathcal{J}_{|\Delta\cup\Delta'|}'$, and using the induction assumption on the transition kernels $\mathcal{J}'_k$ and $\mathcal{J}'_{k'}$, proving the lemma.\qedhere 
\end{proof}

\begin{lemma}\label{lemma: generalized tvd linear bound} Given a joint transition probability $\mathcal{J}_{|\Delta\cup\Delta'|}$ as defined in \cref{definition: joint_transition_probability}, 
\[\mathrm{TV}(\E_{s'_{\Delta\cup\Delta'}\sim\mathcal{J}_{|\Delta\cup\Delta'|}(\cdot|s_g', s_g, a_g, s_{\Delta\cup\Delta'})} F_{s'_{\Delta}}, \E_{s'_{\Delta\cup\Delta'}\sim\mathcal{J}_{|\Delta\cup\Delta'|}(\cdot|s_g', s_g, a_g, s_{\Delta\cup\Delta'})} F_{s'_{\Delta'}})\leq \mathrm{TV}(F_{s_\Delta}, F_{s_{\Delta'}}).\]
\end{lemma}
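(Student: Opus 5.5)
The plan is to show that the two expected empirical distributions are images of $F_{s_\Delta}$ and $F_{s_{\Delta'}}$ under one common Markov kernel on $\mathcal{S}_l$. The bound then follows from the contraction (data-processing) property of TV-distance under Markov kernels.

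\emph{Step 1: remove the conditioning on $s_g'$.} In the dynamics, $s_g'\sim P_g(\cdot|s_g,a_g)$ and each $s_i'\sim \bar P^{\pi_\ell}_l(\cdot|s_i,s_g)$ are sampled independently, given the current $(s_g,a_g,s_{\Delta\cup\Delta'})$. Hence, under $\mathcal{J}_{|\Delta\cup\Delta'|}$, the conditional law of $s'_{\Delta\cup\Delta'}$ given $s_g'$ is the product $\bigotimes_{i\in\Delta\cup\Delta'}\bar P^{\pi_\ell}_l(\cdot|s_i,s_g)$. This law does not depend on $s_g'$. This is the step I expect to need the most care: the lemma's statement conditions on $s_g'$, and the argument relies on the local transition depending on the \emph{current} $s_g$, not on $s_g'$.

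\emph{Step 2: compute the expectations by linearity.} Write $K(z|x)\coloneqq \bar P^{\pi_\ell}_l(z|x,s_g)$. By linearity of expectation, for every $z\in\mathcal{S}_l$,
\[
\E\bigl[F_{s'_\Delta}(z)\bigr]=\frac1{|\Delta|}\sum_{i\in\Delta}K(z|s_i)=\sum_{x\in\mathcal{S}_l}F_{s_\Delta}(x)\,K(z|x),
\]
and the same identity holds with $\Delta'$ in place of $\Delta$. Indices lying in $\Delta\cap\Delta'$ cause no difficulty, because only marginals enter. So the two expected distributions are $F_{s_\Delta}K$ and $F_{s_{\Delta'}}K$.

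\emph{Step 3: contraction of TV.} For $\mu=F_{s_\Delta}$ and $\nu=F_{s_{\Delta'}}$, the triangle inequality and $\sum_z K(z|x)=1$ give
\[
\sum_z\Bigl|\sum_x(\mu(x)-\nu(x))K(z|x)\Bigr|\le\sum_x|\mu(x)-\nu(x)|\sum_z K(z|x)=\sum_x|\mu(x)-\nu(x)|.
\]
Halving both sides yields $\mathrm{TV}(\mu K,\nu K)\le\mathrm{TV}(\mu,\nu)$, which is the claim.
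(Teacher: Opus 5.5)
Your proposal is correct and follows essentially the same route as the paper. The paper likewise uses linearity of expectation to write the two expected empirical distributions as $\mathcal{J}_1(\cdot\mid s_g',s_g,a_g,\cdot)\,F_{s_\Delta}$ and $\mathcal{J}_1(\cdot\mid s_g',s_g,a_g,\cdot)\,F_{s_{\Delta'}}$ for a common column-stochastic matrix, and then uses that such a matrix does not increase the $\ell_1$ norm. The only difference is that you identify the kernel concretely as $\bar{P}^{\pi_\ell}_l(\cdot\mid\cdot,s_g)$ by checking that conditioning on $s_g'$ changes nothing, whereas the paper keeps the abstract one-agent conditional kernel $\mathcal{J}_1$; that version needs only that each $s_i'$ has a conditional marginal given by a common kernel applied to $s_i$, a slightly weaker condition that also holds for the composed multi-step kernels in the paper's induction.
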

\begin{proof}
Note that from \cref{lemma: expected next empirical distribution linearity}:
\begin{align*}\E_{s_{\Delta\cup\Delta'}'\sim\mathcal{J}_{|\Delta\cup\Delta'|}(\cdot,\cdot|s_g', s_g, a_g, s_{\Delta\cup\Delta'})}F_{s'_\Delta} 
&= \E_{s_{\Delta}'\sim\mathcal{J}_{|\Delta|}(\cdot,\cdot|s_g', s_g, a_g, s_{\Delta})} F_{s'_\Delta} \\ &= \mathcal{J}_1(\cdot|s_g(t+1),s_g(t),a_g(t),\cdot)F_{s_\Delta}.
\end{align*}

\noindent Then, by expanding the TV distance in $\ell_1$-norm:
    \begin{align*}
        &\mathrm{TV}(\E_{s'_{\Delta\cup\Delta'}\sim\mathcal{J}_{|\Delta\cup\Delta'|}(\cdot|s_g', s_g, a_g, s_{\Delta\cup\Delta'})} F_{s'_{\Delta}}, \E_{s'_{\Delta\cup\Delta'}\sim\mathcal{J}_{|\Delta\cup\Delta'|}(\cdot|s_g', s_g, a_g, s_{\Delta\cup\Delta'})} F_{s'_{\Delta'}}) \\
        &\quad\quad\quad\quad\quad\quad\quad= \frac{1}{2}\|\mathcal{J}_1(\cdot|s_g(t+1),s_g(t),a_g(t),\cdot)F_{s_\Delta} - \mathcal{J}_1(\cdot|s_g(t+1),s_g(t),a_g(t),\cdot)F_{s_{\Delta'}}\|_1 \\
        &\quad\quad\quad\quad\quad\quad\quad\leq \|\mathcal{J}_1(\cdot|s_g(t+1),s_g(t),a_g(t),\cdot)\|_1 \cdot \frac{1}{2}\|F_{s_\Delta}  -  F_{s_{\Delta'}}\|_1 \\
        &\quad\quad\quad\quad\quad\quad\quad \leq \frac{1}{2}\|F_{s_\Delta}  -  F_{s_{\Delta'}}\|_1 = \mathrm{TV}(F_{s_\Delta},F_{s_{\Delta'}}).
    \end{align*}
In the first inequality, we factorize $\|\mathcal{J}_1(\cdot|s_g(t+1),s_g(t), a_g(t))\|_1$ from the $\ell_1$-normed expression by the sub-multiplicativity of the matrix norm. Finally, since $\mathcal{J}_1$ is a column-stochastic matrix, we bound its norm by $1$ to recover the total variation distance between $F_{s_\Delta}$ and $F_{s_{\Delta'}}$. \qedhere
\end{proof}

\begin{lemma}\label{lemma: expected next empirical distribution linearity}Given the joint transition probability $\mathcal{J}_k$ from  \cref{definition: joint_transition_probability}, we have
\[\E_{s_{\Delta\cup\Delta'}(t+1)\sim\mathcal{J}_{|\Delta\cup\Delta'|}(\cdot|s_g(t+1), s_g(t), a_g(t), s_{\Delta\cup\Delta'}(t))} F_{s_\Delta (t+1)} :=  \mathcal{J}_{1}(\cdot|s_g(t+1),s_g(t), a_g(t), \cdot) F_{s_\Delta}(t).\]
\end{lemma}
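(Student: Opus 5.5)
The plan is to prove the identity coordinatewise, using linearity of expectation and the product structure of the local transitions. Fix $s_g(t), a_g(t), s_g(t+1)$ and the current local states $s_{\Delta\cup\Delta'}(t)$. The first step is to record that, under $\mathcal{J}_{|\Delta\cup\Delta'|}$, the next local states are conditionally independent given $(s_g(t+1), s_g(t), a_g(t))$. Each $s_i(t+1)$ depends only on its own $s_i(t)$ and on $s_g(t)$, via $s_i(t+1)\sim \bar P^{\pi_\ell}_l(\cdot|s_i(t),s_g(t))$. Moreover, $s_g(t+1)$ is drawn from $P_g(\cdot|s_g(t),a_g(t))$ independently of the local draws, so conditioning on $s_g(t+1)$ does not change the local marginals. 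I define $\mathcal{J}_1(z'|s_g(t+1),s_g(t),a_g(t),z) := \bar P^{\pi_\ell}_l(z'|z,s_g(t))$. This is a column-stochastic $|\mathcal{S}_l|\times|\mathcal{S}_l|$ matrix whose column is indexed by $z$.

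Next I expand the empirical distribution. For each $z'\in\mathcal{S}_l$,
\[
F_{s_\Delta(t+1)}(z') = \frac{1}{|\Delta|}\sum_{i\in\Delta}\mathbbm{1}\{s_i(t+1)=z'\}.
\]
Taking the conditional expectation and using linearity, only the marginal of each $s_i(t+1)$ matters. The coordinates in $\Delta'\setminus\Delta$ integrate out, which gives the first equality in the statement: the expectation over $\mathcal{J}_{|\Delta\cup\Delta'|}$ equals the expectation over $\mathcal{J}_{|\Delta|}$. Therefore
\[
\E\, F_{s_\Delta(t+1)}(z') = \frac{1}{|\Delta|}\sum_{i\in\Delta}\mathcal{J}_1(z'|\cdot, s_i(t)).
\]

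The final step regroups the sum by the value $z=s_i(t)$:
\[
\frac{1}{|\Delta|}\sum_{i\in\Delta}\mathcal{J}_1(z'|\cdot,s_i(t)) = \sum_{z\in\mathcal{S}_l}\mathcal{J}_1(z'|\cdot,z)\,F_{s_\Delta(t)}(z),
\]
which is exactly the matrix-vector product $\mathcal{J}_1 F_{s_\Delta(t)}$ evaluated at $z'$.

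The only delicate point is the conditioning on $s_g(t+1)$. I must check that the local kernel uses $s_g(t)$ rather than $s_g(t+1)$, and that the global and local draws are independent in the model. Both hold by the transition definitions in \cref{sec: preliminaries}, so the conditional marginal of each local state is indeed $\bar P^{\pi_\ell}_l(\cdot|s_i(t),s_g(t))$, with no dependence on $s_g(t+1)$. Under this reading, the notation $\mathcal{J}_1(\cdot|s_g(t+1),\dots)$ is formally allowed to depend on $s_g(t+1)$ but in fact does not. Either way the identity holds, and that is all the later application in \cref{lemma: generalized tvd linear bound} requires.
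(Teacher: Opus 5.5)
Your proposal is correct and follows essentially the same route as the paper: expand $F_{s_\Delta(t+1)}(x)$ coordinatewise, use linearity of expectation and the conditional independence of $s_i(t+1)$ from the other coordinates to reduce each term to $\mathcal{J}_1(x\mid\cdot,s_i(t))$, then regroup $\frac{1}{|\Delta|}\sum_{i\in\Delta}\mathcal{J}_1\vec{\mathbbm{1}}_{s_i(t)}$ into $\mathcal{J}_1 F_{s_\Delta(t)}$. The paper only asserts the conditional-independence step, so your explicit identification $\mathcal{J}_1(z'\mid\cdot,z)=\bar P^{\pi_\ell}_l(z'\mid z,s_g(t))$ is a useful clarification, justified by the independence of the global and local draws given $(s_g(t),a_g(t))$, which makes conditioning on $s_g(t+1)$ harmless.
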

\begin{proof}
First, observe that for all $x\in\mathcal{S}_l$:
\begin{align*}
&\E_{s_{\Delta\cup\Delta'}(t+1)\sim\mathcal{J}_{|\Delta\cup\Delta'|}(\cdot|s_g(t+1), s_g(t), a_g(t), s_{\Delta\cup\Delta'}(t))}F_{s_\Delta (t+1)}(x)  \\
&\quad\quad\quad\quad\quad\quad=\frac{1}{|\Delta|}\sum_{i\in\Delta}\E_{s_{\Delta\cup\Delta'}(t+1)\sim\mathcal{J}_{|\Delta\cup\Delta'|}(\cdot|s_g(t+1), s_g(t), a_g(t), s_{\Delta\cup\Delta'}(t))}\mathbbm{1}(s_i({t+1})=x)\\
    &\quad\quad\quad\quad\quad\quad= \frac{1}{|\Delta|}\sum_{i\in\Delta}\Pr[s_i(t+1) = x | s_g(t+1), s_g(t), a_g(t), s_{\Delta\cup\Delta'}(t))] \\
    &\quad\quad\quad\quad\quad\quad= \frac{1}{|\Delta|}\sum_{i\in\Delta}\Pr[s_i(t+1) = x | s_g(t+1), s_g(t), a_g(t), s_{i}(t))] \\
    &\quad\quad\quad\quad\quad\quad= \frac{1}{|\Delta|}\sum_{i\in\Delta}\mathcal{J}_{1}(x|s_g(t+1), s_g(t), a_g(t), s_{i}(t)),
\end{align*}
where in the first line, we expand on the definition of $F_{s_\Delta(t+1)}(x)$. Finally, we note that $s_i(t+1)$ is conditionally independent to $s_{\Delta\cup\Delta'\setminus i}$, which yields the equality above. 

Then, aggregating across every entry $x\in\mathcal{S}_l$, we have
\begin{align*}
    &\E_{s_{\Delta\cup\Delta'}(t+1)\sim\mathcal{J}_{|\Delta\cup\Delta'|}(\cdot|s_g(t+1), s_g(t), a_g(t), s_{\Delta\cup\Delta'}(t))} F_{s_\Delta (t+1)} \\ &\quad\quad\quad\quad  = \frac{1}{|\Delta|}\sum_{i\in\Delta}\mathcal{J}_{1}(\cdot|s_g(t+1),s_g(t), a_g(t), \cdot) \vec{\mathbbm{1}}_{s_i(t)} \\
    &\quad\quad \quad\quad = \mathcal{J}_{1}(\cdot|s_g(t+1),s_g(t), a_g(t), \cdot) F_{s_\Delta}.
\end{align*}
Notably, every $x$ corresponds to a choice of rows in $\mathcal{J}_{1}(\cdot|s_g(t+1),s_g(t), a_g(t), \cdot)$ and every choice of $s_i(t)$ corresponds to a choice of columns in $\mathcal{J}_{1}(\cdot|s_g(t+1),s_g(t), a_g(t), \cdot)$, making $\mathcal{J}_{1}(\cdot|s_g(t+1),s_g(t), a_g(t), \cdot)$ column-stochastic.\qedhere\\
\end{proof}

\begin{lemma} \label{lemma: tvd linear bound}The total variation distance between the expected empirical distribution of $s_\Delta({t+1})$ and $s_{\Delta'}({t+1})$ is linearly bounded by the total variation distance of the empirical distributions of $s_\Delta({t})$ and $s_{\Delta'}({t})$, for $\Delta,\Delta'\subseteq[n]$:
\begin{align*}\mathrm{TV}\bigg(\E_{\substack{s_i({t+1})\sim \bar{P}^{\pi_\ell}_l(\cdot| s_i(t),s_g(t)), \\ \forall i\in\Delta}}F_{s_\Delta({t+1})}, &\E_{\substack{s_i({t+1})\sim \bar{P}^{\pi_\ell}_l(\cdot| s_i(t),s_g(t)),\\ \forall i\in\Delta'}}F_{s_{\Delta'}({t+1})}\bigg) \\ &\quad\quad\quad\leq \mathrm{TV}\left(F_{s_\Delta(t)}, F_{s_{\Delta'}(t)}\right).\end{align*}
\end{lemma}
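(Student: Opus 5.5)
The plan is to follow the same linearity-plus-contraction template as \cref{lemma: expected next empirical distribution linearity} and \cref{lemma: generalized tvd linear bound}, but with the simpler kernel $\bar{P}^{\pi_\ell}_l(\cdot\mid\cdot,s_g(t))$. The global state $s_g(t)$ is fixed and common to both $\Delta$ and $\Delta'$, so there is no conditioning on $s_g(t+1)$ to handle.

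\textbf{Step 1: the expected next empirical distribution is linear in the current one.} Define the $|\mathcal{S}_l|\times|\mathcal{S}_l|$ matrix $M$ with entries $M(x,y)\coloneqq \bar{P}^{\pi_\ell}_l(x\mid y,s_g(t))$. Each column of $M$ is a probability distribution over $\mathcal{S}_l$, so $M$ is column-stochastic. Fix $x\in\mathcal{S}_l$. By linearity of expectation, and because each $s_i(t+1)$ depends only on $(s_i(t),s_g(t))$,
\[
\E F_{s_\Delta(t+1)}(x)=\frac{1}{|\Delta|}\sum_{i\in\Delta}\Pr[s_i(t+1)=x]=\frac{1}{|\Delta|}\sum_{i\in\Delta}M(x,s_i(t))=(M F_{s_\Delta(t)})(x).
\]
Stacking over $x$ gives $\E F_{s_\Delta(t+1)}=M F_{s_\Delta(t)}$, and in the same way $\E F_{s_{\Delta'}(t+1)}=M F_{s_{\Delta'}(t)}$. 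The matrix $M$ is the same in both identities because the global state is shared. The sizes $|\Delta|$ and $|\Delta'|$ need not be equal, since each empirical distribution is already normalized.

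\textbf{Step 2: $M$ does not expand the $\ell_1$ norm.} Write $v\coloneqq F_{s_\Delta(t)}-F_{s_{\Delta'}(t)}$. Then
\[
\|Mv\|_1=\sum_x\Big|\sum_y M(x,y)v(y)\Big|\le \sum_y |v(y)|\sum_x M(x,y)=\|v\|_1,
\]
where the inequality is the triangle inequality and the last equality uses column-stochasticity. Multiplying by $\tfrac12$ turns both sides into total variation distances, which is exactly the claim.

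The only point needing care is Step 1: we must confirm that the expectation really factorizes agent-by-agent and that the same matrix $M$ applies to both subsets. This holds because the local transitions are conditionally independent given $(s_i(t),s_g(t))$, and the policy $\pi_\ell$ is shared across agents. Everything else is routine.
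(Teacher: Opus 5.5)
Your proof is correct and follows the paper's argument. The paper likewise invokes \cref{lemma: linear_transition} to write both expected next empirical distributions as the column-stochastic matrix $\bar{P}^{\pi_\ell}_l(\cdot\mid\cdot,s_g(t))$ applied to $F_{s_\Delta(t)}$ and $F_{s_{\Delta'}(t)}$, and then uses that this matrix has induced $\ell_1$ norm at most $1$; your Step 2 proves that norm bound directly via the triangle inequality. One minor aside: Step 1 needs only linearity of expectation over the per-agent marginals, not conditional independence of the transitions.
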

\begin{proof}\label{lemma: generalized linear transition}
We expand the TV-distance in $\ell_1$-norm and use the result in \cref{lemma: linear_transition} that \[\E_{\substack{s_i({t+1})\sim \bar{P}^{\pi_\ell}_l(\cdot| s_i(t),s_g(t)) \\ \forall i\in\Delta}}F_{s_\Delta({t+1})} = \bar{P}^{\pi_\ell}_l(\cdot|s_g(t)) F_{s_\Delta(t)}.\] as follows:
    \begin{align*}
\mathrm{TV}\bigg(&\E_{\substack{s_i({t+1})\sim \bar{P}^{\pi_\ell}_l(\cdot| s_i(t),s_g(t)) \\ \forall i\in\Delta}}F_{s_\Delta({t+1})}, \E_{\substack{s_i({t+1})\sim \bar{P}^{\pi_\ell}_l(\cdot| s_i(t),s_g(t)) \\ \forall i\in\Delta'}}F_{s_{\Delta'}(t+1)}\bigg) \\
&= \frac{1}{2}\left\|\E_{\substack{s_i({t+1})\sim \bar{P}^{\pi_\ell}_l(\cdot| s_i(t),s_g(t)) \\ \forall i\in\Delta}}F_{s_\Delta({t+1})} - \E_{\substack{s_i({t+1})\sim \bar{P}^{\pi_\ell}_l(\cdot| s_i(t),s_g(t)) \\ \forall i\in\Delta'}}F_{s_{\Delta'}{(t+1)}}\right\|_1 \\
&= \frac{1}{2}\left\|\bar{P}^{\pi_\ell}_l(\cdot|\cdot,s_g(t))F_{s_\Delta({t})} - \bar{P}^{\pi_\ell}_l(\cdot|\cdot,s_g(t)) F_{s_{\Delta'}(t)}\right\|_1 \\
&\leq \|\bar{P}^{\pi_\ell}_l(\cdot|\cdot,s_g(t))\|_1 \cdot \frac{1}{2}|F_{s_\Delta(t)} - F_{s_{\Delta'}(t)}|_1 \\
&= \|\bar{P}^{\pi_\ell}_l(\cdot|\cdot,s_g(t))\|_1 \cdot \mathrm{TV}(F_{s_\Delta(t)}, F_{s_{\Delta'}(t)}).\\
    \end{align*}
In the last line, we recover the total variation distance from the $\ell_1$ norm. Finally, by the column stochasticity of $\bar{P}^{\pi_\ell}_l(\cdot|\cdot,s_g)$, we have that $\|\bar{P}^{\pi_\ell}_l(\cdot|\cdot,s_g)\|_1 \leq 1$, which then implies 
\begin{align*}\mathrm{TV}\bigg(\E_{\substack{s_i({t+1})\sim \bar{P}^{\pi_\ell}_l(\cdot| s_i(t),s_g(t)), \\ \forall i\in\Delta}}F_{s_\Delta({t+1})}, &\E_{\substack{s_i({t+1})\sim \bar{P}^{\pi_\ell}_l(\cdot| s_i(t),s_g(t)),\\ \forall i\in\Delta'}}F_{s_{\Delta'}({t+1})}\bigg) \\ &\quad\quad\quad\leq \mathrm{TV}\left(F_{s_\Delta(t)}, F_{s_{\Delta'}(t)}\right),\end{align*}
proving the lemma.\qedhere\\
\end{proof}

\begin{lemma}{ The absolute difference between the expected maximums between $\hat{Q}_k$ and $\hat{Q}_{k'}$ is bounded by the maximum of the absolute difference between $\hat{Q}_k$ and $\hat{Q}_{k'}$, where the expectations are taken over any joint distributions of states $\mathcal{J}$, and the maximums are taken over the actions.}
\label{lemma: expectation_expectation_max_swap}\begin{align*}|&\E_{(s_g',s_{\Delta\cup\Delta'}')\sim\mathcal{J}_{|\Delta\cup\Delta'|}(\cdot,\cdot|s_g,a_g,s_{\Delta\cup\Delta'})}[\max_{a_g'\in\mathcal{A}_g} \hat{Q}_k^T(s_g',F_{s_\Delta'},a_g') -\max_{a_g'\in\mathcal{A}_g} \hat{Q}_{k'}^T(s_g',F_{s_{\Delta'}'}, a_g')]| \\
&\quad\quad \leq \max_{a_g'\in\mathcal{A}_g}|\E_{(s_g',s_{\Delta\cup\Delta'}')\sim\mathcal{J}_{|\Delta\cup\Delta'|}(\cdot,\cdot|s_g,a_g,s_{\Delta\cup\Delta'})}[\hat{Q}_k^T(s_g',F_{s_\Delta'},a_g') - \hat{Q}_{k'}^T(s_g',F_{s_{\Delta'}'},a_g')]|.\end{align*}
\end{lemma}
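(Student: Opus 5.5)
The plan is a one-sided comparison followed by symmetrization. Write $X=(s_g',s'_{\Delta\cup\Delta'})\sim\mathcal{J}_{|\Delta\cup\Delta'|}(\cdot,\cdot|s_g,a_g,s_{\Delta\cup\Delta'})$. Set $f(X,a)=\hat{Q}^T_k(s_g',F_{s'_\Delta},a)$ and $g(X,a)=\hat{Q}^T_{k'}(s_g',F_{s'_{\Delta'}},a)$. By symmetry in $f$ and $g$, assume without loss of generality that $\E_X[\max_a f(X,a)]\ge \E_X[\max_a g(X,a)]$, so the absolute value on the left equals $\E_X[\max_a f-\max_a g]$.

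Let $a^\star(X)\in\argmax_a f(X,a)$. Since $\max_a g(X,a)\ge g(X,a^\star(X))$ pointwise, we get
\[
\E_X\bigl[\max_a f(X,a)-\max_a g(X,a)\bigr]\;\le\;\E_X\bigl[f(X,a^\star(X))-g(X,a^\star(X))\bigr].
\]
This step is routine and only uses that both $Q$-functions are bounded (by \cref{lemma: Q-bound}), so all expectations are finite.

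\textbf{Main obstacle.} The remaining step is to move the maximizer outside the expectation. That is, we must bound the right-hand side by $\max_{a}\bigl|\E_X[f(X,a)-g(X,a)]\bigr|$. This is immediate when $a^\star(X)$ is almost surely equal to a single action $\bar a$: then the right-hand side is $\E_X[f(X,\bar a)-g(X,\bar a)]\le\max_a|\E_X[f(X,a)-g(X,a)]|$, which completes the proof.

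Without such a structural fact, the step does not go through, and I would flag this rather than hide it. The inequality as worded fails in general. Take two actions $a_1,a_2$ and two equally likely outcomes $x_1,x_2$ of $X$. Let $f(x_i,a_j)=\mathbbm{1}\{i=j\}$ and $g\equiv 0$. Then the left-hand side is $1$, while the right-hand side is $\max_a\Pr[X=a]=1/2$. So the general statement cannot be proved, and the argument must restrict to a setting where the maximizer does not depend on $X$.

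My plan is therefore as follows.
\begin{enumerate}
\item First try to establish a common a.s.\ maximizer from the structure of $\hat{Q}^T_k$. I do not expect this to hold generically.
\item Failing that, prove the corrected version with the maximum inside the expectation,
\[
\bigl|\E_X[\max_a f-\max_a g]\bigr|\le \E_X\bigl[\max_a|f(X,a)-g(X,a)|\bigr],
\]
by applying $|\max_a f-\max_a g|\le\max_a|f-g|$ pointwise and then using Jensen's inequality.
\item Use this corrected form downstream. The inductive hypotheses in \cref{lemma: Q lipschitz continuous wrt Fsdelta,lemma: expectation Q lipschitz continuous wrt Fsdelta} are quantified over all kernels, including point masses. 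Applied to point-mass kernels they give pointwise TV-Lipschitz bounds, which is what the corrected inequality requires.
\end{enumerate}
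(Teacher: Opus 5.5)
Your diagnosis is correct, and it identifies exactly where the paper's own argument fails. The paper's proof is your first step. It sets $a_g^*=\arg\max_{a_g'}\hat{Q}^T_k(s_g',F_{s'_\Delta},a_g')$ and uses $\max_{a_g'}\hat{Q}^T_{k'}(s_g',F_{s'_{\Delta'}},a_g')\ge\hat{Q}^T_{k'}(s_g',F_{s'_{\Delta'}},a_g^*)$. It then bounds $\E[\hat{Q}^T_k(s_g',F_{s'_\Delta},a_g^*)-\hat{Q}^T_{k'}(s_g',F_{s'_{\Delta'}},a_g^*)]$ by $\max_{a_g'}|\E[\hat{Q}^T_k(s_g',F_{s'_\Delta},a_g')-\hat{Q}^T_{k'}(s_g',F_{s'_{\Delta'}},a_g')]|$, as if $a_g^*$ were a fixed action. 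Since $a_g^*$ depends on the random next state, that step is invalid. The other sign case, with $\tilde{a}_g^*$, has the same defect, and nothing else in the argument repairs it. You should not look for a proof, because the statement is false.

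Your counterexample uses arbitrary $f,g$, so by itself it leaves your step~1 open: whether the structure of the iterates rescues the claim. For $T\le 1$ it does. The only action dependence of $\hat{Q}^1_k(s_g',\cdot,a)$ is $r_g(s_g',a)$, which $f$ and $g$ share, so the two sides are equal. The claim fails already at $T=2$. Take the following model:
\begin{itemize}
\item $\cS_g=\cA_g=\{L,R\}$ with $P_g(\cdot|s_g,a_g)=\delta_{a_g}$;
\item $\cS_l=\{U,L,R,M\}$ with a single local action, where $U$ moves uniformly to $\{L,R\}$ and $L,R,M$ are absorbing;
\item $r_g\equiv 0$ and $r_l(s_l,s_g)=\mathbbm{1}\{s_l=s_g\}$. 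Adding constants to make the rewards strictly positive changes nothing.
\end{itemize}
Then $\hat{Q}^2_1(s_g,\delta_x,a)=\mathbbm{1}\{x=s_g\}+\gamma\,\mathbbm{1}\{x=a\}$ for $x\in\{L,R,M\}$. So the maximizing action follows the random next local state, which is exactly your mechanism. Now take $k=k'=1$, $\Delta=\{1\}$, $\Delta'=\{2\}$, $s_1=U$, $s_2=M$, and conditioning action $a_g=L$, so that $s_g'=L$. The left-hand side is $\tfrac12+\gamma$ and the right-hand side is $\tfrac12+\tfrac{\gamma}{2}$.

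Your corrected inequality in step~2 is correct, but step~3 does not yet repair the downstream results. Inserting the pointwise Lipschitz hypothesis into it gives $\gamma L_T\,\E_X[\mathrm{TV}(F_{s'_\Delta},F_{s'_{\Delta'}})]$. Closing the induction requires this to be at most $\gamma L_T\,\mathrm{TV}(F_{s_\Delta},F_{s_{\Delta'}})$. \cref{lemma: generalized tvd linear bound} only controls the TV distance between the \emph{expected} next empirical distributions, which is a weaker statement.

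Under $\mathcal{J}$, distinct agents move independently, and then the required bound is false. With $\Delta=\{1\}$, $\Delta'=\{2\}$ and both agents at $U$, the current TV distance is $0$ but the expected next TV distance is $\tfrac12$. Each expectation depends only on its own marginal law, so you may replace $\mathcal{J}$ by a coupling that moves co-located agents together. This closes the induction when $k=k'$.

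For $k\neq k'$, which includes the case $k'=n$ used in \cref{theorem: Q-lipschitz of Fsdelta and Fsn}, the Lipschitz claim itself fails. Take the same model with $\Delta=\{1\}$, $\Delta'=\{1,2\}$ and $s_1=s_2=U$. Then $\hat{Q}^3_1(L,\delta_U,a)=\tfrac{\gamma}{2}+\gamma^2$ while $\hat{Q}^3_2(L,\delta_U,a)=\tfrac{\gamma}{2}+\tfrac{3\gamma^2}{4}$, even though the TV distance is $0$.
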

\begin{proof} Let
$a_g^* := \arg\max_{a_g'\in\mathcal{A}_g}\hat{Q}_k^T(s_g', F_{s'_\Delta}, a_g')$ and $\tilde{a}_g^* := \arg\max_{a_g'\in\mathcal{A}_g}\hat{Q}_{k'}^T(s_g', F_{s'_{\Delta'}}, a_g')$.
For the remainder of this proof, we adopt the shorthand $\E_{s_g',s_{\Delta\cup\Delta'}'}$ to refer to $\E_{(s_g',s_{\Delta\cup\Delta'}')\sim \mathcal{J}_{|\Delta\cup\Delta'|}(\cdot,\cdot|s_g, a_g, s_{\Delta\cup\Delta'})}$.\\

Then, if $\E_{s_g',s_{\Delta\cup\Delta'}'} \max_{a_g'\in\mathcal{A}_g}\hat{Q}_k^T(s_g', F_{s'_\Delta}, a_g') - \E_{s_g',s_{\Delta\cup\Delta'}'}\max_{a_g'\in\mathcal{A}_g}\hat{Q}_{k'}^T(s_g', F_{s'_{\Delta'}}, a_g')>0$,
\begin{align*}
&|\E_{s_g',s_{\Delta\cup\Delta'}'} \max_{a_g'\in\mathcal{A}_g}\hat{Q}_k^T(s_g', F_{s'_\Delta}, a_g') - \E_{s_g',s_{\Delta\cup\Delta'}'}\max_{a_g'\in\mathcal{A}_g}\hat{Q}_{k'}^T(s_g', F_{s'_{\Delta'}}, a_g')| \\
&\quad\quad= \E_{s_g',s_{\Delta\cup\Delta'}'} \hat{Q}_k^T(s_g', F_{s'_\Delta}, a_g^*) -\E_{s_g',s_{\Delta\cup\Delta'}'} \hat{Q}_{k'}^T(s_g', F_{s'_{\Delta'}}, \tilde{a}_g^*) \\
&\quad\quad\leq \E_{s_g',s_{\Delta\cup\Delta'}'}\hat{Q}^T_k(s_g', F_{s'_\Delta}, a_g^*) - \E_{s_g',s_{\Delta\cup\Delta'}'} \hat{Q}_{k'}^T(s_g', F_{s'_{\Delta'}}, a_g^*) \\
&\quad\quad\leq \max_{a_g'\in\mathcal{A}_g}|\E_{s_g',s_{\Delta\cup\Delta'}'}\hat{Q}_k^T(s_g', F_{s_\Delta'}, a_g') - \E_{s_g',s_{\Delta\cup\Delta'}'}\hat{Q}_{k'}^T(s_g', F_{s'_{\Delta'}}, a_g')|.
\end{align*}
Similarly, if $\E_{s_g',s_{\Delta\cup\Delta'}'} \max_{a_g'\in\mathcal{A}_g}\hat{Q}_k^T(s_g', F_{s'_\Delta}, a_g') - \E_{s_g',s_{\Delta\cup\Delta'}'}\max_{a_g'\in\mathcal{A}_g}\hat{Q}_{k'}^T(s_g', F_{s'_{\Delta'}}, a_g')<0$, an analogous argument by replacing $a_g^*$ with $\tilde{a}_g^*$ yields an identical bound. \qedhere  \\
\end{proof}

\begin{lemma} \label{lemma: linear_transition}For all $t\in\N$ and $\Delta\subseteq[n]$,
    \[    \E_{\substack{s_i({t+1})\sim \bar{P}^{\pi_\ell}_l(\cdot|s_i(t), s_g(t))\\ \forall i\in\Delta}} [F_{s_\Delta({t+1})}] = \bar{P}^{\pi_\ell}_l(\cdot|\cdot,s_g(t))F_{s_\Delta(t)}.\]
\end{lemma}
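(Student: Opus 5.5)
The plan is to verify the identity coordinatewise and then assemble the coordinates into a matrix--vector product, in the same way as the proof of \cref{lemma: expected next empirical distribution linearity}. Fix $t$, $\Delta$, the current states $s_\Delta(t)$ and $s_g(t)$, and a target state $x\in\mathcal{S}_l$. By the definition of the empirical distribution,
\[F_{s_\Delta(t+1)}(x)=\frac{1}{|\Delta|}\sum_{i\in\Delta}\mathbbm{1}\{s_i(t+1)=x\}.\]
Linearity of expectation moves the expectation inside the sum. Each indicator then has expectation $\Pr[s_i(t+1)=x]$.

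The key structural fact is that under the fixed local policy $\pi_\ell$, the next state of agent $i$ is drawn from $\bar{P}^{\pi_\ell}_l(\cdot|s_i(t),s_g(t))$. This kernel depends only on $s_i(t)$ and $s_g(t)$, not on the other agents' states. Hence
\[\Pr[s_i(t+1)=x]=\bar{P}^{\pi_\ell}_l(x|s_i(t),s_g(t)),\]
and the marginal of each agent under the product measure is exactly its own kernel row. If $\pi_\ell$ is stochastic, $\bar{P}^{\pi_\ell}_l(x|s_i,s_g)$ should be read as $\sum_{a}\pi_\ell(a|s_i,s_g)P_l(x|s_i,s_g,a)$, and the same argument goes through.

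Next, I group the sum over $i\in\Delta$ by the value $y=s_i(t)$:
\[\frac{1}{|\Delta|}\sum_{i\in\Delta}\bar{P}^{\pi_\ell}_l(x|s_i(t),s_g(t))=\sum_{y\in\mathcal{S}_l}\bar{P}^{\pi_\ell}_l(x|y,s_g(t))\,F_{s_\Delta(t)}(y).\]
Define the $|\mathcal{S}_l|\times|\mathcal{S}_l|$ matrix $\bar{P}^{\pi_\ell}_l(\cdot|\cdot,s_g(t))$ with entry $(x,y)$ equal to $\bar{P}^{\pi_\ell}_l(x|y,s_g(t))$. This matrix is column-stochastic. The displayed sum is then exactly the $x$-th entry of $\bar{P}^{\pi_\ell}_l(\cdot|\cdot,s_g(t))\,F_{s_\Delta(t)}$. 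Ranging over all $x$ gives the vector identity.

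There is no real obstacle here. The only point requiring care is the notational convention: the matrix must be indexed as (next state, current state) so that the product is well defined and column-stochastic. That is the convention used later in \cref{lemma: tvd linear bound}.
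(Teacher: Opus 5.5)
Your proposal is correct and follows essentially the same route as the paper: a coordinatewise computation using linearity of expectation and each agent's own kernel row, followed by assembling the entries into the matrix--vector product. The only cosmetic difference is in the assembly step. The paper writes $F_{s_\Delta(t)}=\frac{1}{|\Delta|}\sum_{i\in\Delta}\vec{\mathbbm{1}}_{s_i(t)}$ and uses $\bar{P}^{\pi_\ell}_l(\cdot|\cdot,s_g(t))\vec{\mathbbm{1}}_{s_i(t)}=\bar{P}^{\pi_\ell}_l(\cdot|s_i(t),s_g(t))$, whereas you group the sum by the value $y=s_i(t)$; these are the same identity.
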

\begin{proof} 
For all  $x\in\mathcal{S}_l$:
\begin{align*}
\E_{\substack{s_i({t+1})\sim \bar{P}^{\pi_\ell}_l(\cdot|s_i(t), s_g(t))\\ \forall i\in\Delta}} [F_{s_\Delta({t+1})}(x)] &:= \frac{1}{|\Delta|}\sum_{i\in\Delta}\E_{s_i({t+1})\sim \bar{P}^{\pi_\ell}_l(s_i(t),s_g(t))}[\mathbbm{1}(s_i({t+1})=x)] \\
&= \frac{1}{|\Delta|}\sum_{i\in\Delta}\Pr[s_i({t+1}) = x|s_i({t+1})\sim \bar{P}^{\pi_\ell}_l(\cdot|s_i(t),s_g(t))]  \\
&= \frac{1}{|\Delta|}\sum_{i\in\Delta}\bar{P}^{\pi_\ell}_l(x|s_i(t),s_g(t)) .
\end{align*}
The first line writes out the definition of $F_{s_\Delta({t+1})}(x)$ and uses the conditional independence in the evolutions of $\Delta\setminus i$ and $i$. The second line uses that for any random variable $X\in\mathcal{X}$, $\E_{X}\mathbbm{1}[X=x] = \Pr[X=x]$. In line 3, we observe that this probability can be written as an entry of the local transition matrix $\bar{P}^{\pi_\ell}_l$. 

Then, aggregating across every entry $x\in\mathcal{S}_l$ gives:
\begin{align*}
\E_{\substack{s_i({t+1})\sim \bar{P}^{\pi_\ell}_l(\cdot|s_i(t), s_g(t))\\ \forall i\in\Delta}}[F_{s_\Delta({t+1})}] &= \frac{1}{|\Delta|}\sum_{i\in\Delta}\bar{P}^{\pi_\ell}_l(\cdot|s_i(t),s_g(t)) \\
&= \frac{1}{|\Delta|}\sum_{i\in\Delta}\bar{P}^{\pi_\ell}_l(\cdot|\cdot,s_g(t))\vec{\mathbbm{1}}_{s_i(t)} \\
&=: \bar{P}^{\pi_\ell}_l(\cdot|\cdot,s_g(t)) F_{s_\Delta(t)}.
\end{align*}
Here, $\vec{\mathbbm{1}}_{s_i(t)}\in  \{0,1\}^{|\mathcal{S}_l|}$ such that $\vec{\mathbbm{1}}_{s_i(t)}$ is $1$ at the index corresponding to $s_i(t)$, and is $0$ everywhere else. The last equality follows as $\bar{P}^{\pi_\ell}_l(\cdot|\cdot,s_g(t))$ is  column-stochastic, yielding that $\bar{P}^{\pi_\ell}_l(\cdot|\cdot,s_g(t))\vec{\mathbbm{1}}_{s_i(t)}=\bar{P}^{\pi_\ell}_l(\cdot|s_i(t),s_g(t)) $.\qedhere
\end{proof}
 
\begin{lemma}\label{lemma: combining transition probabilities}
For any joint transition probability function on $s_g, s_\Delta$, where $|\Delta|=k$, given by $\mathcal{J}_k:\mathcal{S}_g\times\mathcal{S}_l^{|\Delta|}\times\mathcal{S}_g\times\mathcal{A}_g\times\mathcal{S}_l^{|\Delta|}\to [0,1]$, we have:
\begin{align*}
\E_{(s_g',s_\Delta')\sim\mathcal{J}_k(\cdot,\cdot|s_g,a_g,s_\Delta)}&\left[\E_{(s_g'',s_\Delta'')\sim\mathcal{J}_k(\cdot,\cdot|s_g',a_g,s_\Delta')} \max_{a_g''\in\mathcal{A}_g}\hat{Q}_k^T(s_g'',F_{s_\Delta''},a_g'')\right]  \\
&\quad\quad\quad\quad = \E_{(s_g'',s_\Delta'')\sim\mathcal{J}_k^2(\cdot,\cdot|s_g,a_g,s_\Delta)} \max_{a_g''\in\mathcal{A}_g}\hat{Q}_{k}^T(s_g'',F_{s_\Delta''},a_g'').
\end{align*}
\end{lemma}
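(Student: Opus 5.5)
The identity in \cref{lemma: combining transition probabilities} is the Chapman--Kolmogorov equation for the time-homogeneous Markov chain on $\mathcal{S}_g\times\mathcal{S}_l^k$ obtained by fixing $a_g$. I will read $\mathcal{J}_k^2$ as the two-step kernel
\[
\mathcal{J}_k^2(s_g'',s_\Delta''|s_g,a_g,s_\Delta)\coloneqq\sum_{(s_g',s_\Delta')\in\mathcal{S}_g\times\mathcal{S}_l^k}\mathcal{J}_k(s_g'',s_\Delta''|s_g',a_g,s_\Delta')\,\mathcal{J}_k(s_g',s_\Delta'|s_g,a_g,s_\Delta).
\]
With this reading, the proof is a direct computation that expands both expectations as finite sums and regroups them.

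\textbf{Step 1.} Set $f(s_g'',s_\Delta'')\coloneqq\max_{a_g''\in\mathcal{A}_g}\hat{Q}_k^T(s_g'',F_{s_\Delta''},a_g'')$. This is a deterministic function of $(s_g'',s_\Delta'')$, since $F_{s_\Delta''}$ depends only on $s_\Delta''$. By \cref{lemma: Q-bound}, $f$ is bounded by $\tilde{r}/(1-\gamma)$.

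\textbf{Step 2.} Write the inner expectation as $\sum_{(s_g'',s_\Delta'')}\mathcal{J}_k(s_g'',s_\Delta''|s_g',a_g,s_\Delta')\,f(s_g'',s_\Delta'')$. Then write the outer expectation as a sum over $(s_g',s_\Delta')$ weighted by $\mathcal{J}_k(s_g',s_\Delta'|s_g,a_g,s_\Delta)$.

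\textbf{Step 3.} By \cref{assumption: finite cardinality}, all sums are finite, so they may be interchanged (Fubini). Pulling $f(s_g'',s_\Delta'')$ out of the sum over intermediate states leaves exactly $\sum_{(s_g'',s_\Delta'')}\mathcal{J}_k^2(s_g'',s_\Delta''|s_g,a_g,s_\Delta)\,f(s_g'',s_\Delta'')$. This is the right-hand side of the lemma.

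\textbf{Step 4.} As a sanity check, $\mathcal{J}_k^2$ is itself a valid stochastic kernel: it is nonnegative, and summing over $(s_g'',s_\Delta'')$ gives $1$ because each factor sums to $1$. So the right-hand expectation is well defined.

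The only real obstacle is notational. The lemma holds the action $a_g$ fixed at the intermediate step rather than resampling it, which is what makes $\mathcal{J}_k^2$ a genuine matrix square. If instead $a_g'$ is drawn from a policy, as in the proof of \cref{lemma: expectation Q lipschitz continuous wrt Fsdelta}, I would absorb that choice into the kernel and define $\mathcal{J}'$ as the resulting composed kernel. The computation above goes through verbatim, which also justifies the implicit-kernel step used earlier in that proof.
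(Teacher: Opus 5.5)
Your proposal is correct and follows the paper's proof. Both expand the nested expectations into a finite double sum, recognize the sum over the intermediate state $(s_g',s_\Delta')$ as the two-step kernel $\mathcal{J}_k^2$ (Chapman--Kolmogorov with $a_g$ held fixed), and regroup. Your explicit definition of $\mathcal{J}_k^2$ and your check that it is stochastic make precise what the paper asserts informally.
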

\begin{proof}
We start by expanding the expectations:
\begin{align*}
&\E_{(s_g',s_\Delta')\sim \mathcal{J}_k(\cdot,\cdot|s_g,a_g,s_\Delta)}\left[\E_{(s_g'',s_\Delta'')\sim \mathcal{J}_k(\cdot,\cdot|s_g',a_g,s_\Delta')}\max_{a_g'\in\mathcal{A}_g}\hat{Q}_k^T(s_g'',F_{s_\Delta''},a_g')\right] \\
&\!=\! \!\!\!  \sum_{(s_g',s_\Delta')\in\mathcal{S}_g\times\mathcal{S}_l^{|\Delta|}} \!\sum_{(s_g'',s_\Delta'')\in\mathcal{S}_g\times\mathcal{S}_l^{|\Delta|}} \!\!    \mathcal{J}_k[s_g',s_\Delta', s_g,a_g,s_\Delta]\mathcal{J}_k[s_g'',s_\Delta'', s_g',a_g,s_\Delta']\max_{a_g'\in\mathcal{A}_g}\hat{Q}_k^T(s_g'',F_{s_\Delta''},a_g') \\
&\!=\! \sum_{(s_g'',s_\Delta'')\in\mathcal{S}_g\times\mathcal{S}_l^{|\Delta|}}\mathcal{J}_k^2[s_g'',s_\Delta'', s_g, a_g, s_\Delta] \max_{a_g'\in\mathcal{A}_g}\hat{Q}^T_k(s_g'',F_{s_\Delta''},a_g') \\
&= \E_{(s_g'',s_\Delta'')\sim\mathcal{J}_k^2(\cdot,\cdot|s_g,a_g,s_\Delta)}\max_{a_g'\in\mathcal{A}_g}\hat{Q}_k^T(s_g'',F_{s_\Delta''},a_g')
\end{align*}
\noindent The right-stochasticity of $\mathcal{J}_k$ implies the right-stochasticity of $\mathcal{J}_k^2$.

Further, observe that $\mathcal{J}_k[s_g',s_\Delta',s_g,a_g,s_\Delta]\mathcal{J}_k[s_g'',s_\Delta'',s_g',a_g,s_\Delta']$ denotes the probability of the transitions $(s_g,s_\Delta)\to (s_g',s_\Delta')\to (s_g'',s_\Delta'')$ with actions $a_g$ at each step, where the joint state evolution is governed by $\mathcal{J}_k$. Thus, $\sum_{(s_g',s_\Delta')\in\mathcal{S}_g\times\mathcal{S}_l^{|\Delta|}}\mathcal{J}_k[s_g',s_\Delta',s_g,a_g,s_\Delta]\mathcal{J}_k[s_g'',s_\Delta'',s_g', a_g, s_g']$ is the stochastic probability function corresponding to the two-step evolution of the joint states from $(s_g,s_\Delta)$ to $(s_g'',s_\Delta'')$ under the action $a_g$, which is equivalent to $\mathcal{J}_k^2[s_g'',s_\Delta'',s_g,a_g,s_\Delta]$. 
The third equality lets the joint probabilities to be taken over $\mathcal{J}_k^2$, proving the lemma. \qedhere
\end{proof}

\subsection{Bounding the Total Variation Distance}

\begin{theorem}[Theorem C.5 of \cite{anand2024efficientreinforcementlearningglobal}]\label{theorem: sampling without replacement analog of DKW}
Consider a finite population $\mathcal{X}=(x_1,\dots,x_n)\in \mathcal{S}_l^n$. Let $\Delta\subseteq[n]$ be a random sample of size $k$ chosen uniformly and without replacement.  Then, for all $x\in\mathcal{S}_l$:
\[\Pr\left[\sup_{x\in\mathcal{S}_l}\left|\frac{1}{|\Delta|}\sum_{i\in\Delta}\mathbbm{1}{\{x_i = x\}} - \frac{1}{n}\sum_{i\in[n]}\mathbbm{1}{\{x_i = x\}}\right|<\epsilon\right] \geq 1 - 2|\mathcal{S}_l|e^{-\frac{2|\Delta|n\epsilon^2}{n-|\Delta|+1}}.\]
\end{theorem}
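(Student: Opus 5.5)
Fix $x\in\mathcal{S}_l$ and write $p_x=\frac1n\sum_{i\in[n]}\mathbbm{1}\{x_i=x\}$ for the population frequency. The sample frequency is
\[
\hat p_x=\frac1k\sum_{i\in\Delta}\mathbbm{1}\{x_i=x\},\qquad k=|\Delta|.
\]
The plan is to prove a two-sided concentration bound for each fixed $x$ under sampling without replacement, and then take a union bound over $x\in\mathcal{S}_l$. The union bound contributes the factor $|\mathcal{S}_l|$. The two tails account for the factor $2$, so each one-sided tail must be at most $e^{-2kn\epsilon^2/(n-k+1)}$.

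For a fixed $x$, the variable $k\hat p_x$ is hypergeometric: the population of $n$ binary labels $\mathbbm{1}\{x_i=x\}$ contains $np_x$ ones, and we draw $k$ without replacement. The key tool is Serfling's inequality for sampling without replacement from a finite population of $[0,1]$-valued entries:
\[
\Pr\big[\hat p_x-p_x\ge\epsilon\big]\le \exp\!\Big(-\frac{2k\epsilon^2}{1-f_k}\Big),\qquad f_k=\frac{k-1}{n}.
\]
Since $1-f_k=\frac{n-k+1}{n}$, the exponent equals $-\frac{2kn\epsilon^2}{n-k+1}$, which is exactly the required rate.

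Serfling's bound comes from two ingredients. First, the centered partial sums of a without-replacement sample, read in reverse order, form a martingale. Second, a Hoeffding–Azuma type exponential-moment argument applies to this martingale. Each increment has range $1$, and the conditional variance proxy shrinks like $(1-f_k)$ as the population is depleted. If we only wanted an exponent of $2k\epsilon^2$, Hoeffding's classical observation would suffice: the moment generating function of a without-replacement sum is dominated by that of the with-replacement sum, by convexity. The finite-population factor $\frac{n}{n-k+1}$ genuinely needs Serfling's martingale argument. The lower tail follows by applying the same bound to the complementary labels $1-\mathbbm{1}\{x_i=x\}$.

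Combining the pieces is straightforward. The event $\sup_x|\hat p_x-p_x|\ge\epsilon$ is a union over the $|\mathcal{S}_l|$ values of $x$ and over the two signs. Each of these $2|\mathcal{S}_l|$ events has probability at most $e^{-2kn\epsilon^2/(n-k+1)}$, and taking complements gives the stated lower bound.

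The main obstacle is the sharp Serfling constant, namely obtaining $1-f_k$ with $f_k=(k-1)/n$ rather than a cruder factor. I would cite Serfling (1974) directly rather than re-derive the martingale exponential bound. If a self-contained argument is preferred, the fallback is to bound the reverse martingale's conditional ranges and sum the variance terms. The rest of the argument is routine.
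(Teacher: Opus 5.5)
Your proposal is correct and follows the standard argument behind this result, which the paper does not prove but imports as Theorem C.5 of the earlier work: for a fixed $x$, Serfling's inequality for a $[0,1]$-valued population with $f_k=(k-1)/n$ gives exactly the one-sided exponent $2kn\epsilon^2/(n-k+1)$, the complementary labels give the lower tail, and a union bound over the $2|\mathcal{S}_l|$ one-sided events yields the stated probability. One harmless imprecision is in your informal sketch of Serfling's proof: it is the normalized sums $S_j/(n-j)$ (or equivalently the running sample means read backward) that form a martingale, not the raw centered partial sums, but this does not affect your argument since you cite Serfling directly.
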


Combining the Lipschitz continuity bound  and the TV-distance bound yields \cref{theorem: Q-lipschitz of Fsdelta and Fsn}. \\

\begin{theorem}\label{theorem: Q-lipschitz of Fsdelta and Fsn}
    For all $s_g \in \mathcal{S}_g, s_1, \dots ,s_n \in \mathcal{S}_l^n, a_g \in \mathcal{A}_g$, we have that with probability at least $1 - \delta$:
\[|\hat{Q}_k^T(s_g, F_{s_\Delta}, a_g) - \hat{Q}_n^T(s_g, F_{s_{[n]}}, a_g)| \leq \frac{2\|r_l(\cdot,\cdot)\|_\infty}{1-\gamma}\sqrt{\frac{n-|\Delta|+1}{8n|\Delta|}\ln ( 2|\mathcal{S}_l|/\delta)}.\]

\begin{proof}
Let $\mathcal{X}=\mathcal{S}_l$ be the finite population in \cref{theorem: sampling without replacement analog of DKW} and
    recall the Lipschitz-continuity of $\hat{Q}_k^T$ from \cref{lemma: Q lipschitz continuous wrt Fsdelta}:
\begin{align*}\left|\hat{Q}_k^T(s_g,F_{s_\Delta},a_g) - \hat{Q}_n^T(s_g, F_{s_{[n]}}, a_g)\right| &\leq \left(\sum_{t=0}^{T-1} 2\gamma^t\right)\|r_l(\cdot,\cdot)\|_\infty \cdot \mathrm{TV}(F_{s_\Delta}, F_{s_{[n]}}) \\
&\leq \frac{2}{1-\gamma}\|r_l(\cdot,\cdot)\|_\infty\cdot\epsilon.
\end{align*}
By setting the error parameter in \cref{theorem: sampling without replacement analog of DKW} to $2\epsilon$, we find that the supremum is bounded with probability at least $1-2|\mathcal{S}_l|e^{-2|\Delta|n\epsilon^2/(n-|\Delta|+1)}$, and hence
\[\Pr\left[\left|\hat{Q}_k^T(s_g,F_{s_\Delta},a_g) - \hat{Q}_n^T(s_g, F_{s_{[n]}}, a_g)\right| \leq \frac{2\epsilon}{1-\gamma}\|r_l(\cdot,\cdot)\|_\infty\right]\geq 1 - 2|\mathcal{S}_l|e^{-\frac{8n|\Delta|\epsilon^2}{n-|\Delta|+1}}.\]
Finally, we parameterize the probability to $1-\delta$ to solve for $\epsilon$, which yields \[\epsilon = \sqrt{\frac{n-|\Delta|+1}{8n|\Delta|}\ln(2|\mathcal{S}_l|/\delta)}.\]
This proves the theorem.\qedhere \\
\end{proof}
\end{theorem}

\subsection{Using the Performance Difference Lemma to Bound the Optimality Gap}

Recall that the fixed-point of the empirical adapted Bellman operator $\hat{\mathcal{T}}_{k,m}$ is $\hat{Q}^\est_{k,m}$ and $\|\hat{Q}^*_k - \hat{Q}^\est_{k,m}\|_\infty \leq \epsilon_{k,m}$, where $\epsilon_{k,m}$ is the Bellman gap whose value we bound later. \\

\begin{lemma}\label{lemma:union_bound_over_finite_time}
Fix $s\in \mathcal{S}:=\mathcal{S}_g\times\mathcal{S}_l^n$. Suppose we are given a $T$-length sequence of i.i.d. random variables $\Delta_1, \dots, \Delta_T$, distributed uniformly over the support $\binom{[n]}{k}$. Further, suppose we are given a fixed sequence $\delta_1,\dots,\delta_T \in (0,1)$. Then, for each action $a_g\in\mathcal{A}_g$ and for $i\in[T]$, define events $B_i^{a_g}$ such that:
    \begin{align*}
        B_i^{a_g}\!\coloneqq\!\left\{\!\left|{Q}^*(s_g, s_{[n]}, a_g) \!-\!  \hat{Q}_{k,m}^{\est}(s_g, F_{s_{\Delta_i}}, a_g)\right| \!>\!  \sqrt{\frac{n-k+1}{8kn}\ln \frac{2|\mathcal{S}_l|}{\delta_i}}\cdot \frac{2}{1-\gamma}
    \|r_l(\cdot,\cdot)
    \|_\infty \!+\! \epsilon_{k,m}\!\right\}.
    \end{align*}
    Next, for $i\in[T]$, we define ``bad-events'' $B_i$ such that
    $B_i = \bigcup_{a_g\in\mathcal{A}_g} B_i^{a_g}$. Next, denote \[B \coloneqq \bigcup_{i=1}^T B_i = \bigcup_{i=1}^T \left\{\bigcup_{a_g \in \mathcal{A}_g} B_i^{a_g}\right\}.\]
    Then, the probability that  no ``bad event'' occurs is \[\Pr\left[\bar{B}\right] \geq 1- |\mathcal{A}_g|\sum_{i=1}^T \delta_i.\]
\end{lemma}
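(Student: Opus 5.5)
The plan is a per-event probability bound followed by a union bound over the $T$ time indices and the $|\mathcal{A}_g|$ actions. Fix $i\in[T]$ and $a_g\in\mathcal{A}_g$. I will show $\Pr[B_i^{a_g}]\le \delta_i$. Since $s$ is fixed and $\Delta_i$ is uniform over $\binom{[n]}{k}$, the sampled empirical distribution $F_{s_{\Delta_i}}$ is exactly the random object controlled by \cref{theorem: sampling without replacement analog of DKW}. Consequently \cref{theorem: Q-lipschitz of Fsdelta and Fsn}, applied with $\delta=\delta_i$, gives the following with probability at least $1-\delta_i$:
\[
\bigl|\hat{Q}_k^T(s_g,F_{s_{\Delta_i}},a_g)-\hat{Q}_n^T(s_g,F_{s_{[n]}},a_g)\bigr|\le \frac{2\|r_l(\cdot,\cdot)\|_\infty}{1-\gamma}\sqrt{\frac{n-k+1}{8nk}\ln\frac{2|\mathcal{S}_l|}{\delta_i}} .
\]

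Next I pass to fixed points. The Lipschitz constant $\sum_{t<T}2\gamma^t\|r_l\|_\infty$ in \cref{lemma: Q lipschitz continuous wrt Fsdelta} is bounded by $2\|r_l\|_\infty/(1-\gamma)$ uniformly in $T$. Letting $T\to\infty$ and using \cref{corollary:backprop}, the same inequality holds for $\hat{Q}_k^*$ and $\hat{Q}_n^*$ on the same event, because the event depends only on the TV-distance $\mathrm{TV}(F_{s_{\Delta_i}},F_{s_{[n]}})$ and not on $T$. I then identify $\hat{Q}_n^*(s_g,F_{s_{[n]}},a_g)$ with $Q^*(s_g,s_{[n]},a_g)$. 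This holds because, for $k=n$, the adapted operator $\hat{\mathcal{T}}_n$ coincides with $\mathcal{T}$ under the mean-field parameterization: the reward and transitions depend on $s_{[n]}$ only through $F_{s_{[n]}}$ by homogeneity. Finally, the triangle inequality with $\|\hat{Q}_k^*-\hat{Q}_{k,m}^{\est}\|_\infty\le\epsilon_{k,m}$ adds the $\epsilon_{k,m}$ term, so $B_i^{a_g}$ is contained in the failure event and $\Pr[B_i^{a_g}]\le\delta_i$.

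For the union bound, $\Pr[B]\le\sum_{i=1}^T\sum_{a_g\in\mathcal{A}_g}\Pr[B_i^{a_g}]\le|\mathcal{A}_g|\sum_i\delta_i$, which gives $\Pr[\bar B]\ge 1-|\mathcal{A}_g|\sum_i\delta_i$. Independence of the $\Delta_i$ is not needed here; only the marginal uniformity of each $\Delta_i$ is used. The randomness in $\epsilon_{k,m}$ is treated as fixed, or conditioned on, since it comes from the offline samples and is independent of the $\Delta_i$.

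The main obstacle is the passage from the finite-$T$ iterates $\hat{Q}^T$ in \cref{theorem: Q-lipschitz of Fsdelta and Fsn} to the fixed points $Q^*$ and $\hat{Q}_{k,m}^{\est}$, together with the identification $\hat{Q}_n^*=Q^*$. I would handle it by the limit argument above, noting that the probability bound is $T$-independent. If a cleaner route is wanted, the alternative is to insert $\gamma^T\tilde r/(1-\gamma)$ slack terms and send $T\to\infty$.
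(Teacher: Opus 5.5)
Your proposal is correct and follows essentially the same route as the paper. The paper also splits $|Q^* - \hat{Q}^{\est}_{k,m}|$ with the triangle inequality through $\hat{Q}_k^*$, which contributes the $\epsilon_{k,m}$ term, bounds the remaining term via \cref{theorem: Q-lipschitz of Fsdelta and Fsn} with $\delta=\delta_i$, and finishes with a union bound over $i\in[T]$ and $a_g\in\mathcal{A}_g$; your $T\to\infty$ limit argument and the identification $\hat{Q}_n^*=Q^*$ just make explicit steps the paper uses silently when it applies that theorem directly to the fixed points.
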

\begin{proof}
\begin{align*}
    \left|{Q}^*(s_g, s_{[n]}, a_g) - \hat{Q}_{k,m}^{\est}(s_g, F_{s_\Delta}, a_g)\right| &\leq \left|{Q}^*(s_g, s_{[n]}, a_g) - \hat{Q}_k^*(s_g, F_{s_\Delta}, a_g)\right| \\
    &\quad\quad + \left| \hat{Q}_k^*(s_g, F_{s_\Delta}, a_g) - \hat{Q}_{k,m}^{\est}(s_g, F_{s_\Delta}, a_g) \right| \\
    &\leq \left|{Q}^*(s_g, s_{[n]}, a_g) - \hat{Q}_k^*(s_g, F_{s_\Delta}, a_g)\right| + \epsilon_{k,m}.
\end{align*}
The first inequality above follows from the triangle inequality, and the second inequality uses $|{Q}^*(s_g, s_{[n]}, a_g) - \hat{Q}_k^*(s_g, F_{s_\Delta}, a_g)|\leq \|{Q}^*(s_g, s_{[n]}, a_g) - \hat{Q}_k^*(s_g, F_{s_\Delta}, a_g)\|_\infty \leq \epsilon_{k,m}$. Then, from \cref{theorem: Q-lipschitz of Fsdelta and Fsn}, we have that with probability at least $1-\delta_i$, 
\[\left|{Q}^*(s_g, s_{[n]}, a_g) - \hat{Q}_k^*(s_g, F_{s_\Delta}, a_g)\right| \leq \sqrt{\frac{n-k+1}{8nk} \ln\frac{2|\mathcal{S}_l|}{\delta_i}}\cdot \frac{2}{1-\gamma} \|r_l(\cdot,\cdot)\|_\infty.\]
So, event $B_i$ occurs with probability at most $\delta_i$. Thus, by repeated applications of the union bound, 
\begin{align*}
    \Pr[\bar{B}] &\geq 1 - \sum_{i=1}^T \sum_{a_g\in\mathcal{A}_g} \Pr[B_i^{a_g}] \geq 1 - |\mathcal{A}_g|\sum_{i=1}^T \Pr[B_i^{a_g}].
\end{align*}
Finally, substituting $\Pr[{B}_i^{a_g}] \leq \delta_i$ yields the lemma. \qedhere
\end{proof}

   Recall that for any $s\in\mathcal{S}:=\mathcal{S}_g\times\mathcal{S}_l^n\cong \mathcal{S}_g$, the policy function ${\pi}^\est_{k,m}(s)$ is defined as a uniformly random element in the maximal set of $\hat{\pi}_{k,m}^\est$ evaluated on all possible choices of $\Delta$. Formally:
\begin{equation}
{\pi}_{k,m}^\est(s) \sim\mathcal{U} \left\{\hat{\pi}^\est_{k,m}(s_g,F_{s_\Delta}): \Delta\in \binom{[n]}{k}\right\}.
\end{equation}

We now use the celebrated performance difference lemma from \cite{Kakade+Langford:2002}, restated below for convenience in \cref{theorem: performance difference lemma}, to bound the value functions generated between ${\pi}_{k,m}^\est$ and $\pi^*$.\\
 
\begin{theorem}[Performance Difference  \cite{Kakade+Langford:2002}]\label{theorem: performance difference lemma}{Given policies $\pi_1, \pi_2$, with corresponding value functions $V^{\pi_1}$, $V^{\pi_2}$:}
\begin{equation}
    V^{\pi_1}(s) - V^{\pi_2}(s) = \frac{1}{1-\gamma} \E_{\substack{s'\sim d^{\pi_1}_s \\ a_g'\sim \pi_1(\cdot|s')} } [A^{\pi_2}(s',a_g')].
\end{equation}
\emph{Here, $A^{\pi_2}(s',a_g'):= Q^{\pi_2}(s',a_g') - V^{\pi_2}(s')$ and $d_s^{\pi_1}(s') = (1-\gamma) \sum_{h=0}^\infty \gamma^h \Pr_h^{\pi_1}[s',s]$ where $\Pr_h^{\pi_1}[s',s]$ is the probability of $\pi_1$ reaching state $s'$ at time step $h$ starting from state $s$.}\\
\end{theorem}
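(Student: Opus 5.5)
We prove the statement by the standard telescoping argument along trajectories generated by $\pi_1$. Throughout, both policies act on the same underlying MDP. In our application the local policy $\pi_\ell$ is held fixed, so the local agents' randomness is absorbed into the transition kernel $\bar P^{\pi_\ell}_l$ and the reward $\bar r^{\pi_\ell}$. The only decision variable is therefore $a_g$, which is why $Q^{\pi_2}$, $V^{\pi_2}$ and $A^{\pi_2}$ depend only on $(s',a_g')$.

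\textbf{Step 1: telescoping identity.} Let $(s(t),a_g(t))_{t\ge 0}$ be the trajectory generated by $\pi_1$ from $s(0)=s$, and write $r_t$ for the (surrogate) stage reward at time $t$. For every horizon $T$,
\[
\sum_{t=0}^{T-1}\gamma^t\big(r_t+\gamma V^{\pi_2}(s(t+1))-V^{\pi_2}(s(t))\big)=\sum_{t=0}^{T-1}\gamma^t r_t+\gamma^{T}V^{\pi_2}(s(T))-V^{\pi_2}(s).
\]
By \cref{assumption: bounded rewards} (equivalently \cref{lemma: Q-bound}), $|V^{\pi_2}|\le \tilde r/(1-\gamma)$. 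Hence the boundary term $\gamma^T V^{\pi_2}(s(T))$ vanishes as $T\to\infty$. The partial sums are dominated by $2\tilde r/(1-\gamma)^2$, so dominated convergence lets us take expectations and pass to the limit. Taking $\E^{\pi_1}$ then gives
\[
V^{\pi_1}(s)-V^{\pi_2}(s)=\E^{\pi_1}\Big[\sum_{t\ge0}\gamma^t\big(r_t+\gamma V^{\pi_2}(s(t+1))-V^{\pi_2}(s(t))\big)\Big].
\]

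\textbf{Step 2: identify the advantage.} We condition on $(s(t),a_g(t))$ and use the tower property. The next state is drawn from the same kernel $P_g\otimes(\bar P^{\pi_\ell}_l)^{\otimes n}$ regardless of which policy is evaluated. This gives
\[
\E\big[r_t+\gamma V^{\pi_2}(s(t+1))\mid s(t),a_g(t)\big]=Q^{\pi_2}(s(t),a_g(t)),
\]
which is the Bellman equation for $Q^{\pi_2}$. Each summand therefore has conditional expectation $A^{\pi_2}(s(t),a_g(t))$.

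\textbf{Step 3: occupancy measure.} We exchange sum and expectation, justified by Fubini since the terms are bounded. This yields
\[
\sum_{t\ge0}\gamma^t\sum_{s'}\Pr_t^{\pi_1}[s',s]\,\E_{a_g'\sim\pi_1(\cdot|s')}A^{\pi_2}(s',a_g').
\]
Substituting the definition of $d^{\pi_1}_s$ turns this into $\frac{1}{1-\gamma}\E_{s'\sim d^{\pi_1}_s,\,a_g'\sim\pi_1(\cdot|s')}[A^{\pi_2}(s',a_g')]$, as claimed.

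The only delicate point is Step 2 in the present setting. The randomized subsampling policy $\pi^{\est}_{k,m}$ is still a Markov (stationary, randomized) policy on the full state $s$, since at each step it draws a fresh $\Delta$ independently of the past. We therefore need the Markov property of the joint process with the local actions integrated out. Once this is noted, the conditional expectation in Step 2 is legitimate and the rest is routine.
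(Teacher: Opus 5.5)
Your proof is correct. It is the standard telescoping argument of Kakade and Langford, which the paper does not prove itself but restates from that reference. Your remarks that $\pi_\ell$ is absorbed into $\bar P^{\pi_\ell}_l$ and that $\pi^{\est}_{k,m}$ is a stationary randomized Markov policy on the full state are what the paper's application ($\pi_1=\pi^{\est}_{k,m}$, $\pi_2=\pi^*$) needs. One small attribution fix: $|V^{\pi_2}|\le \tilde r/(1-\gamma)$ follows directly from \cref{assumption: bounded rewards}, not from \cref{lemma: Q-bound}, which concerns the value-iteration iterates $\hat Q_k^t$ rather than policy values.
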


\begin{theorem}[Bounding value difference] For any $s\in\mathcal{S}:=\mathcal{S}_g\times\mathcal{S}_l^n$ and $(\delta_1,\delta_2)\in(0,1]^2$,
\[V^{\pi^*}(s) - V^{{{\pi}}^{\est}_{k,m}}(s) \leq \frac{2\|r_l(\cdot,\cdot)\|_\infty}{(1-\gamma)^2}\sqrt{\frac{n-k+1}{2nk}}\sqrt{\ln \frac{2|\mathcal{S}_l|}{\delta_1}} +  \frac{2\tilde{r}}{(1-\gamma)^2}|\mathcal{A}_g|\delta_1 + \frac{2\epsilon_{k,m}}{1-\gamma}.\]  
\end{theorem}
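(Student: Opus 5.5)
We apply the performance difference lemma (\cref{theorem: performance difference lemma}) with $\pi_1 = \pi^*$ and $\pi_2 = \pi^\est_{k,m}$. This gives
\[
V^{\pi^*}(s) - V^{\pi^\est_{k,m}}(s) = \frac{1}{1-\gamma}\,\E_{s'\sim d^{\pi^*}_s}\Big[Q^{\pi^\est_{k,m}}(s',\pi^*(s')) - \E_{a_g\sim \pi^\est_{k,m}(\cdot|s')}Q^{\pi^\est_{k,m}}(s',a_g)\Big].
\]
The advantage is not compared with $Q^{\pi^\est_{k,m}}$ directly. Instead we work with $Q^*$: we add and subtract $Q^*(s',\pi^*(s'))$ and $\E_{a_g}Q^*(s',a_g)$, and bound the discrepancy between $Q^{\pi^\est_{k,m}}$ and $Q^*$ through the standard recursion. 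That recursion produces the extra $1/(1-\gamma)$ factor, which is where the $(1-\gamma)^{-2}$ in the statement comes from. So the task reduces to bounding, uniformly in $s'$, the per-state gap
\[
\E_{\Delta}\big[Q^*(s',\pi^*(s')) - Q^*(s',\hat{\pi}^\est_{k,m}(s'_g,F_{s'_\Delta}))\big],
\]
where $\Delta$ is uniform over $\binom{[n]}{k}$.

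Fix $s'$ and split on the good event from \cref{lemma:union_bound_over_finite_time}, taken with $T=1$ and $\delta_i=\delta_1$. On the good event, every action satisfies $|Q^*(s',a_g)-\hat{Q}^\est_{k,m}(s'_g,F_{s'_\Delta},a_g)|\le \xi$, where
\[
\xi := \frac{2\|r_l\|_\infty}{1-\gamma}\sqrt{\frac{n-k+1}{8nk}\ln\frac{2|\mathcal{S}_l|}{\delta_1}}+\epsilon_{k,m}.
\]
Since $\hat{\pi}^\est_{k,m}$ is greedy for $\hat{Q}^\est_{k,m}$, the usual two-sided sandwich gives
\[
Q^*(s',\pi^*(s')) - Q^*(s',\hat{\pi}^\est_{k,m}) \le 2\xi.
\]
On the bad event, which has probability at most $|\mathcal{A}_g|\delta_1$, the gap is at most $\|Q^*\|_\infty\le \tilde r/(1-\gamma)$ by \cref{lemma: Q-bound}; we allow a factor of $2$ for the two-sided difference. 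Taking the expectation over $\Delta$, the per-state gap is therefore at most
\[
2\xi + \frac{2\tilde r}{1-\gamma}\,|\mathcal{A}_g|\delta_1 .
\]

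Finally, we multiply by the $1/(1-\gamma)$ prefactor from the performance difference lemma. Since $2\sqrt{1/8}=\sqrt{1/2}$, the constants reproduce exactly
\[
\frac{2\|r_l\|_\infty}{(1-\gamma)^2}\sqrt{\frac{n-k+1}{2nk}\ln\frac{2|\mathcal{S}_l|}{\delta_1}},\qquad \frac{2\tilde r|\mathcal{A}_g|\delta_1}{(1-\gamma)^2},\qquad \frac{2\epsilon_{k,m}}{1-\gamma}.
\]

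The main obstacle is the first step. The performance difference lemma naturally involves $A^{\pi^\est_{k,m}}$, not $Q^*$, and routing through $Q^*$ without paying an extra $1/(1-\gamma)$ on the $\epsilon_{k,m}$ term needs care. I would first try to apply the lemma in the reverse orientation, with $\pi_1 = \pi^\est_{k,m}$, and use $A^{\pi^*}$, which equals the $Q^*$-gap and is nonpositive:
\[
V^{\pi^*}-V^{\pi^\est_{k,m}} = \frac{1}{1-\gamma}\,\E_{s'\sim d^{\pi^\est}}\,\E_\Delta\big[V^*(s')-Q^*(s',\hat\pi^\est_{k,m})\big].
\]
This gives the bound directly with the single $1/(1-\gamma)$, and the per-state bound above applies under $d^{\pi^\est}$ as well, since it holds uniformly in $s'$. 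The remaining slack in the constants, namely the $(1-\gamma)^{-2}$ and the factor $2$ on $\epsilon_{k,m}$, is then consistent with the statement.
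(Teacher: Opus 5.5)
Your final argument (the performance difference lemma with $\pi_1=\pi^{\est}_{k,m}$, $\pi_2=\pi^*$ and $A^{\pi^*}$, a per-$s'$ good/bad split over the random $\Delta$ via \cref{lemma:union_bound_over_finite_time} plus the greedy sandwich, then the $1/(1-\gamma)$ prefactor) is exactly the paper's proof. The only cosmetic difference is that the paper gets $2|\mathcal{A}_g|\delta_1$ from two union bounds ($\delta_1+\delta_2$ with $\delta_2=\delta_1$) rather than by doubling $\|Q^*\|_\infty$, and your fix-$s'$-then-average-over-$\Delta$ ordering is in fact the clean way to justify the paper's averaging step; do discard the forward-orientation opening, though, since the ``standard recursion'' there is circular (it reintroduces $\gamma\|V^{\pi^*}-V^{\pi^{\est}_{k,m}}\|_\infty$), and neither the $(1-\gamma)^{-2}$ nor the $2\epsilon_{k,m}$ is slack. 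They come, respectively, from the $1/(1-\gamma)$ in $\xi$ and in $\|Q^*\|_\infty\le\tilde{r}/(1-\gamma)$ times the PDL prefactor, and from the two-sided sandwich.
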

\begin{proof}
Note that by definition of the advantage function,
\begin{align*}
\E_{a_g'\sim {{\pi}}_{k,m}^{\est}(\cdot|s')} A^{{\pi}^*}(s',a_g') &= \E_{a_g'\sim {{\pi}}_{k,m}^{\est}(\cdot|s')}[Q^{{\pi}^*}(s',a_g') - V^{{\pi}^*}(s')] \\
&= \E_{a_g'\sim {{\pi}}_{k,m}^\est(\cdot|s')}[Q^{{\pi}^*}(s',a_g') - \E_{a\sim {\pi}^*(\cdot|s')} Q^{{\pi}^*}(s',a_g)] \\
&= \E_{a_g'\sim {{\pi}}_{k,m}^\est(\cdot|s')}\E_{a_g\sim {\pi}^*(\cdot|s')}[Q^{{\pi}^*}(s',a_g') - Q^{{\pi}^*}(s',a_g)].
\end{align*}
Since ${\pi}^*$ is a deterministic policy, we can write:
\begin{align*}
\E_{a_g'\sim {{\pi}}_{k,m}^\est(\cdot|s')}\E_{a_g\sim {\pi}^*(\cdot|s')} A^{{\pi}^*}(s',a_g') &= \E_{a_g'\sim {{\pi}}_{k,m}^\est(\cdot|s')}[Q^{{\pi}^*}(s',a_g') - Q^{{\pi}^*}(s',{\pi}^*(s'))] \\
&= \frac{1}{\binom{n}{k}}\sum_{\Delta\in\binom{[n]}{k}}[Q^{{\pi}^*}(s',\hat{\pi}^\est_{k,m}(s_g',F_{s_\Delta'})) - Q^{{\pi}^*}(s',{\pi}^*(s'))].
\end{align*}
\noindent Then, by the linearity of expectations and the performance difference lemma (while noting that $Q^{\pi^*}(\cdot,\cdot) = Q^*(\cdot,\cdot)$):
\begin{align*} V^{{\pi}^*}(s) - V^{{{\pi}}_{k,m}^\est}(s) &= \frac{1}{1-\gamma}\sum_{\Delta\in\binom{[n]}{k}}\frac{1}{\binom{n}{k}}\E_{s'\sim d_s^{{{\pi}}_{k,m}^\est}}\left[Q^{{\pi}^*}(s',{\pi}^*(s')) - Q^{{\pi}^*}(s',{\hat{\pi}}_{k,m}^\est(s_g', F_{s_\Delta'}))\right] \\
&= \frac{1}{1-\gamma}\sum_{\Delta\in\binom{[n]}{k}}\frac{1}{\binom{n}{k}}\E_{s'\sim d_s^{{{\pi}}_{k,m}^\est}}\left[Q^*(s',{\pi}^*(s')) - Q^*(s',{\hat{\pi}}_{k,m}^\est(s'_g,F_{s_\Delta'}))\right].
\end{align*}
Next, we use \cref{lemma: expected_q*bound_with_different_actions} to bound this difference (where the probability distribution function of $\mathcal{D}$ is set as $ d_s^{{\pi}_{k,m}^\est}$ as defined in \cref{theorem: performance difference lemma}) while letting $\delta_1=\delta_2$:
\begin{align*}
V^{{\pi}^*}(s) &- V^{{{\pi}}_{k,m}^\est}(s)\\ &\leq \frac{1}{1-\gamma}\sum_{\Delta\in\binom{[n]}{k}}\frac{1}{\binom{n}{k}}\bigg[\frac{2\|r_l\|_\infty}{1-\gamma}\sqrt{\frac{n-k+1}{2nk}}\left(\sqrt{\ln \frac{2|\mathcal{S}_l|}{\delta_1}}\right)+\frac{2\tilde{r}}{1-\gamma}|\mathcal{A}_g|\delta_1 + 2\epsilon_{k,m}\bigg] \\
&\leq \frac{2\|r_l(\cdot,\cdot)\|_\infty}{(1-\gamma)^2}\sqrt{\frac{n-k+1}{2nk}}\left(\sqrt{\ln\frac{2|\mathcal{S}_l|}{\delta_1}}\right) +  \frac{2\tilde{r}}{(1-\gamma)^2}|\mathcal{A}_g|\delta_1+ \frac{2\epsilon_{k,m}}{1-\gamma},
\end{align*}
 proves the theorem. \qedhere \\
 \end{proof}

\begin{lemma}\label{lemma: expected_q*bound_with_different_actions}
For any arbitrary distribution $\mathcal{D}$ of states $\mathcal{S} := \mathcal{S}_g\times\mathcal{S}_l^n$, for any $\Delta\in\binom{[n]}{k}$ and for $\delta_1,\delta_2 \in (0,1]$, we have:
\begin{align*}\E_{s'\sim \mathcal{D}}&[Q^*(s',\pi^*(s')) - Q^*(s',\hat{\pi}^{\est}_{k,m}(s_g',F_{s_\Delta')})] \\
&\leq \frac{2\|r_l\|_\infty}{1-\gamma}\sqrt{\frac{n-k+1}{8nk}}\left(\sqrt{\ln \frac{2|\mathcal{S}_l|}{\delta_1}} + \sqrt{\ln \frac{2|\mathcal{S}_l|}{\delta_2}}\right) +  \frac{\tilde{r}}{1-\gamma}|\mathcal{A}_g|(\delta_1 + \delta_2) + 2\epsilon_{k,m}.
\end{align*}
\end{lemma}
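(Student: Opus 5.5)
Fix $s'$ in the support of $\mathcal{D}$ and write $a^\star:=\pi^*(s')$ and $\hat a:=\hat{\pi}^{\est}_{k,m}(s_g',F_{s'_\Delta})$. I will insert the surrogate $\hat{Q}^{\est}_{k,m}$ twice and telescope:
\begin{align*}
Q^*(s',a^\star) - Q^*(s',\hat a) &= \big[Q^*(s',a^\star) - \hat{Q}^{\est}_{k,m}(s_g',F_{s'_\Delta},a^\star)\big] + \big[\hat{Q}^{\est}_{k,m}(s_g',F_{s'_\Delta},a^\star) - \hat{Q}^{\est}_{k,m}(s_g',F_{s'_\Delta},\hat a)\big] \\
&\quad + \big[\hat{Q}^{\est}_{k,m}(s_g',F_{s'_\Delta},\hat a) - Q^*(s',\hat a)\big].
\end{align*}
The middle bracket is $\le 0$, because $\hat a$ is the greedy action of $\hat{Q}^{\est}_{k,m}$. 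The outer two brackets each have the form $|Q^*(s',a_g) - \hat{Q}^{\est}_{k,m}(s_g',F_{s'_\Delta},a_g)|$ for a single action. These are exactly the quantities controlled by the events $B_i^{a_g}$ of \cref{lemma:union_bound_over_finite_time}.

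For each bracket I will use the triangle-inequality split from the proof of \cref{lemma:union_bound_over_finite_time}. This gives the sum of $|Q^*(s',a_g) - \hat{Q}_k^*(s_g',F_{s'_\Delta},a_g)|$ and $\epsilon_{k,m}$. The first term is bounded by \cref{theorem: Q-lipschitz of Fsdelta and Fsn}, with $T\to\infty$ and $\hat{Q}_n^* = Q^*$. The second term is the Bellman noise. I treat $\Delta$ as drawn uniformly from $\binom{[n]}{k}$, which is how it enters through $\pi^{\est}_{k,m}$ in the previous theorem. Then with probability at least $1-|\mathcal{A}_g|\delta_1$, a union bound over $a_g\in\mathcal{A}_g$ shows the first bracket is at most $\frac{2\|r_l\|_\infty}{1-\gamma}\sqrt{\frac{n-k+1}{8nk}\ln\frac{2|\mathcal{S}_l|}{\delta_1}}+\epsilon_{k,m}$. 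The union bound over actions is needed because $a^\star$ and $\hat a$ are chosen adaptively. The same holds for the third bracket with $\delta_2$ in place of $\delta_1$.

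On the complementary bad events I will use the crude bound $|Q^*(s',a^\star)-Q^*(s',\hat a)|\le \frac{\tilde r}{1-\gamma}$. This holds because \cref{lemma: Q-bound} and the positivity of the rewards put every $Q$-value in $[0,\tilde r/(1-\gamma)]$. So the expectation over the sampling randomness picks up an additive $\frac{\tilde r}{1-\gamma}|\mathcal{A}_g|(\delta_1+\delta_2)$. Summing the good-event bounds and the bad-event contribution, then averaging over $s'\sim\mathcal{D}$, gives the stated inequality, since every bound is uniform in $s'$.

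The main obstacle is interpreting the probability statement correctly. As written, the lemma is deterministic for a fixed $\Delta$, but \cref{theorem: Q-lipschitz of Fsdelta and Fsn} only holds with high probability over a random $\Delta$. I will therefore read the left-hand side as also averaged over $\Delta$ uniform in $\binom{[n]}{k}$ at each $s'$. This is exactly how the lemma is used in the preceding proof, where it sits inside $\frac{1}{\binom nk}\sum_\Delta$. With that reading, the bad-event mass converts into the additive $\delta$ terms. A second subtlety is that $\mathcal{D}$ may depend on the policy, and through it on $\hat{Q}^{\est}_{k,m}$. This causes no problem, because the concentration is applied pointwise in $s'$ with fresh randomness over $\Delta$.
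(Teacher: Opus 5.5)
Your proposal is correct and is essentially the paper's argument. The paper first proves the high-probability bound, for a fixed $s'$, by the same telescoping through $\hat{Q}^{\est}_{k,m}$: the greedy middle term is dropped and the union-bound lemma is applied to the two outer terms. It then splits the expectation into good and bad events and bounds the bad part crudely by $\tilde r/(1-\gamma)$. Your reading of the left-hand side as also averaged over uniform $\Delta$ is a worthwhile clarification. The paper's statement fixes $\Delta$, yet its proof draws its probability from the randomness of $\Delta$, and the lemma is only ever used inside the average $\frac{1}{\binom{n}{k}}\sum_\Delta$.
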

\begin{proof}
Denote $\zeta_{k,m}^{s,\Delta}:=Q^*(s,\pi^*(s)) - Q^*(s, \hat{\pi}_{k,m}^\est(s_g,F_{s_\Delta})$.

We define the indicator function $\mathcal{I}:\mathcal{S}\times \mathbb{N} \times (0,1]\times(0,1]$ by:
\[\mathcal{I}(s, k, \delta_1,\delta_2)=\mathbbm{1}\left\{\zeta_{k,m}^{s,\Delta}\leq\frac{2\|r_l(\cdot,\cdot)\|_\infty}{1-\gamma}\sqrt{\frac{n-k+1}{8nk}}\left(\sqrt{\ln\frac{2|\mathcal{S}_l|}{\delta_1}}+ \sqrt{\ln\frac{2|\mathcal{S}_l|}{\delta_2}}\right)+ 2\epsilon_{k,m}\right\}.\]
We then study the expected difference between $Q^*(s',\pi^*(s'))$ and $Q^*(s',\hat{\pi}_{k,m}^{\est}(s'_g,F_{s'_\Delta}))$. Observe that:
\begin{align*}
\E_{s'\sim \mathcal{D}}[\zeta_{k,m}^{s,\Delta}] &= \E_{s'\sim \mathcal{D}}[Q^*(s',\pi^*(s')) - {Q}^*(s', \hat{\pi}_{k,m}^{\est}(s'_g,F_{s'_\Delta}))] \\
    &= \E_{s'\sim \mathcal{D}}\left[\mathcal{I}(s',k,\delta_1,\delta_2)(Q^*(s',\pi^*(s')) - Q^*(s',\hat{\pi}^{\est}_{k,m}(s_g',F_{s_\Delta'})))\right] \\
    &\quad\quad\quad + \E_{s'\sim \mathcal{D}}[(1-\mathcal{I}(s',k,\delta_1,\delta_2))(Q^*(s',\pi^*(s')) - Q^*(s',\hat{\pi}^{\est}_{k,m}(s_g',F_{s_\Delta'})))]
\end{align*}

Here, we have used the general property for a random variable $X$ and constant $c$ that $\E[X]=\E[X\mathbbm{1}\{X\leq c\}] + \E[(1-\mathbbm{1}\{X\leq c\})X]$. 
Then,
\begin{align*}
    &\E_{s'\sim \mathcal{D}}[Q^*(s',\pi^*(s')) -{Q}^*(s', \hat{\pi}_{k,m}^{\est}(s_g',F_{s_\Delta'})] \\
    & \leq \frac{2\|r_l\|_\infty}{1-\gamma}\sqrt{\frac{n-k+1}{8nk}}\left(\sqrt{\ln\frac{2|\mathcal{S}_l|}{\delta_1}} + \sqrt{\ln\frac{2|\mathcal{S}_l|}{\delta_2)}}\right) + 2\epsilon_{k,m} +  \frac{\tilde{r}}{1-\gamma}\left(1-\E_{s'\sim\mathcal{D}} \mathcal{I}(s',k,\delta_1,\delta_2)\right)\\
    &\leq \frac{2\|r_l\|_\infty}{1-\gamma}\sqrt{\frac{n-k+1}{8nk}}\left(\sqrt{\ln\frac{2|\mathcal{S}_l|}{\delta_1}} + \sqrt{\ln\frac{2|\mathcal{S}_l|}{\delta_2)}}\right) + 2\epsilon_{k,m}  +  \frac{\tilde{r}}{1-\gamma}|\mathcal{A}_g|(\delta_1 + \delta_2) .
\end{align*} The first inequality uses  $\E[X\mathbbm{1}\{X\leq c\}] \leq c$ and trivially bounds $Q^*(s',\pi^*(s')) - Q^*(s',\hat{\pi}_{k,m}^{\est}(s_g',F_{s_\Delta'}))$ by the maximum value of $Q^*$, which is $\frac{\tilde{r}}{1-\gamma}$. The second inequality uses that the expectation of an indicator function is the conditional probability of the underlying event and applies \cref{lemma: q_star_different_action_bounds}.\qedhere\\
\end{proof}

\begin{lemma}\label{lemma: q_star_different_action_bounds} For a fixed $s'\in\mathcal{S}:=\mathcal{S}_g\times\mathcal{S}_l^n$, for any $\Delta\in\binom{[n]}{k}$, and for $\delta_1,\delta_2\in (0,1]$, we have that with probability at least $1 - |\mathcal{A}_g|(\delta_1 + \delta_2)$:
    \[Q^*(s',\pi^*(s')) - Q^*(s',\hat{\pi}_{k,m}^{\est}(s_g',F_{s_\Delta'}))  \leq  \frac{2\|r_l\|_\infty}{1-\gamma} \sqrt{\frac{n-k+1}{8nk}}  \left(   \sqrt{\ln\frac{2|\mathcal{S}_l|}{\delta_1}} \!+\!  \sqrt{\ln\frac{2|\mathcal{S}_l|}{\delta_2}}\right) + 2\epsilon_{k,m}.\]
\end{lemma}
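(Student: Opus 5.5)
The bound follows from a standard greedy-policy argument on two favourable events, each controlled by \cref{lemma:union_bound_over_finite_time}. Fix $s'$ and $\Delta$, and set $a^* := \pi^*(s')$ and $\hat a := \hat{\pi}_{k,m}^{\est}(s_g',F_{s_\Delta'})$. Abbreviate
\[
\beta(\delta) := \frac{2\|r_l\|_\infty}{1-\gamma}\sqrt{\frac{n-k+1}{8nk}\ln\frac{2|\mathcal{S}_l|}{\delta}} .
\]
Apply \cref{lemma:union_bound_over_finite_time} with $T=1$, once with $\delta_1$ and once with $\delta_2$ (two independent invocations of the sampling bound). Each gives that, outside an event of probability at most $|\mathcal{A}_g|\delta_i$, simultaneously for all $a_g\in\mathcal{A}_g$,
\[
|Q^*(s',a_g) - \hat{Q}_{k,m}^{\est}(s_g',F_{s'_\Delta},a_g)| \le \beta(\delta_i)+\epsilon_{k,m}.
\]
A union bound then yields both events together with probability at least $1-|\mathcal{A}_g|(\delta_1+\delta_2)$. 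Strictly, a single event already gives the bound with $\beta(\delta_1)$ counted twice; we keep two parameters only to match the statement, and the first event is used at $a^*$, the second at $\hat a$.

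On this intersection we decompose
\[
Q^*(s',a^*) - Q^*(s',\hat a) = \bigl[Q^*(s',a^*) - \hat{Q}^{\est}_{k,m}(s_g',F_{s'_\Delta},a^*)\bigr] + \bigl[\hat{Q}^{\est}_{k,m}(s_g',F_{s'_\Delta},a^*) - \hat{Q}^{\est}_{k,m}(s_g',F_{s'_\Delta},\hat a)\bigr] + \bigl[\hat{Q}^{\est}_{k,m}(s_g',F_{s'_\Delta},\hat a) - Q^*(s',\hat a)\bigr].
\]
The middle bracket is $\le 0$ because $\hat a$ is the argmax of $\hat{Q}^{\est}_{k,m}(s_g',F_{s'_\Delta},\cdot)$. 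The first bracket is at most $\beta(\delta_1)+\epsilon_{k,m}$ by the first event, and the third is at most $\beta(\delta_2)+\epsilon_{k,m}$ by the second. Summing gives exactly the claimed bound, with total $\epsilon_{k,m}$ contribution $2\epsilon_{k,m}$.

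The main subtlety is that $\hat a$ is itself random, since it depends on $\Delta$ through $F_{s'_\Delta}$. So we cannot apply a fixed-action concentration bound at $\hat a$ directly. This is exactly why the uniform-over-actions event $B_1=\bigcup_{a_g}B_1^{a_g}$ from \cref{lemma:union_bound_over_finite_time} is needed, and it accounts for the $|\mathcal{A}_g|$ factor. A second point to check is that the comparison of $Q^*(s',\cdot)$ on the full state with $\hat{Q}$ evaluated at $F_{s'_\Delta}$ relies on \cref{theorem: Q-lipschitz of Fsdelta and Fsn}. That theorem is stated for $\hat Q^T_k$ versus $\hat Q^T_n$; we identify $\hat Q_n^*$ with $Q^*$ (mean-field representation under fixed $\pi_\ell$) and pass $T\to\infty$. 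The $\epsilon_{k,m}$ term then accounts for replacing $\hat Q_k^*$ by $\hat Q^{\est}_{k,m}$, as in the proof of \cref{lemma:union_bound_over_finite_time}.
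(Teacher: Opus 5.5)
Your proposal is correct and matches the paper's proof. It uses the same three-term decomposition, drops the middle term because $\hat{\pi}_{k,m}^{\est}$ is greedy for $\hat{Q}^{\est}_{k,m}$, and bounds the two outer terms via \cref{lemma:union_bound_over_finite_time}; the paper invokes that lemma once ``for two timesteps,'' which is equivalent to your two $T=1$ invocations plus a union bound, and the only quibble is that the two events are not independent (they concern the same $\Delta$), though your union bound never needs independence.
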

\begin{proof}
\begin{align*}
    Q^*(s',\pi^*(s'))&-Q^*(s',\hat{\pi}_{k,m}^{\est}(s_g',F_{s_\Delta'})) \\ &= Q^*(s',\pi^*(s')) - Q^*(s',\hat{\pi}_{k,m}^{\est}(s_g',F_{s_\Delta')})  \\ 
    &+ \hat{Q}^{\est}_{k,m}(s_g',s_\Delta',\pi^*(s')) -\hat{Q}^{\est}_{k,m}(s_g',s_\Delta',\pi^*(s'))\\
    & + \hat{Q}_{k,m}^{\est}(s_g',s_\Delta',\hat{\pi}_{k,m}^{\est}(s_g',F_{s_\Delta'})) - \hat{Q}_{k,m}^{\est}(s_g',F_{s_\Delta'},\hat{\pi}_{k,m}^{\est}(s_g',F_{s_\Delta'})).
\end{align*}
By the monotonicity of the absolute value and by the triangle inequality,
\begin{align*}
Q^*(s',\pi^*(s'))&-Q^*(s',\hat{\pi}_{k,m}^{\est}(s_g',F_{s_\Delta'})) \\
&\leq |Q^*(s',\pi^*(s'))-\hat{Q}_{k,m}^{\est}(s_g',F_{s_\Delta'},\pi^*(s'))| \\
&\quad\quad\quad + |\hat{Q}_{k,m}^{\est}(s_g',F_{s_\Delta'},\hat{\pi}_{k,m}^{\est}(s_g',F_{s_\Delta'}))-Q^*(s',\hat{\pi}_{k,m}^{\est}(s_g',F_{s_\Delta'}))|.
\end{align*}
The above inequality uses that the residual term \[\hat{Q}_{k,m}^{\est}(s_g',F_{s_\Delta'},\pi^*(s')) - \hat{Q}_{k,m}^{\est}(s_g',F_{s_\Delta'},\hat{\pi}_{k,m}^{\est}(s_g',F_{s_\Delta'})) \leq 0,\] since $\hat{\pi}^{\est}_{k,m}$ is the optimal greedy policy for $\hat{Q}_{k,m}^{\est}$. 

Applying the error bound from \cref{lemma:union_bound_over_finite_time} for two timesteps completes the proof. \qedhere 
\end{proof}

\begin{corollary}\emph{Optimizing parameters in \cref{theorem: application of PDL} yields}:
\label{corollary: performance_difference_lemma_applied}
\[V^{\pi^*}(s) - V^{{{\pi}}^\est_{k,m}}(s) \leq \frac{2\tilde{r}}{(1-\gamma)^2}\left(\sqrt{\frac{n-k+1}{2nk} \ln(2|\mathcal{S}_l||\mathcal{A}_g|\sqrt{k})} +  \frac{1}{\sqrt{k}}\right) + \frac{2\epsilon_{k,m}}{1-\gamma}.\]
\end{corollary}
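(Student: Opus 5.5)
The corollary follows from the preceding Bounding value difference theorem by choosing $\delta_1$ so that the failure-mass term $\frac{2\tilde{r}}{(1-\gamma)^2}|\mathcal{A}_g|\delta_1$ is of the same order as the concentration term. Since that theorem holds for every $\delta_1\in(0,1]$, we may set
\[
\delta_1 = \frac{1}{|\mathcal{A}_g|\sqrt{k}}.
\]
Then the middle term becomes exactly $\frac{2\tilde{r}}{(1-\gamma)^2}\cdot\frac{1}{\sqrt{k}}$. This matches the $\frac{1}{\sqrt{k}}$ summand inside the parentheses of the claimed bound.

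Substituting the same $\delta_1$ into the logarithm gives $\ln\frac{2|\mathcal{S}_l|}{\delta_1} = \ln\bigl(2|\mathcal{S}_l||\mathcal{A}_g|\sqrt{k}\bigr)$. The first term is therefore
\[
\frac{2\|r_l\|_\infty}{(1-\gamma)^2}\sqrt{\frac{n-k+1}{2nk}\ln\bigl(2|\mathcal{S}_l||\mathcal{A}_g|\sqrt{k}\bigr)}.
\]
Using $\|r_l\|_\infty\le\tilde{r}_l\le\tilde{r}$ from Assumption~\ref{assumption: bounded rewards}, both terms can be pulled under a common factor $\frac{2\tilde{r}}{(1-\gamma)^2}$. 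The Bellman-noise term $\frac{2\epsilon_{k,m}}{1-\gamma}$ passes through unchanged, and the three pieces together give exactly the stated inequality.

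The only check is that $\delta_1=1/(|\mathcal{A}_g|\sqrt{k})\in(0,1]$, which holds because $|\mathcal{A}_g|\ge1$ and $k\ge1$.

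The main thing to be careful about is the bookkeeping of the constants. The theorem uses $\delta_1=\delta_2$ in Lemma~\ref{lemma: expected_q*bound_with_different_actions}, and the two square-root terms with factor $\sqrt{\tfrac{n-k+1}{8nk}}$ combine into a single term with factor $\sqrt{\tfrac{n-k+1}{2nk}}$. I will take that combined form directly from the theorem statement and not re-derive it. Finally, I will note that the bound implies Theorem~\ref{theorem: g-learn preliminary}, since $\frac{n-k+1}{n}\le1$.
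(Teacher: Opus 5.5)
Your proposal is correct and matches the paper's proof. The paper likewise recalls the bounding-value-difference theorem, uses $\|r_l\|_\infty \le \tilde{r}$ to pull out the common factor $\frac{2\tilde{r}}{(1-\gamma)^2}$, and sets $\delta_1 = 1/(\sqrt{k}\,|\mathcal{A}_g|)$, so that the logarithm becomes $\ln(2|\mathcal{S}_l||\mathcal{A}_g|\sqrt{k})$ and the failure-mass term becomes $1/\sqrt{k}$.
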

\begin{proof} Recall from \cref{theorem: application of PDL} that:
    \[V^{\pi^*}(s) - V^{{{\pi}}^\est_{k,m}}(s) \leq \frac{2\|r_l(\cdot,\cdot)\|_\infty}{(1-\gamma)^2}\sqrt{\frac{n-k+1}{2nk}}\left(\sqrt{\ln \frac{2|\mathcal{S}_l|}{\delta_1}}\right) +  \frac{2\|r_l(\cdot,\cdot)\|_\infty}{(1-\gamma)^2}|\mathcal{A}_g|\delta_1+ \frac{2\epsilon_{k,m}}{1-\gamma}.\] 
    Note $\|r_l(\cdot,\cdot)\|_\infty \leq \tilde{r}$ by assumption. Then,
    \[V^{\pi^*}(s) - V^{{{\pi}}^\est_{k,m}}(s) \leq \frac{2\tilde{r}}{(1-\gamma)^2}\left(\sqrt{\frac{n-k+1}{2nk} \ln\frac{2|\mathcal{S}_l|}{\delta_1}} +  |\mathcal{A}_g|\delta_1\right) + \frac{2\epsilon_{k,m}}{1-\gamma}.\]
Finally, setting $\delta_1 = \frac{1}{k^{1/2}|\mathcal{A}_g|}$ yields the claim.\qedhere \\
\end{proof}

\subsection{Bounding the Bellman Error}

To bound the Bellman error $\epsilon_{k,m}$, we first recall Hoeffding's inequality \cite{409cf137-dbb5-3eb1-8cfe-0743c3dc925f}. 

\begin{lemma}
    [Hoeffding's inequality]
    Let $X_1, \dots, X_n$ be independent random variables such that $a_i \leq X_i \leq b_i$ almost surely. Let $S_n = \sum_{i=1}^n X_i$. Then, for all $t>0$, we have that
    \[\Pr[|S_n - \E[S_n]| \geq t] \leq 2\exp\left(-\frac{2t^2}{\sum_{i=1}^n (b_i-a_i)^2}\right).\]
\end{lemma}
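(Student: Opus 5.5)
We prove the upper tail $\Pr[S_n-\E[S_n]\ge t]\le \exp(-2t^2/\sum_i(b_i-a_i)^2)$ by the Chernoff (exponential moment) method. Applying the same bound to $-X_1,\dots,-X_n$, which lie in $[-b_i,-a_i]$ with the same widths, gives the lower tail. A union bound over the two tails then produces the factor $2$.

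Fix $\lambda>0$ and set $Y_i=X_i-\E[X_i]$. By Markov's inequality applied to $e^{\lambda(S_n-\E S_n)}$, and then independence,
\[
\Pr[S_n-\E[S_n]\ge t]\le e^{-\lambda t}\,\E\Big[e^{\lambda\sum_i Y_i}\Big]=e^{-\lambda t}\prod_{i=1}^n \E\big[e^{\lambda Y_i}\big].
\]
The key ingredient is Hoeffding's lemma: if $Y$ has mean zero and $Y\in[a,b]$ almost surely, then $\E[e^{\lambda Y}]\le \exp(\lambda^2(b-a)^2/8)$. Assuming this, the right-hand side is at most $\exp\big(-\lambda t+\tfrac{\lambda^2}{8}\sum_i(b_i-a_i)^2\big)$. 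Choosing $\lambda=4t/\sum_i(b_i-a_i)^2$ minimizes the exponent and gives exactly $\exp(-2t^2/\sum_i(b_i-a_i)^2)$.

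The main obstacle is proving Hoeffding's lemma; everything else is routine. Note $Y_i=X_i-\E[X_i]$ ranges over an interval of width $b_i-a_i$, so we may take $a\le 0\le b$ with $b-a=b_i-a_i$.

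First, by convexity of $y\mapsto e^{\lambda y}$ on $[a,b]$,
\[
e^{\lambda y}\le \frac{b-y}{b-a}e^{\lambda a}+\frac{y-a}{b-a}e^{\lambda b}.
\]
Taking expectations and using $\E Y=0$ gives $\E e^{\lambda Y}\le \frac{b}{b-a}e^{\lambda a}-\frac{a}{b-a}e^{\lambda b}=e^{\varphi(u)}$. Here $u=\lambda(b-a)$, $p=-a/(b-a)\in[0,1]$, and $\varphi(u)=-pu+\log(1-p+pe^{u})$.

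Next, check $\varphi(0)=\varphi'(0)=0$. Also $\varphi''(u)=q(1-q)\le 1/4$, where $q=pe^u/(1-p+pe^u)\in[0,1]$. Taylor's theorem with remainder then gives $\varphi(u)\le u^2/8=\lambda^2(b-a)^2/8$, which completes the lemma and hence the inequality.
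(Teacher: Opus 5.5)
Your proof is correct: it is the standard Chernoff-bound argument combined with Hoeffding's lemma (convexity bound, then the second-order Taylor estimate using $\varphi''\le 1/4$), and the constants come out exactly as stated. The paper does not prove this lemma; it only recalls it with a citation, so there is no argument to compare against. The one small loose end is the degenerate case $b_i=a_i$, where your $p$ and choice of $\lambda$ are undefined; it is trivial because then $X_i$ is almost surely constant.
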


 \begin{lemma}\label{lemma: yuejie bound}
    Fix $k\ge 1$ and let $\hat{\cT}_k$ and $\hat{\cT}_{k,m}$ be the adapted Bellman operator and adapted Bellman operator, respectively.
Let $\hat Q_k^*$ denote the unique fixed point of $\hat {\cT}_k$ and $\hat Q_{k,m}^*$
the unique fixed point of $\hat{\cT}_{k,m}$.
Let $N_k  \coloneqq  |\cS_g| |\cA_g| |\cS_l|^k$ be the size of the $Q$-function under the standard parameterization, and let $M_k \coloneqq |\cS_g| |\cA_g| |\cS_l| k^{|\cS_l|}$ be the size of the $Q$-function under the mean-field parameterization. Then for any $\rho\in(0,1)$, with probability at least $1-\rho$ over the sampling used to form $\hat{\cT}_{k,m}$, we have that in the standard parameterization,
\[\epsilon_{k,m}  \coloneqq  \|\hat Q_{k,m}^*-\hat Q_k^*\|_\infty
\leq \frac{\gamma\|r_l\|_\infty}{(1-\gamma)^2}\sqrt{\frac{2\ln(2N_k/\rho)}{m}},\]
and in the mean-field parameterization,
\[\epsilon_{k,m}  \coloneqq  \|\hat Q_{k,m}^*-\hat Q_k^*\|_\infty
\leq \frac{\gamma\|r_l\|_\infty}{(1-\gamma)^2}\sqrt{\frac{2\ln(2M_k/\rho)}{m}}.\]
\end{lemma}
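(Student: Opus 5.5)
The plan is the standard "simulation-lemma" comparison of two fixed points of contractive operators that differ only in their transition term. First, use the fixed-point identities $\hat Q_{k,m}^* = \hat{\cT}_{k,m}\hat Q_{k,m}^*$ and $\hat Q_k^*=\hat{\cT}_k\hat Q_k^*$ together with the triangle inequality:
\[
\|\hat Q_{k,m}^*-\hat Q_k^*\|_\infty \le \|\hat{\cT}_{k,m}\hat Q_{k,m}^*-\hat{\cT}_{k,m}\hat Q_k^*\|_\infty+\|\hat{\cT}_{k,m}\hat Q_k^*-\hat{\cT}_k\hat Q_k^*\|_\infty .
\]
The first term is at most $\gamma\|\hat Q_{k,m}^*-\hat Q_k^*\|_\infty$ by \cref{lemma: gamma-contraction of empirical adapted Bellman operator}. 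Rearranging gives
\[
\epsilon_{k,m}\le \frac{1}{1-\gamma}\|\hat{\cT}_{k,m}\hat Q_k^*-\hat{\cT}_k\hat Q_k^*\|_\infty .
\]
Choosing to centre the empirical operator at the \emph{deterministic} function $\hat Q_k^*$, rather than at $\hat Q_{k,m}^*$, is the key step. Since $\hat Q_k^*$ does not depend on the samples, the comparison reduces to a concentration statement about a fixed function.

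Second, fix one entry of the $Q$-table: $(s_g,s_\Delta,a_g)$ in the standard parameterization, or $(s_g,F_{s_\Delta},a_g)$ in the mean-field one. The reward terms cancel, and the difference becomes
\[
\gamma\Big(\tfrac1m\sum_{j\in[m]} X_j-\E X_1\Big),\qquad X_j\coloneqq\max_{a_g'}\hat Q_k^*(s_g^j,s_\Delta^j,a_g').
\]
The $X_j$ are i.i.d. draws from $P_g\otimes(\bar P_l^{\pi_\ell})^{\otimes k}$. By \cref{lemma: Q-bound} they lie in $[0,\tilde r/(1-\gamma)]$. Hoeffding's inequality then gives, for each entry, a deviation of at most $\frac{\gamma\tilde r}{1-\gamma}\sqrt{\ln(2/\rho')/(2m)}$ with probability $1-\rho'$.

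Third, take a union bound over all entries of the table, setting $\rho'=\rho/N_k$ in the standard case and $\rho'=\rho/M_k$ in the mean-field case. Combining this with the $1/(1-\gamma)$ factor from the first step yields a bound of the form $\frac{\gamma\tilde r}{(1-\gamma)^2}\sqrt{\ln(2N_k/\rho)/(2m)}$, and likewise with $M_k$. This is within the stated $\sqrt{2\ln(\cdot)/m}$ constant.

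The main obstacle is the reward constant. The statement writes $\|r_l\|_\infty$, but $\hat Q_k^*$ is bounded by $\tilde r/(1-\gamma)$, which includes $r_g$. To obtain $\|r_l\|_\infty$, I would shift $\hat Q_k^*$ by a constant per entry, using the fact that subtracting a constant from $X_j$ does not change the centred deviation. I would then argue that the range of $X_j$ across next states is controlled by $\|r_l\|_\infty$ plus the $r_g$ variation. If this fails, I would state the bound with $\tilde r$ and note that it agrees with the lemma whenever $\|r_l\|_\infty$ dominates, as is implicitly used later in \cref{lemma: controlling bellman noise}.

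A secondary point concerns the mean-field case. There, the samples for a given histogram must be generated from one representative $s_\Delta$, so the per-entry variables must be i.i.d. given that entry. This holds by construction, since $\hat Q_k^*$ depends on $s_\Delta$ only through $F_{s_\Delta}$.
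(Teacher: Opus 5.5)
Your proposal follows the paper's proof essentially step for step. The paper also uses the contraction rearrangement $\epsilon_{k,m}\le\frac{1}{1-\gamma}\|\hat{\cT}_{k,m}\hat Q_k^*-\hat{\cT}_k\hat Q_k^*\|_\infty$ centred at the deterministic $\hat Q_k^*$, applies Hoeffding entrywise to $\max_{a_g'}\hat Q_k^*$ at the sampled next states, and union-bounds over the $N_k$ (resp.\ $M_k$) table entries.

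On the constant, your suspicion is correct and your fallback to $\tilde r$ is the right resolution. The paper gets $\|r_l\|_\infty$ only by citing \cref{lemma: Q-bound} as if it gave $\|\hat Q_k^*\|_\infty\le\|r_l\|_\infty/(1-\gamma)$, but that lemma gives $\tilde r/(1-\gamma)$. Your shifting idea cannot recover $\|r_l\|_\infty$ in general: take $r_l$ to be a small positive constant and let $r_g$ depend on $s_g$ under a non-degenerate $P_g$. Then $\hat Q_{k,m}^*-\hat Q_k^*$ does not depend on that constant, while the claimed bound goes to $0$ with it. So the lemma should be stated with $\tilde r$, which is exactly what your argument proves.
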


\begin{proof}
    We first control the Bellman operator's deviation at the fixed point $\hat Q_k^*$.
For any $(s_g, s_\Delta, a_g)\in \cS_g\times \cS_l^k \times \cA_g$, define
\[Y \coloneqq  \max_{a_g\in\cA_g} \hat Q_k^*(s_g,s_\Delta,a_g).\] By Lemma \ref{lemma: Q-bound}, we have $\|\hat Q_k^*\|_\infty \leq \frac{\|r_l\|_\infty}{1-\gamma}$, hence $|Y|\le \frac{\|r_l\|_\infty}{1-\gamma}$ almost surely. The empirical operator uses i.i.d. samples $Y_1,\dots,Y_m$ of $Y$ and forms their average.
By Hoeffding's inequality,
\[\Pr\left[\Big|\frac{1}{m}\sum_{\ell=1}^m Y_\ell - \mathbb{E}[Y]\Big|\ge \eta\right]
\le 2\exp\Big(-\frac{m(1-\gamma)^2\eta^2}{2\|r_l\|_\infty^2}\Big).
\]
Taking a union bound over all $N_k$ tuples $(s_g, s_\Delta, a_g)$ gives
\[\Pr\left[\|\hat \cT_{k,m}\hat Q_k^*-\hat {\cT}_k \hat Q_k^*\|_\infty\ge \gamma\eta\right] \leq 2N_k \exp\Big(-\frac{m(1-\gamma)^2\eta^2}{2\|r_l\|^2}\Big).
\]
Setting the right-hand side to $\rho$ and solving for $\eta$, we have
\[\eta = \frac{\|r_l\|_\infty}{1-\gamma}\sqrt{\frac{2\ln(2N_k/\rho)}{m}}.\]
Thus, with probability at least $1-\rho$, we have
\[\|\hat {\cT}_{k,m}\hat Q_k^*-\hat {\cT}_k\hat Q_k^*\|_\infty
\le \gamma \frac{\|r_l\|_\infty}{1-\gamma}\sqrt{\frac{2\ln(2N_k/\rho)}{m}}. \]
Finally, using the contraction bound since $\hat Q_{k,m}^*=\hat T_{k,m}\hat Q_{k,m}^*$, we have
\begin{align*}\|\hat Q_{k,m}^*-\hat Q_k^*\|_\infty
&\leq \frac{1}{1-\gamma}\|\hat {\cT}_{k,m}\hat Q_k^*-\hat {\cT}_k\hat Q_k^*\|_\infty \\
&\leq \gamma \frac{\|r_l\|_\infty}{(1-\gamma)^2}\sqrt{\frac{2\ln(2N_k/\rho)}{m}},
\end{align*}
which yields the stated bound for the standard parameterization. Re-writing the $Q$-function with inputs $(s_g, F_{s_\Delta}, a_g)\in \cS_g\times \mu_k(\mathcal S_l)\times\cA_g$ proves the bound for the mean-field parameterization.\qedhere\\
\end{proof}

\begin{lemma}\label{lemma: epsilon_km_is_k} If $T= \frac{2}{1-\gamma}\log \frac{\tilde{r}\sqrt{k}}{1-\gamma}$, the global agent's $Q$-learning algorithm requires a sample complexity of \[\frac{2k^3\gamma^2\|r_l\|^2_\infty |\cS_g| |\cA_g|}{(1-\gamma)^5}\log \frac{\tilde{r}k^{3/2}}{1-\gamma} \cdot \ln(200e |\cS_g||\cA_g| |\cS_l|)\cdot \min\left\{|\cS_l|^2  k^{|\cS_l|} ,  k |\cS_l|^k \right\}\] to accrue a Bellman noise $\epsilon_{k,m}\leq \frac{2}{\sqrt{k}}$ with probability at least $1 - \frac{1}{100e^k}$ .\end{lemma}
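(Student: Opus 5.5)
The plan is to combine the high-probability Bellman-noise bound of \cref{lemma: yuejie bound} with a sample-count accounting for the value-iteration loop in \texttt{G-LEARN}. First, fix the target accuracy $\epsilon_{k,m}\le 2/\sqrt{k}$ and failure probability $\rho = \frac{1}{100e^k}$. Plugging these into \cref{lemma: yuejie bound} and solving
\[
\frac{\gamma\|r_l\|_\infty}{(1-\gamma)^2}\sqrt{\frac{2\ln(2N/\rho)}{m}}\le \frac{2}{\sqrt{k}}
\]
for $m$ gives the requirement
\[
m \ge \frac{k\gamma^2\|r_l\|_\infty^2}{2(1-\gamma)^4}\ln\!\Big(\frac{2N}{\rho}\Big),
\]
where $N=N_k$ in the standard parameterization and $N=M_k$ in the mean-field one. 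With $\rho = \frac{1}{100e^k}$, the logarithm is $\ln(200 e^k N) = k + \ln(200N)$. I will expand $\ln N_k = \ln(|\cS_g||\cA_g|) + k\ln|\cS_l|$ and $\ln M_k = \ln(|\cS_g||\cA_g||\cS_l|)+|\cS_l|\ln k$. Each of these is at most $O(k)\cdot \ln(200e|\cS_g||\cA_g||\cS_l|)$, up to factors that can be absorbed into the $|\cS_l|$ or $k$ prefactors. This produces the extra factor of $k$ and the $\ln(200e|\cS_g||\cA_g||\cS_l|)$ term in the stated bound, so that $m = O\big(\frac{k^2\gamma^2\|r_l\|^2_\infty}{(1-\gamma)^4}\ln(200e|\cS_g||\cA_g||\cS_l|)\big)$, consistent with \cref{lemma: controlling bellman noise}.

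Next, I will count the total number of generative-model calls. Each Bellman sweep queries $m$ samples at every entry of the $Q$-table. There are $N_k=|\cS_g||\cA_g||\cS_l|^k$ entries in the standard parameterization and $M_k=|\cS_g||\cA_g||\cS_l|k^{|\cS_l|}$ in the mean-field parameterization, and \texttt{G-LEARN} uses whichever table is smaller. Over $T=\frac{2}{1-\gamma}\log\frac{\tilde r\sqrt k}{1-\gamma}$ iterations, the total is $T\cdot m\cdot\min\{N_k,M_k\}$, which contributes the extra $\frac{1}{1-\gamma}$ and the $\log$ factor. The choice of $T$ also guarantees, via \cref{corollary:backprop}, that $\|\hat Q^{T}_{k,m}-\hat Q^{\est}_{k,m}\|_\infty\le \gamma^T\frac{\tilde r}{1-\gamma}\le 1/\sqrt{k}$, so the finite-$T$ truncation error stays within the budget. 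The $\log\frac{\tilde r k^{3/2}}{1-\gamma}$ in the statement absorbs both this factor and the slack needed to make the truncation error $O(k^{-1/2})$.

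Finally, I will reconcile the $\min\{|\cS_l|^2k^{|\cS_l|},\,k|\cS_l|^k\}$ factor. The extra $|\cS_l|$ in the mean-field branch comes from $M_k$ together with the $|\cS_l|\ln k$ term of $\ln M_k$. The extra $k$ in the standard branch comes from the $k\ln|\cS_l|$ term of $\ln N_k$. I expect this bookkeeping to be the main obstacle. Writing the logarithms as a product $k\cdot\ln(200e\,\cdot)$ is loose and requires splitting the log into a $k$-linear part and a constant part. I would first verify that $k+\ln(200N)\le k\,\ln(200e|\cS_g||\cA_g||\cS_l|)\cdot c$, with $c$ equal to $1$ or $|\cS_l|$ as appropriate in each regime, before assembling the final constant.
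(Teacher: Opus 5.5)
Your proposal is correct and follows essentially the same route as the paper: apply \cref{lemma: yuejie bound} with $\rho=\frac{1}{100e^k}$, solve for $m$, bound $\ln(200e^kN)$ by $k\ln(200e|\cS_g||\cA_g||\cS_l|)$ (with an extra $|\cS_l|$ factor in the mean-field branch), and multiply by $T$ and the table size, after which the stated expression dominates your count. The one difference is the error budget: the paper gives $1/\sqrt{k}$ to the $T$-step truncation and $1/\sqrt{k}$ to the Hoeffding term, whereas you spend the whole $2/\sqrt{k}$ on the Hoeffding term---this is correct for $\epsilon_{k,m}$ read as the fixed-point gap $\|\hat Q^{\est}_{k,m}-\hat Q_k^*\|_\infty$, but if you want the truncation error inside the budget too (as your remark suggests), you must halve the target, which changes $m$ only by a constant factor.
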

\begin{proof}
We first prove that $\|\hat{Q}_k^T - \hat{Q}_k^*\|_\infty \leq \frac{1}{\sqrt{k}}$.  For this, it suffices to show $\gamma^T \frac{\tilde{r}}{1-\gamma} \leq \frac{1}{\sqrt{k}} \implies \gamma^T \leq \frac{1-\gamma}{\tilde{r}\sqrt{k}}$. Then, using $\gamma = 1 - (1 - \gamma) \leq e^{-(1-\gamma)}$, it again suffices to show $e^{-(1-\gamma)T}\leq \frac{1-\gamma}{\tilde{r}\sqrt{k}}$. Taking logarithms, we have
\begin{align*}
    \exp(-T(1-\gamma)) &\leq \frac{1-\gamma}{\tilde{r}\sqrt{k}} \\
    -T(1-\gamma) &\leq \log\frac{1-\gamma}{\tilde{r}\sqrt{k}} \\
    T&\geq \frac{1}{1-\gamma}\log\frac{\tilde{r}\sqrt{k}}{1-\gamma}.
\end{align*}
Since $T = \frac{2}{1-\gamma}\log\frac{\tilde{r}\sqrt{k}}{1-\gamma} > \frac{1}{1-\gamma}\log\frac{\tilde{r}\sqrt{k}}{1-\gamma}$, the condition holds and $\|\hat{Q}_k^T - \hat{Q}_k^*\|_\infty \leq \frac{1}{\sqrt{k}}$. \\

Then, rearranging \cref{lemma: yuejie bound} and incorporating the convergence error of the $\hat{Q}_k$-function, one has that with probability at least $1-\rho$, 
\begin{equation}\label{arranging_yuejie}
        \epsilon_{k,m} \leq \frac{1}{\sqrt{k}} + \gamma \frac{\|r_l\|_\infty}{(1-\gamma)^2}\sqrt{\frac{2\ln(2B_k/\rho)}{m}},
    \end{equation}
    where $B_k$ is the number of samples used in the parameterization of the $Q$-function. \\
    
    \textbf{In the mean-field parameterization,} we have $B_k \leq |\mathcal{S}_g||\mathcal{A}_g||\mathcal{S}_l| k^{|\mathcal{S}_l|}$. Substituting this in \cref{arranging_yuejie} yields that  with probability at least $1-\rho$,
    \begin{equation}
        \epsilon_{k,m} \leq \frac{1}{\sqrt{k}}+ \gamma \frac{\|r_l\|_\infty}{(1-\gamma)^2}\sqrt{\frac{2\ln(2|\cS_g||\cA_g||\cS_l| k^{|\cS_l|} /\rho)}{m}},
    \end{equation}
    
    \textbf{In the standard parameterization,} we have   $B_k \leq |\mathcal{S}_g||\mathcal{S}_l|^k |\mathcal{A}_g|$. Substituting this in \cref{arranging_yuejie} yields that with probability at least $1-\rho$,
    \begin{equation}
        \epsilon_{k,m} \leq \frac{1}{\sqrt{k}}+ \gamma \frac{\|r_l\|_\infty}{(1-\gamma)^2}\sqrt{\frac{2\ln(2|\cS_g||\cA_g||\cS_l|^k /\rho)}{m}}.
    \end{equation}

Therefore, to attain a Bellman error of $\epsilon_{k,m} \leq 2/\sqrt{k}$ with probability at least $1-\delta$, we set the number of samples $m$ accordingly.  For the mean-field parameterization, we set the number of samples to be
\begin{equation}
    m_1 = \frac{2k^2 \gamma^2\|r_l\|^2_\infty}{(1-\gamma)^4} \ln(200e^k |\cS_g||\cA_g| |\cS_l| k^{|\cS_l|}),
\end{equation}
and for the standard parameterization we set the number of samples to be
\[m_2 = \frac{2k^2 \gamma^2\|r_l\|^2_\infty}{(1-\gamma)^4} \ln(200e^k |\cS_g||\cA_g| |\cS_l|^k).\]
Therefore, the sample complexity of the global agent's learning algorithm is \begin{align*}\min&\bigg\{Tm_1|\mathcal{S}_g||\mathcal{S}_l||\mathcal{A}_g| k^{|\mathcal{S}_l| }, T m_2 |\cS_g||\cS_l|^k |\cA_g|\bigg\} \\
&=   \frac{2k^3\gamma^2\|r_l\|^2_\infty |\cS_g| |\cA_g|}{(1-\gamma)^5}\log \frac{\tilde{r}k^{3/2}}{1-\gamma} \cdot \ln(200e |\cS_g||\cA_g| |\cS_l|)\cdot \min\left\{|\cS_l|^2  k^{|\cS_l|} ,  k |\cS_l|^k \right\},
\end{align*}
which proves the claim.\qedhere\\
\end{proof}

\begin{corollary}[Bounding the optimality gap of the best-response of the global agent] \label{pdl result in otilde form}
    From \cref{corollary: performance_difference_lemma_applied}, we have that if in the mean-field parameterization \[ m \geq \frac{2k^2 \gamma^2\|r_l\|^2_\infty}{(1-\gamma)^4} \ln(200e^k |\cS_g||\cA_g| |\cS_l|^k),\] 
    or in the standard parameterization 
    \[m \geq  \frac{2k^2 \gamma^2\|r_l\|^2_\infty}{(1-\gamma)^4} \ln(200e^k |\cS_g||\cA_g| |\cS_l| k^{|\cS_l|}),\]
    then with probability at least $1 -\frac{1}{100e^k}$:
    \begin{align*}V^{\pi^*}(s) - V^{{{\pi}}^\est_{k,m}}(s) &\leq \frac{2\tilde{r}}{(1-\gamma)^2}\left(\sqrt{\frac{n-k+1}{2nk} \ln(2|\mathcal{S}_l||\mathcal{A}_g|\sqrt{k})} +  \frac{1}{\sqrt{k}}\right) + \frac{4}{\sqrt{k}(1-\gamma)} \\ &\leq \tilde{O}\left(\frac{1}{\sqrt{k}}\right).\\
    \end{align*}
\end{corollary}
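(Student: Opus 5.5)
The corollary follows by direct substitution: plug the high-probability Bellman-noise bound of \cref{lemma: epsilon_km_is_k} into the deterministic gap bound of \cref{corollary: performance_difference_lemma_applied}. That corollary already gives, for every $s$,
\[V^{\pi^*}(s) - V^{\pi^\est_{k,m}}(s) \leq \frac{2\tilde{r}}{(1-\gamma)^2}\left(\sqrt{\frac{n-k+1}{2nk}\ln(2|\mathcal{S}_l||\mathcal{A}_g|\sqrt{k})}+\frac{1}{\sqrt{k}}\right)+\frac{2\epsilon_{k,m}}{1-\gamma}.\]
The only random quantity on the right-hand side is $\epsilon_{k,m}$, which depends on the $m$ samples drawn to form $\hat{\cT}_{k,m}$. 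The first two terms are exactly the first part of the claimed bound, so only $\epsilon_{k,m}$ needs to be controlled.

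For that term, I will invoke \cref{lemma: epsilon_km_is_k}, which rests on \cref{lemma: yuejie bound}. Apply \cref{lemma: yuejie bound} with failure probability $\rho = \frac{1}{100e^k}$, so that $\ln(2B_k/\rho)=\ln(200e^kB_k)$. Here $B_k$ is the size of the $Q$-table: $|\cS_g||\cA_g||\cS_l|k^{|\cS_l|}$ in the mean-field parameterization and $|\cS_g||\cA_g||\cS_l|^k$ in the standard one. With $m \geq \frac{2k^2\gamma^2\|r_l\|_\infty^2}{(1-\gamma)^4}\ln(200e^k B_k)$, the sampling term $\frac{\gamma\|r_l\|_\infty}{(1-\gamma)^2}\sqrt{2\ln(2B_k/\rho)/m}$ is at most $1/\sqrt{k}$. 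The truncation term $\|\hat Q_k^T-\hat Q_k^*\|_\infty\le \gamma^T\tilde r/(1-\gamma)$ is also at most $1/\sqrt{k}$ for the stated $T$. Together these give $\epsilon_{k,m}\le 2/\sqrt{k}$ with probability at least $1-\frac{1}{100e^k}$. Substituting yields $\frac{2\epsilon_{k,m}}{1-\gamma}\le\frac{4}{\sqrt{k}(1-\gamma)}$, which is the claimed bound on that event.

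The one point needing care is that the two parameterizations' sample sizes appear swapped in the statement relative to \cref{lemma: epsilon_km_is_k}. The statement requires $\ln(200e^k|\cS_g||\cA_g||\cS_l|^k)$ in the mean-field case and $\ln(200e^k|\cS_g||\cA_g||\cS_l|k^{|\cS_l|})$ in the standard case. To be safe, I will take $m$ at least the maximum of the two expressions. Since $\ln$ is monotone, the Hoeffding union bound in \cref{lemma: yuejie bound} then holds in either regime, because a larger $m$ only shrinks the sampling term. Alternatively, note that the algorithm chooses the standard parameterization precisely when $|\cS_l|^k\le|\cS_l|k^{|\cS_l|}$. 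In that regime the mean-field-case threshold in the statement dominates the needed one, and symmetrically in the other regime, so the stated condition suffices in both.

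Finally, for the $\tilde{O}(1/\sqrt{k})$ conclusion I bound $\frac{n-k+1}{n}\le 1$. Every term is then $\frac{1}{\sqrt{k}}$ times a factor polylogarithmic in $k$, $|\cS_l|$ and $|\cA_g|$ and polynomial in $\tilde r$ and $(1-\gamma)^{-1}$. I expect the main obstacle to be only the bookkeeping of the sample-size condition across the two parameterizations, not any new estimate.
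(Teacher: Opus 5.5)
Your proposal is correct and follows the paper's own argument: the corollary is just \cref{corollary: performance_difference_lemma_applied} with the bound $\epsilon_{k,m}\le 2/\sqrt{k}$ from \cref{lemma: epsilon_km_is_k}, which holds with probability at least $1-\frac{1}{100e^k}$, substituted into the term $\frac{2\epsilon_{k,m}}{1-\gamma}$. You are also right that the two sample-size thresholds are swapped relative to \cref{lemma: epsilon_km_is_k}, and your regime argument repairs this correctly: in each regime the stated threshold involves the larger of $|\cS_l|^k$ and $|\cS_l|k^{|\cS_l|}$, so it dominates what the lemma needs (to be precise, in the standard regime it is the statement's standard-case threshold, which carries $|\cS_l|k^{|\cS_l|}$, that does the dominating).
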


\section{Optimality of the Local Agent Policy}
\label{sec: optimality of the local agent policy}
\subsection{Upper Confidence Fixed-Horizon Episodic Reinforcement Learning}

We begin by recalling the \textbf{U}pper-\textbf{C}onfidence \textbf{F}ixed-\textbf{H}orizon (UCFH) episodic reinforcement learning algorithm in \cite{dann2015sample}. We first formalize the notion of episodic fixed-horizon MDPs which can be formalized as a tuple $M=(\cS,\cA, r, p, p_0, H)$, where $\cS$ and $\cA$ are finite sets.

The learning agent interacts with the MDP in episodes of $H$ time steps. At time $t=1,\dots,H$, the agent observes state $s_t$, chooses action $a_t$ using time-varying policy $a_t = \pi_t(s_t)$ for $t=1,\dots, H$. The next state is sampled from the stationary transition kernel $s_{t+1}\sim p(\cdot|s_t,a_t)$ and the initial state from $s_1\sim p_0$. The agent receives a reward drawn from a distribution with mean $r(s_t)$ determined by the reward function, which is non-stationary. The quality of the policy is given by $R_M^\pi = \E[\sum_{t=1}^H r_t(s_t)]$, and the problem to solve is how many episodes does a learning agent follow a policy $\pi$ that is not $\epsilon$-optimal with probability at least $1-\delta$, for any chosen accuracy $\epsilon$ and failure probability $\delta$.

We state the algorithm in \cite{dann2015sample} below for self-containedness, and refer the interested reader for intuitive descriptions and explanations, detailed algorithm implementations, and proofs of correctness and sample complexity analyses of \texttt{UCFH} and \texttt{FixedHorizonEVI} to  \cite{dann2015sample}.\\

\begin{algorithm}[hbt]
\begin{algorithmic}    
\caption{\texttt{UCFH}: \textbf{U}pper-\textbf{C}onfidence \textbf{F}ixed-\textbf{H}orizon (UCFH) episodic RL algorithm from \cite{dann2015sample}}
        \REQUIRE Target accuracy $\epsilon\in(0,1]$, failure tolerance $\delta\in(0,1]$, and fixed horizon MDP $M$.
        \STATE Set $i\coloneqq 1, w_{min}\coloneqq \frac{\epsilon}{4H|S|}$, $U_{max}\coloneqq |\cS\times\cA|\log_2 \frac{|\cS|H}{w_{min}}$, and $\delta_1 =\frac{\delta}{20 U_{max}}$.
        \STATE Let $m = 512 (\log_2 \log_2 H)^2 \frac{10 H^2}{\epsilon^2}\log^2 \left(\frac{8H^2|\cS|^2}{\epsilon}\right) \ln \frac{60|\cS\times\cA|\log_2^2 (4|\cS|^2 H^2/\epsilon)}{\delta}$
        \STATE Let $n(s,a) = v(s,a) = n(s,a,s') \coloneqq 0, \forall s\in \cS, a\in \cA, s'\in \cS(s,a)$.
        \medskip
        \STATE \textbf{while do}         \STATE  \quad \textcolor{blue}{\texttt{/* Optimistic planning}}
        \STATE \quad $\hat{p}(s'|s,a) \coloneqq n(s,a,s')/n(s,a)$ for all $(s,a)$ with $n(s, a) > 0$ and $s' \in \cS(s,a);$
        \STATE \quad  Let $\cM_i\coloneqq \{\widetilde{M}\in \cM_{nonst.}: \forall (s,a)\in \cS\times\cA, t=1,\dots,H, s'\in \cS(s,a)$,
        \STATE \quad \hspace{5cm}$\tilde{p}_t(s'|s,a)\in\texttt{ConfidenceSet}(\hat{p}(s'|s,a), n(s,a))\}$
        \STATE \quad $\widetilde{M}_i, \pi^i \coloneqq \texttt{FixedHorizonEVI}(\cM_i)$
        \STATE \quad \textbf{repeat}\textcolor{blue}{\quad \texttt{/* Execute policy}}
        \STATE \quad \quad \texttt{SampleEpisode}$(\pi^i)$ \quad  \textcolor{blue}{\texttt{// from $M$ using $\pi^i$}}
        \STATE \quad \textbf{until}  {there is a $(s,a) \in \cS\times\cA$ with $v(s,a)\geq \max\{mw_{min}, n(s,a)\}$ and $n(s,a)< |\cS|mH$}
        \STATE  \quad \textcolor{blue}{\texttt{/* Update model statistics for one $(s,a)$-pair with condition above}}
        \STATE \quad  Let $n(s,a)\coloneqq n(s,a) + v(s,a)$
        \STATE \quad  Let $n(s,a,s') \coloneqq n(s,a,s') + v(s,a,s'), \forall s'\in \cS(s,a)$
        \STATE \quad  Let $v(s,a) \coloneqq v(s,a,s')\coloneqq 0, \forall s'\in \cS(s,a)$ 
        \STATE \quad Update $i\gets i+1$
        \medskip
        \STATE \textbf{Procedure }{\texttt{SampleEpisode}$(\pi)$}
        \STATE \quad $s_0\sim p_0$
        \STATE \quad \textbf{for} $t=0,\dots,H-1$ \textbf{do}
        \STATE \quad\quad  $a_t \coloneqq \pi_{t+1}(s_t)$ and $s_{t+1}\sim p(\cdot|s_t,a_t)$
        \STATE \quad\quad  $v(s_t, a_t) \coloneqq v(s_t, a_t)+1$ and $v(s_t, a_t, s_{t+1}) \coloneqq v(s_t, a_t, s_{t+1}) + 1$
    \STATE 
    \medskip
    \textbf{Function} \texttt{ConfidenceSet}$(p,n)$
    \STATE\quad $\cP\coloneqq \bigg\{p'\in[0,1]: \text{if }n>1: |\sqrt{p'(1-p')} - \sqrt{p(1-p)}| \leq \sqrt{\frac{2\ln (6/\delta_1)}{n-1}}$, 
    \STATE \quad\quad\quad \[|p-p'| \leq \min\left(\sqrt{\frac{1}{2n} \ln (6/\delta_1)}, \sqrt{\frac{2p(1-p)}{n}\ln(6/\delta_1)} + \frac{7}{3(n-1)} \ln \frac{6}{\delta_1} \right)\bigg\}\]
    \STATE \quad \textbf{return} $\mathcal{P}$
    \label{algorithm from brunskill and dann}
\end{algorithmic}
\end{algorithm}

\begin{algorithm}[hbt]\begin{algorithmic}\caption{\texttt{FixedHorizonEVI}$(\cM)$ \cite{dann2015sample}}
    \STATE Let $\tilde{Q}^\star_{H:H}(s,a) = r_H(s), \forall  s,a\in \cS\times\cA$
    \FOR{$t=H-1,\dots,1$}
    \STATE $\pi_{t+1}(s) \coloneqq \arg\max_{a\in cA} \tilde{Q}^\star_{t+1:H}(s,a), \forall s\in \cS$
    \STATE Sort states $s^{(1)},\dots,s^{(|\cS|)}$ such that
    $\tilde{Q}_{t+1:H}^\star(s^{(i)}, \pi_{t+1}(s^{(i)})) \geq \tilde{Q}_{t+1:H}^\star(s^{(i+1)}, \pi_{t+1}(s^{(i+1)}))$
    \medskip
\FOR{$(s,a)\in\cS\times\cA$}
    \STATE $\tilde{p}_t(s'|s,a)\coloneqq \min  $\texttt{ ConfidenceSet}$(\hat{p}(s'|s,a), n(s,a)), \quad \forall s'\in \cS(s,a)$
    \STATE Set $\Delta\coloneqq 1 - \sum_{s'\in \cS(s,a)}\tilde{p}_t(s'|s,a)$ and $i\coloneqq 1$
    \medskip
    \WHILE{$\Delta>0$}
    \STATE $s'\coloneqq s^{(i)}$
    \STATE $\Delta' \coloneqq \min\{\Delta, \max \texttt{ConfidenceSet}(\hat{p}(s'|s,a), n(s,a)) - \tilde{p}_t(s'|s,a)\}$
    \STATE $\tilde{p}_t(s'|s,a)\coloneqq \tilde{p}_t(s'|s,a) + \Delta'$
    \STATE Set $\Delta\coloneqq \Delta - \Delta'$ and $i' \coloneqq i+1$
    \ENDWHILE
    \STATE $\tilde{Q}^\star_{t:H}(s,a) = r_t(s) + \sum_{s'\in \cS(s,a)}\tilde{p}_t(s'|s,a)\tilde{Q}_{t+1:H}^\star(s', \pi_{t+1}(s'))$
    \ENDFOR
    \ENDFOR
    \medskip
    \STATE $\pi_1(s) \coloneqq \arg\max_{a\in\cA} \tilde{Q}_{1:H}^\star(s,a), \forall s\in \cS$
    \STATE \textbf{Return} MDP with transition probabilities $\tilde{p}_t$ and optimal policy $\pi$.
    \end{algorithmic}  
\end{algorithm}
 
\subsection{Black-Box Reduction of Local Agent Best-Responses to Upper Confidence Fixed-Horizon Episodic RL}
\label{subsection: reduction episodic}
We first show how to reduce from the setting in \cite{dann2015sample} which has state rewards to our setting in the induced MDP which has state/action rewards. Algorithm \ref{algorithm: reduce state-action reward MDP} below works by adding an intermediate ``deciding state'' where we get no reward but have to choose an action.

\begin{algorithm}[hbt!]
    \begin{algorithmic}
 \caption{Reduce State-Action Reward MDP to State-Reward MDP with Deciding States}
        \REQUIRE Input Markov Decision Process $\cM = (\cS, \cA, P, r, H)$
        \STATE Introduce $\perp \notin \cA$.
        \STATE Set $\tilde{\cS} \gets (\cS\times\cA) \cup (\cS\times \{\perp\})$ and $\tilde{\cA} \gets \cA\cup\{\perp\}$. 
        \STATE Let $\tilde{r}(s, a) \gets r(s, a)$ for $a\in \cA$ and $\tilde{r}(s, \perp) \gets 0$. 
        \STATE Define simulator \texttt{Step}$((s,z), u)$ that returns $(x',\rho')$ where $\rho=\tilde{r}(x')$
        \medskip
        \IF{$x=(s,a)$ for some $a\in\cA$} 
        \STATE \textcolor{blue}{\texttt{/* committed state}}
        \STATE Sample $s'\sim P(\cdot|s, a)$
        \STATE $x' \gets (s', \perp)$
        \STATE \textbf{Return} $(x', 0)$
        \medskip
        \ELSIF{$x=(s,\perp)$}
        \STATE \textcolor{blue}{\texttt{/*  deciding state}}
        \smallskip
        \IF{$u=\perp$}
        \STATE $x'\gets (s, \perp)$
        \STATE \textbf{Return} $(x', 0)$
        \smallskip
        \ELSE
        \STATE $x' \gets (s, u)$
        \STATE \textbf{Return} $(x', r(s, u))$
        \ENDIF
        \ENDIF
    \label{algorithm: reduce state-action reward MDP}
    \end{algorithmic}
\end{algorithm} We next show how to reduce our infinite horizon setting the finite horizon setting in \cite{dann2015sample}.

\begin{lemma}\label{truncation lemma}
    Suppose $\epsilon>0$ and $\gamma\in(0,1)$. Then, for $H\coloneqq \frac{1}{1-\gamma}\log \frac{\|r\|_\infty}{\epsilon(1-\gamma)}$, we have 
    \[0\leq \sum_{t=0}^\infty \gamma^t r_t - \sum_{t=0}^H \gamma^t r_t \leq \epsilon.\]
\end{lemma}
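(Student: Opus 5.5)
The plan is to prove both inequalities directly from nonnegativity of rewards and a geometric tail bound. I will work with rewards satisfying $0\le r_t\le \|r\|_\infty$, as guaranteed by Assumption~\ref{assumption: bounded rewards}.

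\textbf{Lower bound.} The difference $\sum_{t=0}^\infty \gamma^t r_t - \sum_{t=0}^H \gamma^t r_t$ equals the tail $\sum_{t>H}\gamma^t r_t$. Every term of this tail is nonnegative, so the lower bound $0$ is immediate.

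\textbf{Upper bound.} I bound the tail termwise by $\|r\|_\infty$ and sum the geometric series:
\[
\sum_{t=H+1}^\infty \gamma^t r_t \le \|r\|_\infty\frac{\gamma^{H+1}}{1-\gamma}\le \frac{\|r\|_\infty}{1-\gamma}\gamma^{H}.
\]
It then suffices to show $\gamma^H\le \epsilon(1-\gamma)/\|r\|_\infty$. This is the same computation used in the proof of \cref{lemma: epsilon_km_is_k}. Since $\gamma=1-(1-\gamma)\le e^{-(1-\gamma)}$, we have $\gamma^H\le e^{-(1-\gamma)H}$. With the stated $H$,
\[
e^{-(1-\gamma)H}=\frac{\epsilon(1-\gamma)}{\|r\|_\infty},
\]
which is exactly what is needed.

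\textbf{Non-integer $H$.} If $H$ is not an integer, I interpret the finite sum as running to $\lceil H\rceil$. Enlarging $H$ only shrinks the tail, so the argument is unaffected.

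\textbf{Degenerate cases.} When $\|r\|_\infty<\epsilon(1-\gamma)$, the logarithm is negative and $H<0$. In that case I note that the whole infinite sum is already at most $\|r\|_\infty/(1-\gamma)<\epsilon$, so the claim holds with an empty or trivial truncation. I would mention this case explicitly.

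The only point requiring care is this bookkeeping around nonintegral or negative $H$; the analytic content is just the geometric tail bound combined with $\gamma\le e^{-(1-\gamma)}$.
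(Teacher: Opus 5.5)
Your proof is correct and follows essentially the same route as the paper: bound the tail by the geometric series $\|r\|_\infty\gamma^{H+1}/(1-\gamma)$, then use $\gamma\le e^{-(1-\gamma)}$ to get $\gamma^H\le \epsilon(1-\gamma)/\|r\|_\infty$. Your explicit treatment of the lower bound (via nonnegative rewards) and of nonintegral or negative $H$ is extra bookkeeping that the paper omits. You also avoid the paper's sign slip in the intermediate inequality for $-T(1-\gamma)$, where the argument of the logarithm should be $\epsilon(1-\gamma)/\|r\|_\infty$ rather than its reciprocal.
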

\begin{proof}
    Fix any $T\geq 0$. Note that
    \begin{align*}
        \sum_{t=0}^\infty \gamma^t r_t - \sum_{t=0}^T \gamma^t r_t &= \sum_{t=T+1}^\infty \gamma^t r_t \leq \frac{\gamma^{T+1}}{1-\gamma}\|r\|_\infty < \epsilon.
    \end{align*}
    Now, note that for $T\geq \frac{1}{1-\gamma}\log \frac{\|r\|_\infty}{\epsilon(1-\gamma)}$, we have
    $-T(1-\gamma)\leq \log \frac{\|r\|_\infty}{\epsilon(1-\gamma)}$. This implies
    \begin{align*}\gamma^T &= (1-(1-\gamma))^T \leq \exp(-T(1-\gamma)) \leq \frac{\epsilon(1-\gamma)}{\|r\|_\infty},
    \end{align*}
    which completes the proof.\qedhere\\
\end{proof}

\begin{corollary}
    Suppose $\cM =(\cS, \cA, P, r, \gamma)$ is an infinite-horizon discounted MDP. Let $\epsilon>0$ and $H\coloneqq \frac{1}{1-\gamma}\log \frac{\|r\|_\infty}{\epsilon(1-\gamma)}$. Then, for any policy $\pi:\cS\to\cA$, for all $s\in S$, define
    \[V_\infty^\pi(s)\coloneqq \E\left[\sum_{t\geq 0}\gamma^t r(s_t, \pi(s_t))\mid s_0=s\right]\] and \[V_H^\pi(s)\coloneqq \E\left[\sum_{t=0}^H\gamma^t r(s_t, \pi(s_t))\mid s_0=s\right].\]
Then, for any $\pi:\cS\to\cA$ and $s\in S$, we have
\[0\leq V_\infty^\pi(s) - V_H^\pi(s) \leq \epsilon.\]
\end{corollary}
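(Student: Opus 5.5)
The corollary follows from the truncation lemma (\cref{truncation lemma}) applied pathwise, followed by taking expectations. Fix a policy $\pi:\cS\to\cA$ and an initial state $s\in\cS$. Let $(s_t)_{t\ge 0}$ be the (random) trajectory generated by $\pi$ from $s_0=s$. Write $r_t \coloneqq r(s_t,\pi(s_t))$, which is a random sequence with $0\le r_t\le \|r\|_\infty$ almost surely, by Assumption~\ref{assumption: bounded rewards}.

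Step 1 is the deterministic, per-trajectory bound. For every realization of the trajectory, \cref{truncation lemma} applies with the chosen $H$ and gives
\[0\le \sum_{t\ge 0}\gamma^t r_t-\sum_{t=0}^H\gamma^t r_t\le \epsilon.\]
The lower bound needs only nonnegativity of the tail $\sum_{t>H}\gamma^t r_t$, which holds because rewards are positive. The upper bound uses $\sum_{t>H}\gamma^t r_t\le \gamma^{H+1}\|r\|_\infty/(1-\gamma)$ together with $\gamma^H\le e^{-(1-\gamma)H}=\epsilon(1-\gamma)/\|r\|_\infty$.

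Step 2 passes to expectations. Both series are absolutely convergent and uniformly bounded by $\|r\|_\infty/(1-\gamma)$, so linearity of expectation, justified by dominated convergence, gives
\[V_\infty^\pi(s)-V_H^\pi(s)=\E\Big[\sum_{t>H}\gamma^t r_t\,\Big|\,s_0=s\Big].\]
Monotonicity of expectation then transfers the almost-sure bounds $[0,\epsilon]$ to the difference of values.

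The only delicate point is in Step 1: $H$ as defined need not be an integer. If the truncation sum is read as $\sum_{t=0}^{\lfloor H\rfloor}$, the bound $\gamma^{\lfloor H\rfloor+1}\le\gamma^H$ still applies, so the argument is unaffected. Nothing else is needed.
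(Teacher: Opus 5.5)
Your proposal is correct and is essentially the paper's argument: the paper states this corollary without a separate proof, as the truncation lemma applied trajectory-by-trajectory and then integrated, which is exactly what you do (including the reliance on nonnegative rewards for the lower bound). Your direct computation $e^{-(1-\gamma)H}=\epsilon(1-\gamma)/\|r\|_\infty$ and your remark that $\gamma^{\lfloor H\rfloor+1}\le\gamma^H$ covers non-integer $H$ are cleaner than the paper's own lemma proof, whose intermediate step $-T(1-\gamma)\le\log\frac{\|r\|_\infty}{\epsilon(1-\gamma)}$ has the sign flipped (the degenerate case $H<0$, i.e.\ $\|r\|_\infty<\epsilon(1-\gamma)$, is trivial).
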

 
Finally, we state the theoretical guarantee of \cite{dann2015sample} which gives us provable guarantees on the best-response nature of the learned policy. Specifically, we seek to show that the local agent policy update is an $\epsilon_\ell$-best-response, which will later be used to show that the joint policy dynamics converges to an approximate Nash equilibrium.\looseness=-1

\begin{theorem}
    [Theorem 1 of \cite{dann2015sample}] 
    \label{brunskill sample bound} For any $0 < \epsilon, \delta \leq 1$, we have that with  probability at least $1-\delta$, \texttt{UCFH} produces a policy $\pi_l^\star$ with \[\tilde{O}\left(\frac{H^2 |\cS\times\cA|}{\epsilon^2} \log \frac1\delta\right)\] episodes such that \[R^\star - R^{\pi_l^\star} = V_{1:H}^\star(s_0) - V_{1:H}^{\pi_l^\star}(s_0) \leq \epsilon.\] Importantly, sampling one episode and updating the respective $v$ variables has $O(H)$ runtime. This results in a total runtime for sampling of \[\tilde{O}\left(\frac{H^3 |\cS||\cA|}{\epsilon^2} \log \frac{1}{\delta}\right).\] Each update of the policy involves updating the $n$ variables and $\cM_k$, and calling \emph{\texttt{FixedHorizonEVI}} with runtime cost $O(H|\cS||\cA| + H|\cS|\log |\cS|)$. Therefore, the total runtime for policy updates is \begin{align*}\tilde{O}\left(H|\cS|^2 |\cA|^2 \log \frac1\epsilon\right).
    \end{align*}
\end{theorem}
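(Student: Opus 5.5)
This statement is Theorem 1 of \cite{dann2015sample} restated for our use, so the plan is to verify that its hypotheses hold and then account for runtime. The key check is that \texttt{UCFH} needs a fixed-horizon episodic MDP with finite state and action sets, a stationary transition kernel, and rewards in $[0,1]$ that depend only on the state. In our pipeline this is arranged by \cref{algorithm: reduce state-action reward MDP}, which turns state--action rewards into state rewards via deciding states, and by the horizon truncation of \cref{truncation lemma}. Rescaling rewards by $\|r\|_\infty$ turns an $\epsilon$ guarantee into an $\epsilon\|r\|_\infty$ guarantee, which is absorbed into the $\tilde{O}$.

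For self-containedness, I would sketch the structure of the original argument in three steps.

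\textbf{Optimism.} A union bound over all $(s,a,s')$ and all at most $U_{\max}$ model updates, with per-event failure $\delta_1=\delta/(20U_{\max})$, shows that the true kernel lies in every \texttt{ConfidenceSet}. These sets combine Hoeffding and empirical-Bernstein intervals. On this event, \texttt{FixedHorizonEVI} returns an optimistic value $\tilde V_{1:H}^{\star}\ge V^\star_{1:H}$.

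\textbf{Bounding the number of non-$\epsilon$-optimal episodes.} The optimistic gap $\tilde V^{\pi^i}-V^{\pi^i}$ is written, via a simulation lemma, as a sum over visited $(s,a)$ of confidence widths times next-step values. The Bernstein widths are then controlled through the variance of the return; the law of total variance gives $\sum_t \mathrm{Var}\le H^2$ rather than $H^3$, which produces the $H^2$ factor in the episode count. Pairs with visitation weight below $w_{\min}=\epsilon/(4H|\cS|)$ contribute at most $\epsilon/4$ in total. For the remaining pairs, a pigeonhole argument over the doubling update schedule and the ``known'' threshold $m$ shows that at most $\tilde O(|\cS\times\cA|\,m\,w_{\min}^{-1}\cdots)$ episodes can have gap larger than $\epsilon$. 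Substituting the choice of $m$ yields $\tilde O(H^2|\cS\times\cA|\epsilon^{-2}\log(1/\delta))$.

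\textbf{Runtime.} Each episode costs $O(H)$ counter updates, so multiplying by the episode count gives the sampling bound. There are at most $U_{\max}=\tilde O(|\cS||\cA|\log(1/\epsilon))$ policy recomputations. Each costs one sort per stage plus a greedy mass redistribution per $(s,a)$, giving the stated per-call cost and hence the total update cost.

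The main obstacle is the variance-sensitive step: without it one only gets $H^3$ episodes. I would cite it directly from \cite{dann2015sample} rather than reprove it. In our application, the delicate point is instead confirming that the chained MDP has stationary transitions. To ensure this, I would include the micro-step pointer $j$ and the stage index in the state, so the kernel does not depend on time.
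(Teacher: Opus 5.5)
Your route of checking hypotheses and citing \cite{dann2015sample} is what the paper does, since it gives no argument beyond the citation. However, the cited result does not deliver the statement as transcribed, and your sketch covers over the mismatch in two places.

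First, Theorem~1 of \cite{dann2015sample} carries a factor $C$, the maximal number of successor states ($1<C\le|\cS|$): it gives at most $\tilde{O}(H^2C|\cS\times\cA|\epsilon^{-2}\ln\frac1\delta)$ bad episodes. In the original, the threshold $m$ is proportional to $C$, and the union bound behind $\delta_1$ ranges over successors. Each \texttt{FixedHorizonEVI} call costs $O(HC|\cS||\cA|+H|\cS|\log|\cS|)$, because its inner loops run over $\cS(s,a)$. Three of your steps are therefore valid only if $C=O(1)$: ``substituting the choice of $m$ yields $\tilde O(H^2|\cS\times\cA|\epsilon^{-2}\log\frac1\delta)$'', the union bound at the fixed level $\delta/(20U_{\max})$, and ``giving the stated per-call cost''. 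Otherwise every bound in the statement picks up a non-polylogarithmic factor of up to $|\cS|$. This factor is large, e.g., for the multinomial transitions of \cref{algorithm: S chained induced MDP}. You need either a bounded-branching assumption or to carry $C$ throughout.

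Your episode count $\tilde O(|\cS\times\cA|\,m\,w_{\min}^{-1}\cdots)$ is also off; taken literally it is of order $H^3|\cS|^2|\cA|\epsilon^{-3}$. In the original argument, pairs are bucketed by importance and knownness. Within a bucket, both the expected visits gained per episode and the visits needed to raise knownness scale with the bucket's weight, so the weights cancel. Each bucket contributes $O(|\cS\times\cA|\,m)$ episodes, and $w_{\min}$ enters only through the logarithmic number of buckets.

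Second, Theorem~1 is a mistake bound, whereas the statement is an output guarantee, and the sampling-runtime claim presumes a finite total episode count. \texttt{UCFH} runs indefinitely and its bad episodes are unobservable, so bounding their number does not identify an $\epsilon$-optimal policy. A conversion step is missing. For instance, \texttt{UCFH} executes at most $U_{\max}+1$ distinct policies, so if $N$ is the mistake bound, some policy used in the first $N+1$ episodes is $\epsilon$-optimal. Estimate each such policy's return to within $\epsilon$ using $O(H^2\epsilon^{-2}\ln(U_{\max}/\delta))$ rollouts (returns lie in $[0,H]$), and output the empirical best. This yields a $3\epsilon$-optimal policy after $\tilde O(H^2C|\cS\times\cA|\epsilon^{-2}\ln\frac1\delta)$ episodes in total. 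The runtime claims then follow, with the same caveat on $C$.

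Finally, your hypothesis checks on the chained MDP belong to the later application, not to this general statement. There, normalizing rewards costs a factor $\|r\|_\infty^2$, which $\tilde O$ does not absorb. Also, no time index should be added to the state. The chained kernel is already stationary given $j$, \texttt{UCFH} allows time-dependent rewards $r_t$, and a time index would multiply $|\cS|$ by $\widetilde H$.
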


\subsection{Formulating the Local Agent's Best-Response MDP} 

Our formulation of the best-response MDP works by serializing each macro-step into $k$ micro-steps so the action space remains 
$\cA_l$. We provide a reduction for the MDP in Algorithm \ref{algorithm: k chained induced MDP} and the mean-field MDP in Algorithm \ref{algorithm: S chained induced MDP}, while constraining the RL solver's policy class so that the selected action is a function only of $s_j$ and $s_g$

\begin{figure}[hbt!]
    \centering
    \includegraphics[width=1\linewidth]{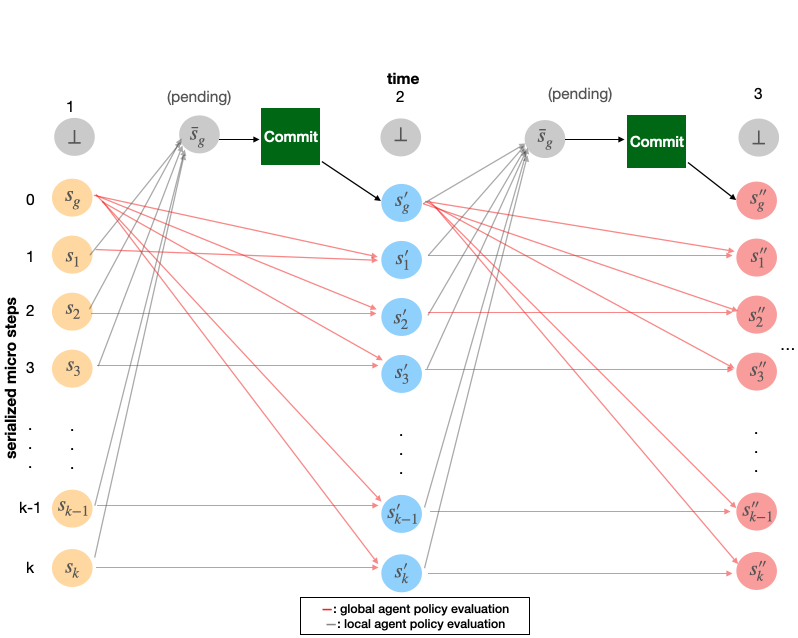}
    \caption{Construction of the $k$-chained MDP from \cref{algorithm: k chained induced MDP}. In this construction, along the chain of micro nodes, only one local replica is ``active'' at a time. At stage $j$, the learner selects a single action $a_\ell\in\cA_l$, interpreted as the action for replica $j$. The environment updates only that replica via $s_j' \sim P_l(\cdot|s_j, s_g, a_\ell)$ leaving the other $k-1$ local replica states unchanged. Next, the stage pointer increments $j\gets j+1$. At the end of the chain (stage $j=k$), we commit the cached global transition by setting $s_g\gets \bar{s}_g$ and resetting $\bar{S}_g \gets \perp$, completing one macro step. 
}
    \label{fig:k-chained-mdp}
\end{figure}

\textbf{Algorithm \ref{algorithm: k chained induced MDP}} ($k$-chained MDP for learning the local agent's best response to policy $\pi_g:\cS_g\times\cS_l^k\to\cA_g$). We begin by recapping the original (macro) process, where at macro time $t$ the environment is in a pair $(s_{g,t}, s_{i,t})$ and the global policy is queried on $(s_{g,t}, s_{1:k, t})$ and then each local agent transitions according to $P_l(\cdot|s_{i,t}, s_{g,t}, a_{i,t})$, while the global agent's state transitions according to $P_g(\cdot|s_{g,t}, a_{g,t})$. The difficulty here is that the global agent's policy $\pi_g$ requires a $k$-tuple of local states whereas a single local agent only sees $s_{g,t}$ and $s_{i,t}$. Our reduction resolves this by explicitly maintaining a $k$-tuple of local states and serializing the $k$ local decisions within each macro step.

We begin by replacing each macro time step $t$ by a chain of $k$ micro steps $\tau = tk, tk+1, \dots, tk+(k-1)$, indexed by a stage pointer $j\in\{1,\dots,k\}$. The state of the unfolded system at micro step $\tau$ is given by $(j, s_g, \bar{s}_g, s_{1:k})$, where $s_{1:k}$ are the local ``replica'' states, $s_g$ is the current global state, and $\bar{s}_g$ is a pending next global state. At the first micro node of each chain (where the stage $j=1$ and $\bar{s}_g = \perp$), we evaluate the global agent's policy using the entire replica tuple and compute $a_g\sim \pi_g(\cdot|s_g, s_{1:k})$, and sample a pending global next state $\bar{s}_g\sim P_g(\cdot|s_g, a_g)$.  Importantly, this pending global transition is computed before any local replica is updated, thereby making it consistent with the original macro-step semantics. 

Along the chain of micro nodes, only one local replica is ``active'' at a time. At stage $j$, the learner selects a single action $a_\ell\in\cA_l$, interpreted as the action for replica $j$. The environment updates only that replica via $s_j' \sim P_l(\cdot|s_j, s_g, a_\ell)$ leaving the other $k-1$ local replica states unchanged. Next, the stage pointer increments $j\gets j+1$. At the end of the chain (stage $j=k$), we commit the cached global transition by setting $s_g\gets \bar{s}_g$ and resetting $\bar{S}_g \gets \perp$, completing one macro step. 

Finally, to preserve the original discounted objective, we place the discounted reward $\gamma^t\cdot \frac{1}{k} r_l(\cdot)$ on each micro-node within the chain (without loss of generality). Summing over micro steps then exactly reproduces the intended macro step discounted return, up to the chosen horizon truncation.

\begin{algorithm}
    \begin{algorithmic}
        \caption{$k$-chained MDP for the local agent's best-response to policy $\pi_g:\cS_g\times\cS_l^k\to\cA_g$}
        \REQUIRE Transition functions $s_g' \sim P_g(\cdot|s_g, a_g)$ and $s_l' \sim P_l(\cdot|s_l, s_g, a_l)$
        \REQUIRE Global agent policy $\pi_g:\cS_g\times\cS_l^k \to \cA_g$, and stage reward $r_l:\cS_g\times\cS_l\times\cA_l\to \mathbb{R}_+$.
        \REQUIRE Subsampling parameter $k$ and horizon $H$.
        \REQUIRE $(s_{g,0}, s_{1:k,0})\sim \rho$, where $\rho$ is the initial distribution over $\cS_g\times\cS_l^k$.
        \medskip
        \STATE Define the micro-horizon $\widetilde{H} = H\cdot k$
        \STATE Let $\tilde{\cS}_l = [k] \times \cS_g\times (\cS_g\cup\{\perp\}) \times \cS_l^k$.
        \STATE Let $\tilde{\cA}_l = \cA_l$.
        \STATE Let $\tilde{s}_0 = (1, s_{g,0}, \perp, s_{1:k,0}) \sim \tilde{\rho}$, where $(s_{g,0}, s_{1:k,0})\sim \rho$.
        \STATE Let $t(\tau) = \left\lfloor\frac{\tau}{k}\right\rfloor$.
        \STATE Let $\tilde{r}_\tau((j, s_g, \bar{s}_g, s_{1:k}), a) \coloneqq \mathbbm{1}\{j=1\}\cdot \gamma^{t(\tau)}\cdot \frac{1}{k}r_l(s_1, s_g, a)$
        \medskip
        \STATE \underline{Transition kernel $\tilde{P}$}, given current state $(j, s_g, \bar{s}_g, s_{1:k})$ and action $a\in \cA_l$:
        \medskip
        \STATE \quad\textbf{if} {$\bar{s}_g=\perp$} \textbf{then}
        \STATE \quad\quad \textcolor{blue}{\texttt{/* Set the pending state of the global agent}}
        \STATE \quad\quad Compute $a_g\sim \pi_g(\cdot|s_g, s_{1:k})$
        \STATE \quad\quad $\bar{s}_g\sim P_g(\cdot|s_g, a_g)$
        \medskip
        \STATE \quad\textbf{else}
        \STATE \quad\quad $\bar{s}_g$ remains unchanged.
        \STATE \quad Update the acting replica $j$: $s_j'\sim P_l(\cdot|s_j, s_g, a)$
        \smallskip
        \STATE \quad\textbf{if} {$j<k$} \textbf{then}
        \STATE \quad\quad \textcolor{blue}{\texttt{/* Set the new state of the $j$'th local agent}}

        \STATE \quad\quad Next state is $(j+1, s_g, \bar{s}_g, (s'_1, \dots, s'_j, s_{j+1},\dots, s_k)$
        \smallskip
        \STATE \quad\textbf{else}
        \STATE \quad\quad \textcolor{blue}{\texttt{/* Commit the cached global agent update and reset scratch variables}}
        \STATE \quad\quad Next state is $(1, \bar{s}_g, \perp, s'_{1:k})$
        \medskip
        \STATE \textbf{Return} the new $k$-chained MDP $\tilde{\cM} = (\tilde{S}, \tilde{A}_l, \tilde{H}, \tilde{p}, \{\tilde{r_\tau}\}_{\tau=0}^{H'-1}, \tilde{P})$.

    \label{algorithm: k chained induced MDP}
    \end{algorithmic}
\end{algorithm}

\textbf{Algorithm \ref{algorithm: S chained induced MDP}} ($|S|$-chained mean-field MDP for learning the local agent's best-response to policy $\pi_g:\cS_g\times\cS_l\times\mu_{k-1}(\cS_l)\to\cA_g$). We also formulate a mean-field version of the MDP for the local agent's best-response to the global agent's mean-field policy $\pi_g:\cS_g\times\cS_l\times \mu_{k-1}(\cS_l)\to\cA_g$ which uses similar ideas as in Algorithm \ref{algorithm: k chained induced MDP}. 

In this parameterization, each macro step is unfolded into $|\cS_l|+1$ micro steps. At micro step $j=0$, we evaluate the global agent's mean-field policy on the population and cache its resulting pending  transition. Then, for the tagged local agent, we apply the action to the tagged agent $s_\ell$ and collect its reward. Next, for micro-steps $j=1,\dots,|\cS_l|$, we sweep through each local state-bin $u=j$. If $h(u)$ agents occupy bin $u$, we apply the shared local action for state $u$ and move the entire mass $h(u)$ to next states according to the multinomial induced by $P_l(\cdot|u,s_g,a)$, accumulating these counts in a new histogram $\bar{F}_\Delta$. After processing all bins, we commit $(s_g, s_\ell, F_\Delta)\gets (\bar{s}_g, \bar{s}_\ell, \bar{F}_\Delta)$ and repeat.\\

\begin{figure}[hbt!]
    \centering
    \includegraphics[width=1\linewidth]{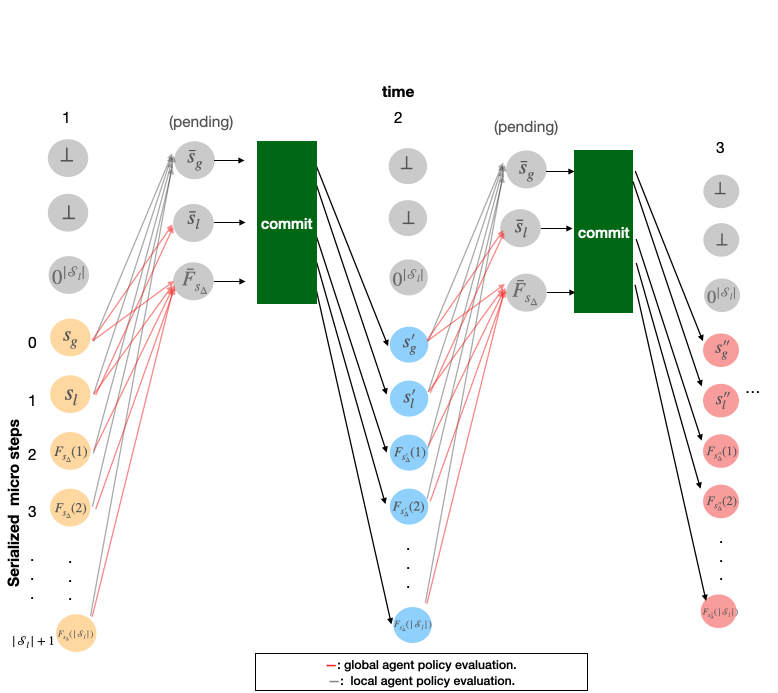}
    \caption{Construction of the $|\mathcal{S}_l|$-chained MDP construction from \cref{algorithm: S chained induced MDP}. In this parameterization, each macro step is unfolded into $|\cS_l|+1$ micro steps. At micro step $j=0$, we evaluate the global agent's mean-field policy on the population and cache its resulting pending  transition. Then, for the tagged local agent, we apply the action to the tagged agent $s_\ell$ and collect its reward. Next, for micro-steps $j=1,\dots,|\cS_l|$, we sweep through each local state-bin $u=j$. If $h(u)$ agents occupy bin $u$, we apply the shared local action for state $u$ and move the entire mass $h(u)$ to next states according to the multinomial induced by $P_l(\cdot|u,s_g,a)$, accumulating these counts in a new histogram $\bar{F}_\Delta$. After processing all bins, we commit $(s_g, s_\ell, F_\Delta)\gets (\bar{s}_g, \bar{s}_\ell, \bar{F}_\Delta)$ and repeat.\\}
    \label{figure: sl-chained MDP construction}
\end{figure}

\begin{algorithm}[t]
    \begin{algorithmic}
        \caption{$|\cS_l|$-chained MF-MDP for learning the local agent's best-response to policy $\pi_g:\cS_g\times\cS_l\times\mu_{k-1}(\cS_l)\to\cA_g$}
       \REQUIRE Transition functions $s_g'\sim P_g(\cdot|s_g, a_g)$ and $s_l' \sim P_l(\cdot|s_l, s_g, a_l)$.
        \REQUIRE Global agent policy $\pi_g:\cS_g\times\cS_l\times\mu_{k-1}(\cS_l) \to \cA_g$,  stage reward $r_l:\cS_g\times\cS_l\times\cA_l\to \mathbb{R}_+$.
        \REQUIRE Subsampling parameter $k$ and horizon $H$.
        \REQUIRE $(s_{g,0}, s_{\ell,0}, h_0)\sim \rho$, where $\rho$ is the initial distribution over $\cS_g\times\cS_l \times\mu_{k-1}(\cS_l)$.\medskip
                \STATE Let $\phi:[|\cS_l|]\to \cS_l$ enumerate $\cS_l$, i.e. $\phi(j)$ is the $j$'th element of $\cS_l$.
\STATE Define the micro-horizon $\tilde{H} = H\cdot (|\cS_l|+1)$
        \STATE Define $\tilde{\cS} = \{0,1,\dots,|\cS_l|\} \times \cS_g \times (\cS_g \cup \{\perp\}) \times \cS_l \times (\cS_l \cup \{\perp\}) \times \mu_{k-1}(\cS_l) \times \mu_{k-1}(\cS_l)$.
        \STATE Let $\tilde{\cA} = \cA_l$.
        \STATE Let $\tilde{s}_0 = (0, s_{g,0}, \perp, s_0, \perp, F_{s_\Delta}, \mathbf{0}^{|\cS_l|})\sim \tilde{\rho}$, where $(s_g, s_0, F_{s_\Delta}) \sim \rho$
        \STATE Let $t(\tau) = \lfloor \frac{\tau}{|\cS_l|+1}\rfloor$.
        \STATE Let $\tilde{r}_\tau((j, s_g, \bar{s}_g, s_\ell, \bar{s}_\ell, F_{\Delta}, \bar{F}_{\Delta}), a) \coloneqq \mathbbm{1}\{j=0\}\cdot \gamma^{t(\tau)}\cdot \frac{1}{k}r_l(s_\ell, s_g, a)$.
        \medskip
        \STATE \underline{Transition kernel $\tilde{P}$}, given state $(j, s_g, \bar{s}_g, s_\ell, \bar{s}_\ell, F_{\Delta}, \bar{F}_\Delta)$ and action $a\in \cA_l$, the new state is $(j', s'_g, \bar{s}'_g, s'_\ell, \bar{s}'_\ell, F'_{\Delta}, \bar{F}'_\Delta)$:
            \medskip
        \IF{$j=0$}
        \STATE \textcolor{blue}{\texttt{/* Set the pending states of the global agent and local agent $0$}}
        \STATE Compute $a_g\sim \pi_g(\cdot|s_g, s_\ell, F_{\Delta})$
        \STATE $\bar{s}_g\sim P_g(\cdot|s_g, a_g)$
        \STATE $\bar{s}_\ell \sim P_l(\cdot|s_\ell,s_g,a)$
        \STATE Move to state $(1, s_g, \bar{s}_g, s_\ell, \bar{s}_\ell, F_{\Delta}, \bar{F}_\Delta)$
        \medskip
        \ELSIF{$j=u\in\{1,\dots,|\cS_l|\}$}
        \STATE Let $c = (k-1) F_{s_\Delta}(\phi(u))$ be the number of agents in state $\phi(u)$.
        \STATE Let $(N_{u\to 1}, \dots, N_{u\to |\cS_l|}) \sim \text{Multinomial}\left(c, \left\{P_l(\phi(z)|\phi(u), s_g, a)\right\}_{z=1}^{|\cS_l|}\right)$
        \smallskip
        \STATE \textcolor{blue}{\texttt{/* Shift mass of the $c$ local agents at state $\phi(z)$ to their new states}}
        \smallskip
        \FOR{$z\in \{1,\dots,|\cS_l|\}$}
        \STATE Let $\bar{F}_\Delta(\phi(z))\coloneqq \bar{F}_\Delta(\phi(z)) + \frac{N_{u\to z}}{k-1}$
        \ENDFOR
        \smallskip
        \IF{$u < |\cS_l|$}
        \smallskip
        \STATE \textcolor{blue}{\texttt{/* Update counter to examine the next state}}
        \STATE Move to state $(u+1, s_g, \bar{s}_g, s_\ell, \bar{s}_\ell, F_{\Delta}, \bar{F}_\Delta)$
        \smallskip
        \ELSE
        \smallskip
        \STATE \textcolor{blue}{\texttt{/* Commit the cached updates and reset scratch variables}}
        \STATE Move to state $(0, \bar{s}_g, \perp, \bar{s}_\ell, \perp, \bar{F}_{\Delta}, \mathbf{0}^{|\cS_l|})$
        \ENDIF
        \ENDIF
        
        \medskip
        \STATE \textbf{Return} the new $|\cS_l|$-chained mean-field MDP $\tilde{\cM} = (\tilde{S}, \tilde{A}_l, \tilde{H}, \tilde{p}, \{\tilde{r_\tau}\}_{\tau=0}^{H'-1}, \tilde{P}$).
        \STATE 
        \label{algorithm: S chained induced MDP}
    \end{algorithmic}
\end{algorithm}
 
\begin{theorem}[Local agent best-response optimality and sample-complexity guarantees] 
\label{appendix theorem: local agent br optimality}
With probability at least $1-\delta$, we have that the learned local-agent's policy is an $\epsilon$ best-response to the global agent's fixed policy $\pi_g$ with sample complexity
\[\tilde{O}\left(\min\left\{\frac{|\cS_l|^5|\cS_g|^2 k^{2|\cS_l|}|\cA_l|}{\epsilon^2(1-\gamma)^2} \ln \frac1\delta, \frac{k^5  |\cS_g|^2 |\cS_l|^k |\cA_l|}{\epsilon^2(1-\gamma)^3} \ln \frac1\delta\right\}\right),\]
and runtime
\[\tilde{O}\left(\frac{|\cS_l|^9 |\cS_g|^4  k^{4|\cS_l|}|\cA_l|^2}{\epsilon^2(1-\gamma)^3} \log \frac1\delta \log \frac1\epsilon, \frac{k^5  |\cS_g|^4 |\cS_l|^{2k}|\cA_l|^2}{\epsilon^2(1-\gamma)^3} \log \frac1\delta \log \frac1\epsilon\right).\]
\label{thm: local agent optimality and sample complexity}
\end{theorem}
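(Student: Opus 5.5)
The plan is to treat the statement as a black-box composition of three reductions followed by one application of \cref{brunskill sample bound}, tracking how the size of the proxy episodic MDP enters the bound. Fix $\pi_g$. First, argue that the local agent's best-response problem in the original infinite-horizon game with fixed $\pi_g$ is equivalent, up to an additive error $\epsilon/3$, to optimizing the $H$-truncated discounted return. Here we use \cref{truncation lemma} and its corollary with $H=\frac{1}{1-\gamma}\log\frac{3\|r\|_\infty}{\epsilon(1-\gamma)}$. Since the truncation error is one-sided and uniform over policies, an $\epsilon/3$-optimal policy for the truncated problem is $2\epsilon/3$-optimal for the discounted one. Second, show that the chained MDP $\widetilde M_k(\pi_g)$ of \cref{algorithm: k chained induced MDP} (respectively \cref{algorithm: S chained induced MDP}) exactly reproduces the macro-step law. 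Within a chain, the pending $\bar s_g$ is drawn from $\pi_g(\cdot|s_g,s_{1:k})$ and $P_g$ before any replica moves. The replicas then move independently via $P_l(\cdot|s_j,s_g,a)$ with the \emph{old} $s_g$. Hence the marginal law of $(s_g,s_{1:k})$ at chain boundaries coincides with the original dynamics under the product policy. Because rewards are placed with weight $\gamma^{t(\tau)}\frac1k r_l$, the cumulative micro-reward equals the truncated macro return. We will restrict the solver's policy class to maps $(s_j,s_g)\mapsto a$, so the optimum over this class equals the local best response in $\Pi_\ell$. Third, apply \cref{algorithm: reduce state-action reward MDP} to convert state-action rewards into state rewards. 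This at most doubles the horizon and multiplies the state space by $|\cA_l|+1$, and it preserves values exactly.

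Next we invoke \cref{brunskill sample bound} on the resulting MDP with accuracy $\epsilon/3$ (rescaled by the reward normalization) and confidence $\delta$. This yields $\tilde O(\widetilde H^2|\tilde\cS||\cA_l|\epsilon^{-2}\log\frac1\delta)$ episodes, each costing $\widetilde H$ samples. It remains to plug in sizes. In the standard parameterization, $\widetilde H=Hk$ and $|\tilde\cS|\lesssim k|\cS_g|^2|\cS_l|^k$ (pointer, $s_g$, $\bar s_g$, replicas), which gives the $k^5|\cS_g|^2|\cS_l|^k$ form with $H=\tilde O(1/(1-\gamma))$. In the mean-field parameterization, $\widetilde H=H(|\cS_l|+1)$ and $|\tilde\cS|\lesssim |\cS_l|^3|\cS_g|^2 k^{2|\cS_l|}$ (two histograms in $\mu_{k-1}(\cS_l)$, plus $s_\ell,\bar s_\ell$ and the pointer), which gives the $|\cS_l|^5|\cS_g|^2k^{2|\cS_l|}$ form. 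Taking the better of the two parameterizations yields the min. The runtime bound follows in the same way from the policy-update cost $\tilde O(\widetilde H|\tilde\cS|^2|\cA|^2\log\frac1\epsilon)$ in \cref{brunskill sample bound}, which squares $|\tilde\cS|$ and explains the $|\cS_g|^4$, $k^{4|\cS_l|}$, $|\cS_l|^{2k}$ factors. Finally, we convert the episodic $\epsilon$-optimality at the initial distribution into the value-based best-response notion used later. This uses the uniform truncation bound and the exactness of the reductions, and we sum the three error pieces to obtain $\epsilon$.

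The main obstacle is the second step, i.e.\ correctness of the chained construction. UCFH learns a nonstationary policy over the full chained state, yet we need the output to be a stationary map $\cS_l\times\cS_g\to\cA_l$ that is shared across replicas. In the mean-field version there is a further subtlety: the tagged agent, rather than a replica, carries the reward, and the histogram moves by multinomials. I would first try to argue that the optimal policy of the chained MDP within the restricted class is achieved by the local best response. I would then apply UCFH's guarantee relative to the restricted-class optimum by running EVI with a projected greedy step. If that fails, I would absorb the gap into the horizon truncation, since the time index is implicit in $t(\tau)$ and the discount reweighting makes stationary optimal policies suffice up to $\epsilon/3$.
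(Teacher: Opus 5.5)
Your computational core is exactly the paper's proof. That proof does nothing more than read off $|\tilde\cS|$ for the two chained constructions, plug $\widetilde H=kH$ or $H(|\cS_l|+1)$ into \cref{brunskill sample bound}, and take the minimum. The justification you add around that computation, however, has a genuine gap precisely at the step you flag, and neither of your repairs closes it. The paper does not address this step either; it only asserts that the solver's policy class is constrained.

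Requiring the micro-step action to depend on $(s_j,s_g)$ alone turns the problem on $\widetilde M_k(\pi_g)$ into a memoryless-policy POMDP. The unobserved coordinates drive both rewards and transitions: the other replicas determine future $a_g$, and the pending $\bar s_g$ is literally the next global state. Consequently:
\begin{itemize}
\item there is no Bellman optimality equation in the observation;
\item the optimal restricted policy may need to be randomized, which the single-action multinomial moves of \cref{algorithm: S chained induced MDP} cannot even represent;
\item computing it is NP-hard in general.
\end{itemize}
UCFH's PAC analysis relies on \texttt{FixedHorizonEVI} returning the exact optimal full-state Markov policy of the optimistic model, so that its optimistic value dominates $V^\star$. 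A ``projected greedy step'' gives neither optimality within the restricted class nor that optimism inequality, so the guarantee does not transfer. Your fallback only removes the time-dependence of UCFH's output; it does nothing about the information restriction. Moreover, the gap between the unrestricted optimum of $\widetilde M_k(\pi_g)$ and the best policy in $\Pi_\ell$ is not $O(\epsilon)$ in general, because a full-state policy can exploit $\bar s_g$ and the other replicas. What UCFH actually certifies is $\epsilon$-optimality of a policy that lies outside $\Pi_\ell$.

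Two steps you treat as exact are not. First, the chained reward contains only $\frac1k r_l$ terms and no $r_g$ term. In \cref{algorithm: k chained induced MDP} it is charged only at $j=1$, and in \cref{algorithm: S chained induced MDP} only for the tagged agent. Yet $a_g\sim\pi_g(\cdot|s_g,s_{1:k})$ depends on the local policy, so dropping $r_g$ changes the objective. Second, the chain is a closed $k$-agent system in which $\pi_g$ always sees the same $k$ replicas. By contrast, $V^{\pi_g,\pi_\ell}$ is the $n$-agent return with a fresh uniform $\Delta$ drawn each step. So the micro-reward sum is not the truncated macro return, and the surrogate bias needs its own bound; it cannot be absorbed into an arbitrary $\epsilon$. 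In addition, UCFH controls the gap only from $p_0$, not uniformly over states as the later $\sup_s$ definition of $\epsilon_\ell$ requires.

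Finally, your own accounting does not reproduce the stated exponents. Count samples as episodes $\times\,\widetilde H$, and keep the $(|\cA_l|+1)$-fold state blow-up and doubled horizon of \cref{algorithm: reduce state-action reward MDP}. The mean-field case then gives $\tilde O\bigl(|\cS_l|^6|\cS_g|^2k^{2|\cS_l|}|\cA_l|^2\epsilon^{-2}(1-\gamma)^{-3}\ln\frac1\delta\bigr)$ rather than the stated bound, and both cases pick up $|\cA_l|^2$. The stated mean-field term is UCFH's episode count $\widetilde H^2|\tilde\cS||\cA_l|\epsilon^{-2}$ with no reduction overhead, which is what the paper plugs in.
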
 

\begin{proof}
    With the $|\cS_l|$-chained mean-field MDP, the size of the MDP is $|\cS_g|^2|\cS_l|^3\cdot k^{2|\cS_l|}$, from Algorithm \ref{algorithm: S chained induced MDP}. The sample complexity of solving this is $\tilde{O}\left(\frac{\tilde{H}^2 |\cS_g|^2|\cS_l|^3\cdot k^{2|\cS_l|} |\cA_l|}{\epsilon^2} \ln \frac1\delta\right)$
    and the runtime is  $\tilde{O}\left(\frac{\tilde{H}^3 |\cS_g|^4  |\cS_l|^6 k^{4|\cS_l|}|\cA_l|^2}{\epsilon^2} \log \frac1\delta \log \frac1\epsilon\right)$,
    from Theorem \ref{brunskill sample bound}.  Next, using $H \coloneqq \frac{1}{1-\gamma}\log \frac{\|r\|_\infty}{\epsilon(1-\gamma)}$, we have
    \begin{align*} 
    \tilde{H} &= H(|\cS_l|+1) \leq \frac{2|\cS_l|}{1-\gamma}\log \frac{\|r\|_\infty}{\epsilon(1-\gamma)}.
    \end{align*}
    Plugging this, we have that the sample complexity of solving the $|\cS_l|$-chained mean-field MDP is 
    \[\tilde{O}\left(\frac{|\cS_l|^5|\cS_g|^2 k^{2|\cS_l|}|\cA_l|}{\epsilon^2(1-\gamma)^2}\ln \frac1\delta\right),\]
    and the runtime is
    \begin{align*}\tilde{O}\left(\frac{|\cS_l|^9 |\cS_g|^4  k^{4|\cS_l|}|\cA_l|^2}{\epsilon^2(1-\gamma)^3} \log \frac1\delta \log \frac1\epsilon \right).
    \end{align*}

     Similarly, for the $k$-chained MDP, the size of the MDP is $k |\cS_g|^2|\cS_l|^k$, from Algorithm \ref{algorithm: k chained induced MDP}. Therefore, the sample complexity of solving this is $\tilde{O}\left(\frac{ \tilde{H}^2 |\cS_g|^2 |\cS_l|^k |\cA_l|}{\epsilon^2} \ln \frac1\delta\right)$ and the runtime is $\tilde{O}\left(\frac{H^3 k^2  |\cS_g|^4 |\cS_l|^{2k}|\cA_l|^2}{\epsilon^2} \log \frac1\delta \log \frac1\epsilon\right)$ from Theorem \ref{brunskill sample bound}.  Now, using $H\coloneqq \frac{1}{1-\gamma}\log \frac{\|r\|_\infty}{\epsilon(1-\gamma)}$, we have
     \begin{align*}\tilde{H} &= kH  \leq \frac{k}{1-\gamma} \log \frac{\|r\|_\infty}{\epsilon(1-\gamma)}.\end{align*}
     Plugging this, we have that the sample complexity of solving the $k$-chained MDP is \[\tilde{O}\left(\frac{k^5  |\cS_g|^2 |\cS_l|^k |\cA_l|}{\epsilon^2(1-\gamma)^3} \ln \frac1\delta\right)\] and the runtime is \[\tilde{O}\left(\frac{k^5  |\cS_g|^4 |\cS_l|^{2k}|\cA_l|^2}{\epsilon^2(1-\gamma)^3} \log \frac1\delta \log \frac1\epsilon\right).\]
     Taking the minimum of these sample complexities proves the theorem.\qedhere
\end{proof}

\section{Proof of convergence of the best-response dynamics}

\label{sec: proof of convergence of BR dynamics}
We begin with an idealized analysis where the global agent does not subsample (i.e., $k=n$), and the local agent does a best-response policy. We show that this best-response dynamics converges at an exponential rate to a Nash Equilibrium. To determine if the best-response oracle function converges, we define the Bellman optimality gap.

\begin{definition}[Bellman optimality gap] We formalize the Bellman optimality gap, which for a joint policy $\pi^t = (\pi_g^t, \pi_l^t)$, represents the incentive to deviate: \[\mathsf{\Delta}_g^t(s) \coloneqq V^{\mathsf{BR}_g(\pi_l^t), \pi_l^t}(s) - V^{\pi_g^t, \pi_l^t}(s)\]
\[
\mathsf{\Delta}_l^t(s) \coloneqq V^{\pi_g^t, \mathsf{BR}_l(\pi_g^t)}(s) - V^{\pi_g^t, \pi_l^t}(s)\]
Then, define the global agent's and local agent's gap function as
\[\mathsf{Gap}^t_1(\pi_g,\pi_l) \coloneqq \max_{s\in \cS} \mathsf{\Delta}_g^t(s) =  \max_{s\in\cS}\left(\max_{\pi_g^*}V^{\pi^*_g,\pi_l}(s) - V^{\pi_g,\pi_l}(s)\right) > 0.\]
and 
\[\mathsf{Gap}^t_2(\pi_g,\pi_l) \coloneqq \max_{s\in \cS}\mathsf{\Delta}_l^t(s) =  \max_{s\in\cS}\left(\max_{\pi_l^*}V^{\pi_g,\pi_l^*}(s) - V^{\pi_g,\pi_l}(s)\right) > 0.\]
Let \[G^t \coloneqq \max\{\mathsf{Gap}_1^t, \mathsf{Gap}_2^t\}.\]
Then, a policy $\pi^t$ is an $\epsilon$-approximate Nash-Equilibrium (henceforth, denoted $\epsilon$-NE) if $G^t\leq \epsilon$.\\
\end{definition}

\begin{theorem}[Convergence of Best-Response Dynamics
    \cite{Monderer1996,Young:2004}] \label{thm: convergence of br dynamics} The sequence of policies in the discrete time best-response dynamic converge to pure Nash equilibria in potential games.\\
\end{theorem}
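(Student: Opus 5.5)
The statement is the classical finite-improvement argument of \cite{Monderer1996}, so I would prove it in that setting: a finite potential game, possibly a Markov potential game, with a finite number of deterministic (pure) stationary policies per player. In our cooperative setting the potential is simply the common value $V^{\pi}$ evaluated at a fixed initial state, or a fixed initial distribution. The reason is that every agent receives the same reward $r$, so a unilateral change in any player's policy changes that player's utility by exactly the change in $\Phi(\pi)\coloneqq V^{\pi}(s_0)$. This is the MPG identity in the definition, with $\Phi=r$.

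\textbf{Step 1: the potential is strictly increasing along the dynamics.} Consider the discrete-time best-response dynamic in which, at each step, one player whose current policy is not a best response switches to a best response. I would define the dynamic so that a player only moves when it strictly gains, and always picks a deterministic best response. Existence of such a response follows from standard MDP theory: with the other players fixed, the deviating player faces a finite MDP (the induced MDPs described in Section~\ref{sec:approach-algorithms}), which has a deterministic optimal stationary policy. By the potential identity, each such switch strictly increases $\Phi$.

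\textbf{Step 2: termination.} The set of joint pure policy profiles is finite, because state and action spaces are finite by Assumption~\ref{assumption: finite cardinality}. A sequence along which $\Phi$ strictly increases can never revisit a profile. It therefore has at most as many steps as there are profiles, so it terminates at some profile $\pi^\star$.

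\textbf{Step 3: the terminal profile is a Nash equilibrium.} At $\pi^\star$ no player has a strictly improving unilateral deviation, which is exactly Definition~\ref{def: a nash equilibrium}, and $\pi^\star$ is pure. The sequence of policies is eventually constant at $\pi^\star$, which gives the claimed convergence.

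\textbf{Main obstacle.} The delicate step is making ``best response'' unambiguous in the state-dependent sense of Definition~\ref{def: a nash equilibrium}, where value comparisons are required for all states. Here I would use the fact that an optimal stationary policy in an MDP is simultaneously optimal from every state. I would then take $\Phi$ to be the value under a fixed full-support initial distribution. With that choice, a strict improvement in $\Phi$ happens exactly when some state strictly improves and no state gets worse, so the pointwise Nash condition holds at termination.

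Tie-breaking also needs care, to avoid cycling among equally good responses. It is handled by the strict-improvement rule: a player switches only if the potential strictly increases.
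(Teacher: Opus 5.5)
Your Steps 1--3 are the finite-improvement argument of Monderer and Shapley, and that is the argument the paper relies on. The paper gives no proof of this theorem. It only cites the result, and in the proof of \cref{lemma: M is a Markov potential game} it remarks that exact best responses over a finite policy class perform coordinate ascent on the potential. With a scalar potential and finitely many pure profiles, your argument is correct.

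\textbf{The gap is your appeal to MDP theory}, both in Step 1 and in how you resolve the main obstacle. In the paper's game, a deviating player does not face an MDP in the variables its policy is allowed to depend on.
\begin{itemize}
\item A policy in $\Pi_\ell$ sees only $(s_i,s_g)$. But $s_g$ evolves under $a_g$, which depends on the other local agents' states. The paper itself says the local agent's environment is not Markovian in $(s_g,s_i)$; that is why it builds chained MDPs and then restricts the learner's policy to $(s_j,s_g)$.
\item The same holds for $\Pi_g$ once $k<n$.
\end{itemize}
These best-response problems are therefore POMDPs over memoryless policies. In such problems an optimal policy within the class may have to randomize, and there need not be any policy in the class that is optimal from every initial state. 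This has three consequences:
\begin{itemize}
\item You cannot conclude that a best response weakly dominates the current policy state by state.
\item Your claim that $\Phi$ strictly increases exactly when some state improves and none worsens is unjustified. For a general deviation it is simply false, since a weighted average can rise while some states get worse.
\item The terminal profile need not satisfy the statewise, all-deviations condition of \cref{def: a nash equilibrium}, which ranges over possibly stochastic deviations.
\end{itemize}

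What your argument does prove is the following. Define best responses by maximizing $\Phi_\rho(\pi)=\E_{s_0\sim\rho}V^{\pi}(s_0)$ over the finite set of deterministic policies in each player's restricted class. Existence is then immediate by finiteness, with no MDP theory needed, and Steps 1--3 go through unchanged. The conclusion is convergence to a pure Nash equilibrium of the game whose strategy sets are those finite sets and whose common payoff is $\Phi_\rho$. This is exactly the classical statement, and it is the scalar potential $\Psi=J_\rho$ the paper uses later in \cref{lemma: bound nsteps}.

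The statewise upgrade holds only for a player whose policy may condition on the full Markov state of its induced problem, for example the global agent when $k=n$.
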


\begin{remark}
Note that \cref{thm: convergence of br dynamics} does not, a priori, imply a \emph{rate} on the convergence to a Nash equilibrium. In fact, in general, the rate of convergence is not guaranteed to be fast since only games with very special structure are known to have best-response dynamics that converge in (even) polynomial time, e.g. potential games, congestion games, etc. For instance, while best-response dynamics in continuous time can converge exponentially quickly, the discrete case is not well understood (Karlin's conjecture), and under certain adversarial tie-breaking schemes in fictitious play, there are problem instances that converge exponentially \emph{slowly} in some dimension parameters \cite{lazarsfeld2025fastfurioussymmetriclearning,Daskalakis_2014,craven2005karlin,doi:10.1137/1.9781611976465.84}.  In this section, we show that our problem instance is a potential game \cite{guo2025markovalphapotentialgames,yan2025markovpotentialgameconstruction,ding2022independent,arefizadeh2024characterizationspotentialordinalpotential,leonardos2025globalconvergencemultiagentpolicy,overman2025oversightgamelearningcooperatively} and use it to obtain provable convergence guarantees.\\
\end{remark}

\begin{lemma}[$\mathcal{M}$ is a Markov Potential Game]\label{lemma: M is a Markov potential game} Consider the finite discounted Markov game  $\mathcal{M}$ defined in \cref{sec: preliminaries} with players $\cI = \{g, 1, \dots, n\}$. For any joint policy $\pi = (\pi_g, \pi_1, \dots, \pi_n)$, define every player's utility from initial state $s_0$ to be the common welfare return \[U_i^\pi(s_0) = \E_\pi\left[\sum_{t=0}^\infty \gamma^t r(s_t, a_t) \mid s_0\right],\quad i\in\cI,\]where $r(s,a) = r_g(s_g, a) + \frac1n \sum_{j=1}^n r_l(s_j, s_g, a_j)$. Then, the game is an exact Markov potential game with stage potential $\Phi(s,a) = r(s,a)$.
\end{lemma}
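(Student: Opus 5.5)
This is a tautological potential argument: every player's utility equals the common welfare return, so the potential can be taken to be that return itself. I will verify the definition of an exact Markov potential game directly.

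\textbf{Setup.} Fix any player $i\in\cI$, any profile $\pi_{-i}$ of the other players' policies, and two policies $\pi_i,\pi_i'$ for player $i$. Fix an initial state $s_0$. By definition of $U_i$,
\[
U_i^{\pi_i',\pi_{-i}}(s_0) - U_i^{\pi_i,\pi_{-i}}(s_0) = \E_{\pi_i',\pi_{-i}}\Big[\sum_{t\ge0}\gamma^t r(s_t,a_t)\,\Big|\, s_0\Big] - \E_{\pi_i,\pi_{-i}}\Big[\sum_{t\ge0}\gamma^t r(s_t,a_t)\,\Big|\, s_0\Big].
\]
Taking $\Phi(s,a)\coloneqq r(s,a)$, the right-hand side is literally $\E_{\pi_i',\pi_{-i}}[\sum_t\gamma^t\Phi(s_t,a_t)\mid s_0]-\E_{\pi}[\sum_t\gamma^t\Phi(s_t,a_t)\mid s_0]$. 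This is exactly the identity required in the MPG definition, and it holds for every player, every deviation, and every state.

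\textbf{Well-definedness.} The only care needed is to check that all expectations are finite and that the trajectory law is well defined for arbitrary (possibly history-independent, stochastic) policies in the classes $\Pi_g,\Pi_\ell$.
\begin{enumerate}
\item By \cref{assumption: finite cardinality}, the state and action spaces are finite.
\item The kernels $P_g$ and $P_l$, together with the policies, define a Markov chain on $\cS\times\cA$ via the Ionescu-Tulcea construction. This holds even though $\pi_g$ depends on a random subsample $s_\Delta$: the randomization over $\Delta$ is absorbed into the global agent's stochastic policy as a function of $s$.
\item By \cref{assumption: bounded rewards}, $0\le r\le\tilde r$, so $\sum_t\gamma^t|r(s_t,a_t)|\le\tilde r/(1-\gamma)$.
\end{enumerate}
The series therefore converges absolutely, and linearity of expectation justifies splitting the difference into the two expectations above.

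\textbf{Main subtlety.} The difficulty is conceptual rather than technical: one must note that the stage potential depends on the full joint action $a$, including the deviating player's action. This is permitted by the definition, since $\Phi:\cS\times\cA\to\mathbb R$.

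No additional structure, such as the additive decomposition, is needed for exactness. That decomposition is used only later, to reduce the $n$ symmetric local players to a single representative player.
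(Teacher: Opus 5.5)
Your proposal is correct and follows essentially the same route as the paper. Every player's utility is the common welfare return, so taking $\Phi = r$ makes the unilateral-deviation identity hold verbatim for every player, every deviation, and every initial state; your well-definedness checks (finite spaces, and bounded rewards giving absolute convergence) are a harmless addition that the paper leaves implicit.
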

\begin{proof}
Fix any player $i\in\cI$, any joint policy $\pi = (\pi_i, \pi_{-i})$, any unilateral deviation $\pi_{i'}'$, and any initial state $s_0$. Let $\mathbb{P}_{\pi_i', \pi_{-i}}$ and $\mathbb{P}_{\pi_i, \pi_{-i}}$ denote the trajectory laws induced by the two policy profiles. Since every player’s stage utility is the common stage reward $r(s,a)$, we have that
\[U_i^{\pi'_i, \pi_{-i}}(s_0) - U_i^{\pi_i, \pi_{-i}}(s_0) = \E_{\pi_i', \pi_{-i}}\left[\sum_{t=0}^\infty \gamma^t r(s_t, a_t)|s_0\right] - \E_{\pi_i, \pi_{-i}}\left[\sum_{t=0}^\infty  \gamma^t r(s_t, a_t)|s_0\right].\]
By definition, $\Phi=r$, and hence the right hand side is exactly 
\[\E_{\pi_i', \pi_{-i}}\left[\sum_{t=0}^\infty \gamma^t \Phi(s_t, a_t)|s_0\right] - \E_{\pi_i, \pi_{-i}}\left[\sum_{t=0}^\infty  \gamma^t \Phi(s_t, a_t)|s_0\right].\]
Therefore, for every unilateral deviation, we have that
\[U_i^{\pi'_i, \pi_{-i}}(s_0) - U_i^{\pi_i, \pi_{-i}}(s_0) = \Phi^{\pi'_i, \pi_{-i}}(s_0) - \Phi^{\pi_i, \pi_{-i}}(s_0),\]
where $\Phi^\pi(s_0) = \E_\pi[\sum_{t=0}^\infty \gamma^t \Phi(s_t, a_t) | s_0]$.

Therefore, the game is an exact Markov potential game with potential $\Phi$. Specifically, in our idealized analysis (exact best-response, finite policy class, finite action space), the alternating global/local best-response dynamics performs coordinate ascent on $\Psi(\pi)$ and therefore converges to a pure Nash equilibrium (a local maximizer of $\Phi$).\qedhere\\
\end{proof}

\paragraph{Best-Response Dynamics Formulation.} We begin by formulating the problem of studying the best-response dynamics for the setting of $k=n$. We then generalize the formulation to $k\leq n$ and study the convergence rate of the best-response dynamics.

When $k=n$, in the case where there is no subsampling, we have that the local agents are homogeneous with a shared stationary policy $\pi_l$. 
Then, the joint policy is parametrized by a pair $(\pi_g, \pi_l)$, and the best-response dynamics are ones both players have exact BR oracles.
 Suppose we have a $\mathsf{BR}$ oracle for the local agent $\pi_l^*(\pi_g,s)$ and an oracle for the global agent $\pi_g(s)$ (i.e., where $k=n$, so the global agent does not do any sampling of the local agents). In the idealized sequential dynamics, the policy updates are given by \begin{equation}\pi_g^{t+1} = \mathsf{BR}_g(\pi_1^t, \pi_2^t, \dots, \pi_n^t) = \overline{\mathsf{BR}}_g(\pi_\ell^t)\end{equation} and \begin{equation} \pi_i^{t+1} = \mathsf{BR}_\ell(\pi_g^{t+1}), \forall i\in[n].\end{equation}
 
 Specifically, $BR_g(\pi_l)$ is the optimal global policy against an infinite-horizon MDP whose local agents are i.i.d. following $\pi_l$. Similarly, $\mathsf{BR}_l(\pi_g)$ is a local policy that is optimal for a single local agent in the induced MDP $M_l(\pi_g)$.
 
\begin{definition}[Best-Response ($\mathsf{BR}$) Dynamics Operator] \emph{Let the space of all joint policy profiles be $\Pi = \Pi_g\times\Pi_l$. Define the best-response operator $\Gamma:\Pi\to \Pi$ to be} \begin{equation}\Gamma(\pi_g, \pi_l) = (\mathsf{BR}_g(\pi_l), \mathsf{BR}_l(\pi_g)),\end{equation}
where $\mathsf{BR}_g(\pi_l)$ is the global agent's best response to $\pi_l$ and $\mathsf{BR}_l(\pi_g)$ is the local agent's best response to $\pi_g$. A fixed point of this operator $\pi^* = \Gamma(\pi^*)$ is a Nash Equilibrium because each agent is playing a best response to the other.
\end{definition}

The below theorem gives an exponential rate for continuous dynamics.

\begin{theorem}[Theorem 2 in \cite{doi:10.1137/17M1139461}] Suppose $\Gamma$ is a potential game. Then, for all (but a measure $0$ set of) initial points $x_0$, the continuous best-response dynamics in the idealized sequential dynamics setting converges to a Nash equilibrium at an  exponential rate.
\end{theorem}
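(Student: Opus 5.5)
The plan is to treat the exact potential $\Phi$ (available by \cref{lemma: M is a Markov potential game}, which makes $\Gamma$ a potential game) as a Lyapunov function for the continuous best-response flow, and then extract the exponential rate from the explicit form of the trajectories between switches of the best response. I work with mixed profiles $x=(x_g,x_\ell)$ on the product of simplices. The continuous best-response dynamics is the differential inclusion $\dot x \in \mathsf{BR}(x)-x$, where $\mathsf{BR}(x)=(\mathsf{BR}_g(x_\ell),\mathsf{BR}_\ell(x_g))$. The potential $\Phi(x)$ is multilinear in $(x_g,x_\ell)$.

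\textbf{Step 1 (monotonicity).} Since $\Phi$ is multilinear, I differentiate along a solution:
\[
\tfrac{d}{dt}\Phi(x(t))=\sum_{i\in\{g,\ell\}}\bigl[\Phi(b_i,x_{-i})-\Phi(x_i,x_{-i})\bigr],
\qquad b_i\in\mathsf{BR}_i(x_{-i}).
\]
The exact potential property turns each bracket into player $i$'s unilateral gain, which is the gap $\mathsf{Gap}_i$. Hence $\tfrac{d}{dt}\Phi(x(t))=\mathsf{Gap}_1+\mathsf{Gap}_2\ge 0$, with equality exactly at Nash equilibria. Because $\Phi$ is bounded, LaSalle's principle shows that every $\omega$-limit point is a Nash equilibrium.

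\textbf{Step 2 (explicit trajectories and rate).} On any time interval $[t_0,t_1]$ on which the pure best response $b$ stays constant, the flow integrates exactly to
\[
x(t)=e^{-(t-t_0)}x(t_0)+\bigl(1-e^{-(t-t_0)}\bigr)b .
\]
So $\|x(t)-b\|$ decays like $e^{-(t-t_0)}$. By multilinearity, the gap $G(x(t))$ also decays like $e^{-(t-t_0)}$, provided $b$ remains a best response to itself. Consequently, if the trajectory undergoes only finitely many switches, then after the last switch time $t^\ast$ the target $b$ is a best response to $b$, i.e.\ a pure Nash equilibrium. This gives $G(x(t))\le C e^{-(t-t^\ast)}$ for $t\ge t^\ast$, which is the exponential rate.

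\textbf{Step 3 (finitely many switches for almost every $x_0$), the main obstacle.} It remains to show that, outside a measure-zero set of initial points, the trajectory switches only finitely often and does not converge to a mixed equilibrium. I would argue as follows.
\begin{enumerate}[label=(\roman*)]
\item For generic (regular) potential games, mixed, non-pure equilibria are hyperbolic saddles of the piecewise-linear flow. The set of initial points whose trajectory converges to such a saddle is a finite union of lower-dimensional stable sets, hence has measure zero.
\item Away from equilibria, each switch happens at a point where the gap is bounded below by some $c>0$ on a compact set. Step~1 then implies that $\Phi$ increases by a uniform amount between consecutive switches. Since $\Phi$ is bounded, only finitely many switches are possible, and this also rules out Zeno behaviour.
\item The remaining exceptional initial points are those that hit the non-generic indifference hyperplanes tangentially. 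These form a measure-zero set, which I would remove using a standard transversality argument.
\end{enumerate}

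I expect this last step, the genericity argument that excludes convergence to mixed saddles and accumulation of switches, to be the delicate part of the proof. Once it is in place, Steps~1 and~2 give the stated exponential convergence directly.
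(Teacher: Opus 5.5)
\textbf{There is a genuine gap: Step 3 is the whole theorem, and it is asserted rather than proved.} The paper gives no proof of this statement; it imports it from Swenson, Murray and Kar, whose theorem is stated for \emph{regular} potential games (almost every potential game).

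Your Steps 1--2 are sound for a finite normal-form game with multilinear $\Phi$. In the paper's setting this means the simplices must range over deterministic stationary policies. On the stochastic policy classes $\Pi_g\times\Pi_\ell$, $V^{\pi}$ is a rational, not multilinear, function of $\pi$, and the identity in Step 1 fails. Given multilinearity, the reduction ``converges to a strict pure equilibrium $\Rightarrow$ $\mathsf{BR}$ is constant nearby $\Rightarrow$ exponential rate'' is the right one.

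The problems with Step 3(i):
\begin{itemize}
\item The dynamics $\dot x\in\mathsf{BR}(x)-x$ form a discontinuous set-valued inclusion. A mixed equilibrium $x^\ast$ is not a rest point of any affine piece $\dot x=b-x$. So there is no linearization, no notion of hyperbolic saddle, and no stable-manifold theorem to invoke.
\item Solutions are not unique, at $x^\ast$ and on indifference sets. Hence ``the set of initial points whose trajectory converges to $x^\ast$'' is not even well defined until you prove uniqueness for almost every $x_0$.
\item What regularity actually gives you is that the Hessian of $\Phi$ on the carrier face of $x^\ast$ is nonsingular with zero diagonal blocks (by linearity in each player's own strategy), hence indefinite. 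Turning this into a Lebesgue-null basin for the set-valued flow is exactly the work of the cited paper. One route is to show that convergence to $x^\ast$ must happen in finite time, and that the set of points reaching $x^\ast$ in finite time under the piecewise-affine flow is null. Your proposal does not supply any such argument.
\end{itemize}

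Step 3(ii) is also incorrect as written. The bound $\dot\Phi\ge c$ gives an increase of only $c\,(t_{j+1}-t_j)$ between consecutive switches, which is not uniform if switch times accumulate. So it excludes neither Zeno behaviour nor sliding along indifference sets; it only bounds the total time spent away from the equilibrium set.

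You do not actually need to count switches. In a regular game there are finitely many equilibria. LaSalle plus connectedness of the $\omega$-limit set then gives convergence of every solution to a single equilibrium. A pure regular equilibrium is strict, so no switches occur once its neighborhood is entered. Everything therefore hinges on (i).

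Finally, you introduce genericity silently, and it is a genuine extra hypothesis. The statement as quoted omits it, but without it the conclusion fails: for constant $\Phi$, every profile is an equilibrium and $\dot x\in\mathsf{BR}(x)-x$ has non-convergent solutions from every interior point. For the paper's Markov game, regularity also cannot be taken for granted. The normal form over deterministic policies can contain duplicated strategies, e.g.\ policies differing only on states unreachable from $\rho$ earn identical payoffs against every opponent.
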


In the discrete time dynamics, the worst-case bound can be exponential in the number of players, and can depend on the size of the policy simplex which is exponential in the number of actions. 
\begin{theorem}[Theorem 1 in \cite{DurandGaujal:2016:ACBRA}] For an $N$-player game with joint action set $\mathcal{A}$, the round-robin dynamics in a Markov potential game reaches a Nash Equilibrium in $T_{BRA} = N |A|^{N-1}$ iterations.
\end{theorem}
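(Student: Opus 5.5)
The plan is to prove this with a potential-function counting argument over the finite normal-form game obtained from $\mathcal{M}$. First I would reduce the Markov game to a finite game. Each player $i$ chooses a pure (deterministic, stationary) policy from a finite set, whose size is what I interpret $|A|$ as in the statement. Fixing the initial state $s_0$, or a fixed initial distribution, the payoff is $U_i^\pi(s_0)$. By \cref{lemma: M is a Markov potential game}, this finite game is an exact potential game with potential $\Phi^\pi(s_0)=\E_\pi[\sum_t\gamma^t r(s_t,a_t)\mid s_0]$. Consequently every unilateral strict improvement strictly increases $\Phi^\pi(s_0)$.

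Next I would fix the round-robin convention. Players move in the cyclic order $1,2,\dots,N$. When it is player $i$'s turn, it switches to a best response to the current $\pi_{-i}$ only if that is a strict improvement; otherwise it keeps its current policy. Ties among best responses are broken by a fixed deterministic rule. Under this rule, the action player $i$ holds after its turn is a deterministic function of the pair (its previous action, $\pi_{-i}$). Two facts follow:
\begin{itemize}
\item The potential is non-decreasing along the dynamics.
\item The potential increases strictly at every step where some player changes its policy.
\end{itemize}
A profile is a pure NE exactly when a full round of $N$ consecutive turns produces no change, so the dynamics stop there.

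The key counting step concentrates on player $1$'s turns. Let $\pi_{-1}^{(r)}$ denote the profile of the other players at the start of player $1$'s $r$-th turn; it takes at most $|A|^{N-1}$ values. I claim that, as long as no NE has been reached, these contexts are pairwise distinct. Suppose $\pi_{-1}^{(r)}=\pi_{-1}^{(r')}$ with $r<r'$.
\begin{enumerate}
\item After its turn, player $1$ holds a best response to $\pi_{-1}^{(r)}$.
\item The rule keeps the current action whenever it is already a best response. So whatever player $1$ held entering turn $r'$, the deterministic tie-break yields the same potential value after turn $r'$ as after turn $r$.
\item Hence $\Phi$ takes the same value after turns $r$ and $r'$.
\item By monotonicity, no strict change occurred in any step between these two turns.
\item That stretch spans at least one full round, so the profile was already a NE.
\end{enumerate}
Thus there are at most $|A|^{N-1}$ rounds before convergence. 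Since each round has $N$ individual best-response steps, this gives the bound $T_{BRA}\le N|A|^{N-1}$.

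The step I expect to be the main obstacle is making the "repeated context implies no change" argument airtight in the presence of ties. If player $1$ may hold different best responses to the same $\pi_{-1}$, the profiles after turns $r$ and $r'$ could differ even though their potentials coincide. What I would try first is to argue only about equality of potential values, which suffices by strict monotonicity. If that is not enough, I would strengthen the tie-breaking assumption so that "stay if already optimal, otherwise choose the lexicographically first best response" makes the post-turn profile a function of $\pi_{-1}$ alone from the second round on.

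A secondary subtlety is the state dependence of the Markov setting. A best response that is optimal at every state is also optimal from $s_0$, so the argument can be carried out with the scalar potential $\Phi^\pi(s_0)$. Alternatively, one can use a state-wise ordering, which is consistent because the best-response policies are uniformly optimal.
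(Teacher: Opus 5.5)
The paper gives no proof of this statement. It quotes the bound from Durand and Gaujal and uses it only to remark that discrete best-response dynamics can be slow, so your proposal supplies a self-contained argument where the paper offers a citation.

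Your argument is correct. It is the standard upper-bound argument for round-robin best response in finite exact potential games:
\begin{itemize}
\item the potential never decreases and strictly increases at every change;
\item hence the contexts $\pi_{-1}$ faced by player $1$ cannot repeat before an equilibrium is reached;
\item this gives at most $|A|^{N-1}$ rounds.
\end{itemize}

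\textbf{Step 2 does not need tie-breaking.} The equality of post-turn potentials follows from exactness alone. Since $\arg\max_{\pi_1}U_1(\pi_1,\pi_{-1})=\arg\max_{\pi_1}\Phi(\pi_1,\pi_{-1})$, after any turn of player $1$ the potential equals $\max_{\pi_1}\Phi(\pi_1,\pi_{-1})$, which depends on the context alone. So the obstacle you anticipate does not arise, and no lexicographic rule is needed. The one convention you do need is ``switch only on strict improvement,'' which rules out cycling among tied best responses.

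\textbf{Your reading of $|A|$ is necessary, not cosmetic.} You take $|A|$ to be the number of pure (deterministic stationary) policies per player. In a Markov game this is of order $|\mathcal{A}_i|^{|\mathcal{S}|}$, not the per-state or joint action set the paper's phrasing suggests. Only under this reading does the normal-form result transfer, which matches the paper's own comment that the bound depends on the size of the policy space.

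\textbf{Reaching versus certifying.} Your count bounds the time until the profile \emph{is} an equilibrium. Certifying it by a full idle sweep can cost one step beyond $N|A|^{N-1}$.
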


\subsection{ Analysis with Probabilistic Approximate Best-Response Dynamics with Subsampling}

\begin{remark}
    Subsampling and the introduction of Bellman noise may only affect how the agents choose policies, but leaves the reward structure invariant for the agents. Therefore, the true Markov game and its potential $\Phi$ are unchanged, i.e. for any joint policy $\pi$ and any unilateral deviation $\pi'_i$, we still have that for any $s_0 \in \mathcal{S}$,
    \begin{equation}\Phi(\pi_i', \pi_{-i}) - \Phi(\pi) = V^{\pi'_i, \pi_{-i}}(s_0) - V^\pi(s_0).\end{equation}
\end{remark}

\begin{definition}[Best-Response Dynamics Operator with Sampling] \emph{Consider the policy simplexes $\Pi = \Pi_g^{(k)}\times\Pi_l$, where $\Pi_g^{(k)}$ is the simplex corresponding to policies $\mathcal{S}_g\times\mathcal{S}_l^k \to \mathcal{A}_g$, or the mean-field space of policies $\cS_g \times \cS_l \times \mu_k{(\cS_l)} \to \cA_g$. Define the best-response operator with $k$-agent subsampling $\Gamma_k:\Pi\to \Pi$ to be \begin{equation}\Gamma(\pi^{(k)}_g, \pi_l) = (\mathsf{BR}^{(k)}_g(\pi_l), \mathsf{BR}_l(\pi_g^{(k)})),
\end{equation}
where $\mathsf{BR}^k_g(\pi_l)$ is the global agent's approximate best-response to $\pi_l$ and $\mathsf{BR}_l(\pi^{(k)}_g)$ is the local agent's best response to $\pi^{(k)}_g$. A fixed point of this operator $\pi_k^* = \Gamma_k(\pi_k^*)$ is a Nash Equilibrium because each agent is playing a best response to the other.}\\
\end{definition}

We first characterize the notion of a global-agent playing an $\epsilon_g$-best response to $\pi_\ell$.
\begin{definition}
    \emph{Fix a local agent policy $\pi_\ell$. Then, a global agent policy $\tilde{\pi}_g$ is called an $\epsilon_g$-best response to $\pi_\ell$ if 
    \begin{equation}\max_{s\in \mathcal{S}} \left[V_{\mathsf{BR}_g(\pi_\ell),\pi_\ell}(s) - V_{\tilde{\pi}_g, \pi_\ell}(s)\right] \leq \epsilon_g.\end{equation}
    Equivalently, this can be characterized by $\mathsf{Gap}_1(\tilde{\pi}_g, \pi_\ell) \leq \epsilon_g$.}
\end{definition}

\begin{remark}From \cref{theorem: application of PDL}, we see that by using $k$-agent subsampling, the global agent can compute a policy $\pi_{g,k}^{\mathrm{est}}$ such that for any fixed $\pi_\ell$, we have an $\epsilon_g(k)$ best-response with probability at least $1 - \delta_1$, where
\begin{equation}\epsilon_g(k)\coloneqq \sup_{s\in \mathcal{S}}\left[V^{\mathsf{BR}_g(\pi_\ell),\pi_\ell}(s) - V^{\pi_{g,k,m}^{\mathrm{est}}(\pi_\ell),\pi_\ell}(s)\right] \leq  \tilde{O}\left(\frac{1}{\sqrt{k}}\right).\end{equation}
Similarly, the local-agent $Q$-learning algorithm that we use for learning the policy uses $N_{\delta_2}$ samples to, with probability at least $1-\delta_2$, yield an $\epsilon_\ell$ approximate best-response where 
\begin{equation}\epsilon_\ell = \sup_{s\in \mathcal{S}}\left[V^{\pi_g, \mathsf{BR}_\ell(\pi_g)}(s) - V^{\pi_g, \pi_\ell^{\mathrm{alg}}}(s)\right].\end{equation}
\end{remark}

\begin{definition}[Per-update approximate scale] \emph{Let the per-update approximation scale, the largest approximation error of each learning algorithm, be given by $\eta \coloneqq \max\{\epsilon_g(k), \epsilon_\ell\}$.}\\
\end{definition}

\begin{remark} The dynamics of the system are that under sampling, each iteration returns an $\epsilon_g(k)$-BR to $\pi_\ell^t$ and an $\epsilon_\ell$-BR to $\tilde{\pi}_g^{t+1}$ with probability at least $1-\delta$. Since the meta-algorithm decides whether to adopt these proposals as the new policies or not using the accept rule \begin{equation}V^{\tilde{\pi}_i, \pi_{-i}}(s_0) > V^{\pi_i, \pi_{-i}}(s_0),\end{equation} we have that whenever we move to a new joint policy the potential $\Phi$ strictly increases. Hence, if a proposed approximate BR worsens the value by at least $2\eta$, we simply keep the old policy, so that the potential does not decrease. In the case where the approximate BR has a value within $2\eta$ of the original value, then by definition we are at a $2\eta$-approximate Nash equilibrium. We show that we reach an $2\eta$-approximate Nash equilibrium in finite time.
\end{remark}

We first argue the correctness of the interval comparison in the \texttt{UPDATE} function in \cref{algorithm: alternating marl} (\texttt{ALTERNATING-MARL}).
\begin{lemma}
   The interval comparison in the \emph{\texttt{UPDATE}} function in \emph{\texttt{ALTERNATING-MARL}} is correct.\label{lemma: correctness of interval comparison} 
\end{lemma}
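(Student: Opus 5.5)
Since the statement is informal, I first pin down what "correct" has to mean and then check it against the three branches of \texttt{UPDATE}. I take it to mean the following. On the event that every value estimate supplied to \texttt{UPDATE} is within $\eta$ of the true value of the corresponding joint policy (uniformly over states), three things hold. First, the reject branch is taken only when the proposal is genuinely worse, so no accepted policy lowers the potential $\Phi$. Second, the accept branch strictly raises the true potential, by at least $\eta$ on some state. Third, the terminate branch is taken only at a joint policy that is a $2\eta$-approximate best response for the agent being updated. Here $\eta$ is exactly the tolerance used in \texttt{UPDATE}.

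The first step is to establish this accuracy event. For \texttt{G-LEARN}, $\hat V_{\text{new}}$ is built from $\hat Q^T_{k,m}$. I will combine \cref{corollary:backprop}, which with the prescribed $T$ makes the truncation error at most $1/\sqrt{k}$, with \cref{lemma: epsilon_km_is_k} ($\epsilon_{k,m}\le 2/\sqrt{k}$) and \cref{theorem: Q-lipschitz of Fsdelta and Fsn} for the subsampling error, choosing $\delta=1/(|\cA_g|\sqrt{k})$ as in \cref{corollary: performance_difference_lemma_applied}. This gives $|\hat V_{\text{new}}-V^{\tilde\pi_g,\pi_\ell}|\le\eta$, and the constant $5/\sqrt{k}$ in $\eta$ is meant to absorb these pieces after the $(1-\gamma)^{-2}$ scaling. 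For \texttt{L-LEARN}, I will use \cref{theorem: l-learn} together with \cref{truncation lemma}; the target accuracy $\epsilon_\ell=1/(n\sqrt k)$ is far below $\eta$. A union bound over the at most $2N_{\text{steps}}$ calls, each run at failure level $\delta_{\text{learn}}=\delta/(4N_{\text{steps}})$, shows that the event holds with probability at least $1-\delta/2$.

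On this event I will check the branches. In the reject branch, $\hat V_{\text{new}}<\hat V_{\text{old}}-2\eta$ at some state implies $V_{\text{new}}<V_{\text{old}}$ there, since the two errors are each at most $\eta$. In the accept branch, $V_{\text{new}}-V_{\text{old}}>2\eta-2\eta$ at some state, which gives strict improvement; I plan to tighten this to an $\eta$ margin by charging only one side of the estimation error, because $\hat V_{\text{old}}$ is carried forward from an earlier accurate estimate. In the terminate branch, $|\hat V_{\text{new}}-\hat V_{\text{old}}|\le 2\eta$ on all states, so $V_{\text{old}}\ge V_{\text{new}}-4\eta$. Since the proposal is an $\eta$-best response, $V^{\mathsf{BR}}\le V_{\text{new}}+\eta$. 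Together these bound $\mathsf{Gap}$ by a constant multiple of $\eta$. The exact-potential identity of \cref{lemma: M is a Markov potential game} then turns value changes into changes in $\Phi$.

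I expect the main obstacle to be the constant bookkeeping. Carried out naively, the terminate case gives $5\eta$ rather than $2\eta$, and the accept case gives only a strict increase rather than an $\eta$ margin. I will first try to resolve both by arguing that $\hat V_{\text{old}}$ is always an estimate of the current policy, so errors from the old and new estimates need not be counted twice. If that fails, I will state the lemma with the rescaled constant and note that it changes $\eta$ only by an absolute factor.
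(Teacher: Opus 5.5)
Your three-way case analysis, carried out on the event that every estimate passed to \texttt{UPDATE} is $\eta$-accurate, has the same skeleton as the paper's proof; the paper simply posits that event. The step that fails is your planned tightening of the constants. Knowing that $\hat V_{\text{old}}$ estimates the incumbent policy does not remove its own $\pm\eta$ error, and the two errors can have opposite signs. Take $\hat V_{\text{old}}=V_{\text{old}}+\eta$, $\hat V_{\text{new}}=V_{\text{new}}-\eta$ and $\hat V_{\text{new}}=\hat V_{\text{old}}+2\eta$: the test terminates, yet $V_{\text{new}}=V_{\text{old}}+4\eta$. So with threshold $2\eta$ and accuracy $\eta$, the terminate branch certifies only $\epsilon+4\eta\le 5\eta$ for the proposing agent. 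The accept branch certifies only a strict gain at one state. Similarly, ``no accepted policy lowers $\Phi$'' does not follow from the reject branch. An accept only requires $\hat V_{\text{new}}\ge\hat V_{\text{old}}-2\eta$ at the other states, so the true value can fall there. Your fallback of rescaling is therefore the only available fix, and it proves a $5\eta$-type statement rather than the $2\eta$ certificate the later theorems use.

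The second gap is that your terminate property is one-sided. It bounds only the gap of the agent whose proposal was just tested, but the lemma asserts that the returned pair is an approximate Nash equilibrium, which also needs the other agent's gap. The paper's Case 3 argument supplies exactly this. The other agent's incumbent was itself output by \texttt{G-LEARN} or \texttt{L-LEARN} as an approximate best response to the current opponent policy, so a deviation gaining $2\eta$ would contradict the learner's guarantee, with no estimation loss. You need to add this step. It is valid only when that agent's last proposal was accepted: after a rejection, the incumbent was a best response to an older opponent policy, and a rejection witnessed at one state does not bound $\mathsf{Gap}$, which is a maximum over states. Conversely, the paper's contradiction argument never uses the overlap test, so its ``without loss of generality'' does not cover the proposing agent; your inequality chain is what handles that side. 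Under the proviso above, combining the two gives a $\max\{\epsilon_{\text{other}},\,\epsilon+4\eta\}\le 5\eta$ certificate. Finally, the $\eta$-accuracy event is posited rather than derived in the paper. If you derive it, note that \cref{theorem: l-learn} bounds the suboptimality of the learned local policy, not the error of the returned estimate $\widehat V^{\pi_\ell,\pi_g}$, so that piece needs a separate argument.
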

\begin{proof}We consider policy updates in the order $\pi_g, \pi_\ell, \pi_g', \pi_\ell'$, where $\pi_\ell$ is the local agent's best-response to $\pi_g$, $\pi_g'$ is the global agent's best-response to $\pi_\ell$, and $\pi_\ell'$ is the local agent's best-response to $\pi_g'$. Recall that for all $s\in\mathcal{S}_g\times\mathcal{S}_l^k$, we have that with probability at least $1-\delta$, where $V$ is the true value of the best-response policy and $\hat{V}$ is the value of the learned best-response policy,
 \begin{equation}|V^{\pi_\ell,\pi_g}(s) - \hat{V}^{\pi_\ell,\pi_g}(s)| < \eta,\end{equation}
    \begin{equation}|V^{\pi_\ell',\pi_g'}(s) - \hat{V}^{\pi_\ell',\pi'_g}(s)| < \eta,\end{equation}    \begin{equation}|V^{\pi_\ell,\pi_g'}(s) - \hat{V}^{\pi_\ell,\pi'_g}(s)| < \eta.\end{equation}

\begin{figure*}[t]
\centering
\begin{tikzpicture}[
  >=Latex,
  font=\small,
  x=1.15cm,y=1cm,
  region/.style={rounded corners=2pt, draw=black!20, line width=0.4pt},
  act/.style={rounded corners=2pt, draw=black!30, fill=white, line width=0.4pt,
              inner sep=4pt, align=center, text width=3.2cm},
  tinybox/.style={rounded corners=2pt, draw=black!35, fill=white, line width=0.4pt,
                  inner sep=4pt, align=center, text width=3.0cm},
  label/.style={align=center, text width=3.2cm},
]

% ---------------- Top bar (Δ_k) ----------------
\def\ybar{3.0}
\def\h{0.55}
\def\xL{-4.4}
\def\xR{ 4.4}
\def\xm{-1.9}   % -epsilon location
\def\xp{ 1.9}   % +epsilon location

% subtle backgrounds (lighter than your screenshot)
\fill[red!10]   (\xL,\ybar+\h) rectangle (\xm,\ybar-\h);
\fill[black!7]  (\xm,\ybar+\h) rectangle (\xp,\ybar-\h);
\fill[green!10] (\xp,\ybar+\h) rectangle (\xR,\ybar-\h);

% outline + separators
\draw[region] (\xL,\ybar+\h) rectangle (\xR,\ybar-\h);
\draw[black!25] (\xm,\ybar+\h) -- (\xm,\ybar-\h);
\draw[black!25] (\xp,\ybar+\h) -- (\xp,\ybar-\h);

% axis with ticks (minimal)
\draw[->,black!65] (\xL-0.2,\ybar) -- (\xR+0.2,\ybar)
  node[right] {$\Delta_k$};
\draw (\xm,\ybar) -- ++(0,0.16) -- ++(0,-0.32) node[below] {$-2\eta$};
\draw (0,\ybar)  -- ++(0,0.16) -- ++(0,-0.32) node[below] {$0$};
\draw (\xp,\ybar) -- ++(0,0.16) -- ++(0,-0.32) node[below] {$+2\eta$};

% region headers (short!)
\node[label] at (-3.1,\ybar+0.95) {$\Delta_k \le - 2\eta$\\ \textbf{reject}};
\node[label] at (0,\ybar+1.05)   {$|\Delta_k|\le 2\eta$\\ \textbf{stop}};
\node[label] at (3.1,\ybar+0.95) {$\Delta_k \ge +2\eta$\\ \textbf{accept}};

% Δ definition (keep it compact and below the bar, centered)

% ---------------- Actions row (aligned under regions) ----------------
\def\yact{0.55}

\node[act] (rej) at (-3.1,\yact)
{\textbf{Reject proposal}\\
keep $(\pi_g^{k},\pi_\ell^{k})$};

\node[act] (stop) at (0,\yact)
{\textbf{Stop \& certify}\\
output $(\pi_g^{k},\pi_\ell^{k})$ as a \\
$2\eta$-approximate Nash certificate};

\node[act] (acc) at (3.1,\yact)
{\textbf{Accept proposal}\\
use $(\pi_g',\pi_\ell')$};

% arrows down from regions to actions (clean and symmetric)
\draw[->,black!55] (-3.1,\ybar-\h-0.05) -- (rej.north);
\draw[->,black!55] ( 0.0,\ybar-\h-0.05) -- (stop.north);
\draw[->,black!55] ( 3.1,\ybar-\h-0.05) -- (acc.north);

% short middle-case hint (optional; keep tiny)
\node[align=center, font=\scriptsize, text width=5.4cm] at (0,-0.8)
{If $|\Delta_k|\le2\eta$, where
$\Delta_k \coloneqq V^{\pi_g',\pi_\ell'} - V^{\pi_g,\pi_\ell}$ then no unilateral approximate\\
best-response gain exceeds $2\eta$.};

\end{tikzpicture}
\caption{Decision rule for consecutive best-response iterates using the progress statistic
$\Delta_k = V^{\pi_g',\pi_\ell'} - V^{\pi_g,\pi_\ell}$. The middle region $|\Delta_k|\le2\eta$ triggers stopping and provides a $2\eta$-approximate Nash certificate. }
\end{figure*}

Therefore, we delineate to the following cases:

\textbf{Case 1:} Accept the new policy because the new pair is better on every state by at least $2\eta$, i.e. the $\eta$-interval of the new policy lies ahead of the $\eta$-interval of the old policy. In other words, $\hat{V}^{\pi_\ell',\pi_g} > 
\hat{V}^{\pi_\ell,\pi_g} + 2\eta$.

  \textbf{Case 2:} Reject the new policy because there is a state for which the new pair is clearly worse by at least $2\eta$, i.e. the $\eta$-interval of the old policy lies ahead of the $\eta$-interval of the new policy.  In other words, $\hat{V}^{\pi_\ell',\pi_g} < \hat{V}^{\pi_\ell,\pi_g} - 2\eta$.

   \textbf{Case 3:} The intervals of the policies overlap, i.e. $\hat{V}^{\pi_\ell,\pi_g} - 2\eta \leq \hat{V}^{\pi_\ell',\pi_g'} \leq \hat{V}^{\pi_\ell,\pi_g} + 2\eta$. In this case, we terminate the algorithm since $\pi_g, \pi_\ell$ is already a $2\eta$-NE. To see this, suppose for the sake of contradiction that this is not the case. Then, without loss of generality, there exists a local agent policy $\tilde{\pi}_\ell$ such that $V^{\tilde{\pi}_\ell,\pi_g} \geq V^{\pi_\ell,\pi_g} + 2\eta$, with probability at least $1-\delta$. However, this implies that the (approximate probabilistic) best-response of the local agents algorithm cannot have been $\pi_\ell$, which contradicts \cref{thm: local agent optimality and sample complexity}. Therefore, the initial assumption is wrong and $\pi_g, \pi_\ell$ is a $2\eta$-NE.

   Together, these imply the correctness of the \texttt{UPDATE} function in \texttt{ALTERNATING-MARL}.\qedhere
\end{proof} 

\begin{lemma}[Convergence of the \texttt{ALTERNATING-MARL} dynamics to a $2\eta$-approximate Nash-Equilibrium]\label{lemma: bound nsteps} Let $\Pi$ be the restricted joint space, and let \[\Psi(\pi) = J_\rho(\pi)\coloneqq \E_{s_0\sim \rho} \E_{\pi}\left[\sum_{t=0}^\infty \gamma^t r(s_t, a_t)\right]\]
be the common welfare potential. Assume $0\leq r(s, a)\leq \tilde{r}$ for all $ (s,a)$, and let $\{\pi^q\}_{q\geq 0}$ denote the subsequence of policy profiles after accepted updates only. Suppose the update rule accepts a proposed policy profile only if $\Psi(\pi^{q+1}) \geq \Psi(\pi^q) + \eta$ for some chosen threshold $\eta>0$. Then, the number of accepted updates satisfies 
\[N_{acc} \leq \frac{\tilde{r}}{\eta(1-\gamma)},\] and the total number of iterations until we are at a $2\eta$-approximate Nash Equilibrium satisfies \[N_{steps}\leq \left\lceil \frac{\tilde{r}}{\eta(1-\gamma)}\right\rceil + 1.\]

\end{lemma}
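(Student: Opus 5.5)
The argument is a standard bounded-monotone-potential (telescoping) argument. It uses \cref{lemma: M is a Markov potential game} to identify the welfare $\Psi(\pi)=J_\rho(\pi)$ as an exact potential, and it uses \cref{lemma: correctness of interval comparison} for what happens at termination.

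\textbf{Step 1: $\Psi$ lies in a bounded interval.} Under Assumption~\ref{assumption: bounded rewards} we have $0\le r(s,a)\le\tilde r$. Summing the geometric series gives, for every $\pi\in\Pi$,
\[
0\le\Psi(\pi)\le \frac{\tilde r}{1-\gamma}.
\]
So the potential takes values in an interval of length $\tilde r/(1-\gamma)$.

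\textbf{Step 2: counting accepted updates.} Consider the subsequence $\{\pi^q\}$ of profiles after accepted updates. By hypothesis, each acceptance raises the potential: $\Psi(\pi^{q+1})\ge\Psi(\pi^q)+\eta$. Telescoping from $q=0$ to $N_{acc}$ gives
\[
N_{acc}\,\eta\le \Psi(\pi^{N_{acc}})-\Psi(\pi^0)\le \frac{\tilde r}{1-\gamma},
\]
which is the first bound, $N_{acc}\le \tilde r/(\eta(1-\gamma))$. Only the increment hypothesis is used here.

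\textbf{Step 3: bounding the total number of iterations.} Each iteration of \texttt{ALTERNATING-MARL} ends with one of three outcomes of \texttt{UPDATE}.
\begin{itemize}
\item \emph{Accept.} The potential increases. At most $\lceil \tilde r/(\eta(1-\gamma))\rceil$ such iterations can occur, by Step 2.
\item \emph{Terminate.} The algorithm stops. By Case~3 of \cref{lemma: correctness of interval comparison}, the current profile is then a $2\eta$-NE (with probability $1-\delta$ over the learners' guarantees).
\item \emph{Reject.} The profile is unchanged, so $\Psi$ is unchanged.
\end{itemize}
Rejections are the main obstacle, because they neither raise $\Psi$ nor stop the algorithm. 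The claimed bound $\lceil\cdot\rceil+1$ must therefore exclude non-terminal rejections.

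To rule these out, I argue that a reject cannot occur without a terminate or accept following it. The learner's proposal is an $\eta$-best response to the current opponent policy (\cref{theorem: g-learn}, \cref{theorem: l-learn}), and the current profile is itself a feasible candidate response. Hence the true value of the proposal is at least the current value minus $\eta$. Combined with the $\eta$-accuracy of the value estimates, the estimated value of the proposal cannot fall below the old estimate by more than $2\eta$. So the reject branch is never triggered on the good event. If the estimation tolerances make this slightly off, the fallback is to absorb the discrepancy by adjusting constants. Alternatively, one can note that a rejection certifies that the current policy is within $2\eta$ of the proposer's best response, so it can be charged as a termination.

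\textbf{Step 4: conclusion.} On the good event, every iteration either accepts or terminates. There are at most $\lceil \tilde r/(\eta(1-\gamma))\rceil$ accepts, plus one final terminating step, which outputs a $2\eta$-approximate NE. This gives
\[
N_{steps}\le \left\lceil \frac{\tilde r}{\eta(1-\gamma)}\right\rceil+1.
\]
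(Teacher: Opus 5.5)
Your Steps 1 and 2 match the paper exactly. Step 3 has a genuine gap, and it sits exactly at the obstacle you identified.

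Your chain of inequalities does not give what you claim. The $\eta$-best-response property against the feasible candidate $\pi_i$ gives $V^{\tilde\pi_i,\pi_{-i}}\ge V^{\pi}-\eta$. Combining this with $|\hat V_{\text{new}}-V^{\tilde\pi_i,\pi_{-i}}|<\eta$ and $|\hat V_{\text{old}}-V^{\pi}|<\eta$ yields only $\hat V_{\text{new}}>\hat V_{\text{old}}-3\eta$. So the reject test $\hat V_{\text{new}}<\hat V_{\text{old}}-2\eta$ is not excluded on the good event.

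Neither fallback repairs this:
\begin{itemize}
\item Re-tuning constants changes the threshold, and with it the ``$2\eta$'' in the statement.
\item A rejection cannot be charged as a termination. It is triggered by a drop at a single state. It therefore bounds only the proposer's deviation gain at that one state and says nothing about the other player, so it does not certify a $2\eta$-approximate NE.
\end{itemize}
A rejection also leaves $\Psi$ unchanged. Nothing in your argument limits how many rejections occur, so the count $\lceil \tilde r/(\eta(1-\gamma))\rceil+1$ does not follow.

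The missing idea, which is how the paper argues, is to count iterations until the profile \emph{is} a $2\eta$-NE, not until the algorithm stops. This makes rejections irrelevant. The argument runs as follows.
\begin{itemize}
\item If the current $\pi$ is not a $2\eta$-NE, some player $i$ has gap $\Delta_i(\pi)>2\eta$.
\item Its $\eta$-approximate best response then satisfies $V^{\tilde\pi_i,\pi_{-i}}(s_0)-V^{\pi}(s_0)\ge \Delta_i(\pi)-\eta>\eta$.
\item By the exact-potential property (\cref{lemma: M is a Markov potential game}), this is a potential increase of more than $\eta$. That is exactly the kind of improving update counted in Step 2.
\end{itemize}
The paper implicitly treats such an improving proposal as accepted, in line with the lemma's abstract accept rule. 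Hence every sweep that begins outside the $2\eta$-NE set contains an improving update, and there are at most $\tilde r/(\eta(1-\gamma))$ of these. A sweep without one can occur only once a $2\eta$-NE has been reached, which accounts for the $+1$.
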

\begin{proof} 
Since the reward is nonnegative and bounded by $\tilde{r}$, we have that $0\leq r(s_t, a_t)\leq \tilde{r}$ for all $t$. Therefore, for every policy profile $\pi$, we have that \[0\leq \Psi(\pi) = \E_\pi\left[\sum_{t=0}^\infty \gamma^t r(s_t, a_t)\right]\leq \sum_{t=0}^\infty \gamma^t \tilde{r} = \frac{\tilde{r}}{1-\gamma}.\]
Thus, the potential range is bounded as\[\Psi_{max} - \Psi_{min} \leq \frac{\tilde{r}}{1-\gamma}.\]
Now, suppose there are $N_{acc}$ accepted updates. By the threshold accept rule, each accepted update raises the potential by at least $\eta$. Therefore, for $q=\{0,\dots,N_{acc} - 1\}$, we have that \[\Psi(\pi^{q+1}) - \Psi(\pi^q) \geq \eta.\]
Summing over all accepted updates gives \[\sum_{q=0}^{N_{acc}-1}[\Psi(\pi^{q+1}) - \Psi(\pi^q)] \geq N_{acc} \eta,\]
and the left-hand side telescopes give \[\sum_{q=0}^{N_{acc}-1}[\Psi(\pi^{q+1}) - \Psi(\pi^q)]  = \Psi(\pi^{N_{acc}}) - \Psi(\pi^0).\]
Therefore, we have $N_{acc} \eta \leq \Psi(\pi^{N_{acc}}) - \Psi(\pi^0) \leq \Psi_{max} - \Psi_{min} \leq \frac{\tilde{r}}{1-\gamma}$. Dividing by $\eta$, we have $N_{acc} \leq \frac{\tilde{r}}{\eta(1-\gamma)}$, which proves the first bound. 
The second bound follows by noting that if the current joint policy $\pi$ is not a $2\eta$-NE, then player $i\in \{g,\ell\}$ has a profitable deviation such that
\[\Delta_i(\pi) = \max_{\pi_i’} \left[V^{\pi_i’, \pi_{-i}}(s_0) - V^\pi(s_0)\right] > \eta.\]
Therefore, the approximate BR strictly improves player $i$’s value by at least $\Delta_i(\pi) - \eta$, and hence the potential increases by the same value:
\[\Phi(\tilde{\pi}_{i}, \pi_{-i}) - \Phi(\pi) = V^{\tilde{\pi}_i, \pi_{-i}}(s_0) - V^\pi(s_0) \geq \Delta_i(\pi) - \eta.\]
Hence, as long as the current policy is not a $2\eta$-NE, whichever player has gap $\epsilon$ can, when chosen to update, increase the potential by at least $\epsilon - \eta$ (where $\epsilon=2\eta)$. 
When we are outside the $\epsilon$-NE set and a player with gap $> \epsilon$ is updated, the potential increases by at least $\epsilon - \eta$. The number of such improving updates is at most $\frac{\tilde{r}}{\eta(1-\gamma)}$, we proved previously. Moreover, if any alternating sweep has no accepted update, then the algorithm terminates. Hence, the number of total sweeps is at most\[N_{steps}\leq \left\lceil \frac{\tilde{r}}{\eta(1-\gamma)}\right\rceil + 1,\]
which proves the theorem.\qedhere\\
\end{proof}

\begin{theorem}[Bounding $\eta$ and computing the sample complexity required for a $2\eta$-approximate Nash Equilibrium]\label{thm: bounding sample complexity} Fix $\delta\in(0,1)$. Let \[\eta = \frac{2\tilde{r}}{(1-\gamma)^2}\left(\sqrt{\frac{n-k+1}{2nk} \ln(2|\mathcal{S}_l||\mathcal{A}_g|\sqrt{k})} +  \frac{1}{\sqrt{k}}\right) + \frac{4}{\sqrt{k}(1-\gamma)} + \frac{1}{\sqrt{k}}.\]
Then, with probability at least $1-\delta$, \emph{\texttt{ALTERNATING-MARL}} learns a  $2\eta$-NE with total sample complexity
\[\tilde{O}\left(\min\left\{\frac{  \|r_l\|^3_\infty |\mathcal S_g|^2 |\mathcal A_l| |\mathcal A_g| |\mathcal S_l|^5  k^{2|\mathcal S_l|+4.5}}{(1-\gamma)^6} \log\frac1\delta, \frac{k^{6.5} \|r_l\|^3_\infty |\cS_g|^2 |\mathcal A_l| |\cA_g| |\cS_l|^k }{(1-\gamma)^6}\log \frac1\delta\right\}\right),\]
\end{theorem}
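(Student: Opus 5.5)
The plan is to combine three earlier ingredients: the per-call guarantees for \texttt{G-LEARN} and \texttt{L-LEARN}, the correctness of the \texttt{UPDATE} rule (\cref{lemma: correctness of interval comparison}), and the potential-based bound on the number of alternating sweeps (\cref{lemma: bound nsteps}). The argument has two parts. First, establish the $2\eta$-NE guarantee on a single high-probability event. Second, multiply the per-call sample costs by the number of sweeps.

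\textbf{The approximation scale.} For \texttt{G-LEARN}, \cref{pdl result in otilde form}, run with $m$ chosen as in \cref{lemma: epsilon_km_is_k}, gives $\epsilon_g(k)$. This quantity equals the first two terms of the stated $\eta$, with the $\epsilon_{k,m}\le 2/\sqrt{k}$ contribution producing the $4/(\sqrt{k}(1-\gamma))$ term. For \texttt{L-LEARN}, invoke \cref{thm: local agent optimality and sample complexity} with target accuracy $\epsilon_\ell = 1/\sqrt{k}$. Together with the horizon truncation error from \cref{truncation lemma}, this gives the final $1/\sqrt{k}$ term, and hence $\eta\ge\max\{\epsilon_g(k),\epsilon_\ell\}$.

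\textbf{The good event.} Each learner is called at most once per sweep, with failure probability $\delta_{\text{learn}}=\delta/(4N_{\text{steps}})$. A union bound over the $2N_{\text{steps}}$ calls gives failure probability at most $\delta/2$. Separately, the failure probability $1/(100e^k)$ of \cref{lemma: controlling bellman noise} is pushed below $\delta/(4N_{\text{steps}})$ by enlarging the $\ln(1/\delta)$ factor inside $m$.

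\textbf{Termination and the NE certificate.} On this event, every value estimate lies within $\eta$ of the truth, so \cref{lemma: correctness of interval comparison} applies. An accepted proposal raises the potential $\Psi$ of \cref{lemma: M is a Markov potential game} by at least $2\eta-2\eta_{\text{est}}\ge\eta$. Termination certifies a $2\eta$-NE. \cref{lemma: bound nsteps} then bounds the number of sweeps by $\lceil\tilde r/(\eta(1-\gamma))\rceil+1$. Since $\eta=\tilde\Theta(\tilde r/(\sqrt{k}(1-\gamma)^2))$, this is $O(\sqrt{k}(1-\gamma))$, which is within the budget $N_{\text{steps}}=\tilde r\sqrt{k}/(1-\gamma)$. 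So the algorithm terminates, with the NE certificate, before the loop exhausts.

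\textbf{Sample complexity.} Total samples are $N_{\text{steps}}\times(\text{\texttt{G-LEARN} cost}+\text{\texttt{L-LEARN} cost})$.
\begin{itemize}
\item \texttt{G-LEARN} (\cref{lemma: epsilon_km_is_k}) costs $\tilde O\big(k^3\|r_l\|^2|\cS_g||\cA_g|\min\{|\cS_l|^2k^{|\cS_l|},k|\cS_l|^k\}/(1-\gamma)^5\big)$.
\item \texttt{L-LEARN} with $\epsilon=1/\sqrt{k}$ costs $\tilde O\big(\min\{|\cS_l|^5|\cS_g|^2k^{2|\cS_l|+1}|\cA_l|/(1-\gamma)^2,\;k^6|\cS_g|^2|\cS_l|^k|\cA_l|/(1-\gamma)^3\}\log\frac1\delta\big)$.
\end{itemize}
The sum is bounded by the product form $|\cS_g|^2|\cA_g||\cA_l|$, and multiplying by $N_{\text{steps}}=\tilde r\sqrt{k}/(1-\gamma)$ yields the $k^{2|\cS_l|+4.5}$ and $k^{6.5}$ exponents.

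\textbf{Main obstacle.} I expect the exponent bookkeeping to be the hardest step: the powers of $(1-\gamma)$ and $\|r_l\|$ do not match cleanly between the two learners. To get the uniform $(1-\gamma)^{-6}$ and $\|r_l\|^3$, I would bound each term loosely using $\gamma\le1$, $(1-\gamma)^{-a}\le(1-\gamma)^{-6}$, and $\|r_l\|\le\tilde r$. I would also absorb the mismatched $k$ powers by taking the larger exponent in each regime, so the claim holds as an upper bound rather than an exact match.
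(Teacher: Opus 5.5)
Your skeleton matches the paper's proof step for step. You take $\eta$ as the larger of the \texttt{G-LEARN} error from \cref{pdl result in otilde form} and the \texttt{L-LEARN} accuracy $1/\sqrt{k}$. You union bound with per-call failure probability $\Theta(\delta/N_{\text{steps}})$, bound the number of sweeps with \cref{lemma: bound nsteps}, and multiply the per-sweep cost by $N_{\text{steps}}$ with exponents loosened.

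\textbf{The gap.} It is in the one step you add that the paper leaves implicit: checking the hypothesis of \cref{lemma: bound nsteps} that every accepted update raises $\Psi$ by at least $\eta$. You write that an acceptance raises $\Psi$ by $2\eta-2\eta_{\text{est}}\ge\eta$. But the only accuracy you have is $\eta_{\text{est}}=\eta$, which is all that \cref{lemma: correctness of interval comparison} provides, and this gives $2\eta-2\eta=0$. Without an increment bounded below by a fixed amount, the telescoping bound on $N_{\text{acc}}$ is lost. So is the conclusion that the loop terminates with a certificate before the budget runs out.

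\texttt{UPDATE} as written makes this worse. It accepts when $\hat V_{\text{new}}>\hat V_{\text{old}}+2\eta$ at \emph{some} state, after checking $\hat V_{\text{new}}\ge\hat V_{\text{old}}-2\eta$ everywhere. With $\eta$-accurate estimates this only gives $V_{\text{new}}>V_{\text{old}}$ at one state and $V_{\text{new}}>V_{\text{old}}-4\eta$ elsewhere. So $\Psi=\E_{s_0\sim\rho}V(s_0)$ may even decrease.

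To repair this step:
\begin{itemize}
\item Require value estimates accurate to $\eta/2$. This is a constant-factor change in every sample size and is invisible in $\tilde{O}$.
\item Base the acceptance test on the quantity you telescope: the $\rho$-averaged estimate of $\Psi$, or improvement at every state as in Case 1 of the correctness lemma.
\end{itemize}
With both changes, each acceptance raises $\Psi$ by more than $\eta$.

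\textbf{Two related points.} First, the claim that ``on this event every value estimate lies within $\eta$ of the truth'' does not follow from the guarantees you invoke. Those bound the suboptimality of the returned \emph{policies}, not the error of the returned $\hat V$'s. The \texttt{APPROX-VALUE} output lives on the $k$-agent subsystem, and the UCFH estimate lives on the chained MDP. The paper only asserts this accuracy inside \cref{lemma: correctness of interval comparison}, so it needs its own argument. For the global side, that could go through \cref{theorem: Q-lipschitz of Fsdelta and Fsn} applied to $\hat{Q}^T_{k,m}$.

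Second, in the bookkeeping, $\|r_l\|_\infty\le\tilde r$ goes the wrong way for producing $\|r_l\|_\infty^3$. The factor $\tilde r$ contributed by $N_{\text{steps}}$ can be replaced by $\|r_l\|_\infty$ only if you treat the two reward scales as comparable constants, as the paper implicitly does.

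The paper's proof has all three of these holes; it simply cites \cref{lemma: bound nsteps} without verifying its hypothesis. So your write-up is no less complete than the paper's, but the inequality $2\eta-2\eta_{\text{est}}\ge\eta$ is false as you state it.
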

\begin{proof}
From \cref{lemma: epsilon_km_is_k}, we have that with probability at least $1-\delta_1$, with $M_{\delta_1}$ samples where 
\begin{align*}M_{\delta_1} &= \frac{2k^3\gamma^2\|r_l\|^2_\infty |\cS_g| |\cA_g|}{(1-\gamma)^5}\log \frac{\tilde{r}\sqrt{k}}{1-\gamma} \cdot \ln \left(\frac{2  |\cS_g||\cA_g| |\cS_l|}{\delta_1}\right)\cdot \min\left\{|\cS_l|^2  k^{|\cS_l|} ,  k |\cS_l|^k \right\},
\end{align*}

that the global agent learns an $\epsilon_g(k)\coloneqq \sup_{s\in \cS}(V^{\pi^*}(s) - V^{\pi_{k,m}}(s))$ best-response, where
\[\epsilon_g(k) = \frac{2\tilde{r}}{(1-\gamma)^2}\left(\sqrt{\frac{n-k+1}{2nk} \ln(2|\mathcal{S}_l||\mathcal{A}_g|\sqrt{k})} +  \frac{1}{\sqrt{k}}\right) + \frac{4}{\sqrt{k}(1-\gamma)}.\]
Similarly, with probability at least $1-\delta_2$, we have from Theorem \ref{thm: local agent optimality and sample complexity} that the local agent learns an $\epsilon_\ell=1/\sqrt{k}$ best-response with $N_{\delta_2}$ samples, where
\begin{align*}N_{\delta_2} &= \tilde{O}\left(\min\left\{\frac{|\cS_l|^5|\cS_g|^2 k^{2|\cS_l|+1}|\cA_l|}{(1-\gamma)^2} \ln \frac1{\delta_2}, \frac{k^6  |\cS_g|^2 |\cS_l|^k |\cA_l|}{(1-\gamma)^3} \ln \frac1{\delta_2}\right\}\right).\\
\end{align*}
Then, with probability at least $1 - (\delta_1+\delta_2)T$, we have that the agent dynamics reaches a $2\eta$-NE, where 
\begin{align*}\eta &\coloneqq \max\{\epsilon_g(k), \epsilon_\ell\}\\
&\leq  \frac{2\tilde{r}}{(1-\gamma)^2}\left(\sqrt{\frac{n-k+1}{2nk} \ln(2|\mathcal{S}_l||\mathcal{A}_g|\sqrt{k})} +  \frac{1}{\sqrt{k}}\right) + \frac{4}{\sqrt{k}(1-\gamma)} + \frac{1}{\sqrt{k}} \leq \tilde{O}\left(\frac{1}{\sqrt{k}}\right).
\end{align*}
Then, via Lemma \ref{lemma: bound nsteps}, we have that 
\[ N_{steps}(k) \leq 
\frac{\tilde{r}}{\eta(1-\gamma)}
 \leq \tilde{O}\left(\frac{\tilde{r}\sqrt{k}}{1-\gamma}\right).\]

If we seek a success probability of at least $1-\delta$, we can set $\delta_1 = \delta_2 = \delta/2N_{steps}$, and the result would follow by taking union-bounds.\\

For  the global agent, this gives a sample complexity of
\begin{align*}M_{\delta/2N_{steps}} &= \frac{2k^3\gamma^2\|r_l\|^2_\infty |\cS_g| |\cA_g|}{(1-\gamma)^5}\log \frac{\tilde{r}\sqrt{k}}{1-\gamma} \cdot \ln \left(\frac{2  |\cS_g||\cA_g| |\cS_l|\tilde{r}\sqrt{k}}{\delta(1-\gamma)}\right)\!\cdot\! \min\left\{|\cS_l|^2  k^{|\cS_l|} ,  k |\cS_l|^k \right\},
\end{align*}
For the local agent this gives a sample complexity of 
\[N_{\delta/2N_{steps}} = \min\left\{\frac{|\cS_l|^5|\cS_g|^2 k^{2|\cS_l|+1}|\cA_l|}{(1-\gamma)^2} \ln \frac{\tilde{r}\sqrt{k}}{\delta(1-\gamma)}, \frac{k^6  |\cS_g|^2 |\cS_l|^k |\cA_l|}{(1-\gamma)^3} \ln \frac{\tilde{r}\sqrt{k}}{\delta(1-\gamma)}\right\}.\]
Putting them together for $N_{steps}$, we have that the sample complexity of learning a $\tilde{O}(1/\sqrt{k})$-NE with probability at least $1-\delta$ is at most:
\begin{itemize}\item  In the mean-field regime:
\begin{align*}
    &N_{steps} \cdot (M_{\delta/2N_{steps}} + N_{\delta/2N_{steps}}) \\
    &\quad \leq \tilde{O}\left(\frac{k^{3.5}\gamma^2\|r_l\|^3_\infty |\cS_g| |\cA_g| |\cS_l|^2 k^{|\mathcal S_l|}}{(1-\gamma)^6}\log \frac1\delta + \frac{\sqrt{k}|\cS_l|^5|\cS_g|^2 k^{2|\cS_l|+1}|\cA_l| \|r_l\|_\infty}{(1-\gamma)^3} \log\frac1\delta \right) \\
    &\quad \leq \tilde{O}\left(\frac{k^{3.5} \|r_l\|^3_\infty |\mathcal S_g|^2 |\mathcal A_l| |\mathcal A_g| |\mathcal S_l|^5  k^{2|\mathcal S_l|+1}}{(1-\gamma)^6} \log\frac1\delta\right).
\end{align*}
\item In the standard-parameterization:
\begin{align*}
 &N_{steps} \cdot (M_{\delta/2N_{steps}} + N_{\delta/2N_{steps}}) \\
    &\quad \leq \tilde{O}\left(\frac{k^{4}\gamma^2\|r_l\|^3_\infty |\cS_g| |\cA_g| |\cS_l|^k }{(1-\gamma)^6}\log \frac1\delta +  \frac{k^{6.5}  |\cS_g|^2 |\cS_l|^k |\cA_l| \|r_l\|_\infty }{(1-\gamma)^4} \log\frac1\delta\right) \\
    &\quad \leq \tilde{O}\left(\frac{k^{6.5} \|r_l\|^3_\infty |\cS_g|^2 |\mathcal A_l| |\cA_g| |\cS_l|^k }{(1-\gamma)^6}\log \frac1\delta\right)
\end{align*}
\end{itemize}

Together, we have that with probability at least $1-\delta$, \texttt{ALTERNATING-MARL} learns a $\tilde{O}(1/\sqrt{k})$-NE in sample complexity
\[\tilde{O}\left(\min\left\{\frac{  \|r_l\|^3_\infty |\mathcal S_g|^2 |\mathcal A_l| |\mathcal A_g| |\mathcal S_l|^5  k^{2|\mathcal S_l|+4.5}}{(1-\gamma)^6} \log\frac1\delta, \frac{k^{6.5} \|r_l\|^3_\infty |\cS_g|^2 |\mathcal A_l| |\cA_g| |\cS_l|^k }{(1-\gamma)^6}\log \frac1\delta\right\}\right),\]
which proves the theorem.\qedhere
\end{proof}

\subsection{Comparison with Mean-Field Subsampling for Optimality Guarantees}

In this subsection, we provide a comparison of the sample complexities and types of theoretical guarantees between our work and the work in \cite{anand2025meanfield} in Table \ref{tab:comparison table}. While \cite{anand2025meanfield} provides guarantees for the stronger notion of $\epsilon$-approximate optimality of the learned policy (where $\epsilon=\tilde{O}(1/\sqrt{k})$), it does so with a restrictive cost on the size of the local agent's action space $\mathcal A_l$, which can be very large in practice. By focusing on $\epsilon$-approximate Nash Equilibria, Our work removes the expensive dependence on $|\mathcal A_l|$. Secondly, \cite{anand2025meanfield} requires full information: each agent needs to see every other agent; in contrast, the algorithmic framework in \texttt{ALTERNATING-MARL} only needs that the global agent should see $k$ local agents, while each local agent only needs to see the global agent's state. Finally, we note that in a number of settings, it is possible that a Nash Equilibrium solution might also be an optimum solution (under certain global stability regularity assumptions). In these settings, \texttt{ALTERNATING-MARL} provides a strict improvement over the mean-field sampling framework of  \cite{anand2025meanfield}.

\begin{table}[hbt!]
    \centering
    \begin{adjustbox}{max width=\textwidth}
    \begin{tabular}{lccc}
    \toprule
      & Mean-Field Sampling Work \cite{anand2025meanfield} & Our work 
      (\texttt{ALTERNATING-MARL}) \\
    \toprule
    \textbf{Learning Guarantees} &  {$\epsilon$-approximate Optimality} &  {$\epsilon$-approximate Nash Equilibria}\\
    & & &\\
    \textbf{Communication Restrictions} & Requires full information & Global agent only needs to see $k$ local agents\\  
    & & Local agents only need to see the global agent's state\\
    & & &\\
    \textbf{Sample Complexities} & $\tilde{O}(|\mathcal S_g||\mathcal A_g||\mathcal S_l|^k |\mathcal A_l|^k)$ & $\tilde{O}(k^{6.5} |\mathcal S_g|^2 |\mathcal A_g| |\mathcal A_l| |\mathcal S_l|^k)$\\
    & & &\\
    \textbf{Mean-Field Sample Complexities} & $\tilde{O}(|\mathcal S_g||\mathcal S_l||\mathcal A_g||\mathcal A_l| k^{|\mathcal S_l| |\mathcal A_l|})$ & $\tilde{O}(|\mathcal S_g|^2 |\mathcal A_g| |\mathcal A_l| |\mathcal S_l|^5 k^{2|\mathcal S_l|+4.5})$\\
    & & &\\
    \textbf{Error bound} & $\tilde{O}(1/\sqrt{k})$ &$\tilde{O}(1/\sqrt{k})$ \\
     \smallskip
    \end{tabular}
    \end{adjustbox}
    \caption{Comparing \texttt{ALTERNATING-MARL} with the mean-field sampling framework of \cite{anand2025meanfield}}
    \label{tab:comparison table}
\end{table}

\section{Extension to Off-Policy Learning}
\label{Appendix/off-policy}

As in \cite{anand2025meanfield}, this section extends the \texttt{G-LEARN} framework to off-policy learning. A limitation of the planning in Algorithm \labelcref{algorithm: g learn} is that it learns $\hat{Q}_k^*$ in an \emph{offline} manner by assuming a generative oracle access to the transition functions ${P}_g, {P}_l$, and reward function $r$. In certain realistic RL applications, such a generative oracle might not exist, and it is more desirable to perform off-policy learning where the agent continues to learn in an offline manner but from \emph{historical data} \cite{fujimoto2019off}. In this setting, the agents learn the target policy $\hat{\pi}_\kappa^*$ using data generated by a different behavior policy (the strategy it uses to explore the environment). There is a significant body of work on the theoretical guarantees in off-policy learning \cite{chen2021lyapunov,pmlr-v151-chen22i,chen2021finite,chen2025concentration,anand_et_al:LIPIcs.ICALP.2024.10}. In fact, these previous results are amenable to transforming guarantees in offline $Q$-learning to off-policy learning, at the cost of $\log |\cS| |\cA|$ factors in the runtime. 
 
 Thus, this section is devoted to showing that our result in \cref{theorem: main result} satisfies the assumptions of transforming offline $Q$-learning to off-policy $Q$-learning for the subsampled $\hat{Q}_k$-function. We further show, in expectation, how to maintain the decaying optimality gap of $\tilde{O}(1/\sqrt{k})$ of the learned policy $\pi_{k}$, when the randomness is over the exploration policy $\pi_b$.  The iterative off-policy $\hat{Q}_k$-learning algorithm estimates the optimal $\hat{Q}_k$-function as follows: first, a sample trajectory $\{(s_g, s_\Delta, a_g)\}$ is collected using a suitable behavior policy $\pi_{k,b}$. Then, initialize $\hat{Q}_k^{0}: \cS_g\times\cS_l^k \times  \cA_g \to\mathbb{R}$ and let $\alpha>0$ be determined later. For each $t\geq 0$  the iterate $\hat{Q}_k^{t}(s_g, s_\Delta, a_g)$ is updated by
\begin{equation}\label{eqn: off-policy}\hat{Q}_k^{t+1}(s_g, s_\Delta, a_g) = (1-\alpha) \hat{Q}_k^{t}(s_g, s_\Delta, a_g) + \alpha  \left(r(s, a) + \gamma \max_{a_g'} \hat{Q}_k^{t} (s'_g, s'_\Delta, a'_g) \right).\end{equation}
Note that the update in \cref{eqn: off-policy} does not include an expectation and can be computed in a single trajectory via historical data. We make the following ergodicity assumption:\\

\begin{assumption}\label{assumption: ergodic}
    The behavior policy $\pi_b$ satisfies $\pi_b(a|s_g, s_\Delta)>0$ for all $(s_g, s_\Delta, a_g)\in \cS_g \times \cS_l^k \times\cA_g$ and the Markov chain $\cM_{s} = \{s_t\}_{t\geq 0}$ induced by $\pi_b$ is irreducible and aperiodic with stationary distribution $\mu$ and mixing time \[t_\delta(\cM) = \min \left\{t\geq 0: \max_{S_\Delta \in \cS_g\times\cS_l^k} \|P^t(s_g,s_\Delta,\cdot) - \mu(\cdot)\|_{TV} \leq \delta\right\}.\] There are a number of heuristic behavior/exploration  policies that satisfy this assumption \citep{fujimoto2019off}.\\
\end{assumption}

\begin{theorem}
    Let $\pi_\kappa$ be the policy learned through off-policy $\hat{Q}_k$-learning. Then, under \cref{assumption: ergodic}, we have that with probability at least $1 - \frac{1}{100e^k}$,
\begin{align*}
\E[V^{\pi^*}(s_0) - V^{\pi_{k}}(s_0)] &\leq \frac{\tilde{r}}{(1-\gamma)^2}\sqrt{\frac{n-k+1}{2n k}} \sqrt{\ln\frac{40\tilde{r}|\mathcal{S}|_g |\cS_l| |\mathcal{A}_g| k^{|\mathcal{S}_l|+\frac{1}{2}}}{(1-\gamma)^2}}  + \frac{1}{10\sqrt{k}} \\
&= \tilde{O}(1/\sqrt{k}),\end{align*}
where the randomness in the expectation is over the stochasticity of the exploration policy $\pi_b$.\\
\end{theorem}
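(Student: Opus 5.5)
The plan is to reduce the off-policy statement to the offline analysis already in place, namely \cref{corollary: performance_difference_lemma_applied} together with the Lipschitz/TV machinery (\cref{theorem: Q-lipschitz of Fsdelta and Fsn}). The only new ingredient is a bound on how far the off-policy iterate of \cref{eqn: off-policy} is from $\hat{Q}_k^*$, which replaces the generative Bellman-noise bound of \cref{lemma: yuejie bound}.

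First, I would cast \cref{eqn: off-policy} as a stochastic-approximation scheme for the fixed point of the adapted operator $\hat{\cT}_k$. Define the operator
\[
F(\hat Q,(s_g,s_\Delta,a_g,s_g',s_\Delta')) = \bar r^{\pi_\ell}_\Delta(s,a_g)+\gamma\max_{a_g'}\hat Q(s_g',s_\Delta',a_g').
\]
Its expectation under the stationary distribution $\mu$ of the behavior chain, reweighted per coordinate, is $\hat{\cT}_k$. By \cref{lemma: gamma-contraction of adapted Bellman operator}, $\hat{\cT}_k$ is a $\gamma$-contraction in $\|\cdot\|_\infty$. By \cref{lemma: Q-bound}, the iterates remain bounded by $\tilde r/(1-\gamma)$, and \cref{assumption: ergodic} provides uniform ergodicity with mixing time $t_\delta$. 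These are exactly the hypotheses of the finite-sample Markovian $Q$-learning bounds of \cite{chen2021lyapunov,pmlr-v151-chen22i}.

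Applying that result with constant stepsize $\alpha$ gives
\[
\E\|\hat Q_k^{T}-\hat Q_k^*\|_\infty^2\lesssim (1-(1-\gamma)\mu_{\min}\alpha)^{T-t_\alpha}\frac{\tilde r^2}{(1-\gamma)^2}+\frac{\alpha t_\alpha \log(|\cS_g||\cS_l|^k|\cA_g|)}{\mu_{\min}(1-\gamma)^5}\tilde r^2 .
\]
In the mean-field parameterization, the logarithmic factor becomes $\log(|\cS_g||\cS_l||\cA_g|k^{|\cS_l|})$. I would then choose $\alpha$ and $T$ so that the right-hand side is at most $(1/(20\sqrt k))^2$. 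Jensen's inequality then gives $\E\,\epsilon_{k}\le 1/(20\sqrt k)$, where $\epsilon_k$ plays the role of $\epsilon_{k,m}$.

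I expect this step to be the main obstacle. Matching the precise constant $1/(10\sqrt k)$ requires tracking the $\mu_{\min}$, $t_\alpha$ and log factors carefully, and I would simply absorb them into the choice of $T$.

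Second, I would rerun the argument of \cref{lemma: expected_q*bound_with_different_actions} and \cref{lemma: q_star_different_action_bounds} with $\hat{Q}_{k,m}^\est$ replaced by the off-policy output. The TV-concentration term from \cref{theorem: sampling without replacement analog of DKW} is unchanged, and it holds with the union-bound probability of \cref{lemma:union_bound_over_finite_time}. Because the error term enters linearly as $2\epsilon_k$, we can take expectations over $\pi_b$ directly. The performance difference lemma (\cref{theorem: performance difference lemma}) then yields
\[
\E[V^{\pi^*}-V^{\pi_k}]\le\frac{2\|r_l\|_\infty}{(1-\gamma)^2}\sqrt{\tfrac{n-k+1}{2nk}\ln\tfrac{2|\cS_l|}{\delta_1}}+\frac{2\tilde r|\cA_g|\delta_1}{(1-\gamma)^2}+\frac{2\E\epsilon_k}{1-\gamma}.
\]

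Finally, I would choose $\delta_1$ of order $(1-\gamma)^2/(40\tilde r|\cA_g|\sqrt k\cdots)$ so that the middle term is $O(1/\sqrt k)$; this choice produces the logarithm shown in the statement. I would pick the off-policy accuracy so that $2\E\epsilon_k/(1-\gamma)$ plus the middle term is at most $1/(10\sqrt k)$. The $1-\frac{1}{100e^k}$ probability comes from the same union bound as in \cref{lemma: epsilon_km_is_k}, and combining the terms gives the claimed $\tilde O(1/\sqrt k)$ bound.
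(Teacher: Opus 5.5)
Your proposal follows essentially the same route as the paper. You verify the contraction, boundedness and mixing hypotheses, invoke the Markovian stochastic-approximation bound of \cite{chen2021lyapunov} to make $\E\|\hat{Q}_k^T-\hat{Q}_k^*\|_\infty$ small, and substitute that error (which enters linearly and so can be averaged over $\pi_b$) for the Bellman noise in the offline Lipschitz/TV and performance-difference argument. The only loose ends are bookkeeping that the paper shares: with no generative samples, the $1-\frac{1}{100e^k}$ cannot come from the union bound of \cref{lemma: epsilon_km_is_k} (after averaging over $\pi_b$ and $\Delta$ your inequality is deterministic, so the qualifier holds trivially), and the offline chain gives leading coefficient $2\|r_l\|_\infty$ rather than $\tilde r$ unless you use nonnegativity of $r_l$ to sharpen the Lipschitz constant to $\|r_l\|_\infty$.
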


\begin{proposition}\label{fake assumptions}
    Recall that the following are true:
    
    \begin{enumerate}
        \item $\|\hat{Q}_k(s_g, s_\Delta, a_g) - \hat{Q}_k(s_g, s_\Delta', a_g)\| \leq \frac{1}{1-\gamma}\|r_l(\cdot,\cdot)\|_\infty \cdot \mathrm{TV}(F_{s_\Delta}, F_{s_\Delta'})$, for any $(s_g,s_\Delta,a_g) \in \cS_g\times\cS_l^k \times\cA_g$ by Lemma \ref{lemma: Q lipschitz continuous wrt Fsdelta},
        \item $\|\hat{Q}_k\| \leq \frac{\tilde{r}}{1-\gamma}$ by Lemma \ref{lemma: Q-bound},
        \item $\|\hat{Q}^{t+1}_k(s_g, s_\Delta, a_g) - \tilde{Q}_k^{t+1}(s_g, s_\Delta, a_g)\|_\infty \leq \gamma\|\hat{Q}_k^t\ - \tilde{Q}_k^t \|_\infty$ by Lemma \ref{lemma: gamma-contraction of adapted Bellman operator},
        \item The Markov chain $\cM_S$ enjoys a rapid mixing property from \cref{assumption: ergodic},
    \end{enumerate}
\end{proposition}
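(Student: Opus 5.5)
The proposition collects four facts, so I would check each one separately against the earlier result it cites. The only item that needs real work is the first, because its constant is smaller than the one stated in \cref{lemma: Q lipschitz continuous wrt Fsdelta}.

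\textbf{Item 1.} Specialize \cref{lemma: Q lipschitz continuous wrt Fsdelta} to $k'=k$, with two subsets $\Delta,\Delta'$ of size $k$. As stated, that lemma gives the constant $\sum_{t=0}^{T-1}2\gamma^t\|r_l\|_\infty\le \frac{2}{1-\gamma}\|r_l\|_\infty$, which is twice the constant claimed here. To remove the factor $2$, I would redo its base step more carefully. Since $\sum_z (F_{s_\Delta}(z)-F_{s_{\Delta'}}(z))=0$, we may subtract the constant $c=\tfrac12(\max_z\bar r^{\pi_\ell}_l+\min_z\bar r^{\pi_\ell}_l)$ from the reward without changing the sum. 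This gives
\[\Big|\sum_z \bar r^{\pi_\ell}_l(s_g,z)(F_{s_\Delta}(z)-F_{s_{\Delta'}}(z))\Big|\le \tfrac12\big(\max_z\bar r^{\pi_\ell}_l-\min_z\bar r^{\pi_\ell}_l\big)\,\|F_{s_\Delta}-F_{s_{\Delta'}}\|_1\le \|r_l\|_\infty\,\mathrm{TV}(F_{s_\Delta},F_{s_{\Delta'}}).\]
The last inequality uses that rewards lie in $[0,\|r_l\|_\infty]$ by \cref{assumption: bounded rewards}. The same centering applies at every step of the induction in \cref{lemma: expectation Q lipschitz continuous wrt Fsdelta}, and that induction uses \cref{lemma: generalized tvd linear bound} to carry the TV bound forward. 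The per-step contribution then becomes $\gamma^t\|r_l\|_\infty\,\mathrm{TV}$, and summing over $t$ gives a constant of at most $\frac{1}{1-\gamma}\|r_l\|_\infty$ uniformly in $T$. Letting $T\to\infty$ and using $\hat Q_k^T\to\hat Q_k^*$ from \cref{corollary:backprop} transfers the bound to the fixed point.

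\textbf{Item 2.} This is \cref{lemma: Q-bound} applied to every iterate. Since the bound is uniform in $t$, it passes to the limit $\hat Q_k^*$.

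\textbf{Item 3.} Writing $\hat Q_k^{t+1}=\hat{\mathcal T}_k\hat Q_k^t$ and $\tilde Q_k^{t+1}=\hat{\mathcal T}_k\tilde Q_k^t$, this is exactly \cref{lemma: gamma-contraction of adapted Bellman operator}.

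\textbf{Item 4.} This is the content of \cref{assumption: ergodic}. Irreducibility and aperiodicity of the finite chain induced by $\pi_b$ give convergence to the stationary distribution $\mu$ at a geometric rate, so $t_\delta(\mathcal M)=O(\log(1/\delta))$.

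The main obstacle is the factor-$2$ improvement in Item 1. If the centering argument fails to carry through the induction, I would fall back to stating Item 1 with the constant $\frac{2}{1-\gamma}$, which holds directly by \cref{lemma: Q lipschitz continuous wrt Fsdelta}. That only changes absolute constants in the off-policy bound, not its $\tilde O(1/\sqrt{k})$ rate.
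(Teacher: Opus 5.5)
Items 2--4 match the paper's own justification, which for this proposition is nothing more than the citations in the statement. On item 1 you go beyond the paper, and you are right to. \cref{lemma: Q lipschitz continuous wrt Fsdelta} only gives the constant $\sum_{t<T}2\gamma^t\|r_l\|_\infty\to\frac{2}{1-\gamma}\|r_l\|_\infty$, so citing it does not deliver $\frac{1}{1-\gamma}\|r_l\|_\infty$. Your centering step is a valid repair. Because $\sum_z(F_{s_\Delta}(z)-F_{s_{\Delta'}}(z))=0$ and $0\le\bar r^{\pi_\ell}_l\le\|r_l\|_\infty$ by \cref{assumption: bounded rewards}, the reward difference is at most $(\max_z\bar r^{\pi_\ell}_l-\min_z\bar r^{\pi_\ell}_l)\,\mathrm{TV}\le\|r_l\|_\infty\,\mathrm{TV}$. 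The Lipschitz constants then obey $L_{T+1}=\|r_l\|_\infty+\gamma L_T$, and the limit is $\frac{1}{1-\gamma}\|r_l\|_\infty$. The paper's bare citation only supports the factor-$2$ weaker bound, which, as you note, would change only absolute constants downstream.

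One caution concerns how you propagate the bound through the induction. \cref{lemma: generalized tvd linear bound} controls $\mathrm{TV}(\E F_{s'_\Delta},\E F_{s'_{\Delta'}})$. That handles terms linear in $F$, such as the rewards, but not $\E\max_{a_g'}\hat Q_k^T(s_g',F_{s'_\Delta},a_g')$, which is nonlinear in $F$.

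The paper bridges this with \cref{lemma: expectation_expectation_max_swap}, whose last step treats the state-dependent maximizer $a_g^*$ as a fixed action. The inequality $|\E[\max_a f(X,a)-\max_a g(X,a)]|\le\max_a|\E[f(X,a)-g(X,a)]|$ fails in general. For example, take $X$ uniform on $\{1,2\}$, $f(x,a_i)=\mathbbm{1}\{x=i\}$ and $g\equiv\frac12$: the left side is $\frac12$ and the right side is $0$. So ``re-running that induction with centering'' inherits an invalid step.

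Item 1 only involves $k'=k$, so you can avoid that step with a coupling. Form $k(1-\mathrm{TV}(F_{s_\Delta},F_{s_{\Delta'}}))$ disjoint pairs, each consisting of one agent from each configuration in the same local state. Give paired agents the same next state, and give both systems the same $s_g'$. This is allowed because each expectation depends only on its own marginal law, and $\bar P^{\pi_\ell}_l$ depends on the shared current $s_g$, not on $s_g'$.

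Under this coupling, $\mathrm{TV}(F_{s'_\Delta},F_{s'_{\Delta'}})\le\mathrm{TV}(F_{s_\Delta},F_{s_{\Delta'}})$ almost surely. Hence, by the pointwise induction hypothesis,
$|\E\max_a\hat Q_k^T(s_g',F_{s'_\Delta},a)-\E\max_a\hat Q_k^T(s_g',F_{s'_{\Delta'}},a)|\le\E\max_a|\hat Q_k^T(s_g',F_{s'_\Delta},a)-\hat Q_k^T(s_g',F_{s'_{\Delta'}},a)|\le L_T\,\mathrm{TV}(F_{s_\Delta},F_{s_{\Delta'}})$.
Combined with your centered reward bound, this closes the recursion with the sharpened constant. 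It needs neither \cref{lemma: expectation Q lipschitz continuous wrt Fsdelta} nor the max-swap lemma.
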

Treating the single trajectory update of the $\hat{Q}_\kappa$-function as a noisy addition to the expected update from the ideal Bellman operator, \cite{chen2021lyapunov} uses Markovian stochastic approximation to bound $\E[\|\hat{Q}_\kappa^T - \hat{Q}_\kappa^*\|_\infty^2]$. We restate their result:\\

\begin{theorem}[Theorem 3.1 in \cite{chen2021lyapunov} adapted to our setting] Suppose $\alpha_t = \alpha$ for all $t\geq 0$, where $\alpha$ is chosen such that $\alpha t_\alpha(\cM_{s}) \leq c_{Q,0} \frac{(1-\beta_1)^2}{\log |\cS||\cA|}$, where $c_{Q,0}$ is a numerical constant. Then, under Propositions \ref{fake assumptions}, for all $t\geq t_\alpha(\cM_{s})$, we have
\begin{equation}\E[\|\hat{Q}^t_k - \hat{Q}_k^*\|^2_\infty] \leq c_{Q,1}\left(1 - \frac{(1-\gamma)\alpha}{2}\right)^{t-t_\alpha(\cM_{s})} + c_{Q,2} \frac{\log k^{|\cS| |\cA|}}{(1-\gamma)^2}\alpha t_\alpha(\cM_{s}),\end{equation}
    where $c_{Q,1} = 3(\frac{\tilde{r}}{1-\gamma} + 1)^2$, $c_{Q,2} = 912e(\frac{3\tilde{r}}{1-\gamma} + 1)^2$, and where the randomness in the expectation is over the randomness of the stochasticity of the behavior/exploration policy $\pi_b$.\\
\end{theorem}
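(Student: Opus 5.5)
The plan is not to re-derive the finite-sample analysis. Instead, I would cast the off-policy update \cref{eqn: off-policy} as an instance of the Markovian stochastic approximation scheme of \cite{chen2021lyapunov}, verify each of their hypotheses using \cref{fake assumptions}, and then read off the bound. The dimension of the iterate in their theorem is replaced by the size of the subsampled $\hat{Q}_k$-table.

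\textbf{Step 1: put the update in stochastic-approximation form.} Let $Y_t=(s_g(t),s_\Delta(t),a_g(t),s_g(t+1),s_\Delta(t+1))$. This is a Markov chain driven by $\pi_b$ and the kernels $P_g,\bar P^{\pi_\ell}_l$. Rewrite \cref{eqn: off-policy} as
\[\hat Q_k^{t+1}=\hat Q_k^t+\alpha\big(F(\hat Q_k^t,Y_t)-\hat Q_k^t\big),\]
where $F(Q,y)(x)=Q(x)+\mathbbm{1}\{x=(s_g,s_\Delta,a_g)\}\big(\bar r_\Delta^{\pi_\ell}+\gamma\max_{a'}Q(s_g',s_\Delta',a')-Q(x)\big)$. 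Under \cref{assumption: ergodic}, $Y_t$ has a unique stationary law with a geometric mixing time, so $t_\alpha(\cM_s)=O(\log(1/\alpha))$.

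\textbf{Step 2: identify and analyze the averaged operator.} The averaged operator is $\bar F(Q)=(I-D)Q+D\,\hat{\cT}_kQ$, with $D=\mathrm{diag}(\mu(s_g,s_\Delta)\pi_b(a_g|s_g,s_\Delta))$. Since $\hat{\cT}_k$ is a $\gamma$-contraction in $\|\cdot\|_\infty$ (\cref{lemma: gamma-contraction of adapted Bellman operator}), $\bar F$ is a contraction with factor $\beta_1=1-D_{\min}(1-\gamma)$. Its unique fixed point is $\hat Q_k^*$. Next, check the regularity conditions of their Theorem 3.1:
\begin{itemize}[leftmargin=*]
\item $\|F(Q_1,y)-F(Q_2,y)\|_\infty\le 2\|Q_1-Q_2\|_\infty$;
\item $\|F(0,y)\|_\infty\le\tilde r$.
\end{itemize}
Combining these with the a priori bound $\|\hat Q_k\|_\infty\le\tilde r/(1-\gamma)$ from \cref{lemma: Q-bound} yields constants of the form $c_{Q,1}=3(\tilde r/(1-\gamma)+1)^2$ and $c_{Q,2}\propto(3\tilde r/(1-\gamma)+1)^2$.

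\textbf{Step 3: apply the theorem and control the dimension factor.} The Lyapunov function in \cite{chen2021lyapunov} is the generalized Moreau envelope of $\tfrac12\|\cdot\|_\infty^2$, smoothed with an $\ell_p$ norm for $p\approx\log d$. This is where the $\log d$ factor comes from, with $d$ the number of coordinates of the iterate. In the mean-field parameterization, $d\le|\cS_g||\cA_g||\cS_l|k^{|\cS_l|}$, so $\log d=O(|\cS_l|\log k)$, which gives the $\log k^{|\cS||\cA|}$-type term in the statement. Applying their theorem for $t\ge t_\alpha(\cM_s)$ with constant stepsize satisfying $\alpha t_\alpha\le c_{Q,0}(1-\beta_1)^2/\log d$ yields the two-term bound: a geometrically decaying bias term plus an $\alpha t_\alpha$ variance term.

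\textbf{Main obstacle.} The hardest part is Step 2's contraction-factor bookkeeping. The effective contraction rate is $(1-\gamma)D_{\min}$, not $(1-\gamma)$. The displayed rate $(1-(1-\gamma)\alpha/2)$ therefore implicitly absorbs $D_{\min}$ into $\alpha$ or into the stepsize condition. I would first try to follow \cite{chen2021lyapunov}'s convention, restating the stepsize condition in terms of $1-\beta_1$ and absorbing $D_{\min}$ there. A secondary check is that the Lipschitz property in item 1 of \cref{fake assumptions} is not actually needed here: it enters only afterwards, when converting the $\hat Q_k$ error into the value-gap bound via the performance difference argument.
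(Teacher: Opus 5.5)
Your reduction is the one the paper uses: it restates Theorem 3.1 of \cite{chen2021lyapunov} after listing the properties in \cref{fake assumptions}. Your Steps 1--2 are a more careful version of that reduction. You are right that the contraction that matters is that of the averaged operator $\bar F(Q)=(I-D)Q+D\hat{\cT}_kQ$, with modulus $\beta_1=1-D_{\min}(1-\gamma)$, rather than the $\gamma$-contraction of $\hat{\cT}_k$. You are also right that the TV-Lipschitz item plays no role here.

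\textbf{The gap is in the last step.} Applied faithfully, the cited theorem gives
\[
\E\big[\|\hat Q_k^t-\hat Q_k^*\|_\infty^2\big]\le c_{Q,1}\Big(1-\tfrac{D_{\min}(1-\gamma)\alpha}{2}\Big)^{t-t_\alpha(\cM_s)}+c_{Q,2}\,\frac{\log d}{D_{\min}^2(1-\gamma)^2}\,\alpha\,t_\alpha(\cM_s).
\]
$D_{\min}$ cannot be ``absorbed'' as you propose:
\begin{itemize}
\item The stepsize condition is only an upper bound on $\alpha$. It can neither speed up the geometric rate nor shrink the coefficient of the $\alpha t_\alpha$ term.
\item $\alpha$ itself is fixed by the update \eqref{eqn: off-policy}, so rescaling it changes the algorithm. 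No single rescaling repairs both terms anyway.
\end{itemize}
Since $D_{\min}\le 1/d$, the displayed statement is strictly stronger than anything the citation delivers.

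\textbf{The displayed bound is in fact false in general}, so no bookkeeping will rescue it. Take constant rewards $\bar r^{\pi_\ell}_\Delta\equiv\tilde r$ and $\hat Q_k^0=0$ (the initialization implicit in $c_{Q,1}$).
\begin{itemize}
\item Then $\hat Q_k^*\equiv\tilde r/(1-\gamma)$ and every target lies in $[\tilde r,\tilde r/(1-\gamma)]$.
\item Hence each visit to a pair $x$ leaves at least a $(1-\alpha)$-fraction of the gap $\hat Q_k^*(x)-\hat Q_k^t(x)\ge 0$.
\item Start the behavior chain in stationarity and apply Jensen to the visit count of the least-visited pair. This gives $\E[\|\hat Q_k^t-\hat Q_k^*\|_\infty^2]\ge(\tilde r/(1-\gamma))^2(1-\alpha)^{2D_{\min}t}$.
\item Under the stepsize condition (same $\log d$ in the condition and the bound, as in \cite{chen2021lyapunov}), the displayed second term is at most $c_{Q,2}c_{Q,0}D_{\min}^2$. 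This is a vanishing fraction of $(\tilde r/(1-\gamma))^2$ as $D_{\min}\to0$.
\item Take $t-t_\alpha(\cM_s)$ to be a large constant multiple of $1/((1-\gamma)\alpha)$. Then the displayed first term is negligible, while $(1-\alpha)^{2D_{\min}t}$ stays near $1$ whenever $D_{\min}\ll 1-\gamma$.
\end{itemize}
That regime is the typical one, since $d$ grows like $|\cS_l|^k$ or $k^{|\cS_l|}$.

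So your argument, like the paper's citation, proves only the version with $1-\beta_1=D_{\min}(1-\gamma)$ in both terms. The downstream sample-complexity corollary must then carry the resulting $\mathrm{poly}(1/D_{\min})$ factors.

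\textbf{A separate, smaller gap concerns the dimension factor.} \eqref{eqn: off-policy} and \cref{assumption: ergodic} live on $\cS_g\times\cS_l^k\times\cA_g$, where $\log d$ is linear in $k$. To get a $|\cS_l|\log k$ factor, you must run the recursion on $(s_g,F_{s_\Delta},a_g)$. That requires a behavior policy depending on $s_\Delta$ only through $F_{s_\Delta}$, so that this coarser process is itself an ergodic Markov chain. You should state this rather than switch parameterizations silently.
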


\begin{corollary}[Corollary 3.2 in \cite{chen2021lyapunov} adapted to our setting] To make $\E[\|\hat{Q}_k^t - \hat{Q}_k^*\|_\infty \leq \frac{1}{100\sqrt{k}}]$ for $\epsilon>0$, we need 
\begin{equation}t>\tilde{O}\left(\frac{ k |\cS|_g |\cS_l| |\cA_g| k^{|\cS_l|} \log^2 k}{(1-\gamma)^5}\right).
\end{equation}
\end{corollary}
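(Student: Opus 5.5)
The statement is the adapted Corollary 3.2 of \cite{chen2021lyapunov}. It asks how many off-policy iterations $t$ are needed before $\E[\|\hat{Q}_k^t - \hat{Q}_k^*\|_\infty] \leq \frac{1}{100\sqrt{k}}$. I would derive it directly from the preceding adapted Theorem 3.1 by choosing the constant step size $\alpha$ and then $t$.

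First, by Jensen, $\E[\|\hat{Q}_k^t - \hat{Q}_k^*\|_\infty] \leq (\E[\|\hat{Q}_k^t - \hat{Q}_k^*\|^2_\infty])^{1/2}$. It therefore suffices to make the squared error at most $\epsilon^2 := \frac{1}{10^4 k}$. I would allot half of $\epsilon^2$ to each of the two terms in the bound of Theorem 3.1.

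\textbf{Variance term.} This term is $c_{Q,2}\frac{\log(\cdot)}{(1-\gamma)^2}\alpha t_\alpha(\cM_s)$. The logarithm should be read as the log of the number of entries of the $Q$-table, $\log(|\cS_g||\cS_l||\cA_g|k^{|\cS_l|})$ in the mean-field parameterization, and $c_{Q,2} = O(\tilde{r}^2/(1-\gamma)^2)$. Under \cref{assumption: ergodic} the mixing time grows only logarithmically, $t_\alpha(\cM_s) = O(\log(1/\alpha))$ up to chain-dependent constants. Hence it suffices to take
\[\alpha = \tilde{\Theta}\Big(\frac{(1-\gamma)^4}{k\,\tilde{r}^2 \log(|\cS_g||\cS_l||\cA_g|k^{|\cS_l|})}\Big).\]
The extra $\log(1/\alpha)\sim\log k$ from $t_\alpha$ is absorbed into the tilde. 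I would also check that this $\alpha$ satisfies the standing requirement $\alpha t_\alpha \le c_{Q,0}(1-\gamma)^2/\log(|\cS||\cA|)$; it does, since it is smaller by a factor of order $k$.

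\textbf{Bias term.} Using $(1-x)^m \leq e^{-xm}$, the term $c_{Q,1}(1-\frac{(1-\gamma)\alpha}{2})^{t-t_\alpha}$ is at most $\epsilon^2/2$ once
\[t \geq t_\alpha + \frac{2}{(1-\gamma)\alpha}\log\frac{2c_{Q,1}}{\epsilon^2}.\]
Substituting $\alpha$ gives $t = \tilde{O}\big(\frac{k\log(|\cS_g||\cS_l||\cA_g|k^{|\cS_l|})}{(1-\gamma)^5}\cdot\log k\big)$, where the $\log k$ comes from $\log(c_{Q,1}k)$.

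The stated bound carries the table size $|\cS_g||\cS_l||\cA_g|k^{|\cS_l|}$ multiplicatively rather than only inside a log. This is what one gets by counting the time to visit every entry of an asynchronous $Q$-table: under a uniform-ish stationary distribution, $\mu_{\min}\sim 1/(\text{table size})$, and that factor multiplies the rate. I would therefore carry $\mu_{\min}^{-1}$ through the contraction factor (the effective rate becomes $(1-\gamma)\alpha\mu_{\min}$), which yields the claimed $\tilde{O}\big(\frac{k|\cS_g||\cS_l||\cA_g|k^{|\cS_l|}\log^2 k}{(1-\gamma)^5}\big)$.

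The main obstacle is exactly this bookkeeping. I would need to make sure that how the adapted Theorem 3.1 hides $\mu_{\min}$ and $t_\alpha$ is consistent with the claimed polynomial factor, and that the power of $(1-\gamma)$ comes out as $5$ rather than $6$. I would first try tracking $\mu_{\min}$ explicitly through the contraction factor before simplifying.
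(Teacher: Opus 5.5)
Your main computation matches the paper's route. The paper gives no derivation; it imports Corollary~3.2 of \cite{chen2021lyapunov} with target $\epsilon = 1/(100\sqrt{k})$. That corollary is obtained from Theorem~3.1 exactly as you do it:
\begin{itemize}
\item pass to the second moment by Jensen,
\item split $\epsilon^2$ between the two terms,
\item fix $\alpha$ from the variance term,
\item read off $t$ from the geometric bias term.
\end{itemize}
Starting from the adapted Theorem~3.1 as the paper states it, with $c_{Q,2} = \Theta((\tilde r/(1-\gamma)+1)^2)$, this yields a sufficient threshold $t = \tilde O(k \log(N) \log^2 k/(1-\gamma)^5)$, where $N = |\cS_g||\cS_l||\cA_g|k^{|\cS_l|}$. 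The power of $1-\gamma$ is $5$, so that worry is resolved.

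The last paragraph is where you go wrong, in two ways.

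\textbf{It is unnecessary.} The corollary is a sufficient condition on $t$. For fixed $\alpha$, the bias term is decreasing in $t$ and the variance term does not depend on $t$, so every $t$ above your threshold also works. Since $\log N \le N$, the stated threshold $\tilde O(kN\log^2 k/(1-\gamma)^5)$ dominates yours, and the statement follows at once. There is no need to manufacture a multiplicative $N$.

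\textbf{The patch is not valid.} In \cite{chen2021lyapunov}, what the adapted statement writes as $1-\gamma$ is really $1-\beta_1 = \mu_{\min}(1-\gamma)$, where $\mu_{\min} = \min_{s,a}\mu(s)\pi_b(a\mid s)$. This quantity enters in three places: the contraction rate, the variance term (squared, in the denominator), and the step-size condition.
\begin{itemize}
\item Carrying it through the rate alone is inconsistent.
\item Carrying it through everywhere gives a factor $\mu_{\min}^{-3}$. Since $\mu_{\min}$ is at most the reciprocal of the number of $Q$-table entries, this is at least cubic in the table size, not linear.
\item The ergodicity assumption only gives $\mu_{\min} > 0$. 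Your ``uniform-ish stationary distribution'' is therefore an extra hypothesis.
\item A factor linear in $1/\mu_{\min}$ is what sharper analyses of asynchronous $Q$-learning give, not the Lyapunov bound being cited.
\end{itemize}

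So drop the $\mu_{\min}$ paragraph and conclude by monotonicity from the adapted Theorem~3.1. Be aware that this adapted theorem has already suppressed $\mu_{\min}$. The corollary is therefore only as faithful to \cite{chen2021lyapunov} as that adaptation is.
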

With this sample complexity, by the triangle inequality we also recover an expected-value analog of \cref{theorem: g-learn}.\\

\begin{corollary}For $\delta\in(0,1)^2$, with probability at least $1-\delta$, we have
    \begin{equation}\E[\hat{Q}_k^*(s_g, s_\Delta, a_g) - Q_n^*(s_g, s_{[n]}, a_g)] \leq \frac{\ln \frac{2|\cS_l|}{\delta}}{1-\gamma}\sqrt{\frac{n-k+1}{8k n}}\|r_l\|_\infty,\end{equation}
    where the randomness in the expectation is over the stochasticity of the exploration policy $\pi_b$.\\
\end{corollary}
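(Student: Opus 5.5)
The plan is to reduce the claim to \cref{theorem: Q-lipschitz of Fsdelta and Fsn} and pass to the limit $T\to\infty$. The key observation is that neither fixed point depends on the exploration policy $\pi_b$. The function $\hat{Q}_k^*$ is the fixed point of the deterministic adapted Bellman operator $\hat{\mathcal{T}}_k$, and $Q_n^*=\hat{Q}_n^*$ is the fixed point of $\mathcal{T}$. The only randomness that matters is the uniform draw of $\Delta\in\binom{[n]}{k}$. The plan is therefore to first prove a pointwise high-probability bound on $|\hat{Q}_k^*(s_g,F_{s_\Delta},a_g)-Q_n^*(s_g,F_{s_{[n]}},a_g)|$. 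I would then note that the expectation over $\pi_b$ is either trivial, or can be moved inside by monotonicity of expectation if the arguments $(s_g,s_\Delta)$ are themselves sampled along the behaviour trajectory, since the bound is uniform in $(s_g,s_{[n]},a_g)$.

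First, I let $T\to\infty$ in \cref{lemma: Q lipschitz continuous wrt Fsdelta}. By \cref{corollary:backprop}, $\hat{Q}_k^T\to\hat{Q}_k^*$ and $\hat{Q}_n^T\to Q_n^*$ uniformly, at rate $\gamma^T\tilde{r}/(1-\gamma)$. Hence the TV-Lipschitz bound survives in the limit with constant $\sum_{t\ge 0}2\gamma^t\|r_l\|_\infty=\frac{2\|r_l\|_\infty}{1-\gamma}$. To match the constant in the corollary, I would instead use the sharper form recorded in item 1 of \cref{fake assumptions}, with Lipschitz constant $\frac{\|r_l\|_\infty}{1-\gamma}$. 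This sharper form comes from bounding $|\sum_z \bar r_l(s_g,z)(F_{s_\Delta}(z)-F_{s_{\Delta'}}(z))|$ by $\|r_l\|_\infty\,\mathrm{TV}$ rather than $2\|r_l\|_\infty\,\mathrm{TV}$. This tightening is legitimate because rewards are nonnegative: the sum is a difference of two expectations of a $[0,\|r_l\|_\infty]$-valued function.

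Second, I apply \cref{theorem: sampling without replacement analog of DKW} to the population $s_{[n]}$ and solve $2|\mathcal{S}_l|e^{-2kn\epsilon^2/(n-k+1)}=\delta$, which gives $\epsilon=\sqrt{\frac{n-k+1}{2kn}\ln\frac{2|\mathcal{S}_l|}{\delta}}$. Since $\mathrm{TV}=\frac12\|\cdot\|_1$ and the sup-deviation controls each coordinate, I need to relate TV to the sup norm. The factor-of-two bookkeeping carried out in the proof of \cref{theorem: Q-lipschitz of Fsdelta and Fsn} is what produces the $\sqrt{\frac{n-k+1}{8kn}}$ form. Combining the two steps, with probability at least $1-\delta$,
\[
|\hat{Q}_k^*-Q_n^*|\le \frac{\|r_l\|_\infty}{1-\gamma}\sqrt{\frac{n-k+1}{8kn}}\sqrt{\ln\frac{2|\mathcal{S}_l|}{\delta}} .
\]

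Finally, I weaken $\sqrt{\ln(2|\mathcal{S}_l|/\delta)}\le\ln(2|\mathcal{S}_l|/\delta)$, which holds whenever $2|\mathcal{S}_l|/\delta\ge e$; this is automatic for $|\mathcal{S}_l|\ge 2$ and $\delta<1$. Taking expectations over $\pi_b$ of a quantity that is bounded on this event then yields the stated inequality. I expect the main obstacle to be the constant bookkeeping: the Lipschitz factor $1$ versus $2$, and the TV versus sup-norm conversion in the DKW analog. A secondary point is the precise meaning of the expectation over $\pi_b$. If the statement intends the learned iterate $\hat{Q}_k^t$ rather than $\hat{Q}_k^*$, I would add the $\frac{1}{100\sqrt{k}}$ term from the preceding corollary via the triangle inequality.
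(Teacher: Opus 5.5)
Your reduction to \cref{theorem: Q-lipschitz of Fsdelta and Fsn} is the route the paper intends, since the paper only asserts this corollary as a consequence of its earlier bounds. The gap is where you turn the DKW analog into a TV bound. \cref{theorem: sampling without replacement analog of DKW} controls $\|F_{s_\Delta}-F_{s_{[n]}}\|_\infty$. For any two distributions, $\mathrm{TV}(\mu,\nu)=\max_A|\mu(A)-\nu(A)|\ge\|\mu-\nu\|_\infty$ (take $A$ a singleton), and the best reverse inequality is $\mathrm{TV}(\mu,\nu)\le\lfloor|\cS_l|/2\rfloor\,\|\mu-\nu\|_\infty$. So a sup-deviation below $2\epsilon$ never yields $\mathrm{TV}\le\epsilon$. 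The ``factor-of-two bookkeeping'' in the proof of \cref{theorem: Q-lipschitz of Fsdelta and Fsn} that you defer to runs this comparison in the wrong direction. Done correctly, and even granting your factor-one Lipschitz constant, your chain gives only $\frac{\|r_l\|_\infty}{1-\gamma}\lfloor|\cS_l|/2\rfloor\sqrt{\frac{n-k+1}{2kn}L}$ with $L=\ln\frac{2|\cS_l|}{\delta}$. After the common prefactor, that is $2\lfloor|\cS_l|/2\rfloor\sqrt{L}$ where you need $L$. The weakening $\sqrt{L}\le L$ absorbs this only when $L\ge 4\lfloor|\cS_l|/2\rfloor^2$, which $\delta<1$ does not give.

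This is more than a lost constant: both your intermediate bound and the corollary as stated are false without an extra condition. Take $n=2$, $k=1$, $\cS_g=\cA_g=\cS_l=\{0,1\}$, with $s_g'=a_g$ deterministically and absorbing local states $s_1=0$, $s_2=1$. Set $r_g\equiv 0$ and $\bar r_l(s_i,s_g)=\mathbbm{1}\{s_i=s_g\}$ (add a tiny constant if strict positivity is required).
\begin{itemize}
\item Every draw of $\Delta$ has $\mathrm{TV}(F_{s_\Delta},F_{s_{[n]}})=\tfrac12$, so your bound $\sqrt{\tfrac18\ln\tfrac4\delta}$ on it fails for every $\delta>4e^{-2}$.
\item $Q_n^*\equiv\frac{1}{2(1-\gamma)}$, while $\hat Q_1^*(0,s_i,0)\in\{\frac{1}{1-\gamma},\frac{\gamma^2}{1-\gamma}\}$. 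Hence the left-hand side is at least $\frac{\gamma^2-1/2}{1-\gamma}$ with probability one.
\item This exceeds the right-hand side $\frac{\ln(4/\delta)}{2\sqrt2\,(1-\gamma)}$ at, e.g., $\delta=0.99$ and $\gamma=0.999$.
\item Nothing in the example depends on $\pi_b$, so the expectation cannot help.
\end{itemize}
To get a true statement you must either restrict $\delta$ (or $|\cS_l|$) or keep the $\lfloor|\cS_l|/2\rfloor$ factor. A genuine TV concentration bound would instead carry a $\sqrt{|\cS_l|/k}$-type term.

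Separately, the Lipschitz lemma you pass to the limit compares $\hat Q_k$ and $\hat Q_n$ at different population sizes, and its proof relies on \cref{lemma: expectation_expectation_max_swap}. There the maximizing action depends on the random next state, so the bound by $\max_{a_g'}|\E[\cdot]|$ does not follow. In fact, with random local transitions, $\hat Q_1^*$ and $\hat Q_2^*$ already differ at the same point-mass $F$, where the TV distance is zero. That ingredient therefore needs its own justification too.
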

In turn, following the argument in the proof of \cref{theorem: g-learn}, it is straightforward to verify that this leads to a result on the expected performance difference using off-policy learning:
\begin{corollary}\label{offpolicy-result} With probability at least $1 - \frac{1}{100e^\kappa}$,
\begin{equation}\E[V^{\pi^*}(s_0) - V^{\pi_{k}}(s_0)] \leq \frac{\tilde{r}}{(1-\gamma)^2}\sqrt{\frac{n-k+1}{2nk}} \sqrt{\ln\frac{40\tilde{r}|\mathcal{S}|_g |\cS_l|^k |\cA_g|}{(1-\gamma)^2}}  + \frac{1}{10\sqrt{k}}  = \tilde{O}\left(\frac{1}{\sqrt{k}}\right),\end{equation}
where the randomness in the expectation is over the stochasticity of the exploration policy $\pi_b$.
\end{corollary}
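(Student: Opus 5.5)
The approach is to transfer the on-policy argument behind \cref{theorem: g-learn} to the off-policy setting. The generative-model Bellman noise $\epsilon_{k,m}$ gets replaced by the stochastic-approximation error $\E\|\hat{Q}_k^t-\hat{Q}_k^*\|_\infty$ controlled by the adapted \cite{chen2021lyapunov} result. The rest of the pipeline (Lipschitz continuity in TV, the sampling-without-replacement DKW analog, and the performance difference lemma) is then reused with expectations taken over the behavior policy $\pi_b$.

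\textbf{Step 1: stochastic-approximation error.} First I check that the hypotheses in \cref{fake assumptions} hold. These are the $\gamma$-contraction (\cref{lemma: gamma-contraction of adapted Bellman operator}), boundedness (\cref{lemma: Q-bound}), and rapid mixing (\cref{assumption: ergodic}). With these in hand, I invoke the adapted Theorem 3.1 and Corollary 3.2 with a constant step size $\alpha$ satisfying $\alpha t_\alpha(\cM_s)\lesssim (1-\gamma)^2/\log(|\cS||\cA|)$. Running for the stated $t$ gives $\E\|\hat Q_k^t-\hat Q_k^*\|_\infty^2\le 10^{-4}/k$. Jensen then yields $\E\|\hat Q_k^t-\hat Q_k^*\|_\infty\le \frac{1}{100\sqrt{k}}$. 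This quantity plays the role of $\epsilon_{k,m}$.

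\textbf{Step 2: subsampling error.} Next I control the population-to-subsample gap. By \cref{theorem: Q-lipschitz of Fsdelta and Fsn}, with probability $1-\delta$ over $\Delta$ we have $|\hat Q_k^*(s_g,F_{s_\Delta},a_g)-Q^*(s,a_g)|\le \frac{2\|r_l\|_\infty}{1-\gamma}\sqrt{\frac{n-k+1}{8nk}\ln\frac{2|\cS_l|}{\delta}}$. This bound is deterministic with respect to $\pi_b$, so it passes through the expectation. Combining it with Step 1 via the triangle inequality, as in \cref{lemma:union_bound_over_finite_time}, gives, for the learned $\hat Q_k^t$, a per-action deviation bound holding in expectation.

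\textbf{Step 3: performance difference.} I apply \cref{theorem: performance difference lemma} with $\pi_1=\pi^*$ and $\pi_2=\pi_k$, where $\pi_k$ is the greedy policy of $\hat Q_k^t$ randomized over $\Delta$. I then repeat the decomposition of \cref{lemma: q_star_different_action_bounds} and \cref{lemma: expected_q*bound_with_different_actions}: on the good event, the suboptimality is at most twice the Step 2 bound plus $2\E\|\hat Q^t_k-\hat Q^*_k\|_\infty$. On the bad event, I bound it by $\tilde r/(1-\gamma)$ times $|\cA_g|(\delta_1+\delta_2)$. Taking $\E$ over $\pi_b$ and using linearity, then multiplying by $1/(1-\gamma)$, yields the two stated terms. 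For the parameter choice, I set $\delta_1=\delta_2\asymp\frac{(1-\gamma)^2}{40\tilde r|\cS_g||\cS_l||\cA_g|k^{|\cS_l|+1/2}}$, which produces the displayed logarithm and makes the bad-event contribution plus $\frac{2}{1-\gamma}\cdot\frac{1}{100\sqrt{k}}$ absorb into $\frac{1}{10\sqrt{k}}$, after adjusting constants or assuming $\frac{1}{1-\gamma}$ is moderate relative to the step-count choice. The $1-\frac{1}{100e^k}$ probability comes from a union bound over the sampling events, as in \cref{theorem: g-learn}.

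The main obstacle is Step 3's interchange of the expectation over $\pi_b$ with the greedy argmax. The greedy policy of $\hat Q_k^t$ is a random object, and the performance-difference bound only needs $\E$ of a pointwise quantity. I would bound pointwise $Q^*(s,\pi^*(s))-Q^*(s,\hat\pi(s))\le 2\|Q^*-\hat Q_k^t\|$ restricted to the relevant state, and then take the expectation. This requires Jensen (first moment from the second moment) and care that the discounted occupancy $d^{\pi_k}_s$ is itself random; I would handle that by bounding uniformly over states before averaging.
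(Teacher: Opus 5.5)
This is the paper's route. The paper proves the corollary only by remarking that one reruns the performance-difference proof of \cref{theorem: g-learn}, with the off-policy error from the adapted corollary of \cite{chen2021lyapunov} in place of $\epsilon_{k,m}$. Your device of bounding the PDL integrand uniformly in $s'$ before taking $\E_{\pi_b}$ is the right way to handle the random greedy policy and occupancy measure; note that this needs $\pi_1=\pi_k$, $\pi_2=\pi^*$ in the paper's PDL convention, as your reference to $d_s^{\pi_k}$ indicates, not the order you wrote.

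What remains is bookkeeping to land on the displayed constants rather than just $\tilde{O}(1/\sqrt{k})$:
since the off-policy error enters as $\frac{2}{1-\gamma}\E\|\hat Q_k^t-\hat Q_k^*\|_\infty$, run the iteration to accuracy of order $(1-\gamma)/\sqrt{k}$ (e.g.\ $\frac{1-\gamma}{40\sqrt{k}}$) instead of $\frac{1}{100\sqrt{k}}$, so no restriction on $\gamma$ is needed;
the coefficient $\tilde r$, rather than the $2\|r_l\|_\infty$ your Steps 2--3 produce, requires the sharper TV-Lipschitz constant $\|r_l\|_\infty/(1-\gamma)$ recorded in \cref{fake assumptions}, which holds because rewards are nonnegative;
$\delta$ should be tuned to this corollary's $|\cS_l|^k$ logarithm (your choice essentially reproduces the $k^{|\cS_l|+1/2}$ logarithm of the earlier off-policy theorem), charging $\frac{\tilde r}{1-\gamma}\delta$ per state for the single TV-deviation event instead of $|\cA_g|(\delta_1+\delta_2)$;
and the probability qualifier holds trivially, since off-policy there are no generative samples left to union-bound over.
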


\section{Generalization to Stochastic Rewards}
\label{sec: generalize to stochastic rewards}

As in \cite{anand2025meanfield}, this section extends the \texttt{ALTERNATING-MARL} framework to handle environments where rewards are stochastic. While our primary analyses is for deterministic local rewards, many real-world multi-agent systems can be drawn from a probability distribution. First, note that the local-agent policy-learning algorithm can a priori handle stochastic rewards since it is a nonstationary MDP. Therefore, it suffices to handle the case for the global agent. We begin by fixing a local agent policy $\pi_\ell$, thereby giving that the local agent reward component has form $r_l(s_i, s_g, a_i) = r_l(s_i, s_g)$.

\label{stochastic generalization}

\paragraph{Reward distributions.} Suppose we are given two families of distributions, $\{\mathcal{G}_{s_g,a_g}\}_{s_g,a_g\in\mathcal{S}_g\times\mathcal{A}_g}$ and $\{\mathcal{L}_{s_i,s_g}\}_{s_i,s_g\in\mathcal{S}_l\times\mathcal{S}_g}$. Then, for $s\in\cS_g\times\cS_l^n$ and $a\in\cA_g$, let $R(s,a)$ denote a stochastic reward of the form $R(s,a) = r_g(s_g,a_g) + \frac1n  \sum_{i\in[n]}r_l(s_i,s_g)$
where the rewards of the global agent $r_g$ emerge from a distribution $r_g(s_g,a_g)\sim \mathcal{G}_{s_g,a_g}$, and the rewards of the local agents $r_l$ emerge from a distribution $r_l(s_i,s_g)\sim \mathcal{L}_{s_i,s_g}$. Similarly, for $\Delta\subseteq [n]$, let $R_\Delta(s,a)$ be defined as:
\begin{equation}R_\Delta(s,a) = r_g(s_g,a_g) + \frac1k \sum_{i\in\Delta}r_l(s_i,s_g ).\end{equation}
We make some standard assumptions of boundedness on $\mathcal{G}_{s_g,a_g}$ and $\mathcal{L}_{s_i,s_g}$.\\

\begin{assumption}\label{assumption: stochastic_bound}
    \emph{Define} \[\bar{\mathcal{G}}=\bigcup_{(s_g,a_g)\in\mathcal{S}_g\times\mathcal{A}_g}\mathrm{supp}\left(\mathcal{G}_{s_g,a_g}\right)\] and \[\bar{\mathcal{L}}=\bigcup_{(s_i,s_g)\in \mathcal{S}_l\times\mathcal{S}_g}\mathrm{supp}\left(\mathcal{L}_{s_i,s_g}\right),\] \emph{where for any distribution $\mathcal{D}$, $\mathrm{supp}(\mathcal{D})$ is the support (set of all random variables $\mathcal{D}$ that can be sampled with probability strictly larger than $0$) of $\mathcal{D}$.
    Then, let \[\hat{\mathcal{G}} = \sup\left(\bar{\mathcal{G}}\right), \hat{\mathcal{L}} = \sup\left(\bar{\mathcal{L}}\right), \widecheck{\mathcal{G}} = \inf\left(\bar{\mathcal{G}}\right),\widecheck{\mathcal{L}} = \inf\left(\bar{\mathcal{L}}\right).\]
    We assume that \[\hat{\mathcal{G}}<\infty, \hat{\mathcal{L}}<\infty, \widecheck{\mathcal{G}}>-\infty, \widecheck{\mathcal{L}}>-\infty,\] and that $\hat{\mathcal{G}},\hat{\mathcal{L}},\widecheck{\mathcal{G}},\widecheck{\mathcal{L}}$ are all known a priori.}\\
\end{assumption}

\begin{definition} Let the randomized empirical adapted Bellman operator be $\hat{\mathcal{T}}^{\text{random}}_{k,m}$ such that:
    \begin{equation}\hat{\mathcal{T}}^{\text{random}}_{k,m} \hat{Q}^t_{k,m}(s_g, s_\Delta, a_g) = R_\Delta(s,a) + \frac{\gamma}{m} \sum_{\ell \in [m]}\max_{a_g'\in\mathcal{A}_g} \hat{Q}_{k,m}^t(s_g'
    , s_\Delta', a_g'),\end{equation}

\end{definition}
\paragraph{\texttt{ALTERNATING-MARL}: Learning with Stochastic Rewards.} Following \cite{anand2025the}, our proposed extension of \texttt{ALTERNATING-MARL} averages $\Xi$ samples of the adapted randomized empirical adapted Bellman operator, $\mathcal{T}_{k,m}^{\text{random}}$ and updates the $\hat{Q}_{k,m}$ function using with the average. One can show that $\mathcal{T}_{k,m}^{\text{random}}$ is a contraction operator with modulus $\gamma$.

By Banach's fixed point theorem, $\mathcal{T}_{k,m}^{\text{random}}$ admits a unique fixed point $\hat{Q}_{k,m}^{\text{random}}$.\\

\begin{algorithm}[H]
\caption{\texttt{ALTERNATING-MARL}: Learning with Stochastic Rewards}\label{algorithm: approx-dense-tolerable-Q-learning_random_rewards}
\begin{algorithmic}[1]
\REQUIRE A multi-agent system as described in \cref{sec: preliminaries}.
\REQUIRE Parameter $T$ for the number of iterations in the initial value iteration step. 
\REQUIRE Sampling parameters $k \in [n]$ and $m\in \N$. Discount parameter $\gamma\in (0,1)$. 
\REQUIRE Oracle $\mathcal{O}$ to sample $s_g'\sim {P}_g(\cdot|s_g,a_g)$ and $s_i\sim {P}_l(\cdot|s_i,s_g)$.
\REQUIRE Number of reward samples $\Xi\in\N$.
\medskip
\STATE Set $\hat{Q}^0_{k,m}(s_g,s_\Delta, a_g)=0$, for $(s_g,s_\Delta,a_g)\in\cS_g\times\cS_l^k\times\cA_g$.
\medskip
\FOR{$t=0$ to $T-1$}
\FOR{$(s_g,s_\Delta,a_g)\in\cS_g\times\cS_l^k\times\cA_g$}
\STATE $\rho=0$
\smallskip
\FOR{$\xi\in\{1,\dots,\Xi\}$}
\STATE $\rho = \rho + \hat{\mathcal{T}}^{\text{random}}_{k,m} \hat{Q}^t_{k,m}(s_g, s_\Delta, a_g)$
\ENDFOR
\STATE $\hat{Q}^{t+1}_{k,m}(s_g, s_\Delta, a_g) = \frac\rho\Xi$
\ENDFOR
\ENDFOR
\medskip
\STATE For all $(s_g,s_\Delta) \in \mathcal{S}_g\times \mathcal{S}_l^k$, let \[\hat{\pi}_{k,m}^\est(s_g,s_\Delta) = \mathop{\argmax}_{a_g\in\mathcal{A}_g}\hat{Q}_{k,m}^T(s_g, s_\Delta, a_g).\]
\end{algorithmic}
\end{algorithm} 

To bound the error introduced by the stochasticity of the rewards, we use Hoeffding's inequality

\begin{theorem}[Hoeffding's Theorem \citep{10.5555/1522486}] Let $X_1,\dots,X_n$ be independent random variables such that $a_i\leq X_i\leq b_i$ almost surely. Then, let $S_n = X_1 + \dots + X_n$. Then, for all $\epsilon>0$,
\begin{equation}\Pr[|S_n - \E[S_n]|\geq \epsilon] \leq 2\exp\left(-\frac{2\epsilon^2}{\sum_{i=1}^n (b_i-a_i)^2}\right).\end{equation}
\end{theorem}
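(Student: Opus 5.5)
The statement to prove is Hoeffding's inequality: for independent $X_1,\dots,X_n$ with $a_i\le X_i\le b_i$ almost surely and $S_n=\sum_i X_i$, the sum satisfies a two-sided sub-Gaussian tail bound. I will use the Chernoff (exponential moment) method together with Hoeffding's lemma. First, by symmetry it suffices to bound the upper tail $\Pr[S_n-\E[S_n]\ge\epsilon]\le\exp(-2\epsilon^2/\sum_i(b_i-a_i)^2)$. Applying the same bound to the variables $-X_i$, which lie in $[-b_i,-a_i]$ with identical range lengths, gives the lower tail. A union bound over the two tails then produces the factor $2$.

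For the upper tail, fix $\lambda>0$ and apply Markov's inequality to $e^{\lambda(S_n-\E S_n)}$:
\[
\Pr[S_n-\E S_n\ge\epsilon]\le e^{-\lambda\epsilon}\,\E\bigl[e^{\lambda(S_n-\E S_n)}\bigr]=e^{-\lambda\epsilon}\prod_{i=1}^n \E\bigl[e^{\lambda(X_i-\E X_i)}\bigr],
\]
where the factorization uses independence of the $X_i$. The key input is Hoeffding's lemma: if $Y$ has mean zero and $a\le Y\le b$, then $\E[e^{\lambda Y}]\le e^{\lambda^2(b-a)^2/8}$. Applying it to each $Y_i=X_i-\E X_i$, whose range length is still $b_i-a_i$, bounds the tail by $\exp(-\lambda\epsilon+\lambda^2\sum_i(b_i-a_i)^2/8)$. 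Minimizing the exponent at $\lambda=4\epsilon/\sum_i(b_i-a_i)^2$ gives exactly $\exp(-2\epsilon^2/\sum_i(b_i-a_i)^2)$.

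The main obstacle is proving Hoeffding's lemma. By convexity of $y\mapsto e^{\lambda y}$ on $[a,b]$,
\[
e^{\lambda y}\le \frac{b-y}{b-a}e^{\lambda a}+\frac{y-a}{b-a}e^{\lambda b}.
\]
Taking expectations with $\E Y=0$ gives $\E e^{\lambda Y}\le \frac{b}{b-a}e^{\lambda a}-\frac{a}{b-a}e^{\lambda b}$. Set $p=-a/(b-a)\in[0,1]$ and $u=\lambda(b-a)$. The right-hand side equals $e^{L(u)}$ with
\[
L(u)=-pu+\log(1-p+pe^u).
\]
A direct check gives $L(0)=0$ and $L'(0)=0$. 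Moreover, $L''(u)=q(1-q)\le 1/4$, where $q=pe^u/(1-p+pe^u)\in[0,1]$. Taylor's theorem with remainder then yields $L(u)\le u^2/8=\lambda^2(b-a)^2/8$, completing the lemma and hence the theorem. The degenerate case $a=b$ is trivial, since then $Y\equiv 0$.
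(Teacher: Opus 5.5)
Your proof is correct. The paper gives no proof of this statement; it cites it as a standard result, and your argument is the standard textbook proof: the Chernoff bound, then Hoeffding's lemma via the convexity bound with $L(0)=L'(0)=0$ and $L''\le 1/4$, optimized at $\lambda=4\epsilon/\sum_i(b_i-a_i)^2$. The one case you leave unaddressed is $\sum_i(b_i-a_i)^2=0$, where your choice of $\lambda$ is undefined; there $S_n$ is almost surely constant, so the bound holds trivially with the right-hand side read as $0$.
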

\begin{lemma}When ${\pi}^\est_{k,m}$ is derived from the randomized empirical value iteration operation and applying our online subsampling execution in \cref{algorithm: online execution}, we get\label{PDL: stochastic}
    \begin{equation}\Pr\left[V^{\pi^*}(s_0) - V^{\tilde{\pi}_{k,m}}(s_0) \leq \tilde{O}\left(\frac{1}{\sqrt{k}}\right)\right]\geq 1-\frac{1}{100\sqrt{k}}.\end{equation}
    \end{lemma}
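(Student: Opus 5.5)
The plan is to reduce the stochastic-reward case to the deterministic-reward analysis already carried out for \cref{theorem: application of PDL} (via \cref{corollary: performance_difference_lemma_applied}). The only new error source is the randomness of $R_\Delta$. Since the transition samples are unchanged, the argument is: (i) bound the distance between the fixed point of the averaged randomized operator and $\hat Q_{k,m}^{\est}$ (the fixed point with mean rewards); (ii) absorb that distance into the Bellman-noise term $\epsilon_{k,m}$; (iii) rerun the performance-difference chain unchanged.

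\textbf{Step 1: concentration of the averaged reward.} For each tuple $(s_g,s_\Delta,a_g)$ the averaged update equals
\[
\frac1\Xi\sum_{\xi}R^{(\xi)}_\Delta(s,a)+\frac{\gamma}{m}\sum_{\ell}\max_{a_g'}\hat Q(\cdot).
\]
By \cref{assumption: stochastic_bound}, each $R_\Delta$ lies in an interval of width at most $(\hat{\mathcal G}-\widecheck{\mathcal G})+(\hat{\mathcal L}-\widecheck{\mathcal L})$. Hoeffding's theorem then gives a deviation of $O\big(\sqrt{\ln(2N/\rho)/\Xi}\big)$ from the mean reward $\bar r^{\pi_\ell}_\Delta$. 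Here I take a union bound over all $N\le\min\{|\cS_g||\cA_g||\cS_l|^k,\ |\cS_g||\cA_g||\cS_l|k^{|\cS_l|}\}$ tuples and over the $T$ iterations. To make the averaged rewards a fixed perturbed function, I fix the reward draws used at the fixed point; the simplest route is to treat the averaged reward as a perturbed deterministic reward $r'$ with $\|r'-\bar r\|_\infty\le\epsilon_R$. Choosing $\Xi=\tilde O(k)$ yields $\epsilon_R\le 1/\sqrt{k}$.

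\textbf{Step 2: propagate through the contraction.} The averaged operator is a $\gamma$-contraction with reward $r'$ (the argument of \cref{lemma: gamma-contraction of empirical adapted Bellman operator}). Subtracting it from $\hat{\mathcal T}_{k,m}$ and applying the standard fixed-point perturbation bound gives
\[
\|\hat Q^{\text{random}}_{k,m}-\hat Q^{\est}_{k,m}\|_\infty\le \epsilon_R/(1-\gamma).
\]
By the triangle inequality with \cref{lemma: yuejie bound} and \cref{lemma: epsilon_km_is_k}, the effective Bellman noise becomes $\epsilon_{k,m}+\epsilon_R/(1-\gamma)=O(1/\sqrt{k})$. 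The truncation error $\gamma^T\tilde r/(1-\gamma)$ is handled exactly as in \cref{corollary:backprop}, with $\tilde r$ replaced by the sup of the reward support.

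\textbf{Step 3: plug into the PDL bound.} Use the new noise term in \cref{lemma:union_bound_over_finite_time}, \cref{lemma: q_star_different_action_bounds}, and \cref{corollary: performance_difference_lemma_applied}; the Lipschitz/TV part is unaffected because it concerns the mean rewards. This gives $V^{\pi^*}(s_0)-V^{\tilde\pi_{k,m}}(s_0)\le\tilde O(1/\sqrt k)$.

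\textbf{Step 4: failure probability.} The failure probabilities come from the transition sampling ($1/(100e^k)$) and the reward Hoeffding union bound. Setting the latter's $\rho$ to $1/(200\sqrt k)$ costs only logarithmic factors in $\Xi$, and the total stays below $1/(100\sqrt k)$ since $e^{-k}\le 1/\sqrt k$.

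The main obstacle is Step 1. Fresh reward draws are taken at each iteration, so there is no single perturbed reward function. I would handle this by bounding the per-iteration deviation uniformly with the union bound over $T$ iterations, then unrolling: $\|Q^{t+1}-\hat Q^{t+1}_{k,m}\|\le\gamma\|Q^t-\hat Q^t_{k,m}\|+\epsilon_R$. This yields the same $\epsilon_R/(1-\gamma)$ bound on the final iterate $\hat Q^T$ that defines the policy, and that iterate is what the argument actually needs.
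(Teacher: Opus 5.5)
Your proposal is correct and takes essentially the paper's route: Hoeffding concentration of the $\Xi$-sample reward average, absorbing the deviation into the Bellman noise $\epsilon_{k,m}$ by the triangle inequality, then rerunning the deterministic performance-difference bound. It is more careful than the paper's proof, which applies Hoeffding to a single average with failure probability $1/(100\sqrt{k})$ and simply adds $1/\sqrt{200k}$ to the gap, with no union bound over tuples and iterations, no $1/(1-\gamma)$ amplification through the contraction, and no combination with the $1/(100e^{k})$ transition-sampling failure (one small fix to yours: keeping $1/(100e^{k})+1/(200\sqrt{k})\le 1/(100\sqrt{k})$ needs $e^{-k}\le 1/(2\sqrt{k})$ rather than $e^{-k}\le 1/\sqrt{k}$, which still holds for every $k\ge 1$).
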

    \begin{proof}
    \begin{align*}
        \Pr\left[\left|\frac{\rho}{\Xi} - \E[R_\Delta(s,a)]\right|\geq \frac{\epsilon}{\Xi}\right] &\leq 2\exp\left(-\frac{2\epsilon^2}{\sum_{i=1}^\Xi |\hat{\mathcal{G}}+\hat{\mathcal{L}} - \widecheck{\mathcal{G}}-\widecheck{\mathcal{L}}|^2}\right) \\ &= 2\exp\left(-\frac{2\epsilon^2}{\Xi|\hat{\mathcal{G}}+\hat{\mathcal{L}} - \widecheck{\mathcal{G}}-\widecheck{\mathcal{L}}|^2}\right)
        \end{align*}
    \emph{Rearranging this, we get:}
    \begin{equation}\Pr\left[\left|\frac{\rho}{\Xi} - \E[R_\Delta(s,a)]\right|\leq \sqrt{\ln\left(\frac{2}{\delta}\right) \frac{ {|\hat{\mathcal{G}}+\hat{\mathcal{L}}-\widecheck{\mathcal{G}}-\widecheck{\mathcal{L}}|^2}}{2\Xi}}\right]\geq 1 - \delta\end{equation}
Then, setting $\delta=\frac{1}{100\sqrt{k}}$, and setting $\Xi =10|\hat{\mathcal{G}}+\hat{\mathcal{L}}-\widecheck{\mathcal{G}}-\widecheck{\mathcal{L}}|^2 k\sqrt{\ln(200\sqrt{k})}$ gives:
\begin{equation*}\Pr\left[\left|\frac{\rho}{\Xi} - \E[R_\Delta(s,a)]\right|\leq \frac{1}{\sqrt{200 k}} \right]\geq 1 - \frac{1}{100\sqrt{k}}\end{equation*}
Applying the triangle inequality to $\epsilon_{k,m}$ allows us to derive a probabilistic bound on the optimality gap between $\hat{Q}_{k,m}^\est$ and $Q^*$, where the gap is increased by $\frac{1}{\sqrt{200k}}$, and where the randomness is over the stochasticity of the rewards. \\

Then, the optimality gap between $V^{\pi^*}$ and $V^{{\pi}_{k,m}^\est}$, for when the policy $\hat{\pi}^\est_{k,m}$ is learned using \cref{algorithm: approx-dense-tolerable-Q-learning_random_rewards} in the presence of stochastic rewards obeying \cref{assumption: stochastic_bound} follows
\begin{equation}
    \Pr\left[V^{\pi^*}(s_0) - V^{\tilde{\pi}_{k,m}}(s_0) \leq \tilde{O}\left(\frac{1}{\sqrt{k}}\right)\right]\geq 1-\frac{1}{100\sqrt{k}},
\end{equation}
which proves the lemma.\qedhere\\
\end{proof}

\section{Numerical Simulations}
This section describes our numerical simulations which scale to $n=1000$ agents and validate our theoretical framework.
In this section, we provide a formulation of \texttt{ALTERNATING-MARL} within the context of communication-constrained control. We then provide an empirical evaluation of the algorithm in a robotic coordination task with $n=1000$ agents. We note that the simulations are not intended to instantiate the conservative PAC sample-complexity constants: rather, they test the qualitative tradeoffs predicted by the theorem that increasing $k$ improved performance while increasing runtime.

\textbf{Compute.} All experiments were implemented in Python and ran on a 2-core CPU server with 12GB RAM, and all the experiments ran in $~\sim$ 6 hours. We chose a parameter complexity for each simulation that was sufficient to emphasize
characteristics of the theory, such as the complexity improvement and the decaying optimality gap. Finally, we provide and evaluate some heuristic solutions for extending our algorithm to the setting with continuous state spaces. \footnote{We provide code for the experiments in \url{https://anonymous.4open.science/r/alternating-marl-5C86/}.}. 

\label{sec: numerical simulations}

\subsection{Alternating MARL in a Communication-Constrained Robotic Coordination Task}

\paragraph{Experimental setup.} Consider a warehouse with $n$ mobile robots (local agents) operating across $|\cS_l|$ zones (local states). A central dispatcher (global agent) manages the warehouse's operational mode (global state), i.e. which zone gets priority resource allocation (charging stations, loading docks, clear pathways).  The dispatcher's problem is that it cannot poll the full population of robots for their current zone before deciding where to allocate resources. Therefore, it must sample states of local agents. With $k=1$, it gets a single noisy snapshot, and with $k=35$, it gets a reliable picture of where the fleet is concentrated. The local agents (robots) have their own decision-making too. Each robot chooses between $|\cA_l|$ actions: stay in its current zone, move up to the next zone, or move down to the previous zone. Specifically, each robot decides whether to stay put or migrate toward the prioritized zone; however, moving has a cost since the robot is less productive in transit, and it might arrive after the dispatcher has already shifted priority elsewhere. 
 Indeed, the robots are cooperative but decentralized. Each robot only sees its own zone $s_\ell$ and the current system priority $s_g$, but not the positions of other robots. Therefore, they must rely on the dispatcher to choose good priorities and implicitly coordinate through the shared global state.
 Each evaluation run is conducted over a horizon of $H=100$ time steps.

\paragraph{Dynamics and Rewards.} We let $\mathcal{S}_g=\{0,1,2,3,4\}, \mathcal{S}_l=\{0,1,2,3,4\}, \mathcal{A}_g=\{0,1,2,3,4\}$, and $\cA_l=\{0, 1, 2\}$. We now define the transition dynamics of the agents. First, let 
$d_{s_g',s_g,a_g} = 3\cdot \mathbbm{1}\{s_g'=a_g\} + 0.5 \cdot \mathbbm{1}\{s_g' = s_g\} + 0.1$.
Then, $P_g(s_g'|s_g, a_g)$ is given by the softmax over the logits, 
\[P_g(s_g' \mid s_g, a_g) = \frac{e^{d_{s_g',s_g,a_g}}}{\sum_{j\in\mathcal{S}_g} e^{d_{j,s_g,a_g}}},\]
which allows action $a_g$ to drive the global agent's state towards state $a_g$. 

Next, let
$f_{s_\ell',s_\ell,s_g,a_\ell} = 3.5 \cdot \mathbbm{1}\{s_\ell' = s_\ell\} + 1.5 \cdot \mathbbm{1}\{s_\ell' = \mathsf{target}(s_\ell, a_\ell)\} + 0.3 \cdot \mathbbm{1}\{s_\ell' = s_g\} + 0.1$, where $\mathsf{target}(s_\ell, 1) = (s_\ell + 1) \bmod 5$ and $\mathsf{target}(s_\ell, 2) = (s_\ell - 1) \bmod 5$.  Then, similarly, $P_l(s_l'|s_l,s_g,a_l)$ is given by the softmax over the logits \[P_l(s_l' \mid s_l, s_g, a_l) = \frac{e^{f_{s_l',s_l, s_g, a_l}}}{\sum_{j\in\mathcal{S}_l} e^{f_{j,s_l,s_g, a_l}}}.\]
We define the stage reward
\begin{equation}r_t = r_g(s_g, a_g) + \frac{1}{n}\sum_{i=1}^n r_l(s_\ell^i, s_g, a_\ell^i),\end{equation}
where: 
$r_g(s_g, a_g) = 4 - |s_g - a_g|$ and
$r_l(s_\ell, s_g, a_\ell) = \text{base}(s_\ell, s_g) + \text{bonus}(a_\ell)$, where
$$\text{base}(s_\ell, s_g) = \begin{cases} 10.0 &  s_\ell = s_g \\ 
\frac{2.0}{1 + d_{\text{circ}}(s_\ell, s_g)} & \text{otherwise} \end{cases},$$
and $\text{bonus}(a_\ell) \in \{0.0, +0.5, -0.3\}$, where
 $d_{\text{circ}}(i,j) = \min(|i-j|, 5-|i-j|)$. The local reward peaks at alignment ($s_\ell = s_g$), and hence the global agent maximizes the stage reward by steering $s_g$ to the mode of the population distribution.

\paragraph{Algorithm Configuration.} The hyperparameters used in our implementation are detailed in Table~\ref{tab:hyperparams}.

\begin{table}[hbt!]
\centering
\begin{tabular}{@{}ll@{}}
\toprule
Hyperparameter & Value \\ \midrule
Population size ($n$) & 1000\\
Discount Factor ($\gamma$) & 0.95 \\
Global agent state space size ($|\cS_g|$) & 5 \\
Local agent state space size ($|\cS_l|$) & 5 \\
Global agent action space size ($|\cA_g|$) & 5 \\
Local agent action space size ($|\cA_l|$) & 3 \\
Subsample Sizes ($k$) & $\{1,2,\dots, 50\}$ \\
Monte Carlo Samples ($m$) & 30 (per Bellman update) \\
Learning Rate ($\alpha$) & 1.0 (Exact Operator Update) \\
Exploration Rate ($\epsilon$) & 0.0 (Exhaustive Offline Sweep) \\
Number of alternations ($N_{\text{steps}}$) & 10 \\
Independent Runs & 15 seeds per $k$ \\

Number of rollouts & 50 per seed \\

Horizon & 100 \\\bottomrule\\
\end{tabular}
\caption{\texttt{ALTERNATING-MARL} Training and Evaluation Configuration}
\label{tab:hyperparams}
\end{table}

\paragraph{Experimental Results.} 
We plot our results in \cref{fig:k increase}, which shows the discounted cumulative rewards for each $k$ and the associated runtime for learning the $\tilde{O}(1/\sqrt{k})$-approximate Nash Equilibrium, and \cref{fig:sim result} which plots the  fraction of $n=1000$ robots occupying each of the $5$ zones, and the discrepancy between the global agent's action and the true population mode. We see that as $k$ increases, the quality of the learned policy, as measured by the reward observed by the system, generally increases, albeit with a higher required sampled complexity, further underscoring the tradeoff in the sampling parameter $k$ as mentioned in \cref{remark: tradeoff in k}.\looseness=-1

 \begin{figure}[t]
     \centering    \includegraphics[width=0.494\linewidth]{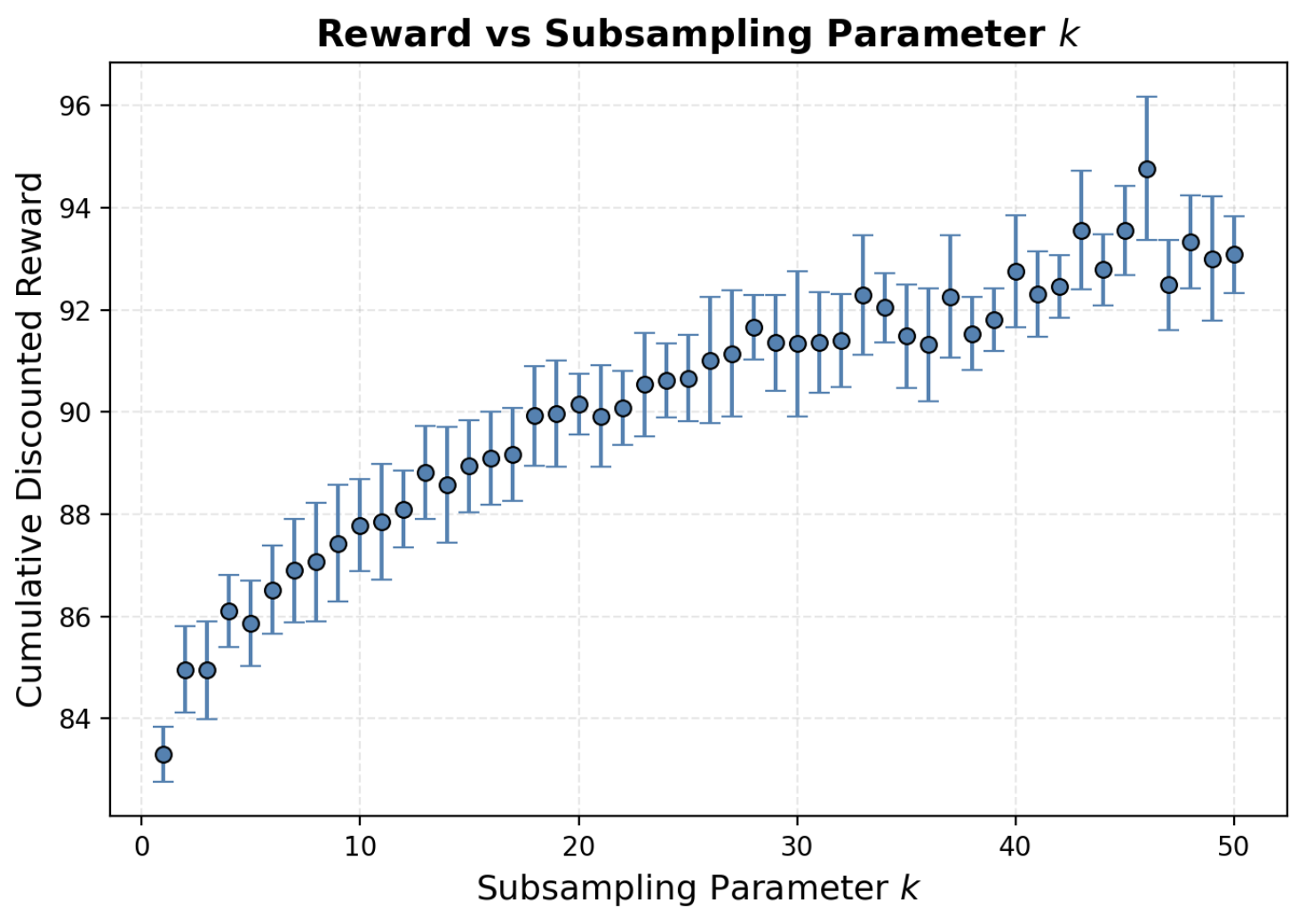}     
\includegraphics[width=0.5\linewidth]{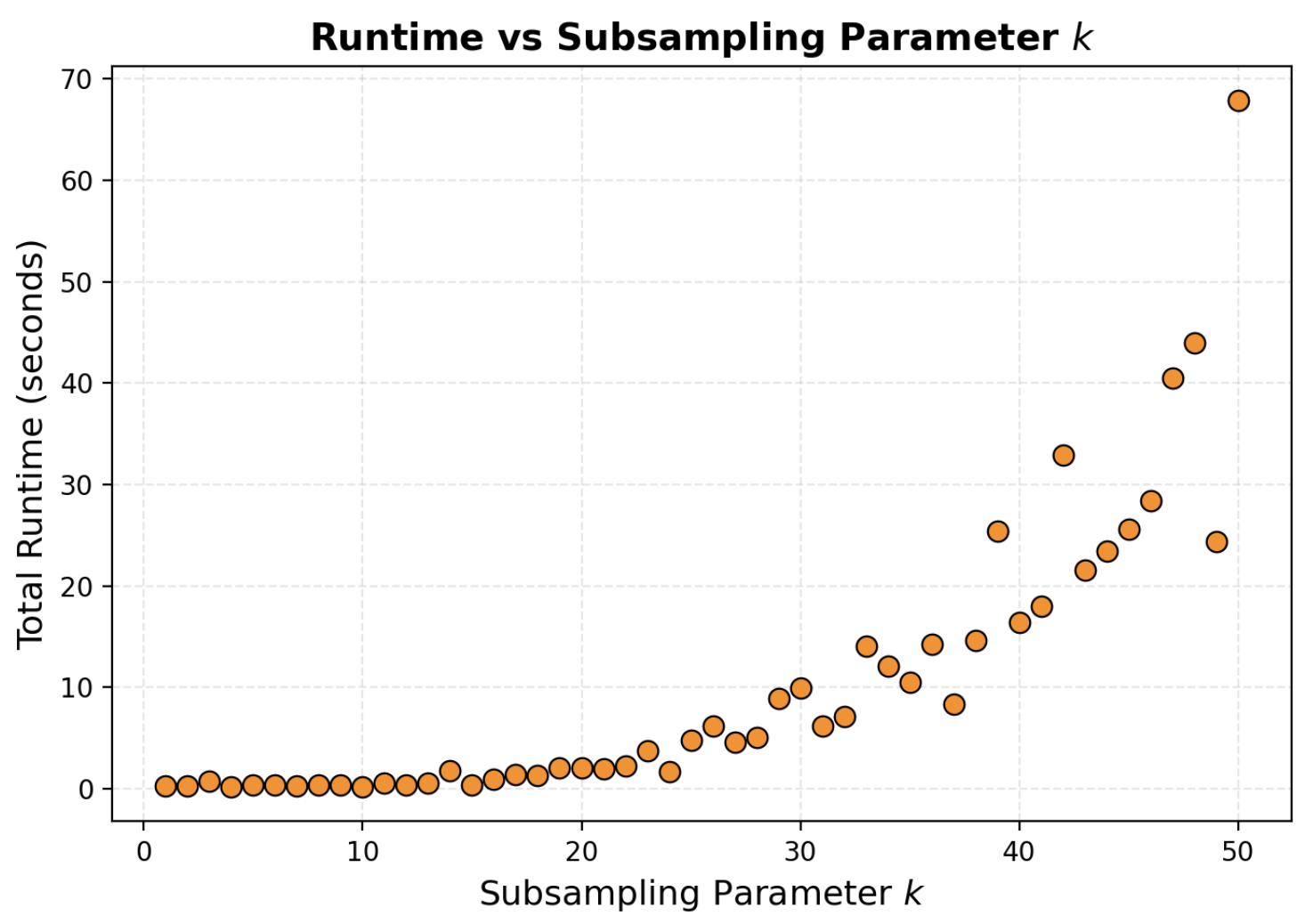}    
         \caption{ a) Discounted cumulative rewards for $k < n = 1000$. As $k$ increases, the rewards generally increase and converge, b) As $k$ increases, the runtime required to converge to a $\tilde{O}(1/\sqrt{k})$-approximate Nash Equilibrium blows up. For some values of $k$, the runtime is significantly shorter since \texttt{ALTERNATING-MARL} learns a policy before $N_{\text{steps}}$ iterations, and terminates early.}
     \label{fig:k increase}
 \end{figure}
\begin{figure}[hbt!]
    \centering
    \includegraphics[width=1\linewidth]{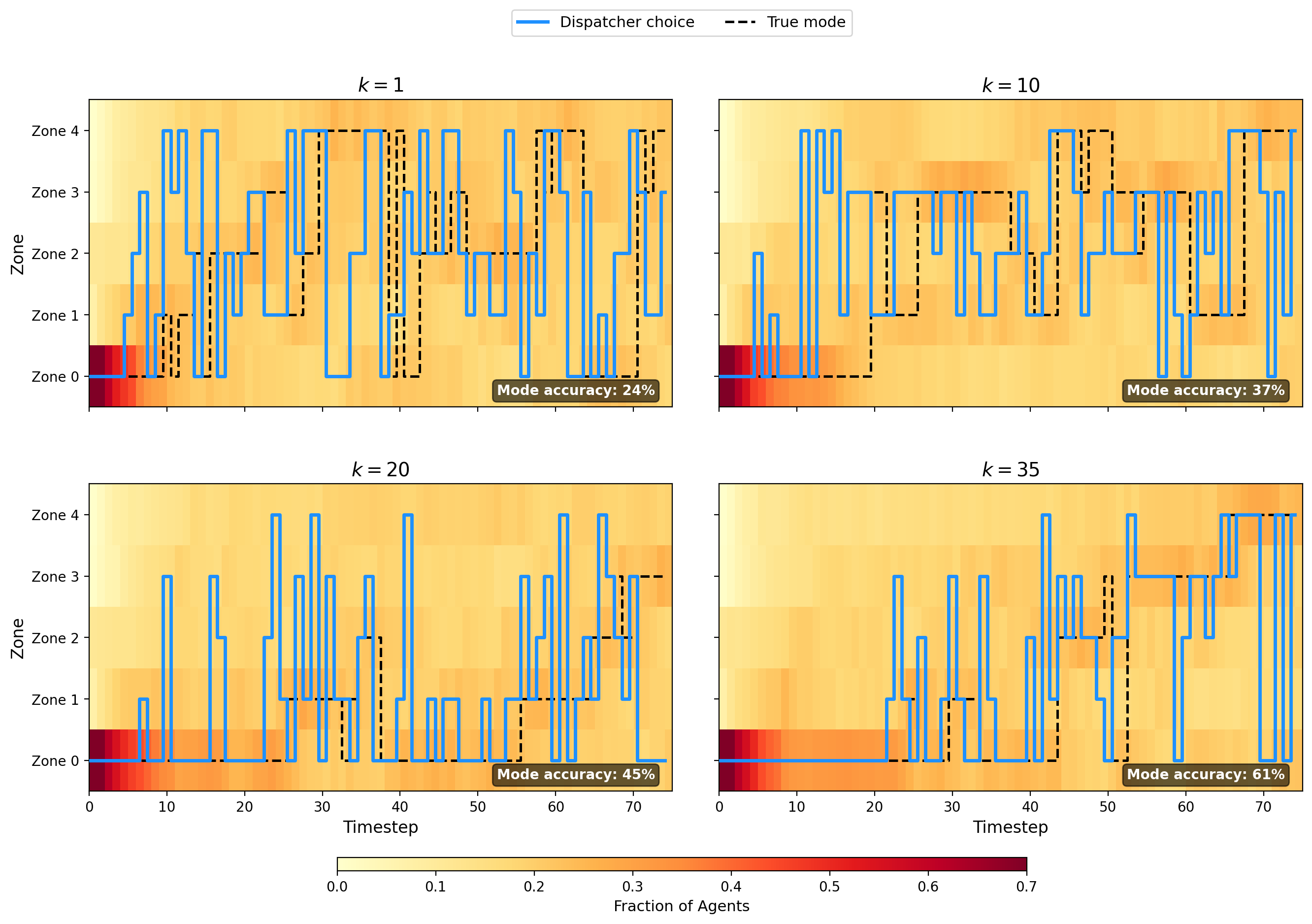}
    \caption{Simulation of the multi-agent warehouse system for subsampling budgets $k = 1, 10, 20, 35$. The heatmap shows the fraction of $n=1000$ robots occupying each of the $5$ zones at each timestep, where darker shades imply a higher concentration. The solid blue line indicates the zone chosen by the dispatcher (global agent), while the dashed black line shows the true population mode (the zone with the most robots). At $k = 1$, the dispatcher's choices differ significantly from the true mode, while at $k = 35$, the dispatcher tracks the mode substantially better, steering resources to the correct zone more than twice as often. The initial agent positions are drawn from $\mathsf{Dirichlet}(0.3)$, creating a concentrated starting configuration, and all panels share the same random seed for comparability.}
    \label{fig:sim result}
\end{figure}

\subsection{Extension to Continuous State Spaces with TRPO}

In this subsection, we provide an extension for \texttt{ALTERNATING-MARL} for the setting where the state spaces of all the agents are continuous (and therefore have uncountably infinite cardinality), and empirically show that the alternating process still converges to an approximate Nash Equilibrium whose value increases monotonically with the subsampling parameter $k$, with decreasing marginal gains, consistent with the result in \cref{fig:k increase}. We begin by describing our experimental setup, providing a formulation of the extension to continuous state spaces, and presenting our experimental results. The practical effectiveness of $\texttt{ALTERNATING-MARL}$ is surprising since there are known theoretical challenges in extending the analysis of the convergence of best-response dynamics in continuous state spaces. Specifically, the cardinality of the policies in continuous state space settings renders the number of interactions, required for the round-robin dynamics to converge, to be infinitely long.

\paragraph{Experimental setup.}  We consider a warehouse coordination setting with a dispatcher (a global agent) who must identify where the majority of robots are clustered and steer resources there, and $n = 500$ robots (local agents) who are spread across a 1D line in two clusters with an (initial) 55/45 majority split. The dispatcher cannot see all robots, but it can only see the states of $k$ randomly sampled robots at each time step. This setting models a continuous state dynamic where the agents can take on state values that lie within a continuous interval.

\paragraph{Dynamics.}  First, the state spaces of the agents are given by $\mathcal{S}_g = [-2,2]$ and $\mathcal{S}_l = [-2, 2]$. Moreover, the action spaces of the agents are given by $\cA_g = \{0,1,2,3,4\}$ and $\cA_l = \{0,1,2\}$. The five actions by the global agent correspond to \emph{bin centers} $\tau_j = -1.6 + 0.8j$ for $j\in \{0,1,\dots,4\}$, and the three local agent actions correspond to $\delta_0 = 0.0, \delta_1 = 0.03,$ and $\delta_2 = -0.03$. The transition for the global agent is given by $s_g' = \mathrm{clip}(s_g + \alpha(\tau_{a_g} - s_g) + \varepsilon_g, -2, 2)$, where $\varepsilon_g \sim \mathcal{N}(0, \sigma_g^2)$, with steering strength $\alpha = 0.6$ and noise $\sigma_g = 0.08$. The transitions for the local agents are given by $s_l' = \mathrm{clip}(s_l + \delta_{a_l} + \varepsilon_l, -2, 2)$ where $\varepsilon_l \sim \mathcal{N}(0, \sigma_l^2)$ with $\sigma_l = 0.04$. 

\paragraph{Rewards.} The global agent's reward $r_g(s_g, a_g)$ rewards the dispatcher for reaching its pointing location, which would enforce that swapping between targets keeps $s_g$ far from every $\tau$ and gives zero rewards, Specifically, $r_g(s_g, a_g) = A_g e^{-{(s_g - \tau_{a_g})^2}/{2\sigma_{r_g}^2}}$, where $A_g = 5$ and $\sigma_{r_g} = 0.3$. Moreover, the local alignment reward is given by $r_l(s_g, s_l, a_l) = 10 + A_l e^{-(s_l - s_g)^2/2\sigma_{r_l}^2}- |a_l|$, which measures a Gaussian alignment with $s_g$ with a movement cost, where $A_l = 10, \sigma_{r_l} = 0.2$, and $c_l = 5$. Together, the joint evaluation reward at each stage is given by $r = r_g(s_g, a_g) + \frac1n \sum_i r_l(s^i_l, s_g, a^i_l)$.

 \begin{figure}
     \centering
     \includegraphics[width=0.7\linewidth]{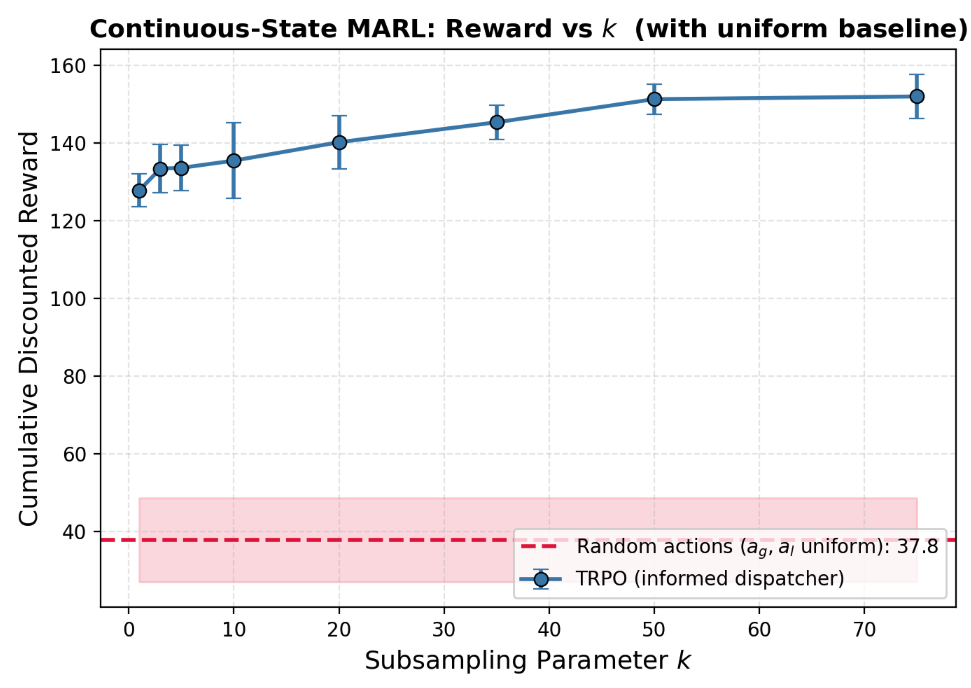}
     \caption{Discounted Cumulative Rewards for $k< n = 500$ for continuous state spaces. We include a baseline of uniformly random actions.}
     \label{fig:reward figures baselines}
 \end{figure}

\paragraph{Algorithm: Alternating Best-Response with TRPO.} We extend the tabular alternating best-response algorithm to continuous state spaces by replacing exact value iteration with neural network–based policy optimization in a two-timescale process. The system comprises a global agent (dispatcher) and $n$ identical local agents (robots), each parameterized by a two-layer feedforward neural network with 64 hidden units and $\tanh$ activations. Training proceeds by alternating between a \texttt{G-LEARN} phase, in which the local policy $\pi_l$ is held fixed and the global policy $\pi_g$ is updated, and an \texttt{L-LEARN} phase, in which the roles are reversed. This alternation is repeated for $K_{\mathrm{outer}}$ iterations. 

For \texttt{G-LEARN}, at each iteration, we generate $M$ independent training samples by resetting the environment and, for each reset, subsampling $k$ agents uniformly at random without replacement from the population of $N$ agents. The positions of the $k$ selected agents are binned into a normalized histogram $h \in \mathbb{R}^B$ over $B$ equally spaced bins covering the state range $[-2, 2]$. The label for each sample is the true population mode $m^* = \arg\max_j h_j^{\mathrm{full}}$, computed from the full population at the time of the initial reset. The global policy network $\pi_g(a \mid h; \theta_g)$, which maps histograms to a distribution over $|\mathcal{A}_g|$ discrete actions, is then trained by minimizing the cross-entropy loss

$$\mathcal{L}_g(\theta_g) = -\frac{1}{M} \sum_{i=1}^{M} \log \pi_g\left(a = m_i^* \mid h_i; \theta_g\right)$$

for $E_g$ steps using the Adam optimizer with learning rate $\eta$ and gradient clipping at norm 1.0. This formulation is a natural function-approximation analog of the model-based value iteration used in the tabular setting: the dispatcher's task at each timestep reduces to a contextual bandit problem where the optimal action depends only on the current histogram observation, not on the history.

For \texttt{L-LEARN}, with the global policy fixed, we collect $H$ episodes of horizon $T$ by simulating the full $N$-agent system, where a representative agent plays $\pi_l$ against the fixed $\pi_g$ and then updating $\pi_l$ with the natural-gradient trust-region step. In each episode, one agent is designated as the representative learner. At each timestep, the representative agent samples an action from $\pi_l(a \mid s_g, s_l;\, \theta_l)$ while all other agents follow the current local policy deterministically. The global agent selects actions according to $\pi_g$ using the subsampled histogram. We record the representative agent's observations, actions, and per-step rewards, and then compute the discounted Monte Carlo returns $G_t = \sum_{\tau>t} \gamma^{\tau - t} r_\tau^l$ and advantage $A_t = G_t - V_\phi(o_t)$ (where $V_\phi$ is the critic), and normalizes them to zero mean and unit variance. The local policy is updated via {TRPO} \cite{pmlr-v37-schulman15,Sutton_McAllester_Singh_Mansour_1999}, where the surrogate objective is given by
\[L(\theta) = \mathbb{E}\left[\tfrac{\pi_\theta(a\mid o)}{\pi_{\theta_{\text{old}}}(a\mid o)} A\right].\]
We compute $g = \nabla_\theta L(\theta)$, and compute the natural gradient. For this, we solve $\mathbf F x = g$, where $\mathbf F$ is the Fisher information matrix (computed as the Hessian of the mean KL-divergence) such that $F\approx \nabla^2 \mathrm{KL}(\pi_{\theta_{\text{old}}}\| \pi_\theta) + \lambda$, where $\lambda$ is a damping coefficient. We use step-sizes of $\beta = \sqrt{2  \delta_{\mathrm{KL}}}$. Finally, we use a backtracking line search where we accept the largest $\theta_{\text{new}} = \theta_{\text{old}} + 0.5^j \cdot \Delta\theta$ for $j=0,1,\dots,9$ that improves $L$ and satisfies $\mathrm{KL}(\pi_{\theta_{\text{old}}}\| \pi_\theta) \leq 1.5 \delta_{\mathrm{KL}}$. This update is performed for $E_l$ epochs. The critic's policy $V_\phi$ is fit separately with MSE regression to $G_t$ over $5$ gradient steps using Adam. Finally, the policy architecture $\pi_l$ is represented by a two-layer feedforward network:
$$\pi_l(a \mid o; \theta_l) = \mathrm{softmax}\Big(W_3 \tanh\big(W_2 \tanh(W_1  o + b_1) + b_2\big) + b_3\Big)$$
with 64 hidden units per layer, taking the 2-dimensional observation $(s_g, s_l)$ as input and producing a categorical distribution over $|\mathcal{A}_l| = 3$ actions.

\newpage

\begin{algorithm}[t]
\begin{algorithmic}
\caption{\texttt{ALTERNATING-MARL} Heuristic with Function Approximation} 
\REQUIRE Transition functions $P_g$ and $P_\ell$, steps $N_{\text{steps}}$,   failure probability $\delta$, algorithms \texttt{G-LEARN} and \texttt{L-LEARN}, and sampling parameters $k$ and $m$. 
\FOR{$t=1,\dots,N_{\text{steps}}$}
\STATE \texttt{\textcolor{blue}{/$\star$ G-LEARN}}
\STATE Sample a new environment $s_0$,
\STATE Subsample $k$ agents, build a histogram $h_\Delta$, record the true mode $m^*$,
\STATE Train the global-agent policy network $\pi_g(a_g|h)$ via cross-entropy loss using Adam
\[\cL_g = -\frac{1}{M}\sum_{i=1}^M \log \pi_g(m_i^*| h_i),\]
\STATE \textcolor{blue}{\texttt{/$\star$ L-LEARN}}
\STATE Collect $H$ episodes and compute $G_t = \sum_{\tau=t}^{T-1} \gamma^{\tau - t} r_\tau^l$,
\STATE Update the local policy network $\pi_l(a_l|s_g, s_l)$ using the \texttt{TRPO} objective \cite{pmlr-v37-schulman15} with loss
\[\mathcal{L}(\theta) = \E\left[\frac{\pi_\theta(a|o)}{\pi_{\theta_{old}}(a|o)} A_t\right],\]
where $A_t = G_t - V_\phi(o_t)$ is the advantage function.
   \ENDFOR
\STATE \textbf{return} $(\pi_g^{N_{\text{steps}}},\pi_\ell^{N_{\text{steps}}})$. 
\label{algorithm: continuous alternating marl}
\end{algorithmic}
\end{algorithm} 
 
\paragraph{Algorithm Configuration.} The hyperparameters used in our implementation are detailed in Table~\ref{tab:hyperparams new}.

\begin{table}[hbt!]
\centering
\begin{tabular}{@{}ll@{}}
\toprule
Hyperparameter & Value \\ \midrule
Population size ($n$) & 500\\
Discount Factor ($\gamma$) & 0.95 \\
Global agent action space size ($|\cA_g|$) & 5 \\
Local agent action space size ($|\cA_l|$) & 3 \\
Steering gain ($\alpha$) & $0.6$ \\
State clipping range & $[-2,2]$ \\
Subsample Sizes ($k$) & $\{1,3, 5, 10, 20, 35, 50, 75\}$ \\
Global transition noise $(\sigma_g)$ & 0.08 \\
Local transition noise $(\sigma_l)$ & 0.04 \\
Global reward amplitude ($A_g$) & $5$ \\
Global Reward Width ($\sigma_{r_g}$) & 0.30 \\
Local reward amplitude ($A_l$) & 10 \\
Local alignment width ($\sigma_{r_l}$) & 0.2 \\
Local action cost & $5$ \\
Monte Carlo Samples ($M$) & 3000 \\
Number of rollouts & 50 per seed \\
Horizon & 100 \\
 Number of alternations ($N_{\text{steps}}$) & 30 \\
Architecture & 2-layer MLP, Tanh activation \\
(Critic and Global) Gradient clip & 1.0 \\
Independent Runs & 15 seeds per $k$ \\
Learning rate & $3\times 10^{-3}$ (Adam) \\
Hidden units & 64 \\
KL trust-region radius ($\delta_{\mathrm{KL}})$ & 0.01 \\
Conjugate-gradient iterations & 10 \\
TRPO damping coefficient $(\lambda)$ & 0.1 \\
Value loss coefficient & 0.5 \\

\bottomrule\\
\end{tabular}
\caption{\texttt{ALTERNATING-MARL} Heuristic Training and Evaluation Configuration}
\label{tab:hyperparams new}
\end{table}

\begin{figure}[hbt!]
    \centering
    \includegraphics[width=0.5\linewidth]{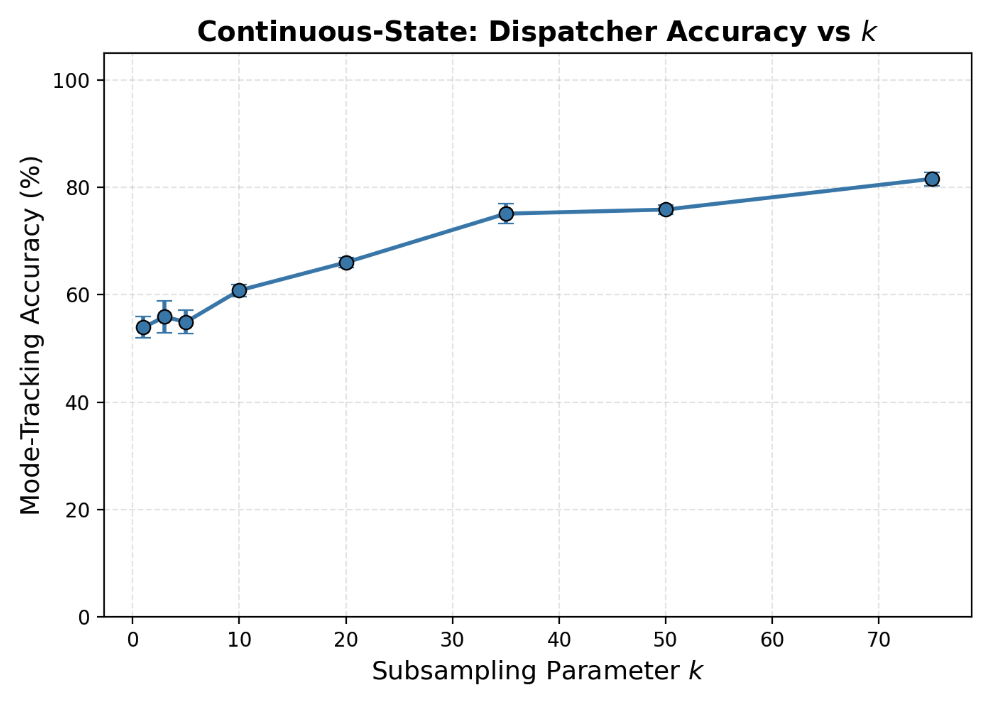}
    \caption{Dispatcher accuracy in the continuous state space setting}
    \label{fig: dispatcher accuracy in the continuous state space setting}
\end{figure}
 \begin{figure}[hbt!]
     \centering
     \includegraphics[width=0.8\linewidth]{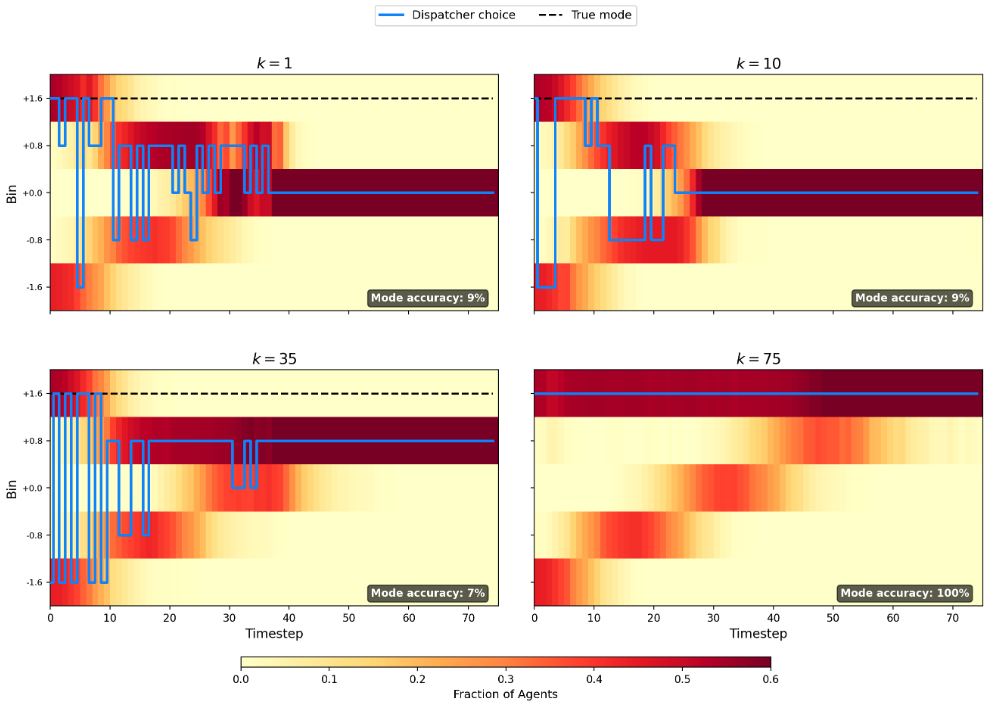}
     \caption{Simulation of the multi-agent warehouse for subsampling budgets $k = 1, 10, 35, 75$, where
the heatmap shows the fraction of $N = 500$ robots occupying the various zones at each timestep
where darker colors imply a higher concentration.}
     \label{fig:placeholder concentration}
 \end{figure}

\newpage

\section{Limitations and Future work}
\label{section: limitations and future work}
We assume that the global and local agents cooperate to optimize a structured reward under a specific dynamic model which was studied in \cite{anand2025meanfield}. A natural direction for future work is to explore the extent to which our algorithms and analysis can be extended to weaker assumptions on the dynamics model or the reward structure. Second, our work is only able to incorporate mild heterogeneity among cooperative agents. It would be interesting to explore the extent to which we can capture a greater degree of heterogeneity \cite{Hu_Wei_Yan_Zhang_2023,cui2022learning,anand2026graphonmeanfieldsubsamplingcooperative,wang2026unsuperviseddecompositionrecombinationdiscriminatordriven}. A third direction would be to prove a theoretical generalization of this work to the setting with infinite/continuous state spaces with general function approximation. This would require more technical algorithms using least-squares value iteration \cite{xu2023tale, ayoub2020model} or linearity assumptions on the underlying MDP \cite{golowich2024linear,  golowich2024the}. Finally, it would be exciting to generalize this result beyond uniform subsampling, which might not be possible in a number of scenarios where it might be easier to subsample closer agents with higher probability, or to subsample among clusters of agents uniformly at random. In this case, we believe our results should extend by using the triangle inequality to bound the distance to the new random distribution over subsampled agents; however, it would be desirable to obtain tighter bounds for such application-specific scenarios. \looseness=-1

% \newpage
% \input{checklist.tex}

\end{document}